\numberwithin{equation}{section} %
\theoremstyle{plain}
\newtheorem{thm}{\bf Theorem}[section] %
\newtheorem{lem}[thm]{\bf Lemma} %
\newtheorem{cor}[thm]{\bf Corollary} %
\newtheorem{prop}[thm]{\bf Proposition} %
\newtheorem{obs}[thm]{\bf Observation} %
\theoremstyle{definition}
\newtheorem{defn}{\bf Definition}[section] %
\newtheorem{nota}[defn]{Notation} %
\newtheorem{exmp}{\bf Example}[section] %
\theoremstyle{remark}
\newtheorem*{note}{\bf Note} %
\begin{document}

\title{Contributions to Persistence Theory}

\author{Dong Du}

\maketitle

\begin{abstract}

Persistence theory discussed in this paper \footnote{This paper is the Ph.D thesis written under the direction of D. Burghelea at OSU.} is an application of algebraic topology (Morse Theory)  to Data Analysis, precisely to qualitative understanding of point cloud data.
Mathematically a point cloud data is a finite metric space of a very large cardinality. It can be geometrized as a filtration of simplicial
complexes and the homology changes of these complexes provide qualitative information about the data.
There are new invariants which permit to describe the changes in homology (with coefficients in a fixed field) and these invariants are the ``bar codes''.

In Section \ref{sec 3}  we develop additional methods for the calculation of bar codes and their refinements.
When the coefficient field is $\mathbb{Z}_2$, the calculation of bar codes is done by ELZ algorithm (named after H. Edelsbrunner,
D. Letscher, and A. Zomorodian). When the coefficient field is $\mathbb{R}$, we propose an algorithm based on the Hodge decomposition.

The original persistence theory  involves the  ``sub-level sets'' of a nice continuous map (tame map). With Dan Burghelea
and Tamal Dey we developed a persistence theory which involves level sets discussed in Section 4. This is a refinement of the original persistence.
The level persistence we propose is an alternative to Zigzag persistence considered by G. Carlsson and V. D. Silva. We introduce and
 discuss new computable invariants,  the ``relevant level persistence numbers''  and the ``positive and negative bar codes'',
  and explain how they are related to the known ones.

We provide enhancements and modifications of ELZ algorithm to calculate such invariants and illustrate them by examples.

Sections 3 is preceded by background materials (Section \ref{sec 2}) where the concepts of algebraic topology used in this paper are defined.

\end{abstract}

\section{Introduction} \label{sec 1}

We view ``Persistence Theory''  as the Computer Science friendly  application of Morse Theory  to Data Analysis.

      \begin{itemize}

      \item

      \textit{Data} considered in this thesis is  called PCD (point cloud data) which is, mathematically, a \textit{finite metric space} $(\Sigma, d)$
       of very large cardinality. A PCD often appears as a collection of points in some Euclidean space $(\mathbb R^D,d)$ with $d$
        being the Euclidean distance. It is hard to visualize a PCD or to study its structure directly, since the cardinality is large and dimension
        of the Euclidean space in which it embeds can be more than three. However, we can understand qualitative features of a PCD by
        analyzing the family of simplicial complexes associated to it. In this thesis we use Vietoris-Rips complexes which were introduced
        first by Vietoris then by Rips in connection with the group theory (Vietoris \cite{V},   Hausmann \cite{H95}).

       \item

       Morse Theory \cite{M63} is part of geometric and algebraic topology, whose purpose is  to describe the topology of reasonably nice
       spaces  $X$, for example smooth manifolds,   with the help of a generic real valued functions $f:X\to \mathbb R$, for example proper smooth
       functions with all critical points nondegenerate. It uses the local changes in the homology of  the sub-levels $X_{-\infty,t} = f^{-1}((-\infty,t])$
        to  describe the global topology (homology) of the space  $X$. The topology  of the sub-levels for such generic functions changes for a discrete
        collection of $t$'s, called critical values. If $X$ is compact then the topology changes for a finite collection of $t$'s, say $t_0<t_1<\cdots<t_N,$
         and a finite filtration $X_{-\infty,t_0}\subseteq \cdots\subseteq X_{-\infty,t_N} = X$ is obtained.

         \end{itemize}

Persistence theory, as considered by  H. Edelsbrunner, D. Letscher, A. Zomorodian \cite {ELZ},  P. Frosini and C. Landi  \cite{FL},
  also anticipated in Rene Deheuvels \cite {D55},  provides a  slight change of perspective. The object to start with is a space $X$ equipped with a finite filtration rather than a function $f:X \to \mathbb R$. This leads to the concepts  of \textit{persistence vector spaces}, and then to \textit{linear algebra of persistence vector spaces}, and to additional type of invariants,  \textit{bar codes} \cite{CCGZ}, \cite{ZC}. (As recognized by  Carlsson and Zamorodian \cite{ZC} the concept of bar codes corresponds to ``torsion and rank'' of  a finitely generated graded module over the ring of polynomials of one variable. This concept was not previously used in the work of computer scientists.)

When the underlying space $X$ is a simplicial or more general polytopal complex,
algorithms of reasonably low complexity \cite{CEM}, \cite{Mo05}, \cite{KB79} as proposed in \cite{ELZ}, \cite{ZC}  can be used to calculate efficiently the \textit{bar codes} of the  persistence vector space associated with a filtration of the  complex. From the bar codes, one can derive the homology of the sub-levels and then of the underlying space.

\textit{Relationship with Data analysis}

The Persistence Theory becomes relevant to Data Analysis due  to  remarkable new ideas: 

One can understand the qualitative features of a PCD by analyzing geometric shapes, in this case simplicial complexes, associated to it.
One associates to the finite metric space $(\Sigma,d)$ and $\epsilon\geq 0$ a simplicial complex $R_{\epsilon}(\Sigma)$ (so called ``Vietoris-Rips complex''). Since a single Vietoris-Rips complex $R_{\epsilon}(\Sigma)$ does not contain enough information of $(\Sigma,d)$, one considers all of these complexes for all $\epsilon\geq 0$. Fortunately only finitely many of them are different and they form a finite filtration of
 the standard simplex of dimension the cardinality of $\Sigma$ less one unit.  The result of this analysis is encoded in bar codes, which carry information about the qualitative features of the data and  signal the missing parts and  ``accidental noises'' in the data. 
 
 This theory had already nice applications in many areas of science cf. Carlsson \cite {C} and one expects much more to follow.

\vskip .2in

Section \ref{sec 2}, Mathematical Preliminaries, provides a concise presentation of the mathematical concepts behind
the Persistence Theory. In Section \ref{sec 2} we recall the definitions of simplicial complexes, simplicial homology, polytopal complexes,  cellular homology, singular homology and Betti numbers  as well as the methods to calculate them when the coefficient field is $\mathbb{Z}_2$ or $\mathbb{R}$.

When the field is $\mathbb{Z}_2$, the calculation is done by the  \textit{ELZ algorithm} (H. Edelsbrunner, D. Letscher, and A. Zomorodian) \cite{ELZ}.

When the field is $\mathbb{R}$, the calculation can be done by the \textit{Hodge decomposition}.
This is a new method, at least as long as the calculation of bar codes is concerned,  based on the elementary Hodge theory \cite{E}.

\vskip .2in

In Section \ref{sec 3},  Persistent Homology of a PCD, one reviews the basic algebra  of (tame) persistence vector spaces and introduces the
\textit{bar codes} of a tame persistence vector space. These are by now standard concepts but our presentation is slightly different from the existing literatures.

From a point cloud data (PCD), $(\Sigma, d)$, one derives a filtration of simplicial complexes (Vietoris-Rips complexes mentioned before)
$$
R_{\epsilon_0}(\Sigma) \subseteq R_{\epsilon_1}(\Sigma) \subseteq \cdots \subseteq R_{\epsilon_N}(\Sigma).
$$

Then $r$-dimensional homology groups with coefficients in a fixed field of the spaces of this filtration with the linear maps induced by inclusions define
a tame persistence vector space. The ``\textit{bar codes}'' of this tame persistence vector space are referred to as the bar codes of the PCD.
For the definition of bar code see Subsection \ref{secPLA}.

When the coefficient field of homology groups is $\mathbb{Z}_2$, the calculation of bar codes is done by the  ELZ algorithm.

When the coefficient field of homology groups is $\mathbb{R}$, the calculation of bar codes is done by a slight generalization of the Hodge decomposition as described  in Section \ref{sec 2}. \textit{This is a new contribution.}

The \textit{simultaneous persistence} used in Section \ref{sec 4} in relation with \textit{level persistence} \textit{is also a new contribution}, and requires appropriate modifications\,/\,improvements of the ELZ algorithm.

\vskip .2in

In Section \ref{sec 4},  Persistence Theory Refined, one considers  continuous maps $f:X\to \mathbb R$.
When $f$ is weakly tame (Definition \ref{weakly tame}), a finite filtration $
X_{-\infty,t_0} \subseteq X_{-\infty,t_1} \subseteq \cdots \subseteq X_{-\infty,t_N}
$ is defined, where $X_{-\infty,t_i} = f^{-1}((-\infty,t_i])$ and $t_0<t_1<\cdots<t_N$ are critical values.
 The homology groups in each dimension of the above filtration provide a persistence vector space whose bar codes are referred to as the bar codes of the (weakly tame) map $f$ in that dimension. In Section \ref{sec 4} the ``persistence'' for such filtration (which is the standard persistence considered in Section \ref{sec 3}) is referred  to  as the \textit{sub-level persistence} and its  invariants as the \textit{bar codes for the sub-level persistence} of $f$.
The sub-level persistence analyzes the changes in the homology of sub-levels.

In Section \ref{sec 4} we refine  the sub-level persistence to \textit{level persistence}.
The level persistence considers the changes in the homology of the level sets $X_t= f^{-1}(t)$ rather than the changes in the homology of the sub-level sets $X_{-\infty,t} = f^{-1}((-\infty,t])$.
In a more primitive form the level persistence was first considered in \cite {DW} under the name ``interval persistence''. For tame maps (Definition \ref{tamemap}) the level persistence is equivalent to Zigzag persistence previously introduced in \cite {CS}.

In this work the maps considered will be tame, i.e. the topology of the levels changes at finitely many $t$.
The theory requires the tameness hypotheses, however almost all continuous maps and in particular all maps of practical interest  like simplicial maps or generic smooth maps on smooth manifolds are tame. Tame maps are weakly tame (Corollary \ref{tametoweaklytame}).

The level persistence determines and is determined by
the \textit{relevant persistence numbers}  introduced in Section \ref{sec 4}.
For a tame map the relevant persistence numbers are equivalent to the bar codes for the Zigzag persistence, a concept introduced by Carlsson and Silva \cite{CS}. We call these bar codes \textit{bar codes for level persistence} and also provide their definition from a slightly different perspective.

For  a tame map the relevant level persistence numbers  consist of a  finite collection of numbers while the bar codes for level persistence
 (equivalently for Zigzag persistence)  consist of a finite collection of intervals of four types: closed, open, left open right closed and left closed
 right open. For a tame map $f:X\rightarrow \mathbb{R}$ they carry considerably more information than the bar codes for sub-level persistence of the map $f$. In fact the former implies the latter as
 explained and illustrated by examples in Section \ref{sec 4}.

There are two fundamental concepts  in the level persistence: \textit{death} and \textit{detectability} (or \textit{observability}). These concepts should be compared with \textit{birth} and \textit{death}  in sub-level persistence, however are not quite the same.

We also introduce  in Section \ref{sec 4} the concept of positive and negative bar codes
since they are related to relevant persistence numbers and can be calculated
efficiently when the underlying space of the map is a simplicial
complex and the map is linear on each simplex.

 We reduce the calculation of bar codes for level persistence or of relevant persistence numbers to the calculation of bar codes for sub-level
 persistence. However the sub-level persistence is not for the map $f$ but for other maps associated with $f$.  These associate maps are derived via the
 construction of \textit{cuts along levels}. For this purpose we provide new algorithms  to calculate ``cuts along levels'', of interest to
  computational geometry, and  improve on the existing algorithms (e.g ELZ).

Section \ref{sec 4} also contains a number of  examples and pictures to describe the implementation of the algorithms.

The level persistence as presented in this thesis was also considered in papers \cite{BDD} and \cite{DW}.

This paper is the Ph.D thesis written under the direction of D. Burghelea at OSU.

\newpage

\section{Mathematical Preliminaries} \label{sec 2}

In this section we review the definitions of
simplicial complex\,/\,simplicial homology\cite{H}\,/\,polytopal complex\,/\,cellular homology\cite{Me}
and singular homology and the methods to calculate these homologies when
the coefficient field \footnote{The homology considered here is always in a fixed field.} is $\mathbb{Z}_2$ or $\mathbb{R}$.

\subsection{Basic Knowledge}

\textbf{Affinely Independent} \cite{C89}:

A set of points (vectors) $x_0, \cdots, x_k \in \mathbb{R}^D$ is \textit{affinely independent}
  iff the vectors
  $$
 x_1 - x_0,x_2 - x_0, \cdots,  x_k - x_0
  $$ are linearly independent.

\vskip .3 cm

\textbf{Convex Hull} \cite{BCKO}:

Let $X = \{x_1,\cdots,x_k\}$ be a finite set of points(vectors) in $\mathbb{R}^D$.
The \textit{convex hull} of $X$ is
the set of all convex combinations of the elements of $X$
$$
H_{convex}(X) = \{\; \sum_{i=1}^k\alpha_i x_i \mid \alpha_i\in \mathbb{R},\alpha_i\geq 0, \sum_{i=1}^k\alpha_i = 1 \;\}
$$

\vskip .3 cm

\textbf{Affine Hull} \cite{D08}:

Let $X = \{x_1,\cdots,x_k\}$ be a finite set of points (vectors) in $\mathbb{R}^D$.
The \textit{affine hull} of $X$ is
the set of all affine combinations of the elements of  $X$
$$
\text{aff}(X) = \{\; \sum_{i=1}^k\alpha_i x_i \mid \alpha_i\in \mathbb{R}, \sum_{i=1}^k\alpha_i = 1 \;\}
$$

\vskip .3 cm

\textbf{Geometric Simplices in a Real Vector Space}{\label {Simplex}} \cite{M}:

An \textit{$n$-geometric simplex} $\sigma$ in $\mathbb{R}^D$ for $n\leq D$
is the convex hull of $n+1$ affinely independent points
$v_0,\cdots,v_n \in \mathbb{R}^D$ and sometimes denoted by $[v_0,\cdots,v_n]$. The points
$v_0,\cdots,v_n$ are called \textit{vertices} of $\sigma$.

The convex hull of a subset of vertices of $\sigma$ containing $m+1$
points is called an \textit{$m$-face} of $\sigma$.
The $m$-faces of $\sigma$ are also $m$-geometric simplices.

\vskip .3 cm

\textbf{Standard $n$-Simplex} \cite{H}:

The \textit{standard $n$-simplex} $\Delta^n \in \mathbb{R}^{n+1}$ is the convex hull of
vertices
$$
v_0 = (1,0,\cdots,0),  \cdots, v_n = (0,\cdots,0,1).
$$

\vskip .3 cm

\textbf{Geometric Simplicial Complexes} \cite{M}:

A \textit{geometric simplicial complex} $\mathcal{K} = \{\; \sigma_\iota \mid \iota\in \mathcal A \;\}$ in $\mathbb{R}^D$ is a collection  of geometric simplices
in $\mathbb{R}^D$ that satisfies the following conditions:

1. Any face of a simplex of $\mathcal{K}$ is also in $\mathcal{K}$.

2. The intersection of any two simplices $\sigma_1,\sigma_2\in \mathcal{K}$ is a face of both $\sigma_1$ and $\sigma_2$.

\vskip .3 cm

\textbf{Underlining Space of a Geometric Simplicial Complex} \cite{M}:

Let's denote the union of simplices of a geometric simplicial complex $\mathcal{K}$ as $|\mathcal{K}|$
and call it the \textit{underlying  space} of $\mathcal{K}$.

\vskip .3 cm

\textbf{Simplicial Maps Between Geometric Simplicial Complexes} \cite{M}:

Let $\mathcal{K}$ and $\mathcal{L}$ be two geometric simplicial complexes.
A map $f:|\mathcal{K}|\rightarrow |\mathcal{L}|$ is a \textit{simplicial map} if it sends each simplex of
$\mathcal{K}$ to a simplex of $\mathcal{L}$ by a linear map taking vertices to vertices.

\vskip .3 cm

\textbf{Abstract Simplicial Complexes} \cite{M}:

An \textit{abstract simplicial complex} $\mathcal{X} = (V,  \mathcal{S})$ consists of a set $V$ of vertices and a set $\mathcal{S}$ of finite subsets of $V$ called \textit{abstract simplices},
which satisfy the following property:
if $\sigma\in \mathcal{S}$ and $\tau\subseteq\sigma$, then $\tau\in \mathcal{S}$.

If $\sigma \in \mathcal{S}$ has $n+1$ elements, the dimension of $\sigma$ is $n$ and $\sigma$ is called an $n$-simplex.

If $\tau\subseteq\sigma$ has $m+1$ elements, $\tau$ is called an $m$-face of $\sigma$.

\vskip .3 cm

\textbf{Simplicial Maps Between Abstract Simplicial Complexes} \cite{M}:

Let $\mathcal{X}_1 = (V_1,  \mathcal{S}_1)$ and $\mathcal{X}_2 = (V_2,  \mathcal{S}_2)$ be two abstract simplicial complexes.
A map $f:V_1\rightarrow V_2$ is a \textit{simplicial map} if whenever the vertices $v_0, \cdots, v_n$ of $\mathcal{X}_1$
span a simplex, the points $f(v_0), \cdots, f(v_n)$ are vertices of a simplex of $\mathcal{X}_2$.

\vskip .3 cm

\textbf{Spatial Realization of an Abstract Simplicial Complex} \cite{M}:

A spatial realization $\mathcal{K} = <\mathcal{X}>$ of an abstract simplicial complex $\mathcal{X}$
 is provided by an injective map $f: V\rightarrow \mathbb{R}^D$ that sends each simplex in $\mathcal{X}$
  to an affinely independent set in $\mathbb{R}^D$ so that no interior of different geometric simplices intersect (in particular if $f(v_i)$ are all affinely independent).
  The union of the convex hulls of the images of all abstract simplices of $\mathcal{X}$ is a geometric simplicial complex, which
  is a spatial realization of $\mathcal{X}$.

\vskip .3 cm

Spatial realizations always exist for $D$ large enough and their underlying spaces are all homeomorphic to
each other.
  If we have a geometric simplicial complex $\mathcal{K}$, we can also get an abstract simplicial complex $\mathcal{X}$
  with $V$ the set of vertices of $\mathcal{K}$ and
  $$
  \mathcal{S} = \{\; \sigma\subseteq V \mid \sigma \text{ are vertices of some
  simplex of  }\mathcal{K} \;\}.
  $$
  Then $(V, \mathcal{S})$ defines an abstract simplicial complex $\mathcal{X}$.

It is not hard to check the above assignments are essentially inverses to each other and
they also induce a bijection between simplicial maps, so
an abstract simplicial complex $\mathcal{X}$ and its spatial realization $\mathcal{K}$ can be regarded as equivalent objects.

\vskip .3 cm

\textbf{Oriented Simplex} (page 26, \cite{M}):

Let $\sigma$ be a simplex (geometric or abstract).
One says that two orderings of its vertices are equivalent if
they differ from one another by an even permutation.
If $\sigma$ is a $0$-simplex then, obviously, the ordering is unique. If $\dim \sigma > 0$ then  the orderings of the vertices of $\sigma$
 fall into two equivalence classes.
Each of these classes is called an \textit{orientation} of $\sigma$.
An \textit{oriented simplex} $(\sigma, \epsilon)$ is a simplex $\sigma$ together with
an orientation $\epsilon$.

If the points $v_0,\cdots,v_p$ are affinely independent,
we shall use the symbol
$$
\langle v_0,\cdots,v_p \rangle
$$
to denote the oriented simplex consisting of the simplex $\{v_0,\cdots,v_p\}$
and the equivalence class of the ordering $(v_0,\cdots,v_p)$.

\vskip .3 cm

\textbf{Simplicial Homology with Coefficients in a Field} \cite{H}:

Let $\mathcal{X}$ be an abstract simplicial complex as defined above.
Let $C^{\Delta}_n(\mathcal{X};\kappa)$ be the $\kappa$-vector space generated by  the set of
 $n$-simplices in $\mathcal{X}$,  where $\kappa$ is a fixed field.
Elements of $C^{\Delta}_n(\mathcal{X}; \kappa)$ are finite formal sums $\Sigma_i \lambda_i \sigma_i$ where $\lambda_i\in \kappa$,
$\sigma_i$ are $n$-simplices of $\mathcal{X}$.
By choosing a total ordering of vertices of $\mathcal{X}$ (which provides an orientation on each simplex) a boundary map $\partial^{\Delta}_n: C^{\Delta}_n(\mathcal{X}; \kappa)\rightarrow C^{\Delta}_{n-1}(\mathcal{X}; \kappa)$ is defined as the linear extension of the map below
$$
\partial^{\Delta}_n(\langle v_0, \cdots v_n \rangle) = \sum_i (-1)^i \langle v_0,\cdots,\hat{v}_i,\cdots,v_n \rangle.
$$

It's easy to verify that $\partial^{\Delta}_n\circ\partial^{\Delta}_{n+1} = 0$, so that we can define the
\textit{simplicial homology groups}  as $\mathcal{H}^{\Delta}_n(\mathcal{X}; \kappa) = \ker\partial^{\Delta}_n \;/\;  \textmd{im} \partial^{\Delta}_{n+1}$.

\vskip .3 cm

Choosing different orderings on the vertices of $\mathcal{X}$ or different orientations on each simplex, the boundary maps might change (the sign of entries might change)  but the homology groups remain the same.

\vskip .3 cm

\textbf{Convex Polytope} \cite{G03}:

An \textit{extreme point} of a convex set $S$ in a real vector space is
 a point in $S$ which does not lie in any open line segment
  joining two points of $S$.
  The extreme points of a simplex in a real vector space are its
  vertices.

A \textit{convex polytope} is the convex hull of a finite set of points.
We call extreme points of a polytope its \textit{vertices}.
A simplex is a convex polytope such that its vertices are affinely independent.

Suppose $P$ is a convex polytope with $X$ the set of its vertices.
The dimension of $P$ is  the dimension of the affine hull of $X$.

A point $x \in P$ is an \textit{interior point of $P$}  if it is interior in the sense of point set topology when $P$ is regarded
 as a subspace of $\text{aff}(X)$.

Given $a_1, \cdots, a_D, b \in \mathbb{R}$, consider the closed half-space defined by
$$
\{\; (r_1,\cdots,r_D)\in \mathbb{R}^D \mid a_1 r_1+\cdots+a_D r_D \leq b \;\}
$$

The boundary of the above half-space is
$$
\{\; (r_1,\cdots,r_D)\in \mathbb{R}^D \mid a_1 r_1+\cdots+a_D r_D = b \;\}
$$

A  \textit{face} of a convex polytope is any intersection of the polytope with a closed
half-space such that no interior points of the polytope lie on the boundary of the half-space.

 The dimension of a face is  the dimension of its
affine hull.
Faces of a convex polytope are also convex polytopes.
Vertices of a polytope are its $0$-dimensional faces.

\vskip .3 cm

\textbf{Polytopal Complex} \cite{G03}:

\textit{Polytopal complex}, often named as polyhedra complex or cellular complex, consists of a collection of
convex polytopes in  some Euclidean space $\mathbb{R}^D$, satisfying two conditions:

(i) Every face of a polytope in $\Pi$ is also in $\Pi$.

(ii) The intersection of any two polytopes in $\Pi$ is a face of both.

The polytopes in $\Pi$ are called \textit{cells}. An $n$-dimensional polytope in $\Pi$ is called an $n$-cell.

For example, the set of all faces of a convex polytope defines a polytopal complex.

A geometric simplicial complex is a polytopal complex in which every cell is a geometric simplex.

\vskip .3 cm

\textbf{Oriented Convex Polytope} {\cite{Me}}:

Let $\sigma$ be a $d$-dimensional polytope in $\mathbb{R}^D$ and $V(\sigma)$
be the set of vertices of $\sigma$.
The map $\epsilon:  {V(\sigma) \choose d+1} \rightarrow \{1,-1\}$ is an \textit{orientation} of $\sigma$
if the following conditions are satisfied:

(1) $\epsilon(v_0,\cdots,v_i',\cdots,v_d) = \epsilon(v_0,\cdots,v_i,\cdots,v_d)$ if $v_i$ and $v_i'$
are in the same open half-space of $\mathbb{R}^D$ delimited by one of the supporting hyperplanes
of $v_0,\cdots,\hat{v_i},\cdots,v_d$ (hyperplanes that pass through $v_0,\cdots,\hat{v_i},\cdots,v_d$)
and $\epsilon(v_0,\cdots,v_i',\cdots,v_d) = -\epsilon(v_0,\cdots,$ $v_i,\cdots,v_d)$ if not.

(2) $\epsilon(v_0, v_1, \cdots, v_d) = \text{sign}(\pi)\epsilon(v_{\pi(0)},v_{\pi(1)},\cdots,v_{\pi(d)})$ for every permutation $\pi$.

$0$-dimensional polytope admits a unique orientation.
Every $d$-dimensional($d\geq 1$) polytope admits exactly two distinct orientations.
If $\epsilon$ is one of them, $-\epsilon$ is the other one.

An orientation $\epsilon$ on $\sigma$ induces an orientation $\epsilon|_\tau$ on its codimension-$1$ face $\tau$. The convention is that a base in the supporting hyperplane of $\tau$ followed by a vector pointing towards the interior of $\sigma$  provides a base of the supporting affine space of  $\sigma$ which defines the orientation $\epsilon$.

Let $\sigma$ be an $n$-cell, when $\tau$ is a $(n-1)$-face of $\sigma$, one defines the orientation $\epsilon_{\tau}$ by
$$
\epsilon|_{\tau}(v_0,\cdots,v_{n-1}) := \epsilon(v_0,\cdots,v_{n-1},v_n)
$$
for any $v_0,\cdots,v_{n-1}\in V(\tau)$ and $v_n\in V(\sigma)\setminus V(\tau)$.

\vskip .3 cm

\textbf{Cellular Homology with Coefficients in a Field} {\cite{Me}}:

Let $\Pi$ be a polytopal complex as defined above.
Let $C_n(\Pi; \kappa)$ be the vector space generated by  the set of
 $n$-cells in $\Pi$ with coefficients in a fixed field $\kappa$.
Elements of $C_n(\Pi; \kappa)$ are finite formal sums $\Sigma_i \lambda_i \sigma_i$ where $\lambda_i\in \kappa$ and
$\sigma_i$ are  $n$-cells of $\Pi$.

For any cell $\sigma$ choose an orientation $\epsilon (\sigma)$.

The boundary map $\partial_n: C_n(\Pi; \kappa)\rightarrow C_{n-1}(\Pi; \kappa)$ is defined as the linear extension of the map below
$$
\partial_n(\sigma) = \sum_\tau I_{\sigma, \tau} \;\tau.
$$
$I_{\sigma, \tau} = 0$  if $\tau$ is not a $(n-1)$-face of $\sigma$ and $I_{\sigma, \tau} = \pm 1$  if $\tau$ is a $(n-1)$-face of $\sigma$ and $\epsilon(\sigma)|_{\tau} = \pm \epsilon (\tau)$, where $\epsilon(\sigma)$ and $\epsilon (\tau)$ are orientations of $\sigma$ and $\tau$ respectively.

We can verify that $\partial_n\circ\partial_{n+1} = 0$, so the
\textit{polytopal homology groups} are defined as
$\mathcal{H}_n(\Pi; \kappa) = \ker\partial_n \;/\;  \textmd{im} \partial_{n+1}$.
$\mathcal{H}_n(\Pi;\kappa)$ is independent of the chosen orientation
although the boundary maps $\partial_n$ do depend.

When $\kappa = \mathbb{Z}_2$, orientation becomes irrelevant and a boundary map is defined by
$$
\partial_n(\sigma) = \sum_i \lambda_i \sigma_i
$$
where $\sigma$ is an $n$-cell and $\sigma_i$'s are $(n-1)$-cells, $\lambda_i = 1$ iff $\sigma_i$ is a face of $\sigma$.

\vskip .3 cm

\begin{note}

There is a canonical subdivision of a polytopal complex into a simplicial complex.
Each cell $\sigma$ can be decomposed into a simplicial complex.
Choose an interior point $x_{\sigma}$ of the cell $\sigma$.
The cell $\sigma$ can be decomposed as the union of cones
with base codimension-$1$ faces of $\sigma$ and apex $x_{\sigma}$. 
It's obvious that $2$-dimension cell can be decomposed as a simplicial complex.
By induction, the cell $\sigma$ can be decomposed into a simplicial complex.
This decomposition is possible because all cells are convex polytopes.
Once each cell is decomposed into a simplicial complex,
we can regard the polytopal complex as the union of those
simplicial complexes, hence as a simplicial complex.

Once we decompose a polytopal complex into a simplicial complex,
we can calculate the homology groups of this simplicial complex.
In fact we can use this simplicial homology to replace the homology of the polytopal complex,
since the underlying spaces of the simplicial complex and polytopal complex are
the same and two complexes with the same underlying spaces have the same
homology groups (as pointed below).

It is easier to write programs to get the boundary maps and calculate the homology groups of a simplicial complex than
a polytopal complex.
However, it costs much more time to  calculate the homology groups
by first decomposing a polytopal complex into a simplicial complex
 than to calculate directly without decomposition, since the decomposition increases number of cells greatly.

\end{note}

It turns out that the homology of a simplicial or polytopal complex depends only on the underlying topological space.
The easiest way to see this is to define the homology for a topological space, independent of simplices or cells, and
 show that it leads to the same results as the one described above.
 The homology defined in this way is known as \textit{singular homology} and is defined below.

\begin{defn} \label{sigularhomology}
\textbf{Singular Homology with Coefficients in a Field}(see page 108, 153 \cite{H})

A \textit{singular $n$-simplex} in a topological space $X$ is defined as a continuous map $\sigma:\Delta^n\rightarrow X$ on the standard $n$-simplex.
Let $C_n(X; \kappa)$ be the vector space with basis the set of singular $n$-simplices in $X$ and
coefficients in a fixed field $\kappa$.
Elements of $C_n(X; \kappa)$ are finite formal sums $\Sigma_i \lambda_i \sigma_i$ for $\lambda_i\in \kappa$ and
$\sigma_i:\Delta^n\rightarrow X$.
A boundary map $\partial_n: C_n(X; \kappa)\rightarrow C_{n-1}(X; \kappa)$ is defined by the formula:
$$
\partial_n(\sigma) = \sum_i (-1)^i \sigma|[v_0,\cdots,\hat{v}_i,\cdots,v_n]
$$
where  $\sigma: \Delta^n\rightarrow X$ is a singular $n$-simplex, $v_0,\cdots,v_n$ are vertices of $\Delta^n$,
$\hat{v}_i$ means $v_i$ is deleted.
Implicit in this formula is the canonical identification of $[v_0,\cdots,\hat{v}_i,\cdots,v_n]$
with $\Delta^{n-1}$, preserving the ordering of vertices, so that
$ \sigma|[v_0,\cdots,\hat{v}_i,\cdots,v_n]$ is regarded as a map
$\Delta^{n-1}\rightarrow X$, that is, a singular $(n-1)$-simplex.

It's easy to verify that $\partial_n\circ\partial_{n+1} = 0$, so that we can define the
\textit{singular homology group with coefficients in $\kappa$} as $\mathcal{H}_n(X; \kappa) = \ker\partial_n \;/\;  \textmd{im} \partial_{n+1}$.

$\mathcal{H}_n(X; \kappa)$ is a vector space since $\kappa$ is a field.

\end{defn}

\begin{defn}
\textbf{Betti Numbers}

Given a field $\kappa$ one can define $b_n(X; \kappa)$,
the $n$-th Betti number with coefficients in $\kappa$, as the dimension of the vector space
$\mathcal{H}_n(X; \kappa)$.

\end{defn}

We will show how to calculate Betti numbers of a polytopal complex $\Pi$ with $\mathbb{Z}_2$ or $\mathbb{R}$ coefficients.

When the coefficients are in  $\mathbb{Z}_2$, we have the following method referred to below as the \textit{ELZ algorithm} (\cite{EH}).

First order all cells, so that $i<j$ if $\sigma_i$ is a face of $\sigma_j$.
Then form a boundary matrix $\partial$, so that the entry on $i$-th row and $j$-th column
is
$$
\displaystyle\partial[i, j] = \left\{ \begin{array}{ll}
1 & \text{if $\sigma_i$ is a codimension-$1$ face of $\sigma_j$;}\\
0 & \text{otherwise.}
\end{array}\right.
$$
 The matrix $\partial$ is an upper triangular matrix with zeros on diagonal.
Let $low(j)$ be the row index of the lowest $1$ in column $j$.
If the entire column is zero, then $low(j)$ is undefined.
We call a matrix \textit{reduced} if $low(j)\neq low(j_0)$ whenever $j$ and $j_0$, with $j\neq j_0$,
specify two non-zero columns.
The algorithm below starts with the boundary matix $\partial$ and changes it by adding columns from left to right.
All matrices in this process are upper triangular with zeros on diagonal.
Finally we get a reduced matrix $R$.

\begin{center}
\textbf{Algorithm 2.1. The ELZ Algorithm}

\begin{tabular}{|l|}
\hline
R=$\partial$\\
for $j = 1$ to $m$ do\\
\;\;\; while there exists $j_0<j$ with $low(j_0) = low(j)$ do\\
\;\;\;\;\;\; add column $j_0$ to column $j$\\
\;\;\; endwhile\\
endfor\\
\hline
\end{tabular}
\end{center}

From  the reduced matrix $R$ one can read off the Betti numbers.
It can be proven \cite {EH} that  the zero columns of $R$ correspond to generators of cycles and
non-zero columns of $R$ correspond to generators of boundaries.
Since the homology group is quotient of cycles over boundaries,
the Betti number is the number of generators of cycle minus
the number of generators of boundary.

Therefore the $k$-th Betti number
$$
\begin{array}{ll}
b_k(\Pi; \mathbb{Z}_2) & = \text{the number of zero columns which correspond to a $k$-cycle} \\
&- \text{the number of non-zero columns which correspond to a $(k+1)$-boundary}
\end{array}
$$

Algorithms for the calculation of Betti numbers with coefficient in any finite field also exist, however they are slightly more complex.

When the coefficients are in  $\mathbb{R}$, we develop a different method using the Hodge decomposition
introduced below.

We will see that in Subsection 2.2., {Observation \ref{betarssintro}}, that the $k$-th Betti number is

$$b_k(\Pi; \mathbb{R}) = \text{number of $k$-cells in $\Pi$} - rank(\partial_{k+1}) - rank(\partial_k).
$$

\subsection{Hodge Decomposition of a Chain Complex with $\mathbb{R}$ Coefficients}\label{secHodge}

The constructions below follow the standard ``Hodge decomposition'' familiar in Riemannian geometry.
This finite dimensional elementary formulation was first considered by B. Eckmann \cite{E}.
One starts with a complex of finite dimensional $\mathbb{R}$-vector spaces
$$
\xymatrix{
\cdots \ar[r]^{\partial_{r+2}} & C_{r+1} \ar[r]^{\partial_{r+1}} & C_r \ar[r]^{\partial_r} & C_{r-1} \ar[r]^{\partial_{r-1}} & \cdots \ar[r]^{\partial_2} & C_1 \ar[r]^{\partial_1} & C_0 \ar[r]^-{\partial_0} & C_{-1}=0
}.
$$
The vector spaces in this complex are equipped with positive definite inner products.
If ${C_r}'s$ are equipped with a base, one can take the unique inner product which makes this base orthonormal.
In the cases under consideration, each complex comes from a finite simplicial\,/\,polytopal complex $L$, with $C_r$ the $\mathbb R$-vector space generated by the $r$-simplices\,/\,$r$-cells of $L$ and $\partial_r$ the boundary maps.
$\partial_{r+1}:C_{r+1}\rightarrow C_r$ are linear operators between inner product spaces.
Let $\delta_r=\partial_{r+1}^*$ be the adjoint operator of $\partial_{r+1}$.

\begin{lem}
If $A$ and $B$ are two finite dimensional inner product spaces, $f:A\rightarrow B$ is a linear map
and $f^*:B\rightarrow A$ is its adjoint, i.e. $\left<b, f(a)\right> = \left<f^*(b), a\right>$ for any $a\in A$ and $b\in B$, then

i) $\ker(f)=(\textmd{im} (f^*))^{\perp}$;

ii) For any $a\in A$, if $f\circ f^*\circ f(a)=0$ then $f(a)=0$.

\end{lem}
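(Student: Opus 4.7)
Both parts are standard consequences of the defining identity $\langle b, f(a)\rangle_B = \langle f^*(b), a\rangle_A$, and in a plan of this size the only ``decision'' to make is which direction to unwind the adjoint first. I would prove (i) by a direct double-inclusion argument at the level of elements, and (ii) by applying the adjoint identity twice, using (i) only implicitly.

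\textbf{Part (i).} The plan is to show that $a \in \ker(f)$ is logically equivalent to $a \perp \mathrm{im}(f^*)$. First I would observe that $a \in \ker(f)$ means $f(a)=0$, which in a finite dimensional inner product space is equivalent to $\langle f(a), b\rangle_B = 0$ for all $b \in B$. Applying the adjoint identity, this becomes $\langle a, f^*(b)\rangle_A = 0$ for all $b \in B$, i.e.\ $a$ is orthogonal to every element of $\mathrm{im}(f^*)$. That is exactly the statement $a \in (\mathrm{im}(f^*))^\perp$. Since every step is an ``iff'', the two subspaces coincide. No compactness or dimension count is needed; this is purely the universal property of the adjoint.

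\textbf{Part (ii).} The plan is to first promote the hypothesis $f f^* f(a)=0$ to the stronger statement $f^* f(a)=0$, and then to $f(a)=0$, by pairing with the appropriate vector to turn the adjoint inside out. Set $c := f^*f(a) \in A$. Then
\[
\langle c, c\rangle_A = \langle f^* f(a), c\rangle_A = \langle f(a), f(c)\rangle_B = \langle f(a), f f^* f(a)\rangle_B = \langle f(a), 0\rangle_B = 0,
\]
so positivity of the inner product forces $c = f^* f(a) = 0$. Repeating the trick once more,
\[
\langle f(a), f(a)\rangle_B = \langle f^* f(a), a\rangle_A = \langle 0, a\rangle_A = 0,
\]
so $f(a)=0$, as desired.

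\textbf{Expected obstacle.} There is essentially no obstacle: both parts rest on the single identity $\langle b, f(a)\rangle = \langle f^*(b), a\rangle$ plus positive-definiteness of the inner product. The only mild care needed is in (ii), where one must resist the temptation to ``cancel'' $f$ on the left of $ff^*f(a)=0$ (illegal) and instead insert a well-chosen test vector ($c = f^*f(a)$, then $a$ itself) so that the adjoint identity produces a square norm. Once that template is in place, the computation is routine.
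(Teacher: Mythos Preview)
Your proof is correct and entirely standard. The paper itself states this lemma without proof, treating both parts as elementary facts about adjoints on finite dimensional inner product spaces; your argument for (i) via the chain of equivalences and for (ii) via the two norm computations $\|f^*f(a)\|^2=0$ then $\|f(a)\|^2=0$ is exactly the expected verification.
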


\vskip .5cm

Define $\Delta_r=\partial_{r+1}\circ\delta_r+\delta_{r-1}\circ\partial_r:C_r\rightarrow C_r$ for $r\in\mathbb Z_{\geq 0}$, $(C_r)_+=\textmd{im}(\partial_{r+1})$, $(C_r)_- = \textmd{im}(\delta_{r-1})$ and
$H_r=\ker(\Delta_r)$.

\begin{prop} \label{hodge decomposition}

i) $H_r=\ker(\delta_r)\cap \ker(\partial_r)$;

ii) \textbf{(Hodge Decomposition)} $C_r=(C_r)_+\oplus H_r\oplus (C_r)_-$, where
$(C_r)_+$, $H_r$ and $(C_r)_-$ are pairwise orthogonal;

iii) There is a canonical isomorphism
$$
j_r : H_r=\ker(\delta_r)\cap \ker(\partial_r)
\rightarrow \mathcal H_r=\ker(\partial_r)/\textmd{im}(\partial_{r+1}).
$$

\end{prop}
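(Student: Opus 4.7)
The strategy is to rely on the adjoint identities $\delta_r=\partial_{r+1}^{*}$ and $\delta_{r-1}=\partial_{r}^{*}$ (so in particular $\partial_r\partial_{r+1}=0$ dualizes to $\delta_r\delta_{r-1}=0$) together with the preceding lemma in the form $\ker(f)=(\textmd{im}\,f^{*})^{\perp}$. For (i), one direction is immediate: if $\delta_r(x)=0$ and $\partial_r(x)=0$, the definition of $\Delta_r$ gives $\Delta_r(x)=0$. For the converse I would pair $\Delta_r(x)$ with $x$, obtaining
$$
\langle \Delta_r(x),x\rangle=\langle \partial_{r+1}\delta_r(x),x\rangle+\langle \delta_{r-1}\partial_r(x),x\rangle=\|\delta_r(x)\|^{2}+\|\partial_r(x)\|^{2},
$$
and positivity of the inner product forces both summands to vanish.

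For (ii), pairwise orthogonality reduces to three one-line computations via adjointness: $\langle \partial_{r+1}(y),\delta_{r-1}(z)\rangle=\langle y,\delta_r\delta_{r-1}(z)\rangle=0$; for $x\in H_r$, $\langle \partial_{r+1}(y),x\rangle=\langle y,\delta_r(x)\rangle=0$; and symmetrically $\langle \delta_{r-1}(z),x\rangle=\langle z,\partial_r(x)\rangle=0$. To upgrade this to a decomposition filling $C_r$, I would apply the lemma twice: $\ker(\delta_r)=(\textmd{im}\,\partial_{r+1})^{\perp}=(C_r)_+^{\perp}$ and $\ker(\partial_r)=(\textmd{im}\,\delta_{r-1})^{\perp}=(C_r)_-^{\perp}$. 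Combined with (i), this yields $H_r=((C_r)_+\oplus (C_r)_-)^{\perp}$, where the interior sum is direct by the orthogonality just verified; taking orthogonal complements in $C_r$ then produces the three-term orthogonal decomposition.

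For (iii), define $j_r(h)=[h]$; this is well defined since $H_r\subset\ker(\partial_r)$. Injectivity is quick: if $[h]=0$ then $h\in (C_r)_+$, but $h\in H_r$ is orthogonal to $(C_r)_+$ by (ii), so $\|h\|^{2}=\langle h,h\rangle=0$. Surjectivity is where I expect the only real subtlety. Given $c\in\ker(\partial_r)$, write $c=\partial_{r+1}(y)+h+\delta_{r-1}(z)$ via (ii); the tempting wrong move is to try to conclude $\delta_{r-1}(z)=0$ from $\partial_r\delta_{r-1}(z)=0$, which does not follow directly. Instead I would transfer $\partial_r$ back across the adjoint:
$$
0=\langle \partial_r(c),z\rangle=\langle c,\delta_{r-1}(z)\rangle=\|\delta_{r-1}(z)\|^{2},
$$
using that $\partial_{r+1}(y)$ and $h$ are both orthogonal to $(C_r)_-$. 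Hence $\delta_{r-1}(z)=0$ and $[c]=[h]=j_r(h)$. Beyond this one adjoint trick, every step is a formal manipulation with inner products and adjoints, so I do not anticipate any further obstacle.
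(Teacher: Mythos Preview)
Your argument is correct in all three parts. The paper itself does not write out a proof of (i) and (ii); it simply records the preceding lemma (the identities $\ker(f)=(\textmd{im}\,f^{*})^{\perp}$ and the cancellation $f f^{*} f(a)=0\Rightarrow f(a)=0$) and leaves the proposition to follow from it, exactly along the lines you sketch. For (iii) the paper takes a slightly different route: rather than proving $j_r$ is bijective directly, it first builds the orthogonal projector $(p_r)_H = I - [\partial_{r+1}][\partial_{r+1}]^{T} - [\partial_r^{T}][\partial_r^{T}]^{T}$ and then observes that $k_r:\mathcal H_r\to H_r$, $[y]\mapsto (p_r)_H(y)$, is a two-sided inverse to $j_r$. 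Your direct bijectivity proof and the paper's explicit inverse are equivalent and equally short; the paper's version has the minor advantage that the map $k_r$ is exactly what gets implemented in the algorithm that follows, while your version avoids having to check that $k_r$ is well defined on cosets.
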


We will give an algorithm for calculating the Hodge decomposition, i.e.,
given a chain complex $\mathcal C(L)$,
we will calculate the three orthogonal projections:
$$
(p_r)_+:C_r\rightarrow C_r ~~with~(p_r)_+(C_r)=(C_r)_+
$$
$$
(p_r)_-:C_r\rightarrow C_r ~~with~(p_r)_-(C_r)=(C_r)_-
$$
and
$$
(p_r)_H:C_r\rightarrow C_r ~~with~(p_r)_H(C_r)=H_r
$$
for each $r\geq0$ respectively.

\vskip 0.5cm

\begin{lem} \label{orth}
Given any $m\times n$ matrix $A$ over $\mathbb R$,
if  the rank of $A$ is $k$ then
there exists an $m\times k$ matrix
$[A]$ of the form
\begin{equation}\label{1.2.1}
\begin{array}{c}
[A]=(v_1, v_2, \cdots, v_k)_{m\times k}
\end{array}
\end{equation}
where $\{v_1,v_2,\cdots,v_k\}$ is a collection of
 orthonormal column vectors which is equivalent to the
collection of column vectors of $A$.
\end{lem}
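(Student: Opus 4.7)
The plan is to invoke the Gram--Schmidt orthonormalization procedure applied to a maximal linearly independent subset of the columns of $A$. Since the statement is purely an existence assertion about finite-dimensional column spaces, no substantive obstacle is expected; the main task is to identify ``equivalent'' with ``spanning the same subspace of $\mathbb{R}^m$'' and then produce the orthonormal family explicitly.

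First I would denote the columns of $A$ by $a_1, a_2, \ldots, a_n \in \mathbb{R}^m$. By hypothesis, the rank of $A$ equals $k$, which means $\dim \mathrm{span}(a_1, \ldots, a_n) = k$. Hence I can extract a subcollection $a_{i_1}, \ldots, a_{i_k}$ (with $i_1 < i_2 < \cdots < i_k$) which forms a basis of the column space of $A$; one systematic choice is to take $a_{i_j}$ to be the leftmost column that is not in the span of $a_1, \ldots, a_{i_j - 1}$.

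Next, I would apply the Gram--Schmidt process to $a_{i_1}, \ldots, a_{i_k}$, using the standard inner product of $\mathbb{R}^m$, to produce an orthonormal sequence $v_1, \ldots, v_k$. Concretely, set
\[
w_1 = a_{i_1}, \qquad w_j = a_{i_j} - \sum_{\ell=1}^{j-1} \frac{\langle a_{i_j}, w_\ell\rangle}{\langle w_\ell, w_\ell\rangle}\, w_\ell \quad (2 \le j \le k),
\]
and then $v_j = w_j / \|w_j\|$. Linear independence of $a_{i_1}, \ldots, a_{i_k}$ guarantees $w_j \neq 0$ at every step, so the normalization is well-defined. By construction, $\mathrm{span}(v_1, \ldots, v_j) = \mathrm{span}(a_{i_1}, \ldots, a_{i_j})$ for each $j$, and in particular $\mathrm{span}(v_1, \ldots, v_k) = \mathrm{span}(a_{i_1}, \ldots, a_{i_k}) = \mathrm{span}(a_1, \ldots, a_n)$.

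Finally, I would assemble these column vectors into the $m \times k$ matrix $[A] = (v_1, v_2, \ldots, v_k)$ as in (\ref{1.2.1}) and observe that, since the column space of $[A]$ coincides with that of $A$, the two collections of column vectors are equivalent in the sense required. The only subtlety worth spelling out is why a leftmost-choice selection of $k$ independent columns exists, which follows from the elementary fact that rank equals the maximal number of linearly independent columns; otherwise the argument is entirely routine.
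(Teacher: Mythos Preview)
Your proposal is correct and takes essentially the same approach as the paper: the paper simply clarifies that ``equivalent'' means generating the same subspace and then invokes the Gram--Schmidt orthonormalization as the canonical construction, without further detail. Your write-up just fills in the routine steps of that invocation.
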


Two collections of vectors are equivalent if they generate
the same subspace.
Moreover, there is a canonical construction
of such orthonormal column vectors known as Gram-Schmidt Orthonormalization.

\begin{note}
 1. Given a linear map $A:\mathbb R^n\rightarrow \mathbb R^m$, one
view $A$ as a $m\times n$ matrix with respect to the standard basis of
$\mathbb R^n$ and $\mathbb R^m$. The collection of column vectors
of $[A]$ represents an orthonormal basis of $\textmd{im}(A)$.

2. \textbf{Matlab} contains a function \textbf{orth} with input an $m\times n$ matrix $A$ and
output a matrix $[A]$.

3. $[A][A]^T$ is unique although $[A]$ is not(See \textbf{Lemma \ref{given a linear map}}).
\end{note}

\begin{lem}\label{given a linear map}
Given a linear map $A:\mathbb R^n\rightarrow \mathbb R^m$,
the linear map
$$
\begin{array}{rl}
p_A: & \mathbb{R}^m \rightarrow \mathbb{R}^m \\
     & y  \mapsto  [A][A]^T y
\end{array}
$$
is the orthogonal projection on $\textmd{im}(A)$.

\end{lem}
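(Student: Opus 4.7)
The plan is to verify the three defining properties of an orthogonal projection onto $\textmd{im}(A)$: that $p_A$ maps into $\textmd{im}(A)$, that it fixes every vector of $\textmd{im}(A)$, and that $y - p_A(y)$ is orthogonal to $\textmd{im}(A)$ for every $y \in \mathbb{R}^m$. The essential algebraic fact I will use is that, by Lemma \ref{orth}, the columns $v_1,\dots,v_k$ of $[A]$ form an orthonormal basis of $\textmd{im}(A)$, which is equivalent to the matrix identity $[A]^T[A] = I_k$.

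First I would unpack $p_A(y) = [A][A]^T y$ as $\sum_{i=1}^{k} \langle v_i, y\rangle\, v_i$, since the $i$-th entry of $[A]^T y$ is exactly $\langle v_i, y\rangle$. From this expression it is immediate that $p_A(y)$ lies in the span of $v_1,\dots,v_k$, which is $\textmd{im}(A)$; this handles the first property. For the second, any $x \in \textmd{im}(A)$ can be written uniquely as $x = \sum_i \langle v_i, x\rangle v_i$ because $\{v_i\}$ is an orthonormal basis, which matches the formula for $p_A(x)$; alternatively, writing $x = [A]w$ and using $[A]^T[A] = I_k$ gives $p_A(x) = [A][A]^T[A]w = [A]w = x$.

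For the third property, I would compute $\langle v_j, y - p_A(y)\rangle = \langle v_j, y\rangle - \sum_i \langle v_i, y\rangle \langle v_j, v_i\rangle = \langle v_j, y\rangle - \langle v_j, y\rangle = 0$ for each basis vector $v_j$ of $\textmd{im}(A)$, which shows $y - p_A(y) \in \textmd{im}(A)^\perp$. Combined with $p_A(y) \in \textmd{im}(A)$, this is the defining decomposition of the orthogonal projection. As a byproduct, one can also note that $p_A$ is symmetric ($([A][A]^T)^T = [A][A]^T$) and idempotent ($p_A^2 = [A][A]^T[A][A]^T = [A][A]^T = p_A$), which is an equivalent characterization.

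I do not anticipate a serious obstacle here; everything reduces to the orthonormality identity $[A]^T[A] = I_k$. The one subtlety worth a sentence is the uniqueness of $[A][A]^T$ claimed in the preceding Note: since the orthogonal projection onto a subspace is uniquely determined by the subspace, and $\textmd{im}(A)$ depends only on $A$ and not on the choice of orthonormal basis produced by Gram-Schmidt, the matrix $[A][A]^T$ is independent of the choice of $[A]$ even though $[A]$ itself is not.
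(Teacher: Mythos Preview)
Your proof is correct and complete. The paper itself states this lemma without proof (treating it as a standard linear-algebra fact), so there is no paper argument to compare against; your verification via the three defining properties of orthogonal projection, grounded in the identity $[A]^T[A]=I_k$ from Lemma~\ref{orth}, is exactly the natural argument and also cleanly explains the uniqueness remark in the preceding Note.
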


\begin{prop}
Given a chain complex $\mathcal C(L)$ of a finite simplicial\,/\,polytopal complex $L$ over $\mathbb R$
$$
\xymatrix{
\cdots \ar[r]^{\partial_{r+2}} & C_{r+1} \ar[r]^{\partial_{r+1}} & C_r \ar[r]^{\partial_r} & C_{r-1} \ar[r]^{\partial_{r-1}} & \cdots \ar[r]^{\partial_2} & C_1 \ar[r]^{\partial_1} & C_0 \ar[r]^-{\partial_0} & C_{-1} =0
}
$$
Each $\partial_r$ can be regarded as an $n_{r-1}\times n_r$ matrix with respect to the standard basis of $C_r$ and
$C_{r-1}$.
The following linear maps are orthogonal projections onto $(C_r)_+$,$(C_r)_-$
and $H_r$, respectively.
$$
\begin{array}{rrl}
(p_r)_+: & C_r &\rightarrow C_r \\
     & y  &\mapsto  [\partial_{r+1}][\partial_{r+1}]^T y\\

(p_r)_-: & C_r &\rightarrow C_r \\
     & y  &\mapsto  [(\partial_{r})^T][(\partial_{r})^T]^T y\\

(p_r)_H: & C_r &\rightarrow C_r \\
     & y  &\mapsto  (I_{n_r}-[\partial_{r+1}][\partial_{r+1}]^T-[(\partial_{r})^T][(\partial_{r})^T]^T)y
\end{array}
$$
\end{prop}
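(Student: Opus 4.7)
The plan is to derive each of the three projection formulas by combining the Hodge decomposition (Proposition \ref{hodge decomposition}) with the orthogonal-projection formula from Lemma \ref{given a linear map}. Recall that $C_r = (C_r)_+ \oplus H_r \oplus (C_r)_-$ is an \emph{orthogonal} direct sum, so the orthogonal projection onto any one summand is uniquely characterized, and the three projections must satisfy $(p_r)_+ + (p_r)_H + (p_r)_- = I_{n_r}$. Thus it suffices to identify $(p_r)_+$ and $(p_r)_-$ explicitly, and then obtain $(p_r)_H$ by subtraction.

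For $(p_r)_+$: by definition $(C_r)_+ = \textmd{im}(\partial_{r+1})$, and $\partial_{r+1}: C_{r+1} \to C_r$ is represented, with respect to the standard (orthonormal) bases, by the matrix $\partial_{r+1}$. Applying Lemma \ref{given a linear map} directly with $A = \partial_{r+1}$ yields that $y \mapsto [\partial_{r+1}][\partial_{r+1}]^T y$ is the orthogonal projection onto $\textmd{im}(\partial_{r+1}) = (C_r)_+$.

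For $(p_r)_-$: by definition $(C_r)_- = \textmd{im}(\delta_{r-1})$, where $\delta_{r-1} = \partial_r^*$ is the adjoint of $\partial_r$. The key observation is that, because the standard bases of $C_r$ and $C_{r-1}$ (indexed by cells) are declared orthonormal, the matrix representing the adjoint $\partial_r^*$ is precisely the transpose $(\partial_r)^T$. Hence $\textmd{im}(\delta_{r-1}) = \textmd{im}((\partial_r)^T)$, and Lemma \ref{given a linear map} applied with $A = (\partial_r)^T$ gives the formula $y \mapsto [(\partial_r)^T][(\partial_r)^T]^T y$ for the orthogonal projection onto $(C_r)_-$.

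Finally, $(p_r)_H$ is obtained by writing $(p_r)_H = I_{n_r} - (p_r)_+ - (p_r)_-$ and substituting the two formulas above. The main (and essentially the only) conceptual point requiring care is the identification of the adjoint $\delta_{r-1}$ with the matrix transpose $(\partial_r)^T$, which relies crucially on the orthonormality of the chosen bases; once this is noted, the verification reduces to a direct citation of Lemma \ref{given a linear map} in each case, with no further computation needed.
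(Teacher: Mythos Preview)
Your proof is correct and is precisely the argument the paper intends: the proposition is stated without an explicit proof, relying on the reader to combine Proposition~\ref{hodge decomposition} (the orthogonal Hodge decomposition) with Lemma~\ref{given a linear map} (the formula $p_A = [A][A]^T$ for projection onto $\textmd{im}(A)$) in exactly the way you describe. Your observation that the adjoint $\delta_{r-1}$ is represented by the transpose $(\partial_r)^T$ because the standard cell bases are orthonormal is the one point the paper leaves implicit here (it is made explicit later, in the footnote to Subsection~\ref{secPHBFSCC}), and your derivation of $(p_r)_H$ by subtraction is the intended reading.
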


The linear map
$$
\begin{array}{lrl}
k_r: & \mathcal H_r=\ker({\partial_r})/\textmd{im}({\partial_{r+1}}) & \rightarrow  H_r=\ker(\delta_r)\cap \ker(\partial_r)\\
 & y+\textmd{im}(\partial_{r+1})  & \mapsto (p_r)_H(y)
\end{array}
$$
is the inverse of $j_r$, which verifies {Proposition \ref{hodge decomposition}} iii).

\begin{obs}\label{rankAintro}
The rank of a real matrix $A$ equals to the rank of
$A A^T$ or $A^T A$.
\end{obs}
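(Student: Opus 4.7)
The plan is to prove $\mathrm{rank}(A) = \mathrm{rank}(A^T A)$ first, and then obtain $\mathrm{rank}(A) = \mathrm{rank}(A A^T)$ by swapping the roles of $A$ and $A^T$ (using the standard fact $\mathrm{rank}(A) = \mathrm{rank}(A^T)$). By the rank-nullity theorem applied to $A$ and to $A^T A$, both viewed as linear maps $\mathbb{R}^n \to \mathbb{R}^m$ and $\mathbb{R}^n \to \mathbb{R}^n$ respectively, it suffices to prove the equality of kernels $\ker(A) = \ker(A^T A)$.

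The inclusion $\ker(A) \subseteq \ker(A^T A)$ is immediate: if $Ax = 0$, then $A^T A x = A^T(Ax) = 0$. For the reverse inclusion $\ker(A^T A) \subseteq \ker(A)$, I would use positivity of the standard inner product: if $A^T A x = 0$, then
\[
\|Ax\|^2 = \langle Ax, Ax \rangle = \langle x, A^T A x \rangle = \langle x, 0 \rangle = 0,
\]
whence $Ax = 0$. (Equivalently, this is precisely assertion (ii) of the lemma preceding Proposition \ref{hodge decomposition} applied with $f = A^T$, since $A A^T A x = 0$ forces $A^T x$ to lie in $\ker(A A^T A)$; but the direct computation above is cleaner.) Combining the two inclusions yields $\ker(A) = \ker(A^T A)$, and rank-nullity gives $\mathrm{rank}(A^T A) = n - \dim \ker(A^T A) = n - \dim \ker(A) = \mathrm{rank}(A)$.

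Finally, replacing $A$ by $A^T$ in the identity just established gives $\mathrm{rank}(A A^T) = \mathrm{rank}((A^T)^T A^T) = \mathrm{rank}(A^T) = \mathrm{rank}(A)$, completing the proof. There is no real obstacle here: the only substantive step is the positivity argument $\|Ax\|^2 = \langle x, A^T A x \rangle$, which relies on the inner product on $\mathbb{R}^n$ being positive definite — this is exactly why the statement holds over $\mathbb{R}$ and can fail over other fields.
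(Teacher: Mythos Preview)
Your proof is correct and entirely standard. The paper itself states this observation without proof, so there is nothing to compare against; your kernel argument via $\|Ax\|^2 = \langle x, A^T A x\rangle$ is exactly the right idea. One small quibble: in your parenthetical remark invoking the lemma before Proposition~\ref{hodge decomposition}, the correct choice is $f = A$ (not $f = A^T$), so that $f\circ f^*\circ f = A A^T A$ and $A^T A x = 0 \Rightarrow A A^T A x = 0 \Rightarrow Ax = 0$; but as you note, the direct computation is cleaner and stands on its own.
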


By the Hodge decomposition, $\dim(H_r) = \dim(C_r) - \dim((C_r)^+) - \dim((C_r)^-)$.
By the above observation,  $\dim((C_r)^+) = rank([\partial_{r+1}][\partial_{r+1}]^T) = rank([\partial_{r+1}])
= rank(\partial_{r+1})$.
We get $\dim((C_r)^-) = rank(\partial_{r})$ similarly.

\begin{obs}\label{betarssintro}
The $r$-th Betti number
$$
b_r(L; \mathbb{R}) = \dim(\mathcal{H}_r(L; \mathbb{R})) = \dim(C_r)-rank(\partial_{r+1})-rank(\partial_r).
$$
\end{obs}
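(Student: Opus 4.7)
The plan is to read off the identity directly from the Hodge decomposition plus a standard fact about the rank of an adjoint. Nothing new needs to be built; the work is to assemble pieces that have just been proved.

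First I would invoke Proposition \ref{hodge decomposition} ii) to write the orthogonal decomposition $C_r = (C_r)_+ \oplus H_r \oplus (C_r)_-$, so that dimensions add:
$$
\dim(C_r) = \dim((C_r)_+) + \dim(H_r) + \dim((C_r)_-).
$$
Next, by Proposition \ref{hodge decomposition} iii) the map $j_r$ is an isomorphism between $H_r$ and $\mathcal{H}_r(L;\mathbb{R})$, so $\dim(H_r) = \dim(\mathcal{H}_r(L;\mathbb{R})) = b_r(L;\mathbb{R})$. This reduces the claim to computing $\dim((C_r)_+)$ and $\dim((C_r)_-)$.

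For the image summands, I would argue as follows. By definition $(C_r)_+ = \textmd{im}(\partial_{r+1})$, so $\dim((C_r)_+) = \text{rank}(\partial_{r+1})$ immediately. For the minus summand, $(C_r)_- = \textmd{im}(\delta_{r-1}) = \textmd{im}(\partial_r^\ast)$. To identify its dimension with $\text{rank}(\partial_r)$ I would apply Observation \ref{rankAintro}: the rank of a real matrix $A$ equals that of $AA^T$ and $A^TA$, and in particular $\text{rank}(\partial_r^\ast) = \text{rank}(\partial_r^T) = \text{rank}(\partial_r)$. (Equivalently, using Lemma \ref{orth} and Lemma \ref{given a linear map}, one has $\dim((C_r)_-) = \text{rank}([\partial_r^T][\partial_r^T]^T) = \text{rank}(\partial_r)$, which is the formulation most consistent with the algorithmic framework of the section.)

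Substituting both dimensions into the orthogonal sum and solving for $\dim(H_r)$ yields
$$
b_r(L;\mathbb{R}) = \dim(C_r) - \text{rank}(\partial_{r+1}) - \text{rank}(\partial_r),
$$
which is exactly the claimed identity. There is no real obstacle here: the only conceptual point is the rank equality $\text{rank}(\partial_r^\ast) = \text{rank}(\partial_r)$, and that is delivered verbatim by the preceding observation. The proof therefore amounts to stringing together Proposition \ref{hodge decomposition} ii), iii), and Observation \ref{rankAintro}.
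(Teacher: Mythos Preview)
Your proof is correct and follows essentially the same line as the paper's: invoke the Hodge decomposition (Proposition \ref{hodge decomposition} ii), iii)) to reduce to computing $\dim((C_r)_\pm)$, then use Observation \ref{rankAintro} to identify these with $\text{rank}(\partial_{r+1})$ and $\text{rank}(\partial_r)$. The paper phrases the rank computation via the orthonormalized matrices $[\partial_{r+1}][\partial_{r+1}]^T$, while you argue slightly more directly from $(C_r)_+=\textmd{im}(\partial_{r+1})$, but this is a cosmetic difference only.
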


\newpage

\section{Persistent Homology of a PCD} \label{sec 3}

\subsection{Introduction}

Informally, we call a finite set of points $X \subseteq \mathbb{R}^m$ a \textit{point
cloud data} (PCD for short).
A PCD can be regarded as a finite metric space (see Definition \ref{PCD}).

A first new idea in Data Analysis is to regard
a PCD $X$ as a filtration of the standard $n$-simplex $\Delta^n$ ($n = \sharp X - 1$)\footnote {Here and in the remaining part of this text ``$\sharp$'' will denote ``cardinality''.}  via a  construction known as ``Vietoris-Rips complex''
(see Subsection \ref{secPCD}).
The first term of the filtration is the $0$-skeleton of $\Delta^n$ or $X$ itself, the last term is $\Delta^n$
 with the remaining components giving an idea of the qualitative features of the set $X\subseteq \mathbb{R}^m$.

Another new idea is to use homology to describe the topological changes of the components of this filtration
(topological features ``are born'' in some component, and  ``die'' in some other component).
This leads to the persistent homology of this filtered simplicial complex.
The persistent homology provides the tools to measure and
explain the qualitative patterns of a PCD.
The role of the numerical invariants Betti numbers, when the homology of a space (simplicial\,/\,polytopal complex) is considered,
is taken by the invariants \textbf{bar codes}, when the persistent homology of a filtered simplicial\,/\,polytopal complex is under consideration.
The homology considered in this work is always with coefficients in a field $\kappa$ ($\kappa = \mathbb{Z}_2$ or
$\mathbb{R}$), so the homology groups are actually vector spaces.
The linear algebra for persistent homology is ``persistent linear algebra'' discussed in Subsection  \ref{secPLA}.
 The invariant ``dimension'' for a vector space is replaced by the invariant ``bar codes'' for a persistence vector space.
The \textit{bar codes} provides a complete invariant for a persistence vector space (cf. Theorem \ref {two tame}) as the \textit{dimension} provides a complete invariant for a vector space.
 For a filtered simplicial\,/\,cell complex as opposed to simplicial\,/\,cell complex, we will have ``bar codes'' derived from persistent homology
as opposed to Betti numbers derived from homology.
The bar codes can be explicitly calculated by algorithms of the same complexity as the algorithms used for
the calculation of the Betti numbers.
For $\kappa = \mathbb{Z}_2$, the matrix reduction and pairing algorithm described in \cite{EH} and referred below as the ELZ algorithm is the one we will use.
For $\kappa = \mathbb{R}$, we will use elementary Hodge theory to produce a new algorithm for the calculation of the
bar codes.

Subsection \ref{secPCD} discusses PCD and  Vietoris-Rips filtration.
Subsection \ref{secPLA} discusses persistent linear algebra, including persistence vector spaces
and their bar codes.
The essential features of a persistence vector space are the concept of ``birth and death time'' \footnote{Time here means the index of the component of the filtration.}
of its elements and the fact that the ``bar codes'' provides a complete invariant which carries significant  information
about ``birth and death time'' of its elements.
Subsection \ref{secPHBFSCC} defines
persistence vector space $\mathcal H_r({\mathcal K})$ and the bar codes $\mathcal{B(K)}$ of
a filtered simplicial\,/\,polytopal complex ${\mathcal K}$ and
uses the Hodge decomposition to calculate $\mathcal{B(K)}$.
Subsection \ref{secAlg} gives algorithms to calculate the bar codes of a PCD with coefficients
$\kappa = \mathbb{Z}_2$ or $\mathbb{R}$.
In case of the field $\mathbb{Z}_2$, the algorithm was introduced in \cite{ELZ}.
Subsection \ref{secNE} shows a numerical experiment of the bar codes of a PCD with
coefficients $\kappa = \mathbb{R}$.

\subsection{PCD, Vietoris-Rips Complex and Filtration}\label{secPCD}

\vskip .1in

\begin{figure}[h!]
\centering
   \setlength\fboxsep{0.5pt}
   \setlength\fboxrule{0.5pt}
   \fbox{\includegraphics[scale=.78]{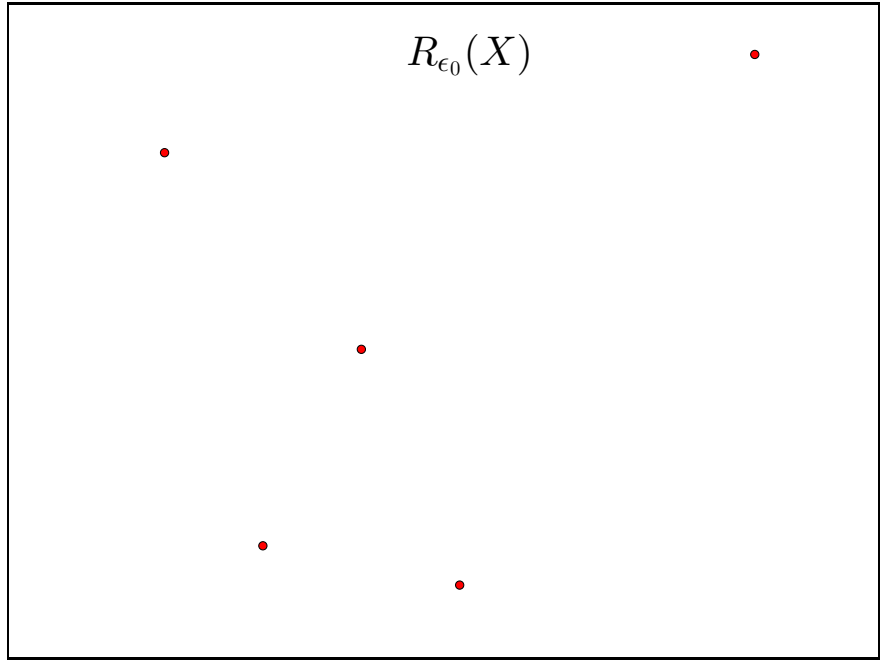}}
     \caption{A PCD in $\mathbb{R}^2$ with $5$ points.} \label{Figure3.1}
\end{figure}

Informally, we call a finite set of points $X \subseteq \mathbb{R}^m$ a point
cloud data (PCD for short). 

\begin{exmp}\label{example221}

A PCD in $\mathbb{R}^2$ with $5$ points (see Figure \ref{Figure3.1} above).

\end{exmp}

The inclusion   $X\subseteq \mathbb R^m$ defines a finite metric space $(X,d)$ with $d$ being the Euclidean distance  in $\mathbb R^m$. From mathematical point of view we will use the following definition

\begin{defn} \label{PCD}
A point cloud data is a finite metric space $X=(X,d)$.
\end{defn}

Let $X$ be a finite metric space.
Given $\epsilon\geq 0$, the Vietoris-Rips complex $R_{\epsilon}(X)$ of
PCD $X$ has $X$ as the  set of verticies \footnote {Other types of simplicial complexes can be associated to the metric space and $\epsilon$ but they calculate essentially the same invariants and are most often less economical. They will not be discussed here.}.
A $k$-simplex is any  subset of vertices $\sigma=[x_0,x_1,\dots,x_k]$
with  the property that  $d(x_i,x_j)\leq\epsilon$
for all pairs $x_i,x_j\in\sigma$.
One obtains in this way an \textit{abstract simplicial complex}.

Notice that Vietoris-Rips complex will be determined by its one skeleton.

If $\epsilon<\epsilon'$ then  there is an inclusion $R_{\epsilon}(X)\hookrightarrow R_{\epsilon'}(X)$.

Since a PCD $X$ is a finite set, Vietoris-Rips complex $R_{\epsilon}(X)$ will change at only finitely many
epsilons
\setcounter{equation}{-1}
\begin{equation}\label{2.2.0}
0 = \epsilon_0<\epsilon_1<\cdots<\epsilon_N = \sup_{x,x'\in X} (d(x, x')).
\end{equation}

We define the filtration of Vietoris-Rips complexes of PCD X as
\begin{equation}\label{2.2.1}
 R_{\epsilon_0}(X)\subseteq R_{\epsilon_1}(X) \subseteq \cdots \subseteq R_{\epsilon_N}(X).
\end{equation}

In this filtration, $R_{\epsilon_0}(X)$ is a zero dimensional simplicial complex
with $\sharp X$ vertices and $R_{\epsilon_N}(X)$ is a $\sharp X - 1$ standard simplex.

\begin{exmp}
For the example of PCD described in Figure \ref{Figure3.1}, one obtain eleven $\epsilon's: \epsilon_0,\epsilon_1,\cdots,\epsilon_{10}$
and a filtration with eleven components as indicated below.

\begin{figure}[h!]
  \centering
\vskip -6pt
$
\begin{array}{ccc}
    \setlength\fboxsep{0.5pt}
    \setlength\fboxrule{0.5pt}
    \fbox{\includegraphics[scale=.33]{00X4.eps}} &
    \setlength\fboxsep{0.5pt}
    \setlength\fboxrule{0.5pt}
    \fbox{\includegraphics[scale=.33]{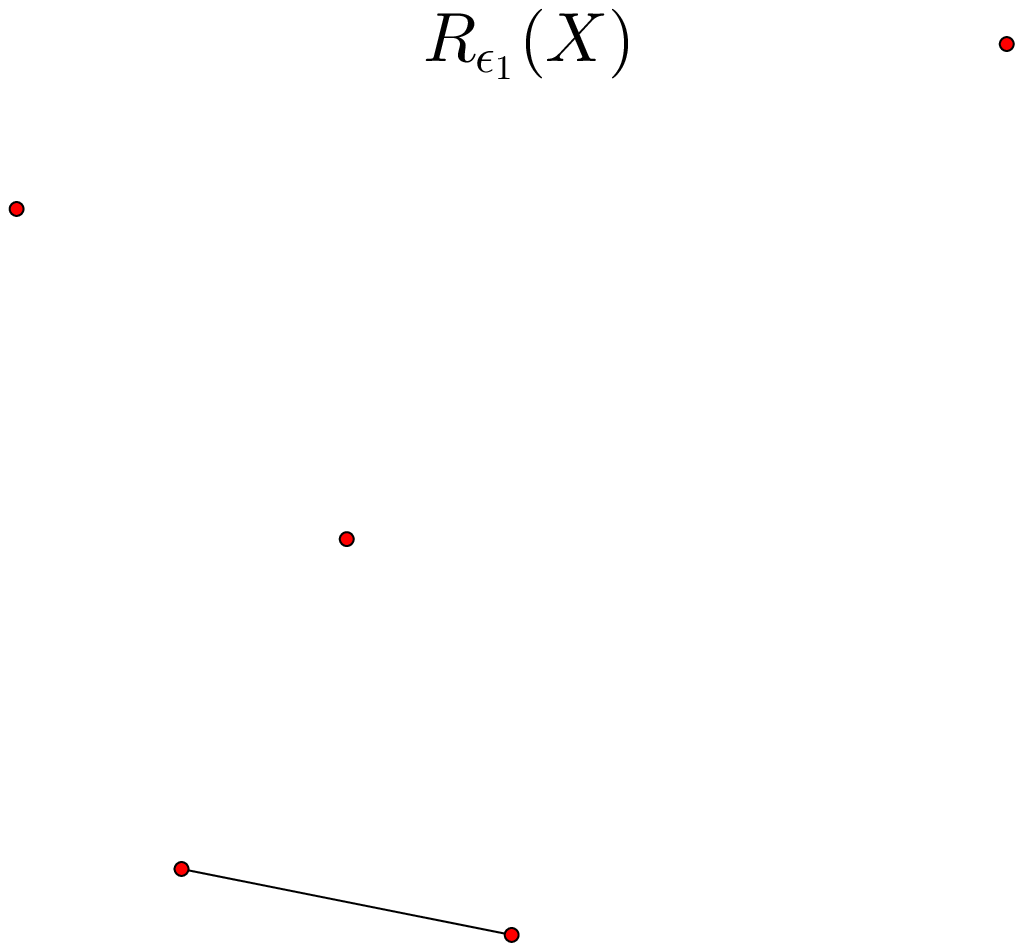}} &
    \setlength\fboxsep{0.5pt}
    \setlength\fboxrule{0.5pt}
    \fbox{\includegraphics[scale=.33]{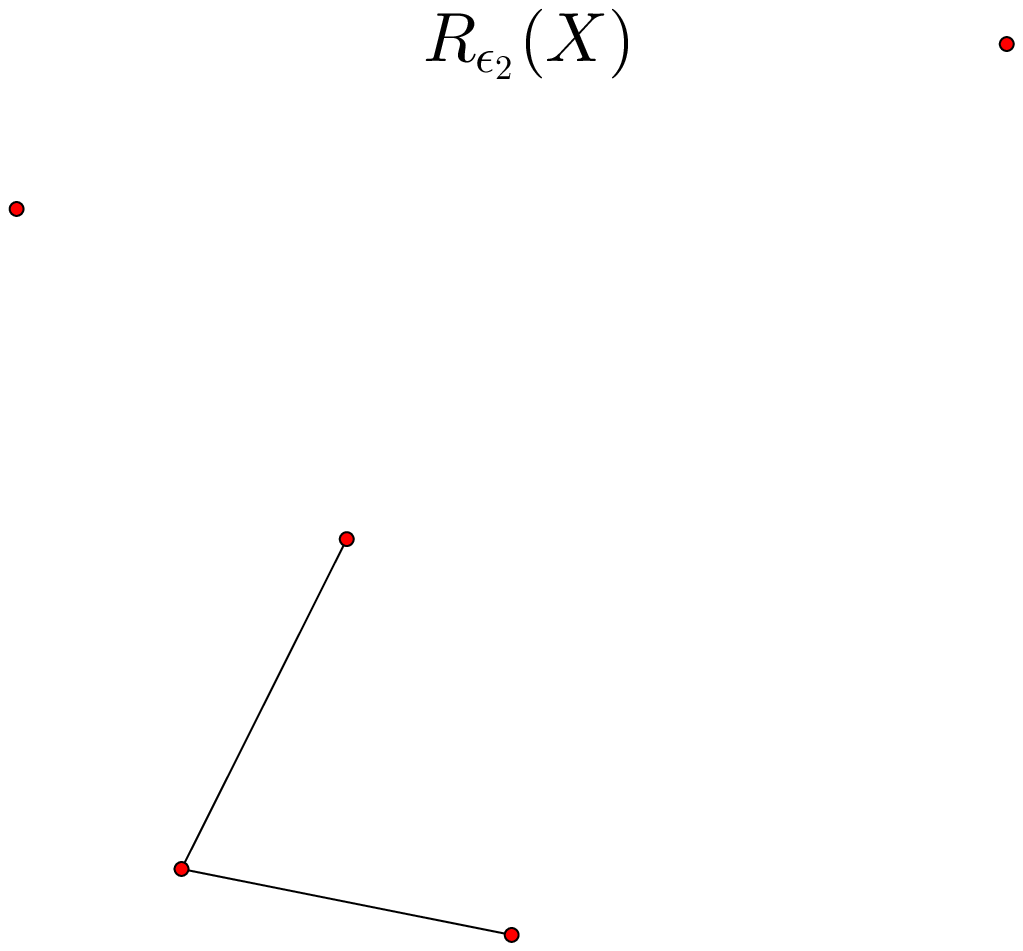}} \\
    \setlength\fboxsep{0.5pt}
    \setlength\fboxrule{0.5pt}
    \fbox{\includegraphics[scale=.33]{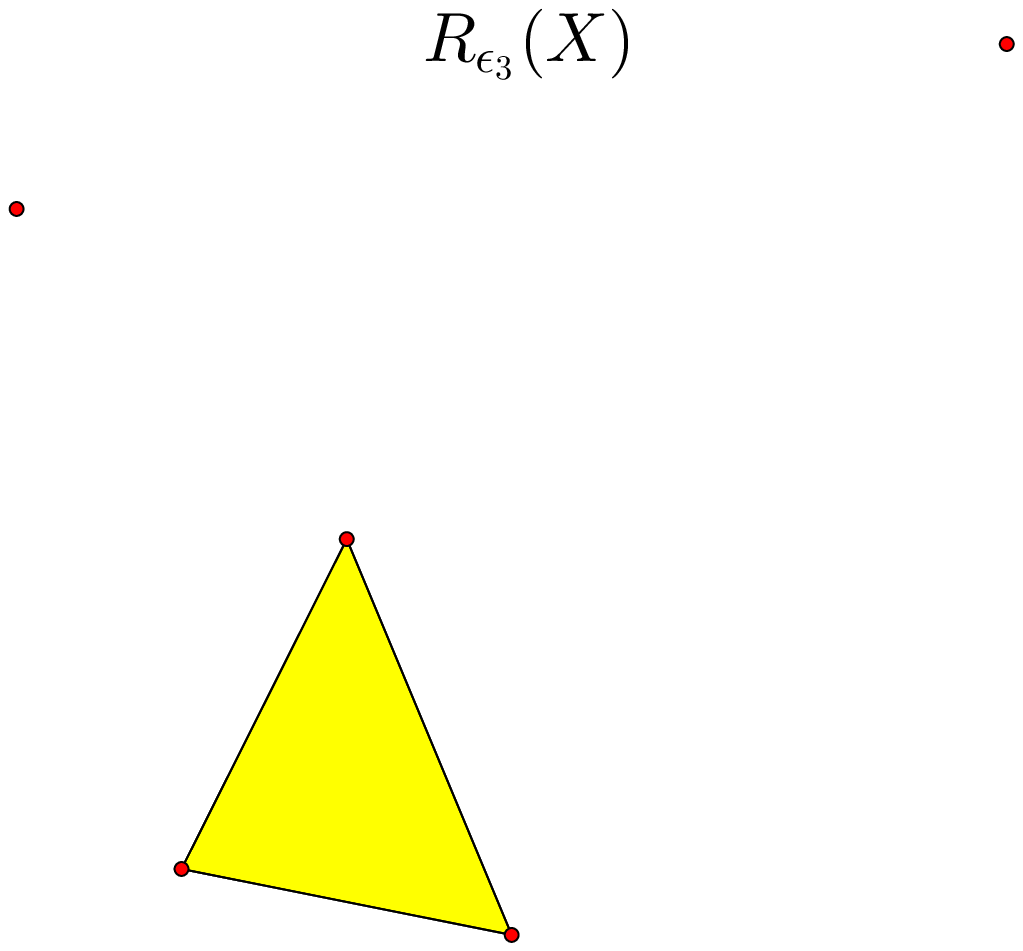}} &
    \setlength\fboxsep{0.5pt}
    \setlength\fboxrule{0.5pt}
    \fbox{\includegraphics[scale=.33]{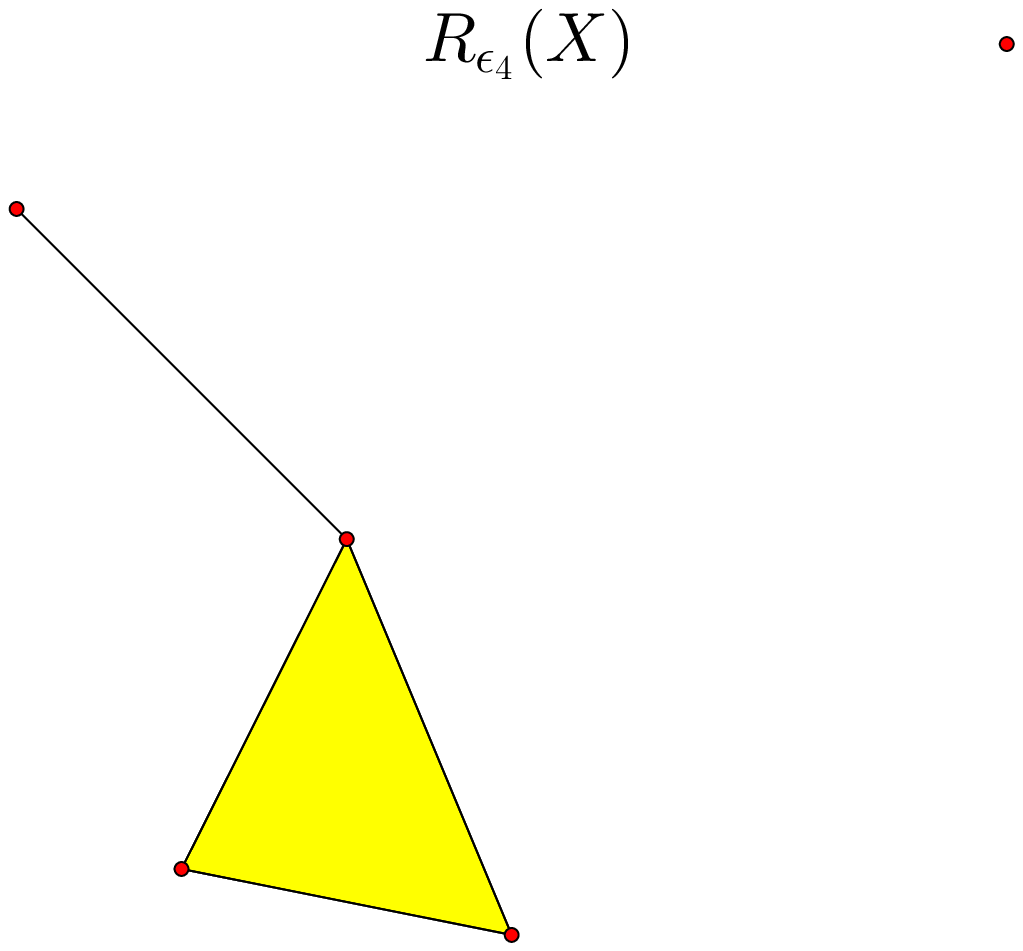}} &
    \setlength\fboxsep{0.5pt}
    \setlength\fboxrule{0.5pt}
    \fbox{\includegraphics[scale=.33]{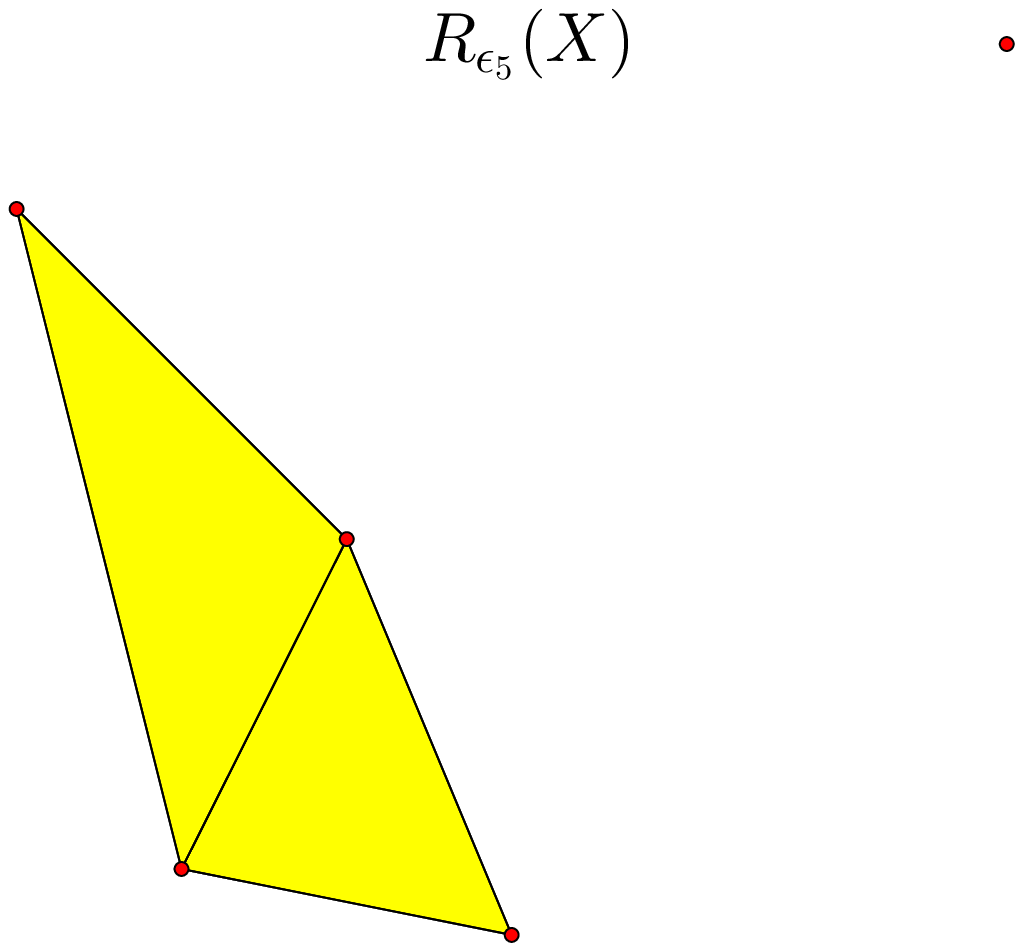}}
    \end{array}
$
$
\begin{array}{ccc}
    \setlength\fboxsep{0.5pt}
    \setlength\fboxrule{0.5pt}
    \fbox{\includegraphics[scale=.33]{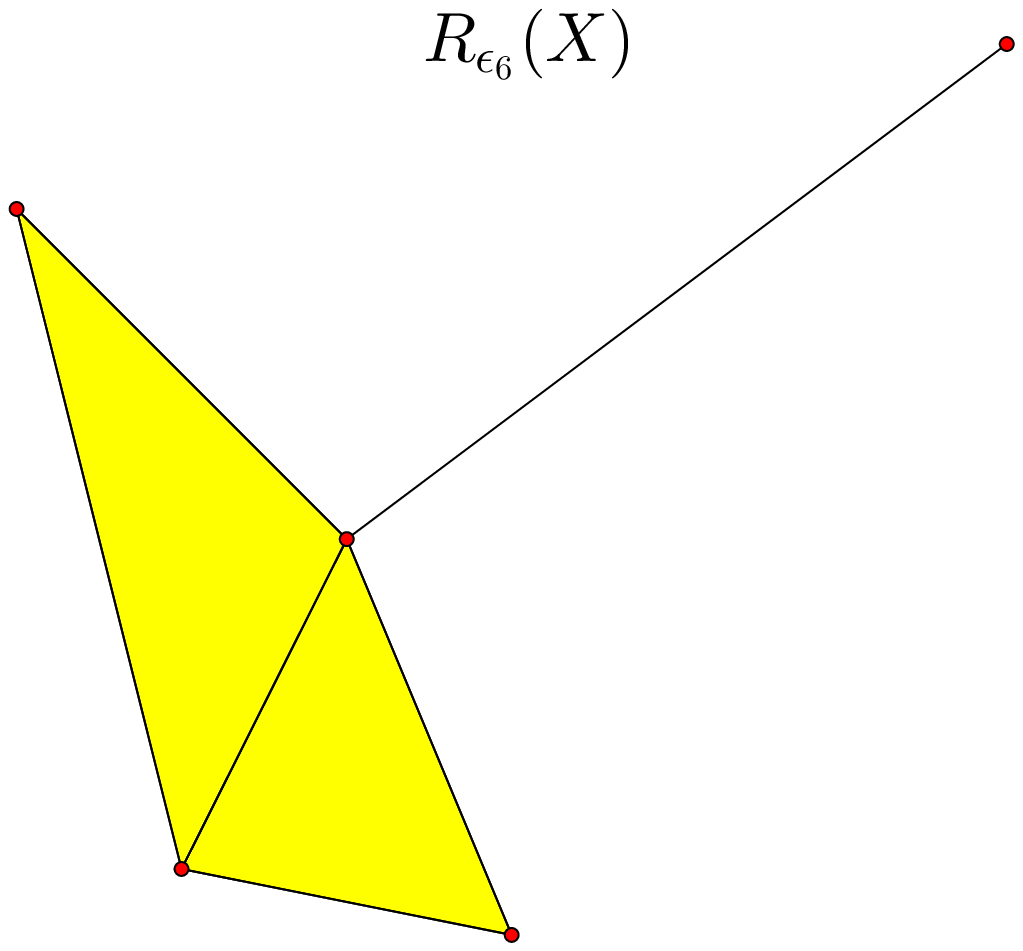}} &
    \setlength\fboxsep{0.5pt}
    \setlength\fboxrule{0.5pt}
    \fbox{\includegraphics[scale=.33]{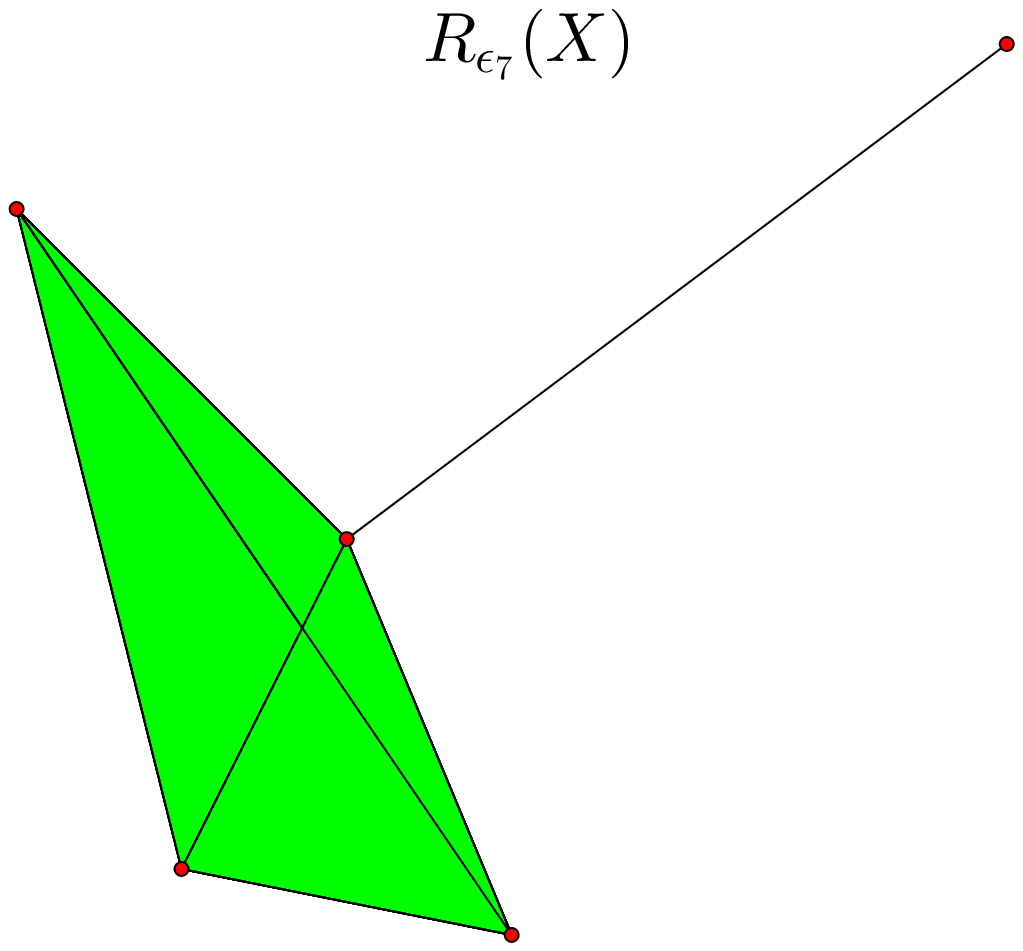}} &
    \setlength\fboxsep{0.5pt}
    \setlength\fboxrule{0.5pt}
    \fbox{\includegraphics[scale=.33]{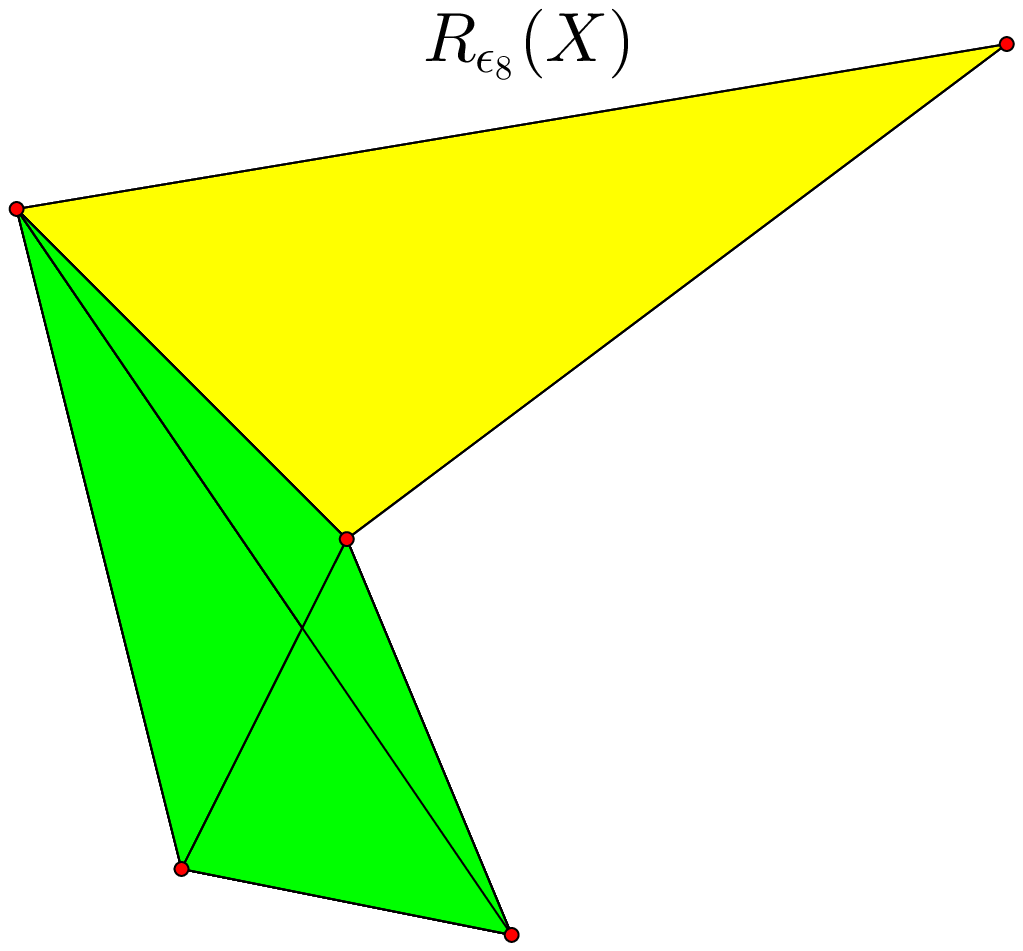}} \\
    \setlength\fboxsep{0.5pt}
    \setlength\fboxrule{0.5pt}
    \fbox{\includegraphics[scale=.33]{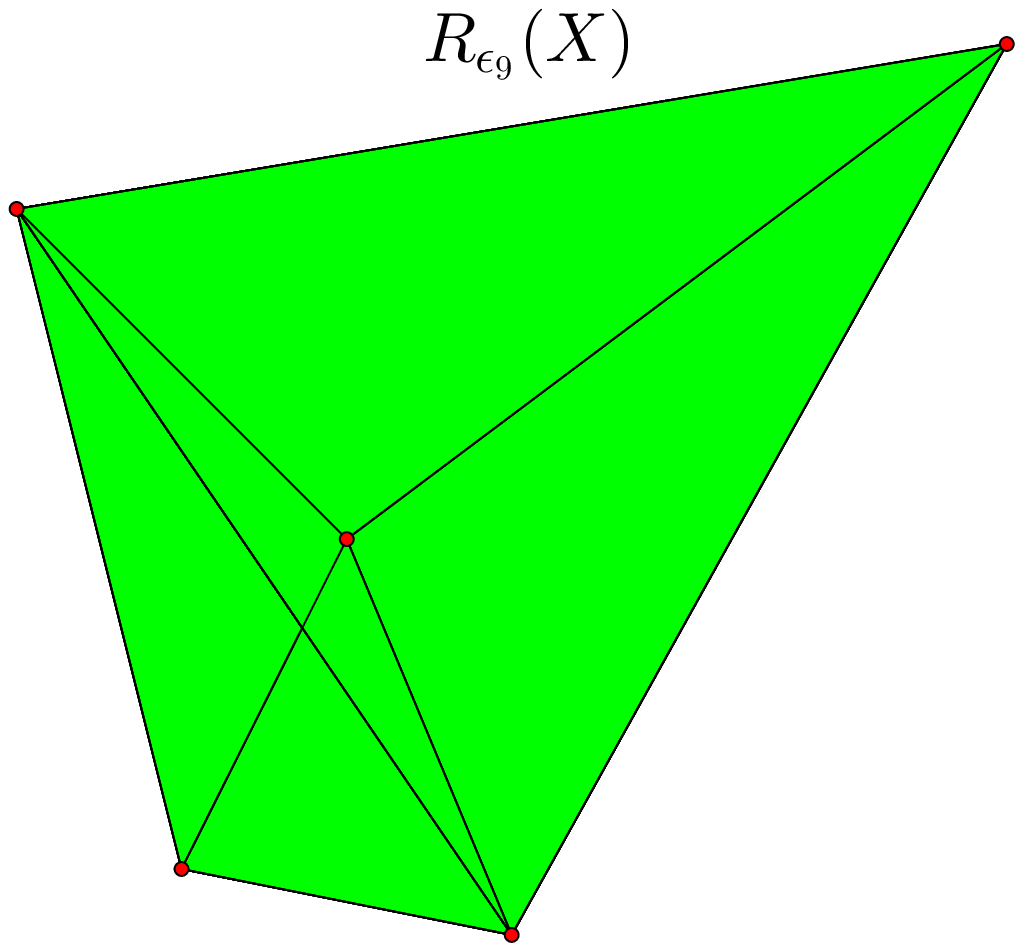}} &
    \setlength\fboxsep{0.5pt}
    \setlength\fboxrule{0.5pt}
    \fbox{\includegraphics[scale=.33]{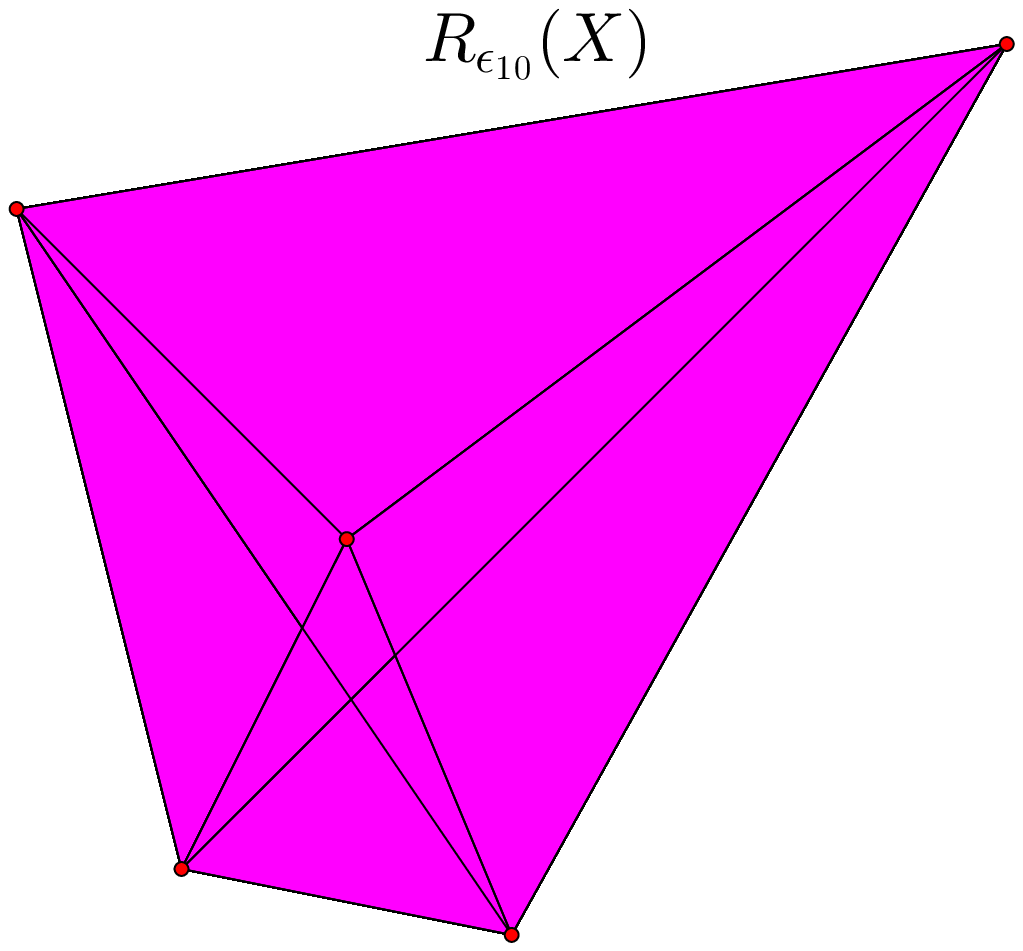}} &
\end{array}
$
\caption{Filtration of Vietoris-Rips complexes of the PCD in {Example \ref{example221}}.}
\end{figure}

\end{exmp}

The invariant we will calculate will be a collection of intervals for
any $r$, $0\leq r\leq \dim R_{\epsilon_N}(X)$.
Each interval will have as \textit{ends} the numbers $\epsilon_0,\cdots,\epsilon_N,+\infty$
or equivalently  $0, 1, \cdots, N, $ or $+\infty$ with the convention that $i$
specifies  the number $\epsilon_i$.

If we are interested in the bar codes for $r\leq m-1$, one can work with $R_{\epsilon}(X,m)$, the $m$-skeleton
of $R_{\epsilon}(X)$. The $m$-restricted filtered simplicial complex
\begin{equation}\label{2.2.2}
 R_{\epsilon_0}(X,m)\subseteq R_{\epsilon_1}(X,m) \subseteq \cdots \subseteq R_{\epsilon_N}(X,m).
\end{equation}
has the same persistent homology and bar codes as the original filtration (\ref{2.2.1}) up to dimension $m-1$
(see subsection  \ref{secPHBFSCC} for the definition of persistent homology and bar codes).
This supposes a much smaller amount of data to be stored in a computer and
permits to calculate all bar codes for $r\leq m-1$.

Given a positive integer number $P (P\leq N)$, we can make a further restriction of the filtration (\ref{2.2.2}) by stopping at level $P$
\begin{equation}\label{2.2.3}
 R_{\epsilon_0}(X,m)\subseteq R_{\epsilon_1}(X,m) \subseteq \cdots \subseteq R_{\epsilon_P}(X,m).
\end{equation}

This will permit the calculation of all bar codes for $r\leq m-1$ with both ends less than $P$
or the bar codes for $r\leq m-1$ with left end less than $P$ and right end $\geq P$.

\subsection{Persistent Linear Algebra (a Gentle Introduction)}\label{secPLA}

\begin{defn} \label{a persistence vector space} \label{pvs}
 1) A persistence vector space ${\mathcal V}$ is a sequence
$$\{V_n,\varphi_{n}:V_n\rightarrow V_{n+1} \mid n\in \mathbb{Z}_{\geq 0}\}$$
with $V_n$ vector spaces over a field $\kappa$ and $\varphi_{n}$
linear maps.

A persistence vector space is \textbf{tame} iff each $V_n$ has finite
dimension and $\varphi_{n}$ is an isomorphisms for $n$
large enough.

The main features of this concept are ``birth and death time'' \footnote{Time here means the index of the component of the filtration.} for elements
and information about these is provided by the bar codes.

2) A linear map of persistence vector spaces $\omega:{\mathcal V}\rightarrow {\mathcal W}$,
where ${\mathcal V}= \{V_n,\varphi_{n}:V_n\rightarrow V_{n+1} \mid n\in \mathbb{Z}_{\geq 0}\}$
and ${\mathcal W}= \{W_n,\phi_{n}:W_n\rightarrow W_{n+1} \mid n\in \mathbb{Z}_{\geq 0}\}$
is a commutative diagram
$$
\xymatrix{
V_0 \ar[d]^{\omega_0} \ar[r]^{\varphi_0} &
V_1 \ar[d]^{\omega_1} \ar[r]^{\varphi_1} &
\cdots  \ar[r]^{\varphi_{n-1}} &
V_n \ar[d]^{\omega_n} \ar[r]^{\varphi_n} &
V_{n+1} \ar[d]^{\omega_{n+1}} \ar[r]^{\varphi_{n+1}} &
\cdots
 \\
W_0 \ar[r]^{\phi_0}  &
W_1 \ar[r]^{\phi_1}  &
\cdots \ar[r]^{\phi_{n-1}} &
W_n \ar[r]^{\phi_n}  &
W_{n+1} \ar[r]^{\phi_{n+1}}  &
\cdots
}
$$
where $\omega_n$ is a linear map from $V_n$ to $W_n$ for each $n\geq0$.

 A linear map of persistence vector spaces $\omega:{\mathcal V}\rightarrow {\mathcal W}$
is an isomorphism if there exists another linear map of persistence vector spaces
${\omega}':{\mathcal W}\rightarrow {\mathcal V}$ such that $\omega'\circ\omega:{\mathcal V}\rightarrow {\mathcal V}$
and  $\omega\circ\omega':{\mathcal W}\rightarrow {\mathcal W}$ are identities.
${\mathcal V}$ and ${\mathcal W}$ are isomorphic if there is an isomorphism between them.

The existence of a linear map $\omega:{\mathcal V}\rightarrow {\mathcal W}$
so that each component $\omega_n:{\mathcal V_n}\rightarrow {\mathcal W_n}$ is an isomorphism
for every $n\geq0$ implies $\omega$ is an isomorphism. One takes $\omega_n'=(w_n)^{-1}$.

3) Let ${\mathcal U}= \{U_n,\psi_{n}:U_n\rightarrow U_{n+1} \mid n\in \mathbb{Z}_{\geq 0}\}$,
${\mathcal V}= \{V_n,\varphi_{n}:V_n\rightarrow V_{n+1} \mid n\in \mathbb{Z}_{\geq 0}\}$
and ${\mathcal W}= \{W_n,\phi_{n}:W_n\rightarrow W_{n+1} \mid n\in \mathbb{Z}_{\geq 0}\}$
be persistence vector spaces. A short exact sequence of persistence vector spaces
$$
\xymatrix{
0 \ar[r] & {\mathcal{U}} \ar[r]^{\mu} &{\mathcal V} \ar[r]^{\nu} & {\mathcal W} \ar[r] & 0
}
$$
is a sequence of linear maps of persistence vector spaces such that
$$
\xymatrix{
0 \ar[r] & {\ U}_n \ar[r]^{\mu_n} &{\ V}_n \ar[r]^{\nu_n} & {\ W}_n \ar[r] & 0
}
$$
is short exact for all $n\geq0$.

The short exact sequence
$
\xymatrix{
0 \ar[r] & {\mathcal{U}} \ar[r]^{\mu} &{\mathcal V} \ar[r]^{\nu} & {\mathcal W} \ar[r] & 0
}
$
splits if there exits a linear map $\alpha: {\mathcal V}\rightarrow {\mathcal U}$
such that $\alpha\circ\mu=identity$
or if there exists a linear map $\beta: {\mathcal W}\rightarrow {\mathcal V}$
such that $\nu\circ\beta=identity$.

\begin{note}
The alternative definitions are equivalent.
Given $\alpha:{\mathcal V}\rightarrow {\mathcal U}$ such that $\alpha\circ\mu=identity$, we can define
$\beta:{\mathcal W}\rightarrow {\mathcal V}$ such that $\nu\circ\beta=identity$ and vice versa.

i)Let $\alpha:{\mathcal V}\rightarrow {\mathcal U}$ such that $\alpha\circ\mu=identity$.
For any $x\in W_n$, there exists $x'\in V_n$ such that $\nu_n(x')=x$.
Define
$$
\begin{array}{lll}
\beta_n: &  W_n\rightarrow V_n\\
 & x  \mapsto x'-\mu_n\circ\alpha_n(x').
\end{array}
$$
To see that $\beta_n$ is well defined, i.e. independent of the choice of $x'$,
consider
$$
\begin{array}{lll}
\gamma_n: & V_n\rightarrow  U_n\oplus W_n\\
 & x  \mapsto (\alpha_n x, \nu_n x).
\end{array}
$$

$\gamma_n$ is injective. Indeed if $x\in \ker(\gamma_n)$ hence $\alpha_n x=0$ and $\nu_n x=0$.
Since ${\textmd im}{\mu_n}=\ker{\nu_n}$, there exists
$y\in U_n$ such that $x=\mu_n(y)$. Then $y=\alpha_n\circ\mu_n(y)=\alpha_n(x)=0$, so $x=\mu_n(0)=0$.

If $x''\in V_n$ such that $\nu_n(x'')=x$, then
$\gamma_n$ will map both $x'-\mu_n\circ\alpha_n(x')$
and $x''-\mu_n\circ\alpha_n(x'')$ to $(0,x)$.
Since $\gamma_n$ is injective, $x'-\mu_n\circ\alpha_n(x')=x''-\mu_n\circ\alpha_n(x'')$
which shows that $\beta_n$ is well defined.

It is straightforward that $\nu_n\circ\beta_n=identity$ and
the following diagram is commutative
$$
\xymatrix{
W_n \ar[r]^{\phi_n} \ar[d]^{\beta_n} &
W_{n+1} \ar[d]^{\beta_{n+1}} \\
 V_n \ar[r]^{\varphi_n} &
 V_{n+1}
}
$$

ii)Let $\beta:{\mathcal W}\rightarrow {\mathcal V}$ with  $\nu\circ\beta=identity$.
For any $x\in {V}_n$, let $y_x=x-\beta_n\circ\nu_n(x)$,
then $\nu_n(y_x)=0$ and there exist a unique $z_x\in \mu_n$ such that
$\mu_n(z_x)=y_x$.

Define
$$
\begin{array}{lll}
\alpha_n: &  V_n\rightarrow U_n\\
 & x  \mapsto z_x
\end{array}
$$

$\alpha_n$ is well defined since for each $x$, $y_x$ and $z_x$
is unique.

It is straightforward to check $\alpha_n\circ\mu_n=identity$ and
the following diagram is commutative
$$
\xymatrix{
V_n \ar[r]^{\varphi_n} \ar[d]^{\alpha_n} &
V_{n+1} \ar[d]^{\alpha_{n+1}} \\
U_n \ar[r]^{\psi_n} &
U_{n+1}.
}
$$
\end{note}

4) Direct sum of a finite collection of persistence vector spaces
 ${\mathcal V}^i= \{V^i_n,\varphi^i_{n}:V^i_n\rightarrow V^i_{n+1} \mid n\in \mathbb{Z}_{\geq 0}\}$, $i\in \Lambda$($\Lambda$ finite),
 is defined by $\displaystyle\bigoplus_{i\in \Lambda} {\mathcal V}^i= \{V_n,\varphi_{n}:V_n\rightarrow V_{n+1} \mid n\in \mathbb{Z}_{\geq 0}\}$,
where $\displaystyle V_n=\bigoplus_{i\in \Lambda} V^i_n$, $\displaystyle\varphi_n=\bigoplus_{i\in \Lambda} \varphi^i_n$ for $n\geq 0$.
Note that a direct sum of a finite collection of tame persistence vector spaces is tame.

\end{defn}

\begin{obs} \label{split short exact sequence}

1)
Given a split short exact sequence
$$
\xymatrix{
0 \ar[r] & {\mathcal{U}} \ar[r]^{\mu} &{\mathcal V} \ar[r]^{\nu} & {\mathcal W} \ar[r] & 0
},
$$
we have
${\mathcal V} \cong {\mathcal U}\oplus {\mathcal W} $,
 where ${\mathcal U}$,${\mathcal V}$,${\mathcal W}$ and $\mu$,$\nu$are the same as
 in { Definition \ref{a persistence vector space}} 3).

 2)
 Not any short exact sequence is split.

\end{obs}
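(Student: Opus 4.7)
The plan is to handle the two parts separately and rely on the fact that a persistence vector space is just a diagram of ordinary vector spaces, so that the classical splitting lemma can be applied level by level, provided we then verify naturality with respect to the structure maps.

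For part 1), I would start from a splitting, say $\beta:\mathcal{W}\to\mathcal{V}$ with $\nu\circ\beta=\mathrm{id}_{\mathcal{W}}$ (the equivalence with the existence of $\alpha$ is already proved in the note preceding the observation). For each $n$, the classical splitting lemma for vector spaces gives that the linear map
\[
\omega_n:U_n\oplus W_n\longrightarrow V_n,\qquad (u,w)\mapsto \mu_n(u)+\beta_n(w),
\]
is an isomorphism. To upgrade this to an isomorphism of persistence vector spaces, I would verify the square
\[
\varphi_n\circ\omega_n=\omega_{n+1}\circ(\psi_n\oplus\phi_n),
\]
which expands as $\varphi_n(\mu_n(u)+\beta_n(w))=\mu_{n+1}(\psi_n(u))+\beta_{n+1}(\phi_n(w))$; this follows immediately because $\mu$ and $\beta$ are linear maps of persistence vector spaces, so $\varphi_n\mu_n=\mu_{n+1}\psi_n$ and $\varphi_n\beta_n=\beta_{n+1}\phi_n$. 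Since each $\omega_n$ is a linear isomorphism and the diagram commutes, the final remark of Definition \ref{pvs}, 2) gives that $\omega=\{\omega_n\}$ is an isomorphism $\mathcal{U}\oplus\mathcal{W}\cong\mathcal{V}$.

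For part 2), it suffices to exhibit a single non-split short exact sequence. I would take the three persistence vector spaces with all structure maps between nonzero terms equal to the identity: $\mathcal{U}$ with $U_0=0$ and $U_n=\kappa$ for $n\geq 1$; $\mathcal{V}$ with $V_n=\kappa$ for all $n\geq 0$; and $\mathcal{W}$ with $W_0=\kappa$ and $W_n=0$ for $n\geq 1$. The obvious $\mu$ (identity in degrees $\geq 1$) and $\nu$ (identity in degree $0$) form a short exact sequence. A hypothetical splitting $\beta:\mathcal{W}\to\mathcal{V}$ would force $\beta_0=\mathrm{id}_\kappa$ but $\beta_1=0$, contradicting naturality $\varphi_0\circ\beta_0=\beta_1\circ\phi_0$ since the left side is $\mathrm{id}_\kappa$ and the right side is $0$. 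The only routine check is that no $\alpha:\mathcal{V}\to\mathcal{U}$ can split either, which follows symmetrically (or from the equivalence of the two splitting conditions proved in the note). I do not anticipate a genuine obstacle here; the main care needed is to keep the level-wise and the diagrammatic data straight when writing the commutativity verifications.
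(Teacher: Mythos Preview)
Your argument for part 1) is essentially identical to the paper's: both define the map $U_n\oplus W_n\to V_n$ by $(u,w)\mapsto \mu_n(u)+\beta_n(w)$, check commutativity, and conclude. The only cosmetic difference is that the paper spells out injectivity and surjectivity by hand while you invoke the classical splitting lemma.

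For part 2) your counterexample is correct but genuinely different from the paper's. The paper uses a more elaborate example with $V_0=V_1=\mathbb{R}\oplus\mathbb{R}$ and a nilpotent structure map $c:(x,y)\mapsto(y,0)$, sandwiched between copies of $\mathbb{R}\xrightarrow{0}\mathbb{R}$. Your example is the minimal one: in the paper's notation it is simply $0\to S^1\kappa[t]\to\kappa[t]\to T_1\kappa[t]\to 0$, and the obstruction to splitting is visible immediately from the degree-$0$/degree-$1$ square. Your version is cleaner and, in hindsight, ties in better with the later decomposition theory (it shows directly that $\kappa[t]$ cannot be written as $S^1\kappa[t]\oplus T_1\kappa[t]$, which is exactly what uniqueness of bar codes will later confirm). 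The paper's example, on the other hand, stays entirely within two-step persistence vector spaces that eventually vanish, so it does not rely on any infinite tail; either approach is fine.
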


\begin{proof}
1) Since the two conditions for a short exact sequence to be split are
equivalent, WLOG, we can assume that there exists a linear map $\beta: {\mathcal W}\rightarrow {\mathcal V}$
such that $\nu\circ\beta=identity$.
Define the linear map
$$
\begin{array}{lll}
\delta_n: &  U_n\oplus W_n\rightarrow V_n\\
 & (y,z)  \mapsto \mu_n y+ \beta_n z
\end{array}
$$
for all $n\geq0$.

It is easy to check the following diagram is commutative
$$
\xymatrix{
U_n\oplus W_n \ar[r]^-{(\psi_n,\phi_n)} \ar[d]^{\delta_n} &
U_{n+1}\oplus W_{n+1} \ar[d]^{\delta_{n+1}} \\
V_n \ar[r]^{\varphi_n} &
V_{n+1}
}
$$

Given any element $x\in V_n$.
Let $z=\nu_n(x)$, then $\nu_n(x-\beta_n z)=z-z=0$ and
there exist a unique $y\in U_n$ such that $\mu_n(y)=x-\beta_n z$.

Then $\delta_n(y,z)=\mu_ny+\beta_nz=x-\beta_nz+\beta_nz=x$ and $\delta_n$
is surjective.

Let $(y,z)\in\ker\delta_n$, then $\mu_ny+\beta_nz=0$.

Since $0=\nu_n(\mu_ny+\beta_nz)=\nu_n\circ\mu_n(y)+\nu_n\circ\beta_n(z)=0+z=z$, we have $z=0$.

Then $\mu_ny=0$ implies $y=0$, since $\mu_n$ is injective.

Therefore $(y,z)=(0,0)$ and $\delta_n$ is injective.

Hence $\delta_n$ is an isomorphism for all $n\geq0$.

Therefore $\delta:  {\mathcal U}\oplus {\mathcal W}\rightarrow {\mathcal V} $ is an isomorphism
and ${\mathcal V} \cong {\mathcal U}\oplus {\mathcal W} $.

2)
Counterexample:

Consider the following short exact sequence:
$$
\xymatrix{
{\mathcal U} \ar[d]_{\mu} & {\mathbb R} \ar[r]^0 \ar[d]^{i_1} & {\mathbb R} \ar[r]^0 \ar[d]^{i_1} & 0 \ar[r]^0 \ar[d]^0 & 0 \ar[r]^0 \ar[d]^0 & 0 \ar[r]^0 \ar[d]^0 & \cdots & & & \\
{\mathcal V} \ar[d]_{\nu} \ar@/_1pc/@{-->}[u]_{\alpha}& {\mathbb R}\oplus{\mathbb R} \ar[r]^c \ar[d]^{p_2} & {\mathbb R}\oplus{\mathbb R} \ar[r]^0 \ar[d]^{p_2} & 0\ar[r]^0 \ar[d]^0 & 0\ar[r]^0 \ar[d]^0& 0\ar[r]^0 \ar[d]^0& \cdots & & &\\
{\mathcal W} & {\mathbb R} \ar[r]^0 & {\mathbb R} \ar[r]^0 & 0 \ar[r]^0 & 0 \ar[r]^0& 0 \ar[r]^0& \cdots & & &
}
$$
where
$$
\begin{array}{rcrcrc}
i_1: &{\mathbb R}\rightarrow {\mathbb R}\oplus{\mathbb R}, & p_2:& {\mathbb R}\oplus{\mathbb R}\rightarrow{\mathbb R}, & c: &  {\mathbb R}\oplus{\mathbb R}\rightarrow {\mathbb R}\oplus{\mathbb R} \\
    & x  \mapsto (x,0) & & (x,y)\mapsto y & & (x,y)\mapsto (y,0)
\end{array}
$$

If the sequence splits, in view of $\alpha\circ\mu=identity$, $\alpha_1(x,0)=x$,
which makes the commutativity of the diagram
$$
\xymatrix{
{\mathbb R} \ar[r]^0 & {\mathbb R} \\
{\mathbb R}\oplus{\mathbb R} \ar[u]^{\alpha_0} \ar[r]^c & {\mathbb R}\oplus{\mathbb R} \ar[u]^{\alpha_1}
}
$$
impossible.

\end{proof}

\begin{nota}
Define the following tame persistence vector spaces
as \textbf {basic tame persistence vector spaces}.

1)
$\kappa[t]$ is the tame persistence vector space over a field $\kappa$
$$\{V_n,\varphi_{n}:V_n\rightarrow V_{n+1} \mid n\in \mathbb{Z}_{\geq 0}\}$$
where $V_n=\kappa$ and $\varphi_{n}=identity$ for all $n\geq0$.

It corresponds to the interval $[0,\infty)$.

2)
$S^r \kappa[t]$ is the tame persistence vector space over a field $\kappa$
$$\{V_n,\varphi_{n}:V_n\rightarrow V_{n+1} \mid n\in \mathbb{Z}_{\geq 0}\}$$
where $V_n=0$ for $0\leq n< r$, $V_n=\kappa$ for $n\geq r$
, $\varphi_n=0$ for $0\leq n< r $ and $\varphi_n=identity$ for $n\geq r$.

It corresponds to the interval $[r,\infty)$.

The notation $S^r$ indicates the right shift with $r$-units.

3)
$T_{r+1}\kappa[t]$ is the tame persistence vector space over a field $\kappa$
$$\{V_n,\varphi_{n}:V_n\rightarrow V_{n+1} \mid n\in \mathbb{Z}_{\geq 0}\}$$
where $V_n=\kappa$ for $0\leq n\leq r$, $V_n=0$ for $n> r$
, $\varphi_n=identity$ for $0\leq n< r $ and $\varphi_n=0$ for $n\geq r$.

It corresponds to the interval $[0,r]$.

The notation $T_{r+1}$ indicates the truncation at level $r+1$.

4)
$S^rT_{p+1}\kappa[t]$ is the tame persistence vector space over a field $\kappa$
$$\{V_n,\varphi_{n}:V_n\rightarrow V_{n+1} \mid n\in \mathbb{Z}_{\geq 0}\}$$
where $V_n=\kappa$ for $r\leq n\leq r+p$ and $V_n=0$ otherwise;
$\varphi_n=identity$ for $r\leq n\leq (r+p-1)$ and $\varphi_n=0$
otherwise.

It corresponds to the interval $[r,r+p]$.

\end{nota}

\begin{nota}

For tame persistence vector space ${\mathcal V}=\{V_n,\varphi_{n}:V_n\rightarrow V_{n+1} \mid n\in \mathbb{Z}_{\geq 0}\}$

1)
Denote $\varphi_{i,j}=\varphi_{j-1}\circ\cdots\circ\varphi_i:V_i\rightarrow V_j$ for $i<j$ and $\varphi_{i,i}=identity:V_i\rightarrow V_i$,
with $i,j\in {\mathbb Z}_{\geq 0}$.

2)
Denote $\beta(i,j)=\dim({\textmd im}(\varphi_{i,j}:V_i\rightarrow V_j))$ with $i\leq j\in{\mathbb Z}_{\geq 0}$.

Note $\beta(i,i)=\dim V_i$ and  $\beta(i,j)=\beta(i,j+1)$ for $j$ large enough.
If $\varphi_n$ is an isomorphism for $n\geq N$, denote $\beta(i,\infty)=\beta(i,m)$,
where $m$ is any integer larger than $i$ and $N$.

\end{nota}

\begin{defn} \label{order of basic space}

Define an order $\prec$ of all basic tame persistence vector spaces:

1) $S^r\kappa[t]\prec S^{r'}\kappa[t]$ if $r<r'$;

2) $S^r T_{p+1}\kappa[t]\prec S^{r'} T_{p'+1}\kappa[t]$ if $r<r'$ or ($r=r'$ and $p<p'$);

3) $S^r T_{p+1}\kappa[t]\prec S^{r'}\kappa[t]$.

Clearly, this order is a strict total order.

\end{defn}

\begin{lem} \label{if V and W}
If ${\mathcal V}$ and ${\mathcal W}$ are two basic tame persistence vector spaces
such that  ${\mathcal V}\prec{\mathcal W}$, then any map from ${\mathcal V}$ to
${\mathcal W}$ is trivial.
\end{lem}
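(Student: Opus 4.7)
My proof will be a case analysis on the three clauses defining $\prec$, driven by two elementary observations. First, for any basic tame persistence vector space $\mathcal X$, the spaces $X_n$ are $\kappa$ on an interval $I_{\mathcal X}$ (its support) and $0$ elsewhere, with structure maps that are the identity $\kappa\to\kappa$ between consecutive levels of $I_{\mathcal X}$ and zero otherwise. Second, for any linear map $\omega:\mathcal V\to\mathcal W$, the commutative squares $\phi_n\circ\omega_n=\omega_{n+1}\circ\varphi_n$ force $\omega_n=0$ automatically at levels $n\notin I_{\mathcal V}\cap I_{\mathcal W}$, while inside the overlap $I_{\mathcal V}\cap I_{\mathcal W}$ the structure maps are identities and the squares reduce to the equation $\omega_{n+1}=\omega_n$. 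So $\omega$ on the overlap is determined by a single scalar, and I only need to produce one level in the overlap where that scalar vanishes.

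To produce such a level I will exploit a ``boundary square.'' If there is a level $n_0-1$ lying in $I_{\mathcal V}$ but not in $I_{\mathcal W}$, then $\omega_{n_0-1}=0$ while $\varphi_{n_0-1}$ is the identity on $\kappa$, so the square at $n_0-1$ reads $\omega_{n_0}\circ\mathrm{id}=\phi_{n_0-1}\circ 0=0$, forcing $\omega_{n_0}=0$. Dually, if there is a level $n_0+1$ in $I_{\mathcal W}$ but not in $I_{\mathcal V}$, then $\omega_{n_0+1}=0$ and $\phi_{n_0}$ is the identity, and the square at $n_0$ again gives $\omega_{n_0}=0$.

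It then remains to verify that one of these two boundary configurations occurs in each clause of $\prec$. In clause~(1), $\min I_{\mathcal W}=r'>r=\min I_{\mathcal V}$, so the left-boundary argument applies at $n_0=r'$. In clause~(2), if $r<r'$ the same left-boundary argument works (at $n_0=r'$, and if the overlap is empty there is nothing to prove); if $r=r'$ and $p<p'$, then $\max I_{\mathcal V}=r+p<r+p'=\max I_{\mathcal W}$, and the right-boundary argument applies at $n_0=r+p$. In clause~(3), $I_{\mathcal V}$ is bounded while $I_{\mathcal W}$ is unbounded, so (assuming the overlap is nonempty) $\max I_{\mathcal V}<\infty=\max I_{\mathcal W}$ and the right-boundary argument applies at $n_0=r+p$.

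I expect no real obstacle: the order $\prec$ was designed precisely so that either the left endpoint of the $\mathcal V$-support strictly precedes that of $\mathcal W$, or (when the left endpoints coincide, or $\mathcal V$ is bounded while $\mathcal W$ is not) the right endpoint of the $\mathcal V$-support strictly precedes that of $\mathcal W$. Each such configuration produces a boundary square forcing $\omega$ to vanish at one level of the overlap, and then the identity structure maps propagate the vanishing across the rest of the overlap. The only real task is the three-case enumeration above.
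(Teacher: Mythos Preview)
Your proof is correct and takes essentially the same approach as the paper: both exploit a commutative square at a boundary of the supports to force $\omega$ to vanish at one level, and then use the identity structure maps inside the overlap to propagate the vanishing. The paper phrases this via compositions $\varphi_{n,j}$ across several levels (e.g., from an arbitrary $n$ in the overlap to the fixed level $r$ or $r'+p'$) rather than your single-step boundary square plus induction $\omega_{n+1}=\omega_n$, and it only writes out clause~(2) in detail, declaring clauses~(1) and~(3) ``similar''; your version is organizationally cleaner and treats all three clauses explicitly, but the underlying argument is the same.
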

\begin{proof}

We check situation 2) of {Definition \ref{order of basic space}} first.

Suppose ${\mathcal V}=S^r T_{p+1} \kappa[t]$ and ${\mathcal W}=S^{r'} T_{p'+1} \kappa[t]$ and
there is a linear map $\omega:{\mathcal V}\rightarrow {\mathcal W}$. We want to show $\omega=0$.

We have 2 cases:

Case 1: $r<r'$

In this case, $\omega_n=0$ for all $n<r$ or $n>r+p$, since $V_n=0$.

For $r\leq n\leq r+p$, consider the following commutative diagram:
$$
\xymatrix{
\kappa \ar[r]^{id} \ar[d]^{\omega_r} & \kappa \ar[d]^{\omega_n} \\
0 \ar[r]^0 & W_n
}
$$

We have $\omega_n=\omega_n\circ id=0\circ \omega_r=0$.

Hence, $\omega=0$, if $r<r'$.

Case 2: $r=r'$ and $p<p'$

Consider the commutative diagram:
$$
\xymatrix{
\kappa \ar[r]^{0} \ar[d]^{\omega_n} & 0 \ar[d]^{\omega_{r'+p'}} \\
\kappa \ar[r]^{id} & \kappa
}
$$
for $r\leq n\leq r+p$.

 The proofs of situations 1) and 3) of {Definition \ref{order of basic space}} are similar.

\end{proof}

\begin{prop} \label{any tame persistence vector space}
Any tame persistence vector space ${\mathcal V}=\{V_n,\varphi_{n}:V_n\rightarrow V_{n+1}|n\in \mathbb{Z}_{\geq 0}\}$
over a field $\kappa$ is isomorphic to
$$
\bigoplus_{1\leq i\leq p}  S^{r_i}  \kappa[t] \oplus \bigoplus_{1\leq j\leq q} S^{m_j} T_{n_j+1} \kappa[t]
$$

where $p,q,r_i,m_j \;and \; n_j \in {\mathbb Z}_{\geq 0}$.

{So, any tame persistence vector space can be decomposed in to a direct sum of
a finite collection of basic tame persistence vector spaces.}

\end{prop}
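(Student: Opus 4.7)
The plan is to prove the decomposition by induction on the integer invariant
$\sigma({\mathcal V}) := \sum_{n=0}^{N} \dim V_n$, where, by tameness, $N$ is chosen large enough that $\varphi_n$ is an isomorphism for every $n \geq N$. The base case $\sigma({\mathcal V})=0$ corresponds to the empty direct sum. For the inductive step I will peel off a single basic summand ${\mathcal U}$ by producing an explicit splitting $\alpha:{\mathcal V}\to{\mathcal U}$, then invoke Observation \ref{split short exact sequence} to obtain ${\mathcal V} \cong {\mathcal U} \oplus \ker\alpha$ and apply the induction hypothesis to $\ker\alpha$.

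To identify the summand, I set $s = \min\{n : V_n \neq 0\}$ and pick any nonzero $v \in V_s$; since $V_{s-1}=0$ by minimality of $s$, there is no obstruction coming from indices below $s$. Let $p$ be the largest integer such that $v_{s+p} := \varphi_{s,s+p}(v) \neq 0$, or $p = \infty$ if no such largest integer exists. Tameness guarantees one of these two alternatives holds. I then take ${\mathcal U}$ to be the sub-persistence-vector-space whose degree-$n$ component is $\kappa\cdot v_n$ whenever $v_n$ is defined and nonzero, and $0$ otherwise. By construction, ${\mathcal U} \cong S^s\kappa[t]$ when $p=\infty$ and ${\mathcal U} \cong S^s T_{p+1}\kappa[t]$ when $p$ is finite, so ${\mathcal U}$ is exactly one of the basic tame types.

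The main technical step is the construction of the splitting $\alpha$. I need a compatible family $\alpha_n : V_n \to U_n$ with $\alpha_n\circ\mu_n = \mathrm{id}_{U_n}$ and the square $\alpha_{n+1}\circ\varphi_n = \psi_n\circ\alpha_n$ commuting, where $\psi_n$ is the structure map of ${\mathcal U}$. Because $\psi_n$ is the identity on the range where $U_n = U_{n+1} = \kappa$ and zero elsewhere, the compatibility relation collapses to $\alpha_n = \alpha_{n+1}\circ\varphi_n$ throughout the support of ${\mathcal U}$, so the entire family is determined by a single functional at the ``top.'' Explicitly, when $p$ is finite I choose any $\alpha_{s+p} \in V_{s+p}^{\ast}$ with $\alpha_{s+p}(v_{s+p})=1$, set $\alpha_n := \alpha_{s+p}\circ\varphi_{n,s+p}$ for $s \leq n \leq s+p$, and set $\alpha_n = 0$ elsewhere. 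When $p = \infty$ I anchor at some $\alpha_N \in V_N^{\ast}$ with $\alpha_N(v_N)=1$, pulling back to define $\alpha_n$ for $s \leq n \leq N$ and using the invertibility of $\varphi_n$ for $n \geq N$ to propagate $\alpha_N$ uniquely forward. The crucial equality $\alpha_s(v)=1$ follows from the pullback relation and the fact that $v_{s+p}\neq 0$ (or $v_N\neq 0$), while the boundary requirement at $n<s$ reduces to $0=0$ thanks to $V_{s-1}=0$.

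With $\alpha$ in hand, Observation \ref{split short exact sequence} supplies ${\mathcal V} \cong {\mathcal U} \oplus \ker\alpha$, and the complement $\ker\alpha$ is again tame with strictly smaller $\sigma$-invariant. The induction hypothesis then decomposes $\ker\alpha$ into basic summands, and adjoining ${\mathcal U}$ yields the required decomposition. The principal obstacle is the splitting construction: the naive attempt to extend a functional forward along the structure maps fails in general (there is no canonical way to extend across the cokernel of $\varphi_n$), whereas anchoring at the death index (or at the stabilization index in the infinite case) and pulling \emph{back} is always well-defined. The minimality of $s$ is precisely what forces the pullback to satisfy the boundary condition at indices below $s$, closing the induction.
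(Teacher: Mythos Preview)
Your argument is correct and follows essentially the same route as the paper: both proofs induct on $\sum_{n\le N}\dim V_n$, locate the first nonzero index, pick a nonzero vector there, track it forward to find either a finite death index or eternal survival, and split off the resulting basic summand by anchoring a linear functional at the top index (the death index $s+p$ in the finite case, the stabilization index $N$ in the infinite case) and pulling it back along the structure maps. Your explicit emphasis on why anchoring at the top and pulling back succeeds while pushing forward would fail, and on why the minimality of $s$ disposes of the boundary condition below, is a welcome clarification of exactly the mechanism the paper uses.
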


\begin{proof}

Since ${\mathcal V}$ is a tame persistence vector space $\dim(V_n)$ is finite
for all $n\geq0$ and there exists $N\geq0$ such that $\varphi_{n}$ is an isomorphisms for $n\geq N$.

Define $\displaystyle L({\mathcal V})=\sum_{0\leq n\leq N} \dim V_n$.

We prove by induction on $L({\mathcal V})$.

If $L({\mathcal V})=0$, then ${\mathcal V}=0$. We are done.

If $L({\mathcal V})\neq 0$, then $V_k\neq 0$ for some $0\leq k\leq N$, with $V_i=0$ if $i<k$.

WLOG, suppose $V_0\neq 0$ and choose a nonzero element $v_0$ from $V_0$.

Define $v_n=\varphi_{0,n}(v_0), \;\; n\geq 0$.

\vskip 0.5cm

\textbf{Case 1:} $v_n\neq 0$ for $0\leq n<j$ and $v_n=0$ for $n\geq j$,
where $j\geq 1$.

\begin{note} We must have $j\leq N$, otherwise $v_n\neq 0$ for all $n\geq 0$,
a contradiction. \end{note}

Consider the following ``short exact sequence'' of tame persistence vector spaces
$$
\xymatrix@C=.65cm{
 {\mathcal W}_0 \hskip -1cm &
\kappa \ar[d]_{f_0} \ar[r]^{\alpha_0} &
\kappa \ar[d]_{f_1} \ar[r]^{\alpha_1} &
\cdots \ar[r]^{\alpha_{j-1}} &
\kappa \ar[d]_{f_{j-2}} \ar[r]^{\alpha_{j-2}} &
\kappa \ar[d]_{f_{j-1}} \ar[r]^{\alpha_{j-1}} &
0 \ar[d]_{f_j} \ar[r]^{\alpha_j} &
0 \ar[d]_{f_{j+1}} \ar[r]^{\alpha_{j+1}} &
\cdots
\\
\mathcal{V} \hskip -1cm &
V_0 \ar[d]_{g_0} \ar[r]^{\varphi_0} \ar@/_.5pc/@{-->}[u]_{h_0} &
V_1 \ar[d]_{g_1} \ar[r]^{\varphi_1} \ar@/_.5pc/@{-->}[u]_{h_1} &
\cdots \ar[r]^{\varphi_{j-1}} &
V_{j-2} \ar[d]_{g_{j-2}} \ar[r]^{\varphi_{j-2}} \ar@/_.5pc/@{-->}[u]_{h_{j-2}}&
V_{j-1} \ar[d]_{g_{j-1}} \ar[r]^{\varphi_{j-1}} \ar@/_.5pc/@{-->}[u]_{h_{j-1}}&
V_{j} \ar[d]_{g_j} \ar[r]^{\varphi_j} \ar@/_.5pc/@{-->}[u]_{h_{j}}&
V_{j+1} \ar[d]_{g_{j+1}} \ar[r]^{\varphi_{j+1}} \ar@/_.5pc/@{-->}[u]_{h_{j+1}}&
\cdots
\\
{\mathcal V}'  \hskip -1cm &
V_0/{\kappa v_0} \ar[r]^-{\widetilde{\varphi}_0} &
V_1/{\kappa v_1} \ar[r]^-{\widetilde{\varphi}_1}  &
 \cdots  \ar[r]^-{\widetilde{\varphi}_{j-1}}  &
 V_{j-2}/{\kappa v_{j-2}} \ar[r]^-{\widetilde{\varphi}_{j-2}}  &
 V_{j-1}/{\kappa v_{j-1}} \ar[r]^-{\widetilde{\varphi}_{j-1}}  &
 V_j \ar[r]^-{\widetilde{\varphi}_j}  &
   V_{j+1} \ar[r]^-{\widetilde{\varphi}_{j+1}}  &
 \cdots
}
$$
where

$\alpha_n=identity$ for $0\leq n\leq j-2$ and
$\alpha_n=0$ for $n\geq j-1$;

$
\begin{array}{lll}
f_n: & \kappa\rightarrow V_n & for \;\;0\leq n\leq j-1 \\
 & \lambda\mapsto \lambda v_n &
 \end{array}
$

and $f_n=0$ for $n\geq j$;

$
\begin{array}{lll}
g_n: & V_n\rightarrow V_n/{\kappa v_n} & for \;\;0\leq n\leq j-1 \\
 & x \mapsto x+\kappa v_n &
 \end{array}
$

and $g_n=identity$ for $n\geq j$;

$
\begin{array}{lll}
\widetilde{\varphi}_n: & V_n/\kappa v_n\rightarrow V_{n+1}/{\kappa v_{n+1}} &
 for \;\;0\leq n\leq j-1 \\
 & x+\kappa v_n \mapsto \varphi_n(x)+\kappa v_{n+1} &
 \end{array}
$

and $\widetilde{\varphi}_n=\varphi_n$ for $n\geq j$.

The above diagram is commutative and
is a short exact sequence of tame persistence vector spaces.
Next we will show the above short exact sequence splits.

Extend $v_{j-1}$ to a basis $\{x_1=v_{j-1},x_2,\cdots,x_p \}$ of $V_{j-1}$.

Given any element $x\in V_{j-1}$, we have a unique expression
$$
x=\sum_{i=1}^pa_ix_i
$$

Define a linear map
$$
\begin{array}{ll}
h_{j-1}:&V_{j-1}\rightarrow \kappa \\
& \sum_{i=1}^pa_ix_i \mapsto a_1
\end{array}
$$

For $0\leq n\leq j-1$, define a linear map
$$
\begin{array}{ll}
h_n:&V_n\rightarrow \kappa \\
& x\mapsto h_{j-1}\circ \varphi_{n,j-1}(x)
\end{array}
$$

For $n\geq j$, define $h_n=0$.

Clearly $h:\mathcal{V}\rightarrow \mathcal{W}_0$ is well
defined and $h\circ f=identity$.

Therefore the above short exact sequence splits, and
$$
\begin{array}{ll}
\mathcal{V}&\cong \mathcal{W}_0\oplus\mathcal{V}'\\
 &= T_j \kappa[t]\oplus\mathcal{V}'
\end{array}
$$

Hence $L(\mathcal{V}')=L(\mathcal{V})-j<L(\mathcal{V})$.

\vskip 0.5cm

\textbf{Case 2:} $v_n\neq0$ for all $n\geq0$.

Consider the following ``short exact sequence'' of tame persistence vector spaces
$$
\xymatrix@C=0.65cm{
{\mathcal W}_1 \hskip -1cm&
\kappa \ar[d]_{f_0} \ar[r]^{\alpha_0} &
\kappa \ar[d]_{f_1} \ar[r]^{\alpha_1} &
\cdots \ar[r]^{\alpha_{N-2}} &
\kappa \ar[d]_{f_{N-1}} \ar[r]^{\alpha_{N-1}} &
\kappa \ar[d]_{f_N} \ar[r]^{\alpha_N} &
\kappa \ar[d]_{f_{N+1}} \ar[r]^{\alpha_{N+1}} &
\cdots
\\
\mathcal{V}   \hskip -1cm&
V_0 \ar[d]_{g_0} \ar[r]^{\varphi_0} \ar@/_.5pc/@{-->}[u]_{h_0} &
V_1 \ar[d]_{g_1} \ar[r]^{\varphi_1} \ar@/_.5pc/@{-->}[u]_{h_1} &
\cdots \ar[r]^{\varphi_{N-2}} &
V_{N-1} \ar[d]_{g_{N-1}} \ar[r]^{\varphi_{N-1}} \ar@/_.5pc/@{-->}[u]_{h_{N-1}} &
V_{N} \ar[d]_{g_N} \ar[r]^{\varphi_N} \ar@/_.5pc/@{-->}[u]_{h_{N}} &
V_{N+1} \ar[d]_{g_{N+1}} \ar[r]^{\varphi_{N+1}} \ar@/_.5pc/@{-->}[u]_{h_{N+1}} &
\cdots
\\
{\mathcal V}'   \hskip -1cm&
V_0/{\kappa v_0} \ar[r]^-{\widetilde{\varphi}_0} &
V_1/{\kappa v_1} \ar[r]^-{\widetilde{\varphi}_1}  &
 \cdots  \ar[r]^-{\widetilde{\varphi}_{N-2}}  &
 V_{N-1}/{\kappa v_{N-1}} \ar[r]^-{\widetilde{\varphi}_{N-1}}  &
 V_N/{\kappa v_N} \ar[r]^-{\widetilde{\varphi}_N}  &
   V_{N+1}/ {\kappa v_{N+1}} \ar[r]^-{\widetilde{\varphi}_{N+1}}  &
 \cdots
}
$$
$
\begin{array}{cccccc}
where & \alpha_n=identity\;, & f_n: & \kappa\rightarrow V_n\;,& g_n: & V_n\rightarrow V_n/{\kappa v_n}\\
 &&& \lambda\mapsto \lambda v_n && x \mapsto x+\kappa v_n\\
 \widetilde{\varphi}_n: & V_n/\kappa v_n\rightarrow V_{n+1}/{\kappa v_{n+1}}\\
& x+\kappa v_n \mapsto \varphi_n(x)+\kappa v_{n+1}
 \end{array}
$

 for all $n\geq0$.

We will show the above short exact sequence splits.

Extend $v_{N}$ to a basis $\{y_1=v_{N},y_2,\cdots,y_q \}$ of $V_{N}$.

Given any element $y\in V_{N}$, we have a unique expression
$$
y=\sum_{j=1}^q b_j y_j
$$

Define a linear map
$$
\begin{array}{ll}
h_N:&V_{N}\rightarrow \kappa \\
&  \sum_{j=1}^q b_j y_j \mapsto b_1
\end{array}
$$

For $0\leq n\leq N$, define a linear map
$$
\begin{array}{ll}
h_n:&V_n \rightarrow \kappa \\
& x\mapsto h_{N}\circ \varphi_{n,N}(x)
\end{array}
$$

For $n>N$, define a linear map
$$
\begin{array}{ll}
h_n:&V_n \rightarrow \kappa \\
& x\mapsto h_{N}\circ \varphi_{N,n}^{-1}(x)
\end{array}
$$

Clearly $h_n\circ f_n=identity$ and
$$
\xymatrix{
\kappa \ar[r]^{\alpha_n} & \kappa \\
V_n \ar[u]_{h_n} \ar[r]^{\varphi_n} & V_{n+1}\ar[u]_{h_{n+1}}
}
$$
is commutative for all $n\geq0$.

Therefore the  short exact sequence in {Case 2} splits, and
$$
\begin{array}{ll}
\mathcal{V}&\cong \mathcal{W}_1\oplus\mathcal{V}'\\
 &=\kappa[t]\oplus\mathcal{V}'
\end{array}
$$

Hence $L(\mathcal{V}')=L(\mathcal{V})-(N+1)<L(\mathcal{V})$.

\vskip 0.5cm

In both cases $L({\mathcal V}')<L({\mathcal V})$.

By induction on $L({\mathcal V})$,
$$
\bigoplus_{1\leq i\leq s}  S^{r_i}  \kappa[t] \oplus \bigoplus_{1\leq j\leq t} S^{m_j} T_{n_j+1}\kappa[t]
$$

\end{proof}

\begin{prop} \label{let V and V'}

Let $$\displaystyle {\mathcal V}=\bigoplus_{1\leq i\leq p}  S^{r_i}  \kappa[t] \oplus \bigoplus_{1\leq j\leq q} S^{m_j} T_{n_j+1} \kappa[t] $$

and
$$\displaystyle {\mathcal V} '=\bigoplus_{1\leq i\leq p'}  S^{r'_i}  \kappa[t] \oplus \bigoplus_{1\leq j\leq q'} S^{m'_j} T_{n'_j+1} \kappa[t]$$.

If ${\mathcal V}\cong{\mathcal V} '$, then $p=p'$, $q=q'$ and  $r_i=r_i'$, $m_j=m_j'$, $n_j=n_j'$ after
a suitable permutation.

\end{prop}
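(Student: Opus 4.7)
The plan is to exploit the numerical invariants $\beta(i,j)=\dim(\operatorname{im}(\varphi_{i,j}:V_i\to V_j))$ and $\beta(i,\infty)$ introduced in the notation preceding the statement. These quantities are preserved under any isomorphism of persistence vector spaces, because an isomorphism is a commutative ladder whose vertical maps restrict to linear isomorphisms on the images of the horizontal compositions; they are also additive under direct sum, since $\operatorname{im}(\varphi_{i,j}^{\mathcal{U}\oplus\mathcal{W}})=\operatorname{im}(\varphi_{i,j}^{\mathcal{U}})\oplus\operatorname{im}(\varphi_{i,j}^{\mathcal{W}})$. First I compute these invariants on each basic summand: for $S^{r}\kappa[t]$, $\beta(i,j)=1$ iff $r\leq i\leq j$ and $\beta(i,\infty)=1$ iff $r\leq i$; for $S^{m}T_{n+1}\kappa[t]$, $\beta(i,j)=1$ iff $m\leq i\leq j\leq m+n$ and $\beta(i,\infty)=0$. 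Therefore, for the decomposition of $\mathcal{V}$ in the statement,
$$\beta_{\mathcal{V}}(a,\infty)=\#\{i:r_i\leq a\},\qquad \beta_{\mathcal{V}}(a,b)=\beta_{\mathcal{V}}(a,\infty)+\#\{j:m_j\leq a,\ b\leq m_j+n_j\}$$
for $a\leq b$, and analogously for $\mathcal{V}'$.

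Next I recover the multiplicities of each basic summand from these invariants by a discrete Möbius-type inversion, using the convention $\beta(-1,\cdot):=0$. The count of infinite summands with left endpoint exactly $a$ is
$$N_\infty(a)=\beta_{\mathcal{V}}(a,\infty)-\beta_{\mathcal{V}}(a-1,\infty)=\#\{i:r_i=a\},$$
and the count of finite summands of shape $S^{a}T_{b-a+1}\kappa[t]$ (that is, with $m_j=a$ and $m_j+n_j=b$) is
$$N(a,b)=\beta_{\mathcal{V}}(a,b)-\beta_{\mathcal{V}}(a-1,b)-\beta_{\mathcal{V}}(a,b+1)+\beta_{\mathcal{V}}(a-1,b+1).$$
A direct check using the explicit formula above confirms this: the $\beta(\cdot,\infty)$ contributions cancel between the four terms, and inclusion/exclusion on the two conditions $\{m_j\leq a\}$ and $\{m_j+n_j\geq b\}$ at the shifted thresholds leaves precisely those bars with $m_j=a$ and $m_j+n_j=b$.

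Finally, if $\mathcal{V}\cong\mathcal{V}'$, then $\beta_{\mathcal{V}}=\beta_{\mathcal{V}'}$ on every argument (finite or $\infty$), hence $N_\infty$ and $N$ take identical values on both decompositions. Summing $N_\infty(a)$ over $a$ gives $p=p'$, and summing $N(a,b)$ over $(a,b)$ gives $q=q'$; moreover the multisets $\{r_i\}$ and $\{r'_i\}$, and $\{(m_j,n_j)\}$ and $\{(m'_j,n'_j)\}$, coincide, yielding the asserted equality up to permutation. The only delicate point is verifying that the two-variable difference $N(a,b)$ genuinely isolates the bars with the prescribed pair of endpoints, but this reduces to a small inclusion/exclusion bookkeeping that presents no serious obstacle; no deeper structural argument, such as Krull--Schmidt for quiver representations, is required because the invariants $\beta$ already form a complete set.
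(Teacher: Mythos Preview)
Your proof is correct, but it follows a different route from the paper's own argument. The paper proves uniqueness by a Krull--Schmidt--style induction: it orders the basic summands by the relation $\prec$ of Definition~\ref{order of basic space}, groups isomorphic copies into isotypical blocks, and uses Lemma~\ref{if V and W} (any morphism from a $\prec$-smaller basic module to a $\prec$-larger one is zero) to conclude that any isomorphism between the two direct sums has a block upper-triangular matrix with respect to the isotypical grouping. The diagonal blocks are then forced to be isomorphisms, which peels off the smallest isotypical component on each side and reduces the count by one.

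Your approach instead exploits the rank invariants $\beta(i,j)$ directly: they are preserved by isomorphism, additive on direct sums, easy to evaluate on each basic piece, and the discrete second difference $\beta(a,b)-\beta(a-1,b)-\beta(a,b+1)+\beta(a-1,b+1)$ recovers the multiplicity of the finite bar $[a,b]$ (with the analogous first difference for infinite bars). This is exactly the content the paper records \emph{after} the proposition, in Observations~\ref{betaij} and~\ref{muijbetaij}, but you have observed that it already suffices to prove uniqueness without ever invoking Lemma~\ref{if V and W} or the order $\prec$. Your route is shorter and yields the inversion formula as a byproduct; the paper's route is more structural and explains \emph{why} the decomposition behaves like a Krull--Schmidt theorem, at the cost of needing the morphism-vanishing lemma first.
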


\begin{proof}
Reorder components in ${\mathcal V}$ and ${\mathcal V}'$ in increasing order (See {Definition \ref{order of basic space}})
and group all copies of the same basic tame persistence vector space together into isotypical components,
so we have
 $${\mathcal W}=\bigoplus_{1\leq n\leq a}W_n$$
 and
 $${\mathcal W}'=\bigoplus_{1\leq n\leq b}W_n'$$

Precisely each isotypical component of ${\mathcal W}$ or ${\mathcal W}'$ is a direct sum of isomorphic basic
tame persistence vector spaces.

We only need to prove that $a=b$ and $W_n=W_n'$ for $1\leq n\leq a=b$.

Define $S({\mathcal W})=a$, the cardinality of the isotypical components of $\mathcal W$.
 We will prove the statement by induction on $S({\mathcal W})$.

If $S({\mathcal W})=0$, clearly ${\mathcal W}=0={\mathcal W}'$.

If $S({\mathcal W})>0$, write ${\mathcal W}=W_1\oplus R$ and ${\mathcal W}'=P_1\oplus R'$,

where $ R=\oplus_{2\leq n\leq a} W_n$, $ P_1=\oplus_{1\leq n\leq b_1} W_n'$,
$ R'=\oplus_{b_1+1\leq n\leq b} W_n'$,

$b_1$ is an integer such that $W_n'\leq W_1$ for $1\leq n\leq b_1$ and $W_n'>W_1$ for $n>b_1$.

Here the order of $W_n$ and $W_n'$ is determined by the order of their basic components.

Since ${\mathcal W}\cong{\mathcal W}'$, there is a pair of isomorphisms
$\omega: {\mathcal W}\rightarrow{\mathcal W}'$ and $\omega':{\mathcal W}'\rightarrow{\mathcal W}$
such that $\omega\circ\omega'=id$ and $\omega'\circ\omega=id$.

Write $\omega: {\mathcal W}\rightarrow{\mathcal W}'$ as a matrix
$$
\begin{array}{cc}
   & \begin{array}{cc}
        W_1 & R
      \end{array}
  \\
 \begin{array}{c}
   P_1 \\
   R'
 \end{array}
   & \left(
       \begin{array}{cc}
         A & C \\
         B & D \\
       \end{array}
     \right)

\end{array}
$$

Since any component of $W_1<$ any component of $R'$, by {Lemma \ref{if V and W}}, $B=0$.

So matrix form of $\omega$
is
$$
\begin{array}{cc}
   & \begin{array}{cc}
        W_1 & R
      \end{array}
  \\
 \begin{array}{c}
   P_1 \\
   R'
 \end{array}
   & \left(
       \begin{array}{cc}
         A & C \\
         0 & D \\
       \end{array}
     \right)

\end{array}
$$

Similarly, $\omega'$ has matrix form
$$
\begin{array}{cc}
   & \begin{array}{cc}
        P_1 & R'
      \end{array}
  \\
 \begin{array}{c}
   W_1 \\
   R
 \end{array}
   & \left(
       \begin{array}{cc}
         A' & C' \\
         0 & D' \\
       \end{array}
     \right)

\end{array}
$$

Since $\omega\circ\omega'=id$ and $\omega'\circ\omega=id$, we have
$$
\left(
       \begin{array}{cc}
         A & C \\
         0 & D \\
       \end{array}
     \right)
\left(
       \begin{array}{cc}
         A' & C' \\
         0 & D' \\
       \end{array}
     \right)
=I  \;\;and
\left(
       \begin{array}{cc}
         A' & C' \\
         0 & D' \\
       \end{array}
     \right)
\left(
       \begin{array}{cc}
         A & C \\
         0 & D \\
       \end{array}
     \right)
=I.
$$

So $AA'=I$, $A'A=I$, $DD'=I$, $D'D=I$.

Then $W_1\cong P_1$ and $R\cong R'$.

Since $W_1$ and $P_1$ are isomorphic, each basic component of $P_1$
must be isomorphic to the basic component of $W_1$.
Their number in $W_1$ and $P_1$ should be the same.

So $W_1=P_1$.

Since $R\cong R'$ and $S(R)=a-1<a$, by induction we finish the proof.

\end{proof}

\begin{defn}
Bar codes is a finite collection of intervals
$$
[i,j]
$$
with $i\in{\mathbb Z}_{\geq 0}$,$j\in{\mathbb Z}_{\geq0}\bigcup\{\infty\}$ and $i\leq j$.

Given a tame persistence vector space $\mathcal V$, there exist a decomposition
$$\displaystyle {\mathcal V}\cong\bigoplus_{1\leq i\leq p}  S^{r_i} \kappa[t] \oplus \bigoplus_{1\leq j\leq q} S^{m_j} T_{n_j+1}\kappa[t] $$ by {Proposition \ref{any tame persistence vector space}}.

Then assign bar code
$$
{\mathcal B}({\mathcal V})=\{[r_i,\infty],[m_j,m_j+n_j]|1\leq i\leq p,1\leq j\leq q\}
$$
to ${\mathcal V}$. Call it the bar code of the tame persistence vector space ${\mathcal V}$.

This bar code ${\mathcal B}({\mathcal V})$ is unique by {Proposition \ref{let V and V'}}.

\end{defn}

\begin{thm} \label{two tame}
{Two tame persistence vector spaces are isomorphic iff their bar codes are the same.}
\end{thm}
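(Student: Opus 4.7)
The plan is to show this theorem is essentially a direct consequence of the two preceding propositions: the decomposition result (Proposition \ref{any tame persistence vector space}) which guarantees existence of a decomposition into basic tame persistence vector spaces, and the uniqueness result (Proposition \ref{let V and V'}) which guarantees such a decomposition is unique up to reordering. Given these two results, the theorem reduces to an observation that the bar code is nothing more than a repackaging of the multiset of basic summands appearing in the decomposition.

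For the backward direction, suppose $\mathcal{B}(\mathcal{V}) = \mathcal{B}(\mathcal{V}')$. By Proposition \ref{any tame persistence vector space} we can write
\[
\mathcal{V} \cong \bigoplus_{1\leq i\leq p} S^{r_i}\kappa[t] \oplus \bigoplus_{1\leq j\leq q} S^{m_j} T_{n_j+1}\kappa[t],
\]
and similarly for $\mathcal{V}'$. The bar code is, by definition, the multiset $\{[r_i,\infty], [m_j, m_j+n_j]\}$ of intervals read off from this decomposition. Equality of bar codes means that after suitable permutations of the indices, the $r_i$'s, $m_j$'s and $n_j$'s of $\mathcal{V}$ and $\mathcal{V}'$ coincide. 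Since the basic tame persistence vector space $S^r\kappa[t]$ is determined by $r$ and $S^m T_{n+1}\kappa[t]$ is determined by $(m,n)$, the two direct sum decompositions are term-by-term isomorphic, and taking the direct sum of these isomorphisms produces an isomorphism $\mathcal{V} \cong \mathcal{V}'$.

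For the forward direction, suppose $\mathcal{V} \cong \mathcal{V}'$. Apply Proposition \ref{any tame persistence vector space} to each to get decompositions into direct sums of basic tame persistence vector spaces. Composing one decomposition isomorphism with the given $\mathcal{V} \cong \mathcal{V}'$ and the inverse of the other yields an isomorphism between these two explicit direct sums. Proposition \ref{let V and V'} then forces $p=p'$, $q=q'$, and equality of the parameters $r_i, m_j, n_j$ after permutation, which is precisely the equality of bar codes.

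Since both directions follow almost immediately from the two previously established propositions, there is no real obstacle; the only thing to be careful about is the bookkeeping of passing between the multiset-of-intervals formulation of bar codes and the multiset-of-basic-summands formulation of the decomposition, and checking that the forward direction uses the full strength of Proposition \ref{let V and V'} (uniqueness up to permutation) rather than just existence. I would present the argument as a short two-paragraph proof, first noting the tautological correspondence between bar codes and decompositions via basic summands, and then invoking the two propositions for the respective directions.
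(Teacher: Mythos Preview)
Your proposal is correct and matches the paper's approach exactly: the paper's own proof is the single sentence ``Theorem \ref{two tame} is obtained directly from Proposition \ref{any tame persistence vector space} and Proposition \ref{let V and V'}.'' You have simply unpacked that sentence carefully, and your bookkeeping of the correspondence between bar codes and basic summands is accurate.
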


\begin{proof}
{Theorem \ref{two tame}} is  obtained directly from {Proposition \ref{any tame persistence vector space}} and {Proposition \ref{let V and V'}}.
\end{proof}

\begin{obs} \label{betaij}
 $\beta(i,j)$=number of intervals in ${\mathcal B}({\mathcal V})$
which contain $[i,j]$ for  $i\in{\mathbb Z}_{\geq 0}$,$j\in{\mathbb Z}_{\geq0}\bigcup\{\infty\}$ and $i\leq j$.
In particular, $\dim(V_i)$=number of intervals in ${\mathcal B}({\mathcal V})$
which contain $\{i\}$ for $i\geq0$.
\end{obs}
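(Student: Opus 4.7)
The plan is to reduce the computation of $\beta(i,j)$ to a computation on each basic tame persistence vector space in the decomposition guaranteed by Proposition \ref{any tame persistence vector space}, and then exploit the additivity of rank under direct sums.

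First I would invoke Proposition \ref{any tame persistence vector space} to write
$$
\mathcal{V} \cong \bigoplus_{1\leq i\leq p} S^{r_i}\kappa[t] \;\oplus\; \bigoplus_{1\leq j\leq q} S^{m_j} T_{n_j+1}\kappa[t].
$$
Under an isomorphism of persistence vector spaces, the maps $\varphi_{i,j}$ decompose as direct sums of the corresponding maps on each summand, so the dimension of the image $\textmd{im}(\varphi_{i,j})$ equals the sum of the dimensions of the images on each summand. Hence $\beta(i,j)$ equals the sum of the corresponding $\beta(i,j)$'s computed on each basic summand.

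Next I would carry out the case analysis on the two types of basic tame persistence vector spaces. For $S^r\kappa[t]$ (with bar code $[r,\infty]$), the map $\varphi_{i,j}:V_i\to V_j$ is the identity $\kappa\to\kappa$ when $i\geq r$ and is zero otherwise (either because $V_i=0$ when $i<r$, or because $V_j\neq 0$ follows automatically from $j\geq i\geq r$). So its contribution to $\beta(i,j)$ is $1$ exactly when $r\leq i$, which is exactly when $[r,\infty]\supseteq [i,j]$; in particular the case $j=\infty$ reduces to this after using the convention $\beta(i,\infty)=\beta(i,m)$ for $m$ large. For $S^m T_{n+1}\kappa[t]$ (with bar code $[m,m+n]$), the map $\varphi_{i,j}$ is the identity $\kappa\to\kappa$ exactly when $m\leq i\leq j\leq m+n$, and zero otherwise; again this contributes $1$ exactly when $[m,m+n]\supseteq [i,j]$ (noting such a summand can never contribute when $j=\infty$, consistently with the fact that $[m,m+n]\not\supseteq [i,\infty]$).

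Summing the contributions over all basic summands, I obtain
$$
\beta(i,j) = \#\{I\in\mathcal{B}(\mathcal{V}) \mid I\supseteq [i,j]\},
$$
which is the first claim. The special case $\dim(V_i)=\beta(i,i)$ equals the number of intervals containing $\{i\}$, giving the second statement. The only step that requires any care is correctly matching the two forms of basic summands with the two types of intervals in $\mathcal{B}(\mathcal{V})$ and treating the endpoint conventions (particularly $j=\infty$) uniformly; once the decomposition is in hand, everything else reduces to inspection of the two model cases.
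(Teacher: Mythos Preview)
Your proposal is correct and follows essentially the same approach as the paper: decompose $\mathcal{V}$ into basic summands via Proposition \ref{any tame persistence vector space}, use that $\varphi_{i,j}$ splits as a direct sum so that $\beta(i,j)$ is additive, and then check on each basic piece that the contribution is $1$ exactly when the corresponding bar code interval contains $[i,j]$. The paper phrases the last step slightly differently (observing that $\phi_{i,j}$ is a direct sum of maps of the form $\kappa\xrightarrow{id}\kappa$, $0\to\kappa$, or $\kappa\to 0$ and counting the identities), but this is the same computation you carry out by your case analysis on $S^r\kappa[t]$ and $S^mT_{n+1}\kappa[t]$.
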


\begin{proof}
Suppose $$\displaystyle {\mathcal V}\cong\bigoplus_{1\leq i\leq p}  S^{r_i}  \kappa[t] \oplus \bigoplus_{1\leq j\leq q} S^{m_j} T_{n_j+1} \kappa[t] = {\mathcal W} .$$

Since ${\mathcal V}\cong{\mathcal W}$, there exist an isomorphism $f:{\mathcal V}\rightarrow{\mathcal W}$ and following
commutative diagram:
$$
\xymatrix{
V_0 \ar[d]^{f_0} \ar[r]^{\varphi_0} &
V_1 \ar[d]^{f_1} \ar[r]^{\varphi_1} &
\cdots  \ar[r]^{\varphi_{n-1}} &
V_n \ar[d]^{f_n} \ar[r]^{\varphi_n} &
V_{n+1} \ar[d]^{f_{n+1}} \ar[r]^{\varphi_{n+1}} &
\cdots
 \\
W_0 \ar[r]^{\phi_0}  &
W_1 \ar[r]^{\phi_1}  &
\cdots \ar[r]^{\phi_{n-1}} &
W_n \ar[r]^{\phi_n}  &
W_{n+1} \ar[r]^{\phi_{n+1}}  &
\cdots
}
$$

Since $f_j:{\textmd im}(\varphi_{i,j}:V_i\rightarrow V_j)\rightarrow
{\textmd im}(\phi_{i,j}:W_i\rightarrow W_j)$
is an isomorphism,
$\beta(i,j)=\dim({\textmd im}(\phi_{i,j}:W_i\rightarrow W_j))$.

Observe that $\phi_{i,j}$ is a direct sum of linear maps $\xymatrix@1{\kappa\ar[r]^{id} & \kappa}$ or
$\xymatrix@1{0\ar[r] & \kappa}$ or $\xymatrix@1{\kappa\ar[r] & 0}$. $\beta(i,j)=\dim({\textmd im}\phi_{i,j})$
is the number of linear maps $\xymatrix@1{\kappa\ar[r]^{id} & \kappa}$. Each $\xymatrix@1{\kappa\ar[r]^{id} & \kappa}$ corresponds to an interval in the bar code ${\mathcal B}({\mathcal W})$ that contains $[i,j]$.

Therefore $\beta(i,j)=$ the number of intervals in bar code ${\mathcal B}({\mathcal V})$ that contains $[i,j]$.

\end{proof}

\begin{defn}
Given a tame persistence vector space $\mathcal V$, define $\mu(i,j)=$ number of intervals
in ${\mathcal B}({\mathcal V})$ which equal to $[i,j]$.
\end{defn}

\begin{obs} \label{muijbetaij}

1) \cite{EH}
$$
\mu_(i,j) = \left\{
     \begin{array}{lr}
       \beta(i,j)-\beta(i-1,j)-\beta(i,j+1)+\beta(i-1,j+1) & 0<i\leq j<\infty\\
       \beta(0,j)-\beta(0,j+1) & i=0,0\leq j<\infty\\
       \beta(i,\infty)-\beta(i-1,\infty)& 0<i<\infty,j=\infty\\
       \beta(0,\infty) & i=0,j=\infty
     \end{array}
   \right.
$$

2)
$
\beta(i,j)=\sum_{l\leq i,\; m\geq j}\mu (l,m)
$
\end{obs}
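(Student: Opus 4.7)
The plan is to prove both parts by exploiting Observation \ref{betaij}, which identifies $\beta(i,j)$ with the number of bar code intervals $[a,b]$ satisfying $a\leq i$ and $b\geq j$. Once this identification is in place, the two assertions become purely combinatorial facts about counting intervals by their endpoints, and no further use of the persistence vector space structure is needed.

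For part 2), the argument is essentially a single rearrangement: an interval $[l,m]\in\mathcal{B}(\mathcal{V})$ contains $[i,j]$ if and only if $l\leq i$ and $m\geq j$, so grouping the intervals containing $[i,j]$ according to their exact endpoints gives
\[
\beta(i,j)=\sum_{l\leq i,\;m\geq j}\mu(l,m),
\]
where the sum ranges over $l\in\mathbb{Z}_{\geq 0}$ and $m\in\mathbb{Z}_{\geq 0}\cup\{\infty\}$ with $l\leq m$.

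For part 1), I would use a standard inclusion-exclusion. In the generic case $0<i\leq j<\infty$, an interval $[l,m]$ equals $[i,j]$ exactly when it contains $[i,j]$ but contains neither $[i-1,j]$ nor $[i,j+1]$; the first failure means $l<i$ and the second means $m>j$. Writing $A=\{[l,m]\in\mathcal{B}(\mathcal{V}):l\leq i-1,\,m\geq j\}$ and $B=\{[l,m]\in\mathcal{B}(\mathcal{V}):l\leq i,\,m\geq j+1\}$ as subsets of the set of intervals containing $[i,j]$, we have
\[
\mu(i,j)=\#\bigl(\{[l,m]:l\leq i,m\geq j\}\setminus(A\cup B)\bigr)=\beta(i,j)-|A|-|B|+|A\cap B|,
\]
which gives the stated formula upon identifying $|A|=\beta(i-1,j)$, $|B|=\beta(i,j+1)$, and $|A\cap B|=\beta(i-1,j+1)$.

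The three boundary cases are handled the same way with one or two of the correction terms suppressed: if $i=0$ there is no ``left endpoint $<i$'' correction (any $l\in\mathbb{Z}_{\geq 0}$ automatically satisfies $l\geq 0$), leaving only $\beta(0,j)-\beta(0,j+1)$; if $j=\infty$ there is no ``right endpoint $>j$'' correction, and using the convention $\beta(i,\infty)=\beta(i,m)$ for $m$ large, we get $\beta(i,\infty)-\beta(i-1,\infty)$; and if both $i=0$ and $j=\infty$, only the leading term $\beta(0,\infty)$ survives. No step should offer real difficulty — the main thing to be careful with is making the inclusion-exclusion conventions consistent with the boundary conventions for $\beta$ at $i=0$ and $j=\infty$, since the author has tacitly extended $\beta$ so that $\beta(-1,\cdot)$ and $\beta(\cdot,\infty+1)$ do not appear.
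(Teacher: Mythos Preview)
Your proposal is correct and follows essentially the same approach as the paper: both parts rest on Observation~\ref{betaij}, with part~2) being an immediate regrouping of intervals by endpoints and part~1) being the same inclusion--exclusion on the sets of intervals containing $[i-1,j]$ and $[i,j+1]$. Your treatment of the boundary cases is a bit more explicit than the paper's, which simply declares them ``easier to prove''.
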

\begin{proof}
1)
In the case $0<i\leq j<\infty$, we have
$$
\begin{array}{rl}
\mu(i,j) =& \sharp\{\tau\in{\mathcal B}({\mathcal V}) \mid \tau=[i,j]\} \\
=& \sharp\{\tau\in{\mathcal B}({\mathcal V}) \mid \tau\supseteq[i,j]\} - \sharp\{\tau\in{\mathcal B}({\mathcal V}) \mid \tau\supseteq[i-1,j]\} \\
& -\sharp\{\tau\in{\mathcal B}({\mathcal V}) \mid \tau\supseteq[i,j+1]\}+ \sharp\{\tau\in{\mathcal B}({\mathcal V}) \mid \tau\supseteq[i-1,j+1]\}\\
=& \beta(i,j)-\beta(i-1,j)-\beta(i,j+1)+\beta(i-1,j+1)
\end{array}
$$

The second identity in the above identities holds because
$$
\begin{array}{rl}
& \left \{\tau\in{\mathcal B}({\mathcal V}) \mid \tau=[i,j]\right \}  \\
=& \left \{\tau\in{\mathcal B}({\mathcal V}) \mid \tau\supseteq[i,j] \right \}
-\left \{\tau\in{\mathcal B}({\mathcal V}) \mid \tau\supseteq[i-1,j] \right \}
 -\left \{\tau\in{\mathcal B}({\mathcal V}) \mid \tau\supseteq[i,j+1]\right \}
 \end{array}
$$

The other three cases are easier to prove.

2)
Follows directly from definition.

\end{proof}

\begin{defn}
Denote $k(i,j)=\dim({\ker}(\varphi_{i,j}:V_i\rightarrow V_j))$,
$k(i) = \dim(V_i)$
with $i\leq j\in{\mathbb Z}_{\geq 0}$.
\end{defn}

\begin{obs} \label{betak}

1) $$k(i,j) = \beta(i,i) - \beta(i,j);$$

2) $$\beta(i,j) = k(i) - k(i,j).$$
\end{obs}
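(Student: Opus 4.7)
The plan is to derive both identities directly from the rank-nullity theorem applied to the linear map $\varphi_{i,j}: V_i \to V_j$. Recall the notation: $\beta(i,j) = \dim(\textmd{im}(\varphi_{i,j}))$, $k(i,j) = \dim(\ker(\varphi_{i,j}))$, and $k(i) = \dim(V_i)$.

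First I would observe the trivial identity $\beta(i,i) = k(i)$. This follows because $\varphi_{i,i} = \text{identity}: V_i \to V_i$ by the notational convention introduced above, so $\textmd{im}(\varphi_{i,i}) = V_i$ and hence $\dim(\textmd{im}(\varphi_{i,i})) = \dim V_i$.

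Next I would apply the rank-nullity theorem to $\varphi_{i,j}: V_i \to V_j$, which gives
$$
\dim V_i = \dim(\ker \varphi_{i,j}) + \dim(\textmd{im}\, \varphi_{i,j}),
$$
that is, $k(i) = k(i,j) + \beta(i,j)$. Rearranging yields part 2) directly: $\beta(i,j) = k(i) - k(i,j)$. Substituting $k(i) = \beta(i,i)$ from the first step and rearranging gives part 1): $k(i,j) = \beta(i,i) - \beta(i,j)$.

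There is no genuine obstacle here; the statement is essentially a bookkeeping consequence of rank-nullity once one notes that $\varphi_{i,i}$ is the identity. The only point that requires any attention is making sure that the convention $\varphi_{i,i} = \text{id}_{V_i}$ is invoked explicitly to identify $\beta(i,i)$ with $\dim V_i = k(i)$.
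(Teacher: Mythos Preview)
Your proof is correct. The paper states this observation without proof, treating it as immediate; your argument via rank--nullity together with the convention $\varphi_{i,i}=\mathrm{id}$ is exactly the intended justification.
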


From the above observations, we see that the set of numbers $\{ \mu(i,j) \}$, $\{ \beta(i,j) \}$
and $\{ k(i,j), k(i) \}$ are equivalent.
Sometimes, it is more convenient to define and calculate bar codes
using the numbers $k(i,j)$ and $k(i)$ instead of $\beta(i,j)$.

\vskip .5 cm

\textbf{Edelsbrunner-Letscher-Zomorodian's Interpretation} \cite{ELZ}

\vskip .2 cm

Let $\mathcal{V}$ be a tame persistence vector space (see {Definition \ref{pvs}}).

One says that the nonzero element $x\in V_j$ is \textbf{born} in $V_i, i\leq j$ hence ``is born at time $i$'' if it is in the image of $\varphi_{i,j}$ and
not in the image of $\varphi_{i-1,j}$ and \textbf{dies} in $V_k,k\geq j+1$, hence ``dies at time $j$'' if
$\varphi_{j,k}(x)=0$ but $\varphi_{j,k-1}\neq 0$.

A subset $S\subseteq V$ ($V$ is a vector space) is called linearly independent if its elements form a collection of linearly independent vectors in $V$.
If $S_i\subseteq V_i$ is linearly independent in $V_i$ and $\varphi_{i,j}(S_i)$
is linearly independent in $V_j$, then we call $S_i$ linearly independent on
interval $[i,j]$.

With this in mind, $\beta(i,j)$  is  the maximal cardinality of linearly independent sets on the interval
$[i,j]$ and
$\mu(i,j)$ = the maximal cardinality of linearly independent sets on the interval
$[i,j]$, whose elements are born in $V_i$ and die in $V_{j+1}$.

Each interval $[i,j]$ in ${\mathcal B}({\mathcal V})$ guarantee the existence of  an element
in $V_i$, which is born in $V_i$, survives in each $V_r$($i\leq r\leq j$) and dies
in $V_{j+1}$.

In \cite{ELZ} the authors propose to collect $\mu(i,j)$ as points in the extended half space $\overline{HR}:= \{ (x,y)\in \mathbb R^2\sqcup R \times \infty \mid x\leq y\}$. With this convention Cohen-Steiner, Edelsbruner, Harer  \cite {CEH} have established a strong stability result
associated with a real valued map(cf. Stability Theorem, page 182 in \cite{EH}) which relates
the distance between two real valued functions and the bottle neck distance of their persistence
diagrams. This concept will not be used in the sequel.

\vskip .5cm

\textbf{Zomorodian - Carlsson's Interpretation} \cite{ZC}

\begin{defn}
1)
A $\kappa[t]$-module ${V}$ is the vector space over $\kappa$ equipped
with a linear map $A:{V}\rightarrow {V}$.
The module action is defined as
$$
\begin{array}{rl}
\kappa[t]\times V&\rightarrow V\\
(\sum_{i=0}^n a_i t^i,v)&\mapsto \sum_{i=0}^na_iA^i(v)
\end{array}
$$

2)
A $\kappa[t]$-module ${V}$ with a linear map $A:V\rightarrow V$ is finitely generated if there exists
$\{v_1,v_2,\cdots,v_r\}$ such that any $v\in {V}$ is a linear
combination of
 $$\{v_1,v_2,\cdots,v_r,
A(v_1),A(v_2),\cdots,A(v_r),A^2(v_1),A^2(v_2),\cdots,A^2(v_r),\cdots \}.$$

3)
A graded $\kappa[t]$-module ${V}$ is a $\kappa[t]$-module
together with a decomposition of the vector space $\displaystyle  {V}=\bigoplus_{n\geq 0}V_n$
and a linear map $A:{V}\rightarrow {V}$ such that $A(v_n)\in V_{n+1},\;\forall v_n\in V_n$.

4)
Let $\displaystyle  {V}=\bigoplus_{n\geq 0}V_n$ with the linear map $A:{V}\rightarrow {V}$
and $\displaystyle  {W}=\bigoplus_{n\geq 0}W_n$ with the linear map $B:{V}\rightarrow {V}$
be two graded $\kappa[t]$-modules.

A morphism of graded $\kappa[t]$-modules $f:{ V}\rightarrow { W}$
is a linear map of vector spaces such that $f(V_n)\subseteq W_n$ and $f\circ A=B\circ f $.

 A morphism of graded $\kappa[t]$-module $f:{ V}\rightarrow { W}$
is an isomorphism if there exists another morphism of graded $\kappa[t]$-module
$g:{ W}\rightarrow { V}$ such that $g\circ f:{ V}\rightarrow { V}$
and  $f\circ g:{ W}\rightarrow { W}$ are identities.
${ V}$ and ${ W}$ are isomorphic if there is an isomorphism between them.

\end{defn}

One of the main results of \cite{ZC} is the equivalence of the category of finitely generated
$\kappa[t]$-modules and the category of tame persistence vector spaces.

\begin{prop} \label{finitely generated graded}
Finitely generated graded $\kappa[t]$-modules  identify to tame persistence vector spaces and
so do their morphisms.
\end{prop}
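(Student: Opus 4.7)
The plan is to exhibit mutually inverse functors between the two categories. For a tame persistence vector space $\mathcal{V} = \{V_n, \varphi_n\}$, I would define $F(\mathcal{V}) = \bigoplus_{n\geq 0} V_n$ as a graded $\kappa$-vector space, equipped with the degree-raising linear map $A$ given on each piece by $A|_{V_n} = \varphi_n$. This makes $F(\mathcal{V})$ a graded $\kappa[t]$-module with $t$ acting as $A$. Conversely, for a finitely generated graded $\kappa[t]$-module $V = \bigoplus_n V_n$ with multiplication-by-$t$ map $A$, I would set $G(V) = \{V_n, \varphi_n := A|_{V_n}\}$. On morphisms, a commuting diagram $\{\omega_n\}$ of linear maps is the same data as a grade-preserving $\kappa$-linear map that intertwines $A$ with $B$, so $F$ and $G$ extend to functors and are visibly inverse to each other up to equality on objects and morphisms.

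It remains to check that $F$ and $G$ preserve the finiteness conditions of the respective categories. For $F$: choose $N$ such that $\varphi_n$ is an isomorphism for all $n \geq N$, and pick a basis of $V_n$ for each $0 \leq n \leq N$; this finite collection generates $F(\mathcal{V})$ because any $v \in V_n$ with $n > N$ equals $\varphi_{N,n}(\varphi_{N,n}^{-1}(v)) = A^{n-N}(\varphi_{N,n}^{-1}(v))$, so it is $t^{n-N}$ times a basis element of $V_N$. For $G$: homogenizing the generators (which is legitimate since the submodule generated by a sum of homogeneous elements equals the submodule generated by the homogeneous components) gives a finite set $\{v_1,\dots,v_r\}$ with $v_i\in V_{n_i}$, after which $V_n = \mathrm{span}\{A^{n-n_i}(v_i) : n_i \leq n\}$ is finite dimensional.

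The main obstacle is showing that $\varphi_n$ is eventually an isomorphism for $G(V)$. I would invoke the graded structure theorem for finitely generated modules over the graded principal ideal domain $\kappa[t]$, which yields a decomposition
\[
V \;\cong\; \bigoplus_{i=1}^{p} \Sigma^{r_i}\kappa[t] \;\oplus\; \bigoplus_{j=1}^{q} \Sigma^{m_j}\kappa[t]/(t^{n_j+1}),
\]
where $\Sigma^r$ denotes the shift by $r$. Under $G$, the free summand $\Sigma^{r_i}\kappa[t]$ becomes $S^{r_i}\kappa[t]$ and the torsion summand $\Sigma^{m_j}\kappa[t]/(t^{n_j+1})$ becomes $S^{m_j}T_{n_j+1}\kappa[t]$. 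Choosing $M > \max_j(m_j + n_j)$, the torsion part vanishes in every degree $\geq M$, so for $n \geq M$ the map $\varphi_n$ is a direct sum of identity maps $\kappa \to \kappa$ coming from the free summands with $r_i \leq n$, hence an isomorphism. Tameness of $G(V)$ follows.

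A remark on the alternative route: the paper has already proved the analogue of the structure theorem on the persistence side (Proposition \ref{any tame persistence vector space}), so one could equivalently check the equivalence on the basic objects $S^r\kappa[t] \leftrightarrow \Sigma^r\kappa[t]$ and $S^rT_{p+1}\kappa[t] \leftrightarrow \Sigma^r\kappa[t]/(t^{p+1})$ and use these decompositions on both sides; but the direct functorial description above seems cleaner, with the structure theorem invoked only to verify that $G$ lands in tame persistence vector spaces.
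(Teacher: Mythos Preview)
Your proposal is correct, and the functorial setup, the finite-generation check for $F$, and the finite-dimensionality check for $G$ all match the paper's argument essentially verbatim.

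The one genuine difference is how you establish that $\varphi_n$ is eventually an isomorphism for $G(V)$. You invoke the graded structure theorem over $\kappa[t]$ and read off the result from the explicit decomposition. The paper instead gives an elementary dimension argument: once the generators are taken homogeneous with maximal degree $N_0$, every $\varphi_n$ for $n\geq N_0$ is \emph{surjective} (any element of $V_{n+1}$ is a combination of $A^{n+1-n_i}(v_i)$, each of which factors through $V_n$), so $\dim V_{N_0}\geq \dim V_{N_0+1}\geq\cdots$ is a nonincreasing sequence of nonnegative integers, hence eventually constant; a surjective linear map between finite-dimensional spaces of equal dimension is an isomorphism. This avoids appealing to the structure theorem altogether and keeps the proof self-contained, whereas your route imports a result of comparable strength to the proposition itself (indeed, the paper only mentions the structure theorem \emph{after} the proof, as context). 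Your approach is still valid, but the paper's is the lighter-weight argument here.
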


\begin{proof}

1)
Let $\displaystyle  {V}=\bigoplus_{n\geq 0}V_n$ be a finitely generated $\kappa[t]$-module with a linear map $A:V\rightarrow V$ such that $A(v_n)\in V_{n+1},\;\forall v_n\in V_n$.

There exist $\{v_1,v_2,\cdots,v_r\}$ such that any $v\in {V}$ is a linear combination of
 $$(\ast) ~~ \{v_1,v_2,\cdots,v_r,
A(v_1),A(v_2),\cdots,A(v_r),A^2(v_1),A^2(v_2),\cdots,A^2(v_r),\cdots \}.$$

Since each $v_i$ is a sum of homogeneous components, WLOG, we can assume $\{v_1,v_2,\cdots,v_r\}$ themselves
are homogeneous, i.e., $v_i\in V_{n_i}$, $1\leq i\leq r$.

Let $n_i$ be the degree of $v_i$, $1\leq i\leq r$. WLOG, we can assume $n_i\leq n_{i+1}$.

Let $\varphi_n=A|_{V_n}:V_n\rightarrow V_{n+1}$, then
$$
{\mathcal V}=\{V_n,\varphi_n:V_n\rightarrow V_{n+1}|n\in Z_{\geq 0}\}
$$
is a persistence vector space.

To see that ${\mathcal V}$ is tame we denote by $\displaystyle N_0=\max_{1\leq i\leq r} n_i$.
Clearly $\dim V_n(n\leq N_0)$ is finite and in view of $(\ast)$ the linear maps $\varphi_{n,n+k}=\varphi_{n+k-1}\circ\varphi_{n+k-2}\circ\dots\circ\varphi_n$
are surjective hence $\dim V_n$ is finite for any $n>N_0$. Since $\varphi_n$ is surjective for any $n>N_0$ and $\dim V_{N_0}$ is finite,
there exists $N$ so that $\dim V_n$ is constant for $n>N$. Since any surjective map between vector spaces of the same finite dimension is
an isomorphism, the linear map $\varphi_n$ is an isomorphism for $n>N$.

Hence, $\mathcal{V}$ is a tame persistence vector space.

\vskip 0.5cm

If ${\mathcal V}=\{V_n,\varphi_n:V_n\rightarrow V_{n+1} \mid n\in Z_{\geq 0}\}$
is a tame persistence vector space,define the graded $\kappa[t]$-module $\displaystyle V=\bigoplus_{n\geq0}V_n$ with linear map $\displaystyle A=\bigoplus_{n\geq 0} \varphi_n:V\rightarrow V$.

There exist $N\geq0$ such that $\varphi_n$ is an isomorphism for $n\geq0$.
Let $\{v_1,v_2,\cdots,v_r\}$ be the set of all generators of $V_1,V_2,\cdots,V_N$,
then any $v\in {V}$ is a linear combination of  $$\{v_1,v_2,\cdots,v_r,
A(v_1),A(v_2),\cdots,A(v_r),A^2(v_1),A^2(v_2),\cdots,A^2(v_r),\cdots \}.$$

\vskip 0.5cm

2) From definitions of morphisms of finitely generated graded $\kappa[t]$-modules
and linear maps of tame persistence vector spaces, we can see that two finitely
graded $\kappa[t]$-modules are isomorphic iff the tame persistence vector spaces
associated with them are isomorphic.

\end{proof}

Recall that the ring $\kappa[t]$ is a principal ideal domain and a basic theorem in algebra \cite{L} claims
that any finitely generated modules over a principal ideal domain decompose uniquely.
In particular, any finitely generated modules over $\kappa[t]$ decomposes uniquely as a finite
direct sum of free modules $\kappa[t]$'s and torsion modules $T_{d_i}\kappa[t]$'s. As noticed by \cite{ZC}, the above result extends to
finitely generated graded modules where the free module $\kappa[t]$ is to be replaced by
$S^r\kappa[t]$ for some $r$ and torsion module $T_d \kappa[t]$ by $S^r T_d \kappa[t]$ for some $r$ and $d$.
Notice that the module $S^r A$ has the component $(S^rA)_p=A_{p-r}$ for $p\geq r$
and equal to zero for $p<r$. Note that each component $S^r\kappa[t]$ corresponds to a bar code
$[r,\infty)$ and each component  $S^r T_{d+1} \kappa[t]$ corresponds to a bar code $[r,r+d]$(cf. \cite{ZC}).

\vskip .5cm

\textbf{Quiver Representation Perspective}

\vskip .2cm

As noticed by G. Carlsson and Vin de Silva \cite{CS}, persistence vector spaces which
stabilize for $n\geq N$ can be regarded as representation of the oriented graph
$$
\xymatrix{
{\bullet_1} \ar[r] & {\bullet_2} \ar[r] & {\bullet_3} \ar[r] &  {\cdots} \ar[r] & {\bullet_{N-1}} \ar[r] & {\bullet_N} }
$$

Any such representation is a sum of indecomposable representations which are classified by the intervals
$\left[i,j\right], 1\leq i\leq j\leq N$. The interval $\left[i,j\right]$ with $j < N$ correspond to bar code $\left[i,j\right]$
while the interval $\left[i,N\right]$ to the bar code $\left[i,\infty\right)$.
The interval $\left[i,j\right]$ corresponds to the representation
$$
\xymatrix{
{0_1} \ar[r] & {0_2} \ar[r] & {\cdots} \ar[r] & {0_{i-1}} \ar[r] & {\kappa_i} \ar[r]^{id} & {\kappa_{i+1}} \ar[r]^{id} & \cdots \ar[r]^{id} &{\kappa_j} \ar[r] & 0_{j+1} \ar[r] & \cdots }
$$

This is a result in the theory of quiver representation due to P. Gabriel\cite{G}.

\subsection{Persistent Homology and Bar Codes of a Filtered Simplicial/Poly-topal Complex}\label{secPHBFSCC}

\begin{defn}
1) A filtered space $\mathcal{X}$ consists of  a space $X$ and a finite filtration
$$
X_0\subseteq X_1\subseteq\cdots\subseteq X_N = X. \eqno{(\ast)}
$$

2) Sometimes it is convenient to suppose that $X$ is embedded in $X_{\infty}$ which is contractible.
For a filtration without $X_{\infty}$, we can complete it with $X_{\infty} = C(X)$, the cone over $X$.
Then  $(\ast)$ becomes
$$
X_0\subseteq X_1\subseteq\cdots\subseteq X_N = X\subseteq X_{\infty}. \eqno{(\ast\ast)}
$$

\end{defn}

\begin{exmp}
Filtrations of Vietoris-Rips complexes (\ref{2.2.1}) - (\ref{2.2.3})  associated to a PCD $X$ with $N+1$ points in $\mathbb{R}^m$
provide examples of filtered simplicial complexes
where $X_{\infty} = R_{\epsilon_N}(X) = \Delta^n$.
\end{exmp}

Below are two other relevant examples.

\begin{defn} \label{weakly tame}

A continuous map $f:X\rightarrow  \mathbb{R}$  is called \textbf{weakly tame}
if $X$ is compact and there exists finitely many values $\min(f(X)) = t_0<t_1<\cdots<t_N = \max(f(X))$ (so called \textbf{critical values}) so that:

(i) for any $t$ the closed sub-level $X_{-\infty,t}$ is a deformation retract of an open neighborhood;

(ii) for any $i$ and $t$, $t\in [t_i t_{i+1})$, $X_{-\infty,t}$  retracts by deformation to $X_{-\infty,t_i}$.

\end{defn}
Informally this means that each sub-level is homotopically well behaved (neighborhood retract) and the topology(homotopy type) of sub-levels change only for finitely many $t$'s.

\begin{exmp}

A weakly tame map $f:X\rightarrow \mathbb{R}$ with critical values
$t_0<t_1<\cdots<t_N$ provides another example with
$X_i=f^{-1}((-\infty,t_i]), 0\leq i\leq N$, $X_{\infty} = CX_N$,
where $CX_N$ is the cone with base $X_N$.

\end{exmp}

The properties (i) and (ii) are sufficient  hypotheses to
ensure  that in each dimension the homology vector spaces of $X_{t_i}$
provide \textit{tame persistence vector spaces}.

\begin{exmp}
There is a natural filtration of an $N$-dimensional simplicial complex or polytopal complex,
 the skeleton filtration
$$
X^0\subseteq X^1\subseteq\cdots\subseteq X^N = X\subseteq X^{\infty}
$$
where $X^i(0\leq i\leq N)$ are $i$-skeletons of $X$ and $X^{\infty} =  CX$.

\end{exmp}

This subsection defines and studies persistent homology and bar codes of filtered spaces
of {finite} simplicial/polytopal complexes.

Given a filtered space $ \mathcal{K}$ of a finite simplicial/polytopal complex $K$
\begin{equation}\label{2.4.1}
K_0\subseteq K_1\subseteq\cdots\subseteq K_N = K
\end{equation}
we can associate a commutative diagram defined below.

Denote by
$C_r^s$ the ${\kappa}$-vector space $C_r(K_s)$ with basis the
$r$-simplices\,/\,$r$-cells of $K_s$,
$\partial_r^s:C_r^s\rightarrow C_{r-1}^s$ the boundary
map from $C_r(K_s)$ to $C_{r-1}(K_s)$,
$i_r^s: C_r^s \rightarrow C_r^{s+1}$  the linear map induced by the inclusion
from $K_s$ to $K_{s+1}$ (clearly $i_r^s$ is one to one).
Let $C_r^s = C_r^N$, $\partial_r^s = \partial _r^N$ and $i_r^s = identity$ when $s\geq N$,
one obtains the commutative diagram

\begin{equation}\label{2.4.2}
\begin{array}{c}
\xymatrix{
 \vdots \ar[d]^{\partial_{M+1}^0} &
   &
 \vdots \ar[d]^{\partial_{M+1}^s} &
  &
 \vdots \ar[d]^{\partial_{M+1}^N} &
 \vdots \ar[d]^{\partial_{M+1}^{N+1}} &
 \\
 C_M^0 \ar[r]^-{i_M^0} \ar[d]^{\partial_M^0} &
    \cdots\ar[r]^-{i_{M}^{s-1}}&
 C_{M}^s \ar[r]^-{i_{M}^{s}} \ar[d]^{\partial_{M}^s}&
 \cdots\ar[r]^-{i_M^{N-1}}                   &
 C_M^N \ar[r]^-{i_M^N}_\cong \ar[d]^{\partial_M^N} &
 C_M^{N+1} \ar[r]^-{i_M^{N+1}}_\cong \ar[d]^{\partial_M^{N+1}} &
 \cdots \\
 \vdots \ar[d]^{\partial_{r+1}^0} &
  &
 \vdots \ar[d]^{\partial_{r+1}^s} &
   &
 \vdots \ar[d]^{\partial_{r+1}^N} &
 \vdots \ar[d]^{\partial_{r+1}^{N+1}} &
 \\
 C_r^0 \ar[r]^-{i_r^0} \ar[d]^{\partial_r^0} &
  \cdots\ar[r]^-{i_r^{s-1}}&
 C_r^s \ar[r]^-{i_r^{s}} \ar[d]^{\partial_r^s}&
 \cdots\ar[r]^-{i_r^{N-1}}                   &
 C_r^N \ar[r]^-{i_r^N}_\cong \ar[d]^{\partial_r^N} &
 C_r^{N+1} \ar[r]^-{i_r^{N+1}}_\cong \ar[d]^{\partial_r^{N+1}} &
 \cdots \\
  C_{r-1}^0 \ar[r]^-{i_{r-1}^0} \ar[d]^{\partial_{r-1}^0} &
   \cdots\ar[r]^-{i_{r-1}^{s-1}}&
 C_{r-1}^s \ar[r]^-{i_{r-1}^{s}} \ar[d]^{\partial_{r-1}^s}&
  \cdots\ar[r]^-{i_{r-1}^{N-1}}                   &
 C_{r-1}^N \ar[r]^-{i_{r-1}^N}_\cong \ar[d]^{\partial_{r-1}^N} &
 C_{r-1}^{N+1} \ar[r]^-{i_{r-1}^{N+1}}_\cong \ar[d]^{\partial_{r-1}^{N+1}} &
 \cdots \\
  \vdots \ar[d]^{\partial_1^0} &
  &
 \vdots \ar[d]^{\partial_1^s} &
   &
 \vdots \ar[d]^{\partial_1^N} &
 \vdots \ar[d]^{\partial_1^{N+1}} &
 \\
  C_0^0 \ar[r]^-{i_0^0} \ar[d]^{\partial_0^0}&
 \cdots\ar[r]^-{i_0^{s-1}}&
 C_0^s \ar[r]^-{i_0^{s}} \ar[d]^{\partial_0^s}&
 \cdots\ar[r]^-{i_0^{N-1}}&
 C_0^N \ar[r]^-{i_0^N}_\cong \ar[d]^{\partial_0^N}&
 C_0^{N+1} \ar[r]^-{i_0^{N+1}}_\cong \ar[d]^{\partial_0^{N+1}}&
 \cdots\\
 0&
 &
 0&
 &
 0&
 0&
}
\end{array}
\end{equation}
\begin{note}
Since $K$ is finite, each row is a tame persistence vector space and
$C_r^s=0$ for $r>\dim(K)$.
\end{note}

Passing to homology with coefficient in a field $\kappa$,  consider $\mathcal H_r^s = \ker ({\partial_r^s})/ \textmd{im}({\partial_{r+1}^s})$,
and $\widetilde{i}_r^s:\mathcal H_r^s\rightarrow\mathcal H_r^{s+1}$ the linear maps induced in homology by the linear inclusions $i_r^s$.
For each $r(\geq 0)$
$$
\mathcal H_r({\mathcal K}) := \{\mathcal H_r^s,\widetilde{i}_r^s:\mathcal H_r^s\rightarrow\mathcal H_r^{s+1} | s\in \mathbb{Z}_{\geq 0}\}
$$
is a persistence vector space with bar code ${\mathcal B}(\mathcal H_r({\mathcal K}))$.

Following \cite{ELZ}, the collection of vector spaces $\mathcal H_r^{s,p} = \textmd{im}(\widetilde{i_r}^{s,s+p}:\mathcal H_r^s\rightarrow\mathcal H_r^{s+p})$
is referred to as the persistent homology.

2) The collection of bar codes ${\mathcal B}(\mathcal H_r({\mathcal K}))$ for all $r$
is denoted by $\mathcal{B(K)}$ and referred to as the bar code of ${\mathcal K}$.

Suppose $\mathcal K$ is a filtered simplicial\,/\,polytopal complex as in (\ref{2.4.1}),
$$
K_0\subseteq K_1\subseteq\cdots\subseteq K_N = K,
$$
we calculate the bar codes of $\mathcal K$ in
two cases: $\kappa = \mathbb{Z}_2$ and $\kappa = \mathbb{R}$.
We consider only filtered simplicial complex below, filtered polytopal complex can be
dealt with similarly.

\vskip .5cm

\textbf{Case A:} $\kappa = \mathbb{Z}_2$

In this case, we can calculate the bar codes of $\mathcal K$ using the persistent algorithm
given in \cite{EH}. There are two steps of this algorithm: 1) matrix reduction; 2) pairing.

\textbf{Matrix reduction} (algorithm (ELZ))

Suppose $S(K)$ is the set of all simplices of $K$ and $\sharp S(K) = m$.

Consider the function $f_{ind}: S(K)\rightarrow \{0, 1, \cdots, N\}$,
such that $f_{ind}(\sigma) = i$ for $\sigma\in K_i \setminus K_{i-1}$.

Choose a \textit{compatible ordering} of the simplices, that is, an ordering of all simplices of $K$,
such that $\sigma\prec\tau$ if $f_{ind}(\sigma)<f_{ind}(\tau)$ or if $\sigma$ is a face of $\tau$.
After we order all simplices of $K$ according to this compatible ordering, we get a sequence of
simplices
$\sigma_1, \sigma_2,\cdots, \sigma_m$.

Consider the $m$-by-$m$ boundary matrix $\partial$ given by
$$
\displaystyle\partial[i, j] = \left\{ \begin{array}{ll}
1 & \text{if $\sigma_i$ is a codimension$-1$ face of $\sigma_j$;}\\
0 & \text{otherwise.}
\end{array}\right.
$$

Let $low(j)$ be the row index of the lowest $1$ in column $j$.
If the entire column is zero, then $low(j)$ is undefined.
We call R $reduced$ if $low(j)\neq low(j_0)$ whenever $j$ and $j_0$, with $j\neq j_0$,
specify two non-zero columns.
The algorithm reduces $\partial$ by adding columns from left to right.
\begin{center}
\textbf{Algorithm 3.1}

\begin{tabular}{|l|}
\hline
R=$\partial$\\
for $j = 1$ to $m$ do\\
\;\;\; while there exists $j_0<j$ with $low(j_0) = low(j)$ do\\
\;\;\;\;\;\; add column $j_0$ to column $j$\\
\;\;\; endwhile\\
endfor.\\
\hline
\end{tabular}
\end{center}

The running time is at most cubic in the number of simplices \cite{KB79}, \cite{Mo05}.
In matrix notation, the algorithm computes the reduced matrix as $R = \partial\cdot V$.
Since each simplex is preceded by its proper faces, $\partial$ is upper triangular.
Since we only add from left to right, $V$ is also upper triangular and so is $R$
(See page 152-153, \cite{EH} for details).

\textbf{Pairing}

Once there are two cases of columns of reduced matrix $R$.

Case 1: column $j$ of $R$ is zero. We call $\sigma_j$ positive since it creates
a new cycle and thus gives birth to a new homology class unless it dies in the same
filtration.

Case 2: column $j$ or $R$ is non-zero. We call $\sigma_j$ negative because it provides
the death to a homology class.

See page 154-155, \cite{EH} for details.

We can read bar codes directly from the matrix $R$.
Suppose column $j$ is zero and $\dim(\sigma_j) = r$.
If there exists a column $k$ with $low(k) = j$
and $f_{ind}(\sigma_k)>f_{ind}(\sigma_j)$
then column $j$ provides a closed interval $[f_{ind}(\sigma_j),f_{ind}(\sigma_k)-1]\in {\mathcal B}(\mathcal H_r({\mathcal K}))$.
If there exists a column $k$ with $low(k) = j$
and $f_{ind}(\sigma_k)=f_{ind}(\sigma_j)$
then column $j$ doesn't provides any interval since it indicates a cycle which  dies in the same filtration.
Note that $f_{ind}(\sigma_k)<f_{ind}(\sigma_j)$ is impossible according to the definition of compatible
ordering.
If there is no  column $k$ with $low(k) = j$
then  column $j$ provides an infinite interval $[f_{ind}(\sigma_j),\infty]\in {\mathcal B}(\mathcal H_r({\mathcal K}))$.

\vskip .5cm

\textbf{Case B:}  $\kappa = \mathbb{R}$

For each $s$ consider the chain complex $\mathcal C(K_s)$ with coefficients in $\mathbb{R}$
$$
\xymatrix{
\cdots \ar[r]^{\partial_{r+2}^s} & C_{r+1}^s \ar[r]^{\partial_{r+1}^s} & C_r^s \ar[r]^{\partial_r^s} & C_{r-1}^s \ar[r]^{\partial_{r-1}^s} & \cdots \ar[r]^{\partial_2^s} & C_1^s \ar[r]^{\partial_1^s} & C_0^s \ar[r]^-{\partial_0^s} & =0
}
$$
equipped with the scalar products defined by the standard basis(see subsection \ref{secHodge}).

Let

(1) $\delta_{r-1}^s=(\partial_{r}^s)^*$ the adjoint operator of $\partial_{r}^s$ \footnote  {When represented as a matrix w.r.t. the standard base, ``adjoint'' w.r.t. the scalar product defined by the standard base  is actually ``transpose''. } for $r\in \mathbb Z_{\geq0}$,

(2) $\Delta_r^s=\partial_{r+1}^s\circ\delta_r^s+\delta_{r-1}^s\circ\partial_r^s:C_r^s\rightarrow C_r^s$ for $r\in \mathbb Z_{\geq0}$ and

(3) $(C_r^s)_+=\textmd{im}(\partial_{r+1}^s)$,
$(C_r^s)_- = \textmd{im}(\delta_{r-1}^s)$ and
$H_r^s=\ker(\Delta_r^s)$, with
$$H_r^s=\ker(\delta_r^s)\cap \ker(\partial_r^s)$$
and
$$C_r^s=(C_r^s)_+\oplus H_r^s\oplus (C_r^s)_-$$
where $(C_r^s)_+$, $ H_r^s$ and $(C_r^s)_-$ are pairwise orthogonal.

Recall that $[A]$ represents the orthonormalization of matrix $A$(see {Lemma {\ref{orth}}}).
With respect to the standard basis provided by simplices\,/\,cells, $\partial_r^s$ can be regarded as an $n_{r-1}^s\times n_r^s$ matrix.
In view of the above considerations, the following linear maps are orthogonal projections onto $(C_r^s)_+$,$(C_r^s)_-$
and $H_r^s$, respectively.
\begin{equation}\label{2.4.3}
\begin{array}{rrl}
(p_r^s)_+: & C_r^s &\rightarrow C_r^s \\
     & y  &\mapsto  [\partial_{r+1}^s][\partial_{r+1}^s]^T y\\

(p_r^s)_-: & C_r^s &\rightarrow C_r^s \\
     & y  &\mapsto  [(\partial_{r}^s)^T][(\partial_{r}^s)^T]^T y\\

(p_r^s)_H: & C_r^s &\rightarrow C_r^s \\
     & y  &\mapsto  (I_{n_r^s}-[\partial_{r+1}^s][\partial_{r+1}^s]^T-[(\partial_{r}^s)^T][(\partial_{r}^s)^T]^T)y
\end{array}
\end{equation}

$$
\begin{array}{lrl}
j_r^s: & H_r^s=\ker(\delta_r^s)\cap \ker(\partial_r^s) & \rightarrow  \mathcal H_r^s=\ker({\partial_r^s})/\textmd{im}({\partial_{r+1}^s})\\
 & x  & \mapsto x+\textmd{im}(\partial_{r+1}^s)
\end{array}
$$
and
$$
\begin{array}{lrl}
k_r^s: & \mathcal H_r^s=\ker({\partial_r^s})/\textmd{im}({\partial_{r+1}^s}) & \rightarrow  H_r^s=\ker(\delta_r^s)\cap \ker(\partial_r^s)\\
 & y+\textmd{im}(\partial_{r+1}^s)  & \mapsto (p_r^s)_H(y)
\end{array}
$$
is a pair of isomorphisms between $H_r^s$ and $\mathcal H_r^s$.

Since $H^r_s$ identifies to $\mathcal H_r^s$ by the
 pair of isomorphisms $j_r^s$ and $k_r^s$, the persistence vector space
$$
 H_r({\mathcal K})=\{ H_r^s,g_r^s: H_r^s\rightarrow H_r^{s+1} \mid s\in \mathbb{Z}_{\geq 0}\},
$$
where $g_r^s=k_r^{s+1}\circ \widetilde{i}_r^s\circ j_r^s$,
is isomorphic to
$$
\mathcal H_r({\mathcal K})=\{\mathcal H_r^s,\widetilde{i}_r^s:\mathcal H_r^s\rightarrow\mathcal H_r^{s+1} \mid s\in \mathbb{Z}_{\geq 0}\}
$$

By {Theorem \ref{two tame}} one has $\mathcal B(H_r({\mathcal K})) = \mathcal B(\mathcal H_r({\mathcal K})) $.

\begin{nota}
1) Denote $$g_r^{s,t}=g_r^{t-1}\circ\cdots\circ g_r^s:H_r^s\rightarrow H_r^t$$ for $s<t$
and $$g_r^{s,s}=identity:H_r^s\rightarrow H_r^s$$ with $s,t\in \mathbb Z_{\geq0}$.

Define $i_r^{s,t}:C_r^s\rightarrow C_r^t$ and $\widetilde{i}_r^{s,t}:\mathcal H_r^s\rightarrow \mathcal H_r^t$
in similar way.

2) Denote $\beta_r(s,t)=\dim({\textmd im}(g_r^{s,t}:H_r^s\rightarrow H_r^t))$ with $s\leq t\in{\mathbb Z}_{\geq 0}$.

\begin{note} $\beta_r(s,s)=\dim H_r^s$ and  $\beta_r(s,t)=\beta_r(s,t+1)$ for $t\geq N$. \end{note}

Hence $\beta_r(s,\infty)=\beta_r(s,m)$ for $m$ larger than $s$ and $N$.

Notice that $\beta_r(s,t)$=number of intervals in $\mathcal B(H_r({\mathcal K}))$
which contain $[s,t]$ for  $0\leq s\leq t\leq \infty$. (See {Observation \ref{betaij}})

3) Denote $\mu_r(s,t)$=number of intervals in $\mathcal B(H_r({\mathcal K}))$
which equal to $[s,t]$ for  $0\leq s\leq t\leq \infty$. We have
\begin{equation}\label{2.4.4}
\mu_r(s,t) = \left\{
     \begin{array}{lr}
       \beta_r(s,t)-\beta_r(s-1,t)-\beta_r(s,t+1)+\beta_r(s-1,t+1) & 0<s\leq t<\infty\\
       \beta_r(0,t)-\beta_r(0,t+1) & s=0,0\leq t<\infty\\
       \beta_r(s,\infty)-\beta_r(s-1,\infty)& 0<s<\infty,t=\infty\\
       \beta_r(0,\infty) & s=0,t=\infty
     \end{array}
   \right.
\end{equation}

\end{nota}

\begin{note}
From the above formula and {note} in 2) we only need to
know $\beta_r(s,t)$ for all $0\leq s\leq t \leq N$
in order to calculate $\mu_r(s,t)$ and the bar code $\mathcal B(H_r({\mathcal K}))$.
\end{note}

\begin{thm} \label{beta rss}
$\beta_r(s,s)=rank((p_r^s)_H)$ and $\beta_r(s,t)=rank((p_r^t)_H\circ i_r^{s,t} \circ (p_r^s)_H)$ for $s<t$,
where
$$
i_r^{s,t}=\left(
          \begin{array}{c}
            I_{n_r^s} \\
            0_{(n_r^t-n_r^s)\times n_r^s} \\
          \end{array}
        \right).
$$
For the definition of $(p_r^s)_H$, refer to (\ref{1.2.1}) and (\ref{2.4.3}).

\end{thm}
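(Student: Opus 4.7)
The plan is to reduce both equalities to the observation that, once restricted to $H_r^s$, the map $g_r^{s,t}\colon H_r^s\to H_r^t$ coincides with the compression $(p_r^t)_H\circ i_r^{s,t}|_{H_r^s}$ of the inclusion $i_r^{s,t}$. Since $(p_r^s)_H$ is a projection onto $H_r^s$ (whose image is the Harmonic subspace), the full composite $(p_r^t)_H\circ i_r^{s,t}\circ (p_r^s)_H$ on $C_r^s$ has the same image as $g_r^{s,t}$ on $H_r^s$, and then the rank equalities are immediate.

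For the first identity, note that $g_r^{s,s}$ is the identity of $H_r^s$ by definition, so $\beta_r(s,s)=\dim H_r^s$. On the other hand $(p_r^s)_H$ is the orthogonal projection of $C_r^s$ onto $H_r^s$, so $\mathrm{rank}((p_r^s)_H)=\dim H_r^s$, and the first equality follows.

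For the second identity ($s<t$), the key computation is to unwind $g_r^{s,t}=k_r^t\circ \widetilde{i}_r^{s,t}\circ j_r^s$ on an element $x\in H_r^s$. First, $j_r^s(x)=x+\mathrm{im}(\partial_{r+1}^s)\in \mathcal H_r^s$; then the inclusion-induced map sends this to $i_r^{s,t}(x)+\mathrm{im}(\partial_{r+1}^t)\in \mathcal H_r^t$; and finally $k_r^t$ applies $(p_r^t)_H$ to the representative, yielding $(p_r^t)_H(i_r^{s,t}(x))$. (This last step is well defined because $(p_r^t)_H$ annihilates $\mathrm{im}(\partial_{r+1}^t)=(C_r^t)_+$, a summand orthogonal to $H_r^t$ by Proposition \ref{hodge decomposition}.) Hence $g_r^{s,t}(x)=(p_r^t)_H(i_r^{s,t}(x))$ for every $x\in H_r^s$.

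To conclude, the image of $(p_r^t)_H\circ i_r^{s,t}\circ (p_r^s)_H\colon C_r^s\to C_r^t$ equals $(p_r^t)_H(i_r^{s,t}(H_r^s))=g_r^{s,t}(H_r^s)=\mathrm{im}(g_r^{s,t})$, whose dimension is by definition $\beta_r(s,t)$. I do not anticipate a serious obstacle; the only subtle point is checking that $k_r^t$ is well defined on cosets, which relies on the Hodge orthogonality $(p_r^t)_H|_{\mathrm{im}(\partial_{r+1}^t)}=0$ from Proposition \ref{hodge decomposition}, and on keeping the three maps $j_r^s$, $\widetilde i_r^{s,t}$, $k_r^t$ straight when expressed in terms of representatives.
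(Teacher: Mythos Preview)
Your argument is correct and is exactly the natural one: identify $g_r^{s,t}$ with $k_r^t\circ\widetilde{i}_r^{s,t}\circ j_r^s$ (which follows from the telescoping $j_r^u\circ k_r^u=\mathrm{id}_{\mathcal H_r^u}$ since $j_r^u$ and $k_r^u$ are mutually inverse), then observe that on $H_r^s$ this composite equals $(p_r^t)_H\circ i_r^{s,t}$, and finally use that $(p_r^s)_H$ surjects onto $H_r^s$. The paper in fact states this theorem without an explicit proof, treating it as an immediate consequence of the Hodge setup and the definitions of $j_r^s$, $k_r^s$, and $g_r^s$; your write-up supplies precisely those details.
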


We will use MATLAB and in order to avoid the use of
the function ``rank'' which is not  reliable for large
matrices we will need the following observations to
complement  Theorem \ref{beta rss}.

\begin{obs}\label{rankA}
The rank of a real matrix $A$ equals to the number of positive eigenvalues of
$A A^T$ or $A^T A$.
\end{obs}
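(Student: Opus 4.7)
The plan is to deduce the observation from two classical facts of linear algebra: (a) for any real $m\times n$ matrix $A$, $\operatorname{rank}(A) = \operatorname{rank}(AA^T) = \operatorname{rank}(A^TA)$; and (b) for any real symmetric positive semi-definite matrix $M$, the rank of $M$ equals the number of strictly positive eigenvalues of $M$ (counted with multiplicity). These two together give the statement immediately, since $AA^T$ and $A^TA$ are symmetric positive semi-definite.

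For (a), the key identity is $\ker(AA^T) = \ker(A^T)$. First I would note one inclusion is trivial: if $A^T x = 0$ then $AA^T x = 0$. For the reverse, given $AA^T x = 0$ I would take the inner product with $x$ to get $0 = x^T A A^T x = \langle A^T x, A^T x\rangle = \|A^T x\|^2$, forcing $A^T x = 0$. Then by the rank-nullity theorem applied to $AA^T$ and to $A^T$ on $\mathbb{R}^m$, $\operatorname{rank}(AA^T) = m - \dim\ker(AA^T) = m - \dim\ker(A^T) = \operatorname{rank}(A^T) = \operatorname{rank}(A)$. The argument for $A^TA$ is verbatim with roles swapped.

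For (b), since $M = AA^T$ is real and symmetric, the spectral theorem furnishes an orthonormal basis of $\mathbb{R}^m$ consisting of eigenvectors of $M$, so $M$ is orthogonally diagonalizable. For any diagonalizable matrix the rank equals the number of nonzero eigenvalues counted with algebraic multiplicity, so $\operatorname{rank}(M)$ equals the number of its nonzero eigenvalues. Moreover $M$ is positive semi-definite: for any $x \in \mathbb{R}^m$, $x^T M x = \|A^T x\|^2 \geq 0$, so every eigenvalue of $M$ satisfies $\lambda \geq 0$. Hence the nonzero eigenvalues of $M$ are precisely the positive ones, which combined with (a) yields $\operatorname{rank}(A) = \#\{\text{positive eigenvalues of } AA^T\}$, and identically for $A^TA$.

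There is no real obstacle here; the content is standard linear algebra, and the only step that requires a moment of thought is the identity $\ker(AA^T) = \ker(A^T)$, handled by the positive-definiteness trick $\|A^T x\|^2 = x^T AA^T x$. Everything else is invoking the spectral theorem and rank-nullity.
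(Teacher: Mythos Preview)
Your argument is correct and complete. The paper itself does not supply a proof of this observation; it is simply stated as a standard linear-algebra fact (and indeed the weaker version, Observation~\ref{rankAintro}, that $\operatorname{rank}(A)=\operatorname{rank}(AA^T)=\operatorname{rank}(A^TA)$, is also stated without proof earlier in the paper). Your write-up is exactly the standard justification one would give, and there is nothing to add.
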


\begin{obs}\label{betarss}
$\beta_r(s,s) = \dim(H_r^s) = \dim(C_r^s)-rank(\partial_{r+1}^s)-rank(\partial_r^s)$.
\end{obs}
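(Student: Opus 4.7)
The plan is to reduce this to the Hodge decomposition applied at the fixed filtration level $s$, noting that everything in sight has already been assembled.

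First, I would unpack $\beta_r(s,s)$. By the notation introduced just above, $g_r^{s,s}$ is the identity on $H_r^s$, so $\beta_r(s,s) = \dim(\operatorname{im}(g_r^{s,s})) = \dim(H_r^s)$. This disposes of the first equality and reduces the problem to establishing $\dim(H_r^s) = \dim(C_r^s) - \mathrm{rank}(\partial_{r+1}^s) - \mathrm{rank}(\partial_r^s)$.

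Next, I would invoke Proposition \ref{hodge decomposition}(ii), applied to the chain complex $\mathcal{C}(K_s)$ equipped with its standard inner product, to get the orthogonal decomposition
$$
C_r^s = (C_r^s)_+ \oplus H_r^s \oplus (C_r^s)_-.
$$
Taking dimensions, $\dim(C_r^s) = \dim((C_r^s)_+) + \dim(H_r^s) + \dim((C_r^s)_-)$. By definition $(C_r^s)_+ = \operatorname{im}(\partial_{r+1}^s)$, so $\dim((C_r^s)_+) = \mathrm{rank}(\partial_{r+1}^s)$. Similarly $(C_r^s)_- = \operatorname{im}(\delta_{r-1}^s) = \operatorname{im}((\partial_r^s)^*)$, and since the rank of a real matrix equals the rank of its transpose (a special case of Observation \ref{rankAintro} or Observation \ref{rankA}), $\dim((C_r^s)_-) = \mathrm{rank}(\partial_r^s)$. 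Rearranging yields the desired formula.

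There is essentially no obstacle here: this is the filtration-indexed version of Observation \ref{betarssintro}, which was established verbatim in Subsection \ref{secHodge}. The only care required is the bookkeeping of the index $s$, and the (trivial) verification that $\beta_r(s,s) = \dim(H_r^s)$ via the identity map. I would therefore present the argument in two short lines rather than as a standalone proof, possibly folding it into the sentence immediately preceding the observation.
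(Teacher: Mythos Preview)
Your proof is correct and follows exactly the approach the paper takes for Observation \ref{betarssintro} in Subsection \ref{secHodge}: the Hodge decomposition gives $\dim(C_r^s) = \dim((C_r^s)_+) + \dim(H_r^s) + \dim((C_r^s)_-)$, and the two outer summands are identified with the ranks of $\partial_{r+1}^s$ and $\partial_r^s$. The paper states Observation \ref{betarss} without a separate proof precisely because it is, as you note, the filtration-indexed restatement of Observation \ref{betarssintro}.
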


\begin{obs}\label{nonzero}
If $\beta_r(s,t)\neq 0,s<t$, then

i) $H_r^i$ is nonzero for all $s\leq i\leq t$;

ii) $\beta_r(s,i)$ is nonzero for all $s\leq i\leq t$.

\end{obs}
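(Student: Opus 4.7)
The plan is to deduce both parts from the factorization of the iterated transition map through every intermediate index. Specifically, for any $s \leq i \leq t$ the notation gives
\[
g_r^{s,t} \;=\; g_r^{i,t}\circ g_r^{s,i}\colon H_r^s \longrightarrow H_r^i \longrightarrow H_r^t,
\]
since $g_r^{s,t}=g_r^{t-1}\circ\cdots\circ g_r^s$ by definition. This composition identity is the only ingredient I need; everything else is linear algebra about when composites of linear maps can be nonzero.

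For part (i), I would argue by contraposition. Assume $H_r^i=0$ for some $s\leq i\leq t$. Then the factorization above forces $g_r^{s,t}$ to factor through the zero space, hence $g_r^{s,t}=0$, hence $\beta_r(s,t)=\dim(\operatorname{im}(g_r^{s,t}))=0$, contradicting the hypothesis. Note that the boundary cases $i=s$ and $i=t$ are automatic: if $H_r^s=0$ then $g_r^{s,t}$ has zero domain, and if $H_r^t=0$ then $g_r^{s,t}$ has zero codomain.

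For part (ii), I use the stronger observation that
\[
\operatorname{im}(g_r^{s,t}) \;=\; g_r^{i,t}\bigl(\operatorname{im}(g_r^{s,i})\bigr).
\]
If $\beta_r(s,i)=\dim(\operatorname{im}(g_r^{s,i}))=0$, then $\operatorname{im}(g_r^{s,i})=\{0\}$, so its image under $g_r^{i,t}$ is also $\{0\}$, which gives $\operatorname{im}(g_r^{s,t})=\{0\}$ and therefore $\beta_r(s,t)=0$, again contradicting the hypothesis. The case $i=t$ is the assumption itself, and $i=s$ says $\beta_r(s,s)=\dim H_r^s\neq 0$, which follows from (i).

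There is no real obstacle here; the entire content is that nonzero composites force nonzero intermediate spaces and nonzero intermediate images. The only thing to be mildly careful about is the notational convention $g_r^{s,s}=\mathrm{id}$, which ensures the factorization identity is valid at the endpoints and makes the boundary cases of the argument degenerate to trivialities. Part (ii) evidently implies part (i), so one could also present this as a single proof of (ii) with (i) as an immediate corollary, via $\beta_r(s,i)\leq \dim H_r^i$.
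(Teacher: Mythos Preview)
Your proof is correct. The paper does not actually supply a proof of this observation; it is stated as a self-evident fact to be used in the algorithm. Your argument via the factorization $g_r^{s,t}=g_r^{i,t}\circ g_r^{s,i}$ is exactly the intended justification, and your remark that (ii) already subsumes (i) via $\beta_r(s,i)\leq\dim H_r^i$ is a nice tightening.
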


\vskip .5cm

\textbf{Simultaneous Persistence}

\vskip .2cm

In the {case} $\kappa= \mathbb Z_2$   we describe how to calculate the bar codes of
$\mathbb{Z}_2$-homology groups of the filtered simplicial complex
$$
K_0\subseteq K_1\subseteq\cdots\subseteq K_N = K.
$$

Let's discuss a more general case where the filtration has two directions.

Suppose $X$ is a polytopal complex and $X^{\pm}\subset X$ are two subcomplexes
with $X_0=X^+\cap X^-$.
Suppose $X^+$ and $X^-$ are equipped with finite filtration
$X_0^+\subseteq X_1^+\subseteq\cdots\subseteq X_{N^+}^+$ and
$X_0^-\subseteq X_1^-\subseteq\cdots\subseteq X_{N^-}^-$
where $X_0^{\pm} = X_0$, $X_{N^-}^- = X^-$ and $X_{N^+}^+ = X^+$.
\medskip

Define the simultaneous persistence number $\omega_r(s,t)$ of the above
filtration to be the maximal number of linearly independent elements in $H_r(X_0)$
which dies in $X^-$ at $s$ and in $X^+$ at $t$(See \cite{BDD} for details).
These numbers are useful in the calculation of relevant level persistence numbers.

One can put the cells of $X$ in three groups I, II, III.
The group I contains the cells of $X_0$,
the group II the cells of $X^-\setminus X_0$ and
the group III the cells of $X^+\setminus X_0$.

Consider the following filtration of $X$
$$
X_0\subseteq X_1\subseteq \cdots \subset X_{N^-+N^+} = X
$$
where $X_i = X_i^-$ for $0\leq i\leq N^-$, $X_i = X^-\cup X_{i-N^-}^+$
for $N^-+1\leq i\leq N^-+N^+$.

Suppose $S(X)$ is the set of all cells of $X$ and $\sharp S(X) = m$.

We can define a function $f_{ind}: S(X)\rightarrow \{0, 1, \cdots, N^-+N^+\}$,
such that $f(\sigma) = i$ for $\sigma\in X_i \setminus X_{i-1}$.

We use a \textit{compatible ordering} of the cells, that is, an ordering of all cells of $X$,
such that $\sigma\prec\tau$ if $f_{ind}(\sigma)<f_{ind}(\tau)$ or if $\sigma$ is a face of $\tau$.
After we order all cells of $X$ according to this compatible ordering, we get a sequence of
cells
$\sigma_1, \sigma_2,\cdots, \sigma_m$.

With this compatible ordering, the incidence matrix $M$ is given by
$$ M =   \begin{tabular}{||c||}
$
\begin{array}{ccc}
A & B^- & C^+\\
0 & B & 0\\
0 & 0 & C
\end{array}$
\end{tabular}
$$
with $A$ being the incidence matrix for the subcomplex $X_0$,
$M^- = \begin{tabular}{||c||}
$
\begin{array}{ccc}
A & B^-\\
0 & B
\end{array}$
\end{tabular}
$
being the incidence matrix for $X^-$ and
$M^+ = \begin{tabular}{||c||}
$
\begin{array}{ccc}
A & C^+\\
0 & C
\end{array}$
\end{tabular}
$
being the incidence matrix for $X^+$.

The matrix $M$ is said to be in \textit{relative reduced form} if both $M^-$
and $M^+$ are in reduced form.
This relative reduced form $R_{rel}$ can be achieved by applying {Algorithm 2.4.1}
to matrix $M$.

We can read the number $\omega_r(i,j)$ directly from the relative reduced form $R_{rel}$ of $M$.
$\omega_r(i,j) = $ the number of all triples $(\sigma_k, \sigma_{k'}, \sigma_{k''})$
with $\sigma_k \in X_0$, column $k$ of $R_{rel}$ is zero, $\dim(\sigma_k) = r$, $low(k') = low(k'') = k$, $\sigma_{k'}\in X_i^-\setminus X_{i-1}^-$
and $\sigma_{k''}\in X_j^+\setminus X_{j-1}^+$.

\subsection{Calculate the Bar Codes of a PCD}\label{secAlg}

Given a point cloud data $X=\{x_1,x_2,\cdots,x_p\}\in \mathbb{R}^n$(suppose all
points in $X$ are distinct), we will first consider its distance matrix
$$
D=\left(
    \begin{array}{cccc}
      d(x_1,x_1) & d(x_1,x_2) & \cdots & d(x_1,x_p) \\
      d(x_2,x_1) & d(x_2,x_2) & \cdots & d(x_2,x_p) \\
     \vdots & \vdots & \ddots & \vdots \\
      d(x_p,x_1) & d(x_p,x_2) & \cdots & d(x_p,x_p) \\
    \end{array}
  \right)
$$
Notice that the $D$ is a symmetric matrix, the diagonal elements of $D$ are zeros and
all the other elements are positive real numbers.

Let $S$ be the set of the entries in D and order these numbers
increasingly (since they will be our epsilons), so we have
$$
\mathcal{E}:0=\epsilon_0<\epsilon_1<\epsilon_2<\cdots<\epsilon_N.
$$

$\mathcal{E}$ is exactly (\ref{2.2.0}).

{\bf Given a positive integer $m$ and $P$, let's
compute the {bar code of PCD $X$ up to dimension $m-1$
and step $P$}}(for definition see (\ref{2.2.3})).

To simplify the notation, from now on we will write $K_i$ instead of $R_{\epsilon_i}(X,m)$.
$K_0$ is just a collection of vertices $\{x_1,x_2,\cdots,x_p\}$.

We will represent  the vertex $x_{i_0}$ simply
by the index $i_0$, the edge $[x_{i_0},x_{i_1}]$ by $[i_0,i_1]$
and $k$-simplex $[x_{i_0},\cdots,x_{i_k}]$ is represented
by $[i_0,i_1,\cdots,i_k]$.

We want first to have the information about the filtered simplicial complex stored as

(a) An $(m+1)\times(P+1)$ matrix named $\textbf{Dimension}$ where
the entry $\textbf{Dimension}_{r+1,s+1}$ is the integer $n_r^s =$ dimension of $C^s_r$.

(b) An $(m+1)$ array of matrices named $\textbf{Simplex}$.
The component $\textbf{Simplex}_{r+1}$ is a matrix describing $r$-simplices of $K_P$; precisely is a matrix of $r+1$ columns,
each row representing an  $r$-simplex $[i_0,i_1,\cdots,i_k]$.
These rows are ordered consistently with the filtration.
The order for simplices of $K_s \backslash K_{s-1}$ is however arbitrary.

There is an obvious way to create these data, which is not very efficient but worth to mention.
Given $K_s$, we can easily find its $1$-simplices from
the distance matrix $D$. All the nonzero numbers in $D$ which are less
than $\epsilon_s$ correspond to the edges in $K_s$. In order to find the $r$-simplices,
we scan all possible combination of $r+1$ numbers of $\{1,2,\cdots,p\}$.
$[i_0,i_1,\cdots,i_r]$ ($i_0<i_1<\cdots<i_r$)belongs to the $r$-simplices of $K_s$ iff
every pair $[i_j,i_k]$ belongs to the $1$-simplices of $K_s$.
As $s$ increases, we have more $r$-simplices in $K_s$.
Store each $r$simplex as a row in $\textbf{Simplex}_{r+1}$
in filtration order. In another word,
if $[i_0,i_1,\cdots,i_r]$ belongs to $K_s$ but not in $K_{s-1}$, then it is ``new''
and can be added as a new row to the rows provided by the $r$-simplices of $K_{s-1}$.

We use however the package \textbf{JPlex} \cite{HA} \cite{SJ} to compute $\textbf{Dimension}$ and $\textbf{Simplex}$
,
which can be download from http://comptop.stanford.edu/programs/jplex/.

Once we get $\textbf{Simplex}$, we get the standard basis of $C_r^P$,
which is stored in $\textbf{Simplex}_{r+1}$, for $0\leq r\leq m$.
Given  $\textbf{Simplex}_{r}$ and $\textbf{Simplex}_{r+1}$, we compute $\partial_r^P$ with respect
to the standard basis:

Write $\textbf{Simplex}_{r+1}$ as $$\left(\begin{array}{ccc}a_{11} & \cdots & a_{1,r+1} \\ \vdots & \ddots & \vdots \\a_{n_r^P,1} & \cdots & a_{n_r^P,r+1}\end{array}\right)$$
 and
  $\textbf{Simplex}_{r}$ as $$\left(\begin{array}{ccc}b_{11} & \cdots & b_{1,r} \\ \vdots & \ddots & \vdots \\b_{n_{r-1}^P,1} & \cdots & a_{n_{r-1}^P,r}\end{array}\right),$$
   where  $n_r^P$ and $n_{r-1}^P$ can be obtained from $\textbf{Dimension}$.

$\partial_r^P$ is an $n_{r-1}^P \times n_{r}^P$ matrix with element $(\partial_r^P)_{i,j}$ equals to the incidence number
of the simplices $[b_{i,1},\cdots,b_{i,r}]$ and $[a_{j,1},\cdots,a_{j,r+1}]$.
If $[b_{i,1},\cdots,b_{i,r}]$ is not contained in $[a_{j,1},\cdots,a_{j,r+1}]$, $(\partial_r^P)_{i,j}=0$.
Otherwise, $[b_{i,1},\cdots,b_{i,r}]$ equals to $[a_{j,1},\cdots,\hat{a_{j,k}},\cdots,a_{j,r+1}]$ for some $k$, where  `` $\hat{}$ ''
means deletion. If $k$ is odd, $(\partial_r^P)_{i,j}=1$; if $k$ is even, $(\partial_r^P)_{i,j}=-1$.

$\partial_r^s$ is the $n_{r-1}^s \times n_{r}^s$ upper-left block of $\partial_r^P$.

Once $\textbf{Dimension}$ and $\textbf{Simplex}$ are determined, we apply
{Theorem \ref{beta rss}} and the {Observations \ref{rankA}, \ref{betarss} and \ref{nonzero}} to get $\beta_r(s,t)$
and then $\mu_r(s,t)$ according to (\ref{2.4.4}).

\vskip 0.5cm

Here is a brief description of functions written in Matlab. Those Matlab programs contained in ``Computing\_Barcode\_of\_PCD.zip'' can be downloaded from \verb|http://people.math.osu.edu/du.50|
\vskip 0.5cm

$\textbf{ Initial Data}$:

A $p\times n$ matrix $\textbf{ X}$, which stores $p$ points $x_1,\cdots,x_p$ in $\mathbb R^n$
as rows, two positive integers $\textbf {m}$ and $\textbf{S}$.
$\textbf {m}$ is the given upper bound of dimension of restricted Vietoris-Rips Complex.
$\textbf{S}$ is the given upper bound of steps of filtration.
 $\textbf{P} = \min(\textbf{S}, \textbf{N})$. See (\ref{2.2.3}).

\vskip 0.5cm

1. \textbf{Function getDistance}

$\textbf{ Input}$: $\textbf{ X}$.

$\textbf{ Output}$: A  $p\times p$ upper triangular matrix $\textbf{ D}$. $\textbf{D}_{ij} = d(x_i,  x_j), 1 \leq i \leq p - 1, i + 1 \leq j \leq p$.

\vskip 0.5cm

2. \textbf{Function scaleX}

$\textbf{ Input}$: $\textbf{ X}$, $\textbf{ D}$.

$\textbf{ Output}$: $\textbf{ X}$.

Given a point cloud data $\textbf{ X}$, we find all the distances between two points and sort them as
$\epsilon_0<\epsilon_1<\cdots<\epsilon_N$.
$\textbf{ Note: }$ If $\epsilon_{i+1} - \epsilon_i$ is very small(less than $3\cdot \textbf{diam(X)}\cdot \textbf{ eps}$,
the upper bound of round off errors,
where $\textbf{diam(X)}$ is the maximal distance between any two points of $\textbf X$, $\textbf {eps} = 2.2204\times10^{-16}$ is floating-point relative accuracy in Matlab), we delete $\epsilon_{i+1}$.
We scale $\textbf{ X}$ such that the minimum difference between
two consecutive distances is larger than $3\times 10^{-4}$. For details and more explanation,
read the comments in my program.

\vskip 0.5cm

3. \textbf{Function getEpsilon}

$\textbf{ Input}$: $\textbf{ D}$, $\textbf{ S}$.

$\textbf{ Output}$: Two positive integers $\textbf{P} = \min(\textbf{S}, \textbf{N})$ and $\textbf{ e} = \epsilon_P$.
Two row vectors $\textbf{ epsiorg}$, which stores different distances  $\epsilon_0<\epsilon_1<\cdots<\epsilon_N$ of $\textbf{ D}$ in increasing order and $\textbf{ epsiavg}$, which stores  $$\frac{\epsilon_0+\epsilon_1}{2},\cdots,\frac{\epsilon_k+\epsilon_{k+1}}{2},\cdots,\frac{\epsilon_{N-1}+\epsilon_N}{2},\epsilon_N+\frac{1}{2}.$$

\vskip 0.5cm

4. \textbf{Function getDimensionSimplex}

$\textbf{ Input}$: $\textbf{ X}$, $\textbf{ P}$, $\textbf{ epsiavg}$, $\textbf {m}$ and $\textbf{e}$.

$\textbf{ Output}$:

a) An integer ${\bf m_a} = \displaystyle \dim(R_{\epsilon_P}(X,m))$.

b) An $({\bf m_a}+1)\times(\textbf{P}+1)$ matrix  $\textbf{ Dimension}$.
$\textbf{ Dimension}_{r+1,s+1} = n_r^s$, the dimension of $C_r^s$.

c) An ${\bf m_a}+1$ array $\textbf{ Simplex}$.

The component $\textbf{ Simplex}_{r+1}$ is a matrix with $r+1$ columns, which stores $r$-simplex of $R_{\epsilon_P}(X,m)$
as rows in filtration order.

$\textbf{ Note}$: Here we use the package \textbf{edu.stanford.math.plex.*}
of \textbf{JPlex}\cite{HA}\cite{SJ}.

\vskip 0.5cm

5. \textbf{Function getDeltaP}

$\textbf{ Input}$: $\textbf{ Dimension}$, $\textbf{ Simplex}$ and ${\bf m_a} $.

$\textbf{ Output}$: An ${\bf m_a} $ array $\textbf{ DeltaP}$.
 $\textbf{DeltaP}_r $ = the matrix of $ \partial_r^P: C_r^P \rightarrow C_{r-1}^P $ with respect to the
 the  standard basis of $ C_r^P $ and $ C_{r-1}^P $.

\vskip 0.5cm

6. \textbf{Function getDelta}

$\textbf{ Input}$: $\textbf{ DeltaP}$, $\textbf{ Dimension}$, $r$, $s$

$\textbf{ Output}$: The matrix $ \partial_r^s: C_r^s \rightarrow C_{r-1}^s $ with respect to standard basis of $C_r^s$ and $C_{r-1}^s$.

\vskip 0.5cm

7. \textbf{Function getRank}

$\textbf{ Input}$: $\textbf{P}$, $\textbf{m}$, and ${\bf m_a}$.

$\textbf{ Output}$: An $(\textbf{m}+1)\times (\textbf{P}+1)$ matrix $\textbf{ R}$.
$\textbf{ R}_{r+1,s+1}$ = rank of $\partial_r^s$.

$\textbf{ Algorithm}$: $\textbf{ R}_{r+1,s+1}$ = number of eigenvalues of $\partial_r^s \times {\partial_r^s}'$
which is greater than $10^{-8}$. Here we use  { Observation \ref{rankA}}.

\vskip 0.5cm

8. \textbf{Function getDimensionHrs}

$\textbf{ Input}$: $\textbf{ Dimension}$, $\textbf{ R}$, $\textbf{P}$, $\textbf{m}$ and ${\bf m_a}$.

$\textbf{ Output}$: An integer ${\bf m_c}$ and an $({\bf m_c}+1)\times (\textbf{P}+1)$ matrix $\textbf{ dimHrs}$.
$\textbf{ dimHrs}_{r+1,s+1} = \dim(H_r^s)$.

$\textbf{ Algorithm}$: $\textbf{ dimHrs}_{r+1,s+1} = \textbf{ Dimension}_{r+1,s+1} - \textbf{ R}_{r+2,s+1} - \textbf{ R}_{r+1,s+1}$.

Here we use use  { Observation \ref{betarss}}.

\vskip 0.5cm

9. \textbf{Function getHarmonicProjection}

$\textbf{ Input}$: $\textbf{ Dimension}$, $r$ and $s$.

$\textbf{ Output}$: The harmonic projection matrix ${(p_r^s)_H}$ from $C_r^s$ to itself with respect to the standard basis of $C_r^s$.

$\textbf{ Algorithm}$: See (\ref{2.4.3}).

\vskip 0.5cm

10. \textbf{Function getBeta}

$\textbf{ Input}$: $\textbf{ Dimension}$, $\textbf{P}$, $\textbf{ dimHrs}$ and ${\bf m_c}$.
${\bf m_c}$ is the maximal dimension of nonzero $H_r^s$'s.

$\textbf{ Output}$: A $(\textbf{P}+1)\times (\textbf{P}+1) \times ({\bf m_c}+1)$ matrix $\textbf{Beta}$.
$\textbf{Beta}(s+1,t+1,r+1) = \beta_r(s,t)$.

$\textbf{ Algorithm}$:

$
\beta_0(\cdot,\cdot) = \left(\begin{array}{cccc}\dim(H_0^0) & \dim(H_0^1) & \cdots & \dim(H_0^P) \\ & \dim(H_0^1) & \cdots & \dim(H_0^P) \\ &  & \ddots & \vdots \\ &  &  & \dim(H_0^P)\end{array}\right)
$;

$
\beta_r(s,s) = rank((p_r^s)_H) = \dim(H_r^s)
$(see { Observation \ref{betarss}});

$
\beta_r(s,t) = rank((p_r^t)_H\circ i_r^{s,t} \circ (p_r^s)_H)
$
for $s<t$ (see {Theorem \ref{beta rss}}).

Apply  { Observation \ref{rankA}} when computing rank as in { function getRank}.

Apply  { Observation \ref{nonzero}} to avoid unnecessary computing.

\vskip 0.5cm

11. \textbf{Function getMu}

$\textbf{ Input}$: $\textbf{Beta}$, $\textbf{P}$ and ${\bf m_c}$.

$\textbf{ Output}$: A $(\textbf{P}+1)\times (\textbf{P}+1) \times ({\bf m_c}+1)$ matrix $\textbf{Mu}$.
$\textbf{Mu}(s+1, t+1, r+1) = \mu_r(s, t)$

$\textbf{ Algorithm}$: Apply (\ref{2.4.4}).

\vskip 0.5cm

12. \textbf{Function getBar codeMatrix}

$\textbf{ Input}$: $\textbf{Mu}$, $\textbf{P}$, and ${\bf m_c}$.

$\textbf{ Output}$: $\textbf{A}$, a $q\times 4$ matrix which stores the bar code in increasing order
according to dimension, left endpoint and right endpoint.
 $M$ intervals $[a, b]$ in bar code ${\mathcal B}(\mathcal H_r({\mathcal K}))$ will be represented by [r a b M] in $\textbf{A}$.

\vskip 0.5cm

13. \textbf{drawBar code}

$\textbf{ Input}$: $\textbf{A}$ and $\textbf{P}$.

$\textbf{ Output}$: A picture of bar code stored in \textbf{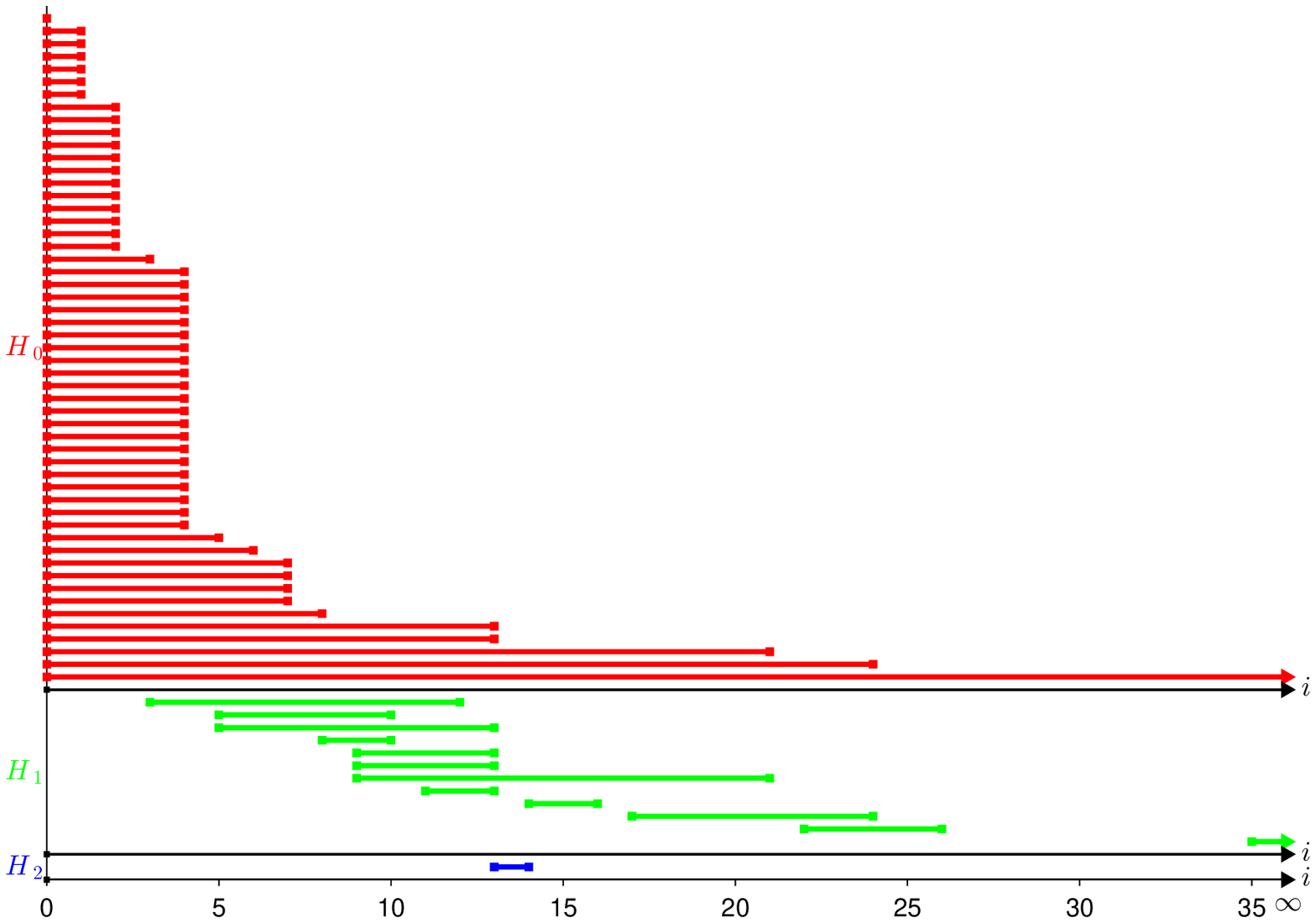}.

\vskip 0.5cm

14. \textbf{Function main}

$\textbf{ Input}$: $\textbf{X}$, $\textbf{S}$ and $\textbf{m}$.

$\textbf{ Output}$: A matrix $\textbf{A}$ stored in \textbf{A.mat} and a picture of bar code stored in \textbf{barcode.eps}.

$\textbf{ Algorithm}$: Run above functions except 6. and 9. in order.

\subsection{Numerical Experiment}\label{secNE}

We have a PCD in $\mathbb R^2$ with $53$ points:
\begin{figure}[h!]

\centering
\setlength\fboxsep{0.5pt}
\setlength\fboxrule{0.5pt}
\fbox{\includegraphics[scale=.8]{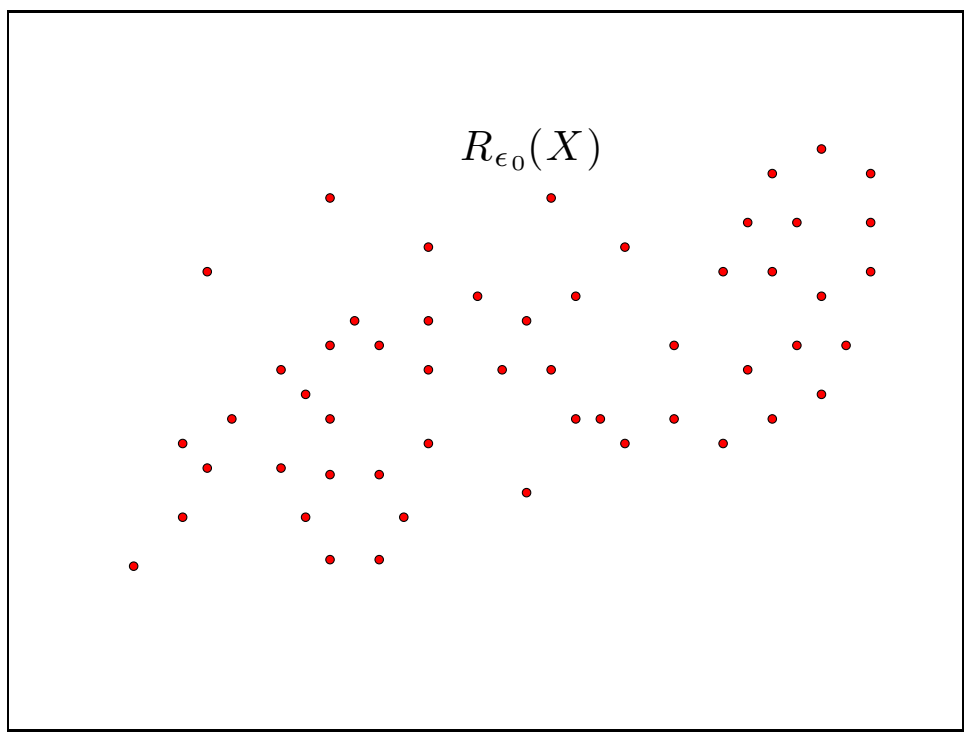}}

\caption{A PCD in $\mathbb R^2$ with $53$ points}
\end{figure}

The restricted filtered simplicial complex of this PCD ($m = 4$, $P = 35$) is
\begin{figure}
\centering
\vskip -6pt
$
\begin{array}{cc}
\setlength\fboxsep{0.5pt}
\setlength\fboxrule{0.5pt}
\fbox{\includegraphics[scale=.43]{0.eps}} &
\setlength\fboxsep{0.5pt}
\setlength\fboxrule{0.5pt}
\fbox{\includegraphics[scale=.43]{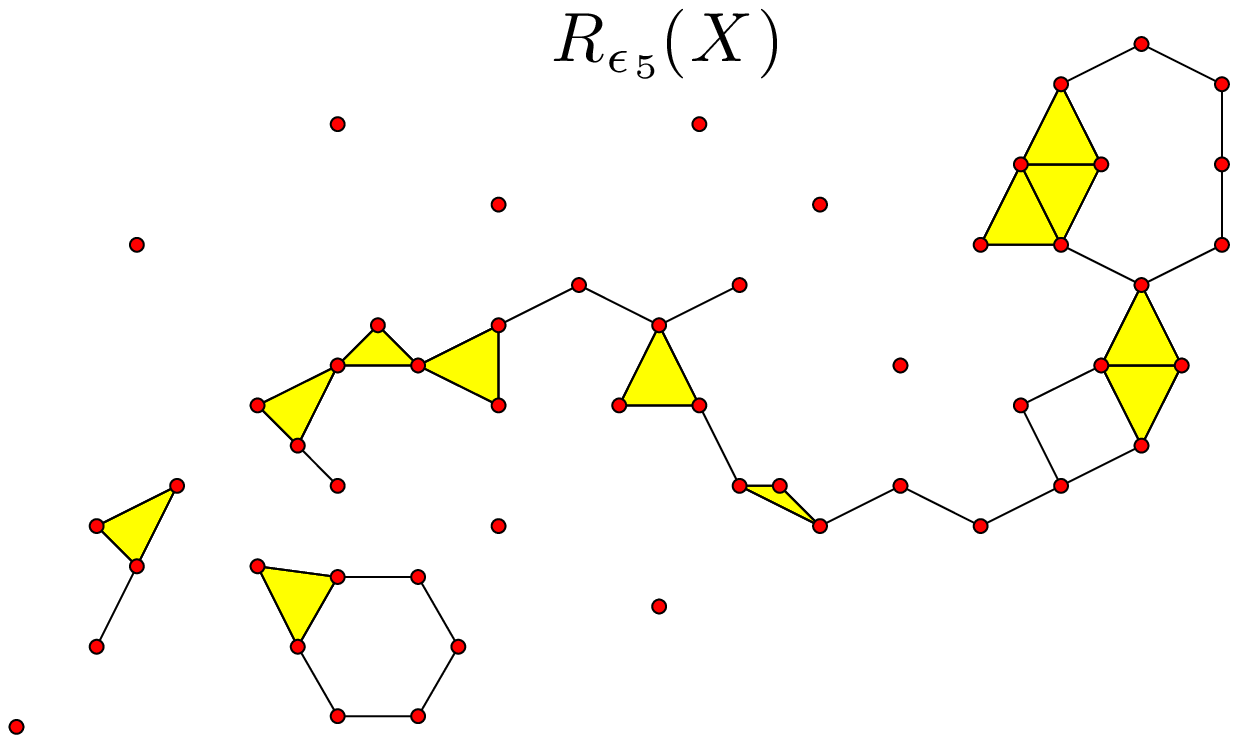}} \\
\setlength\fboxsep{0.5pt}
\setlength\fboxrule{0.5pt}
\fbox{\includegraphics[scale=.43]{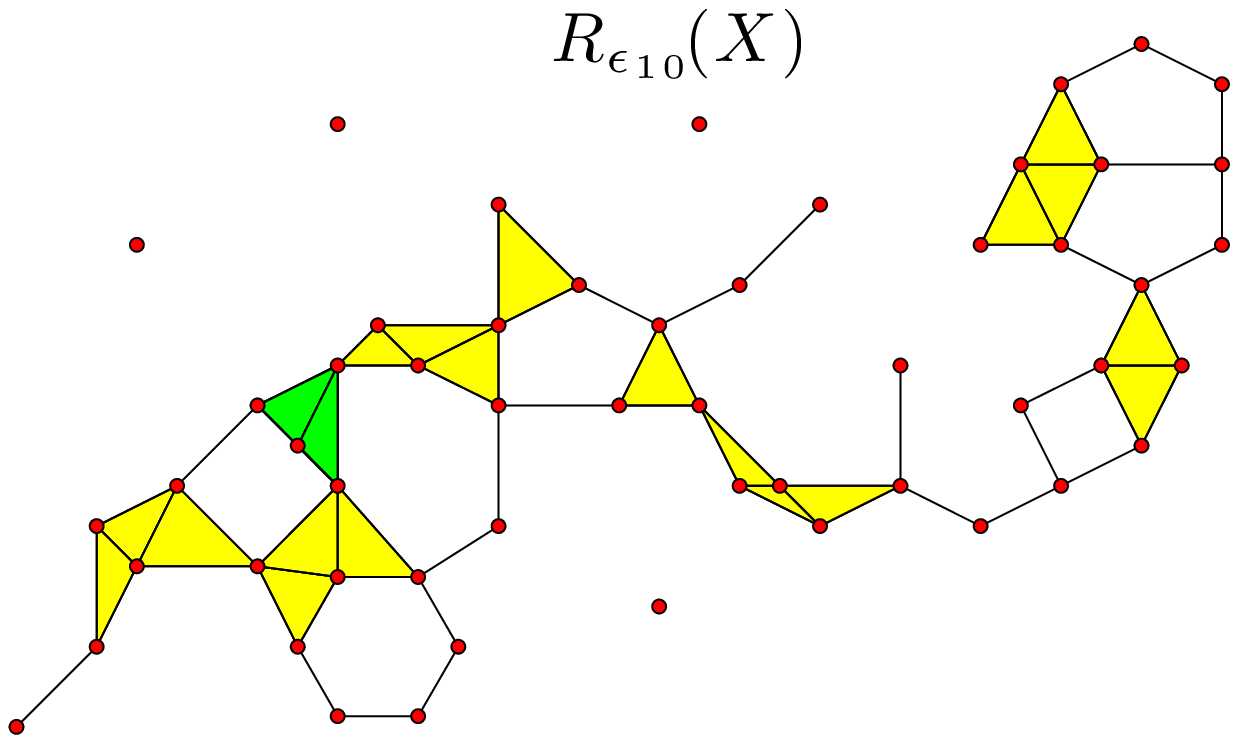}} &
\setlength\fboxsep{0.5pt}
\setlength\fboxrule{0.5pt}
\fbox{\includegraphics[scale=.43]{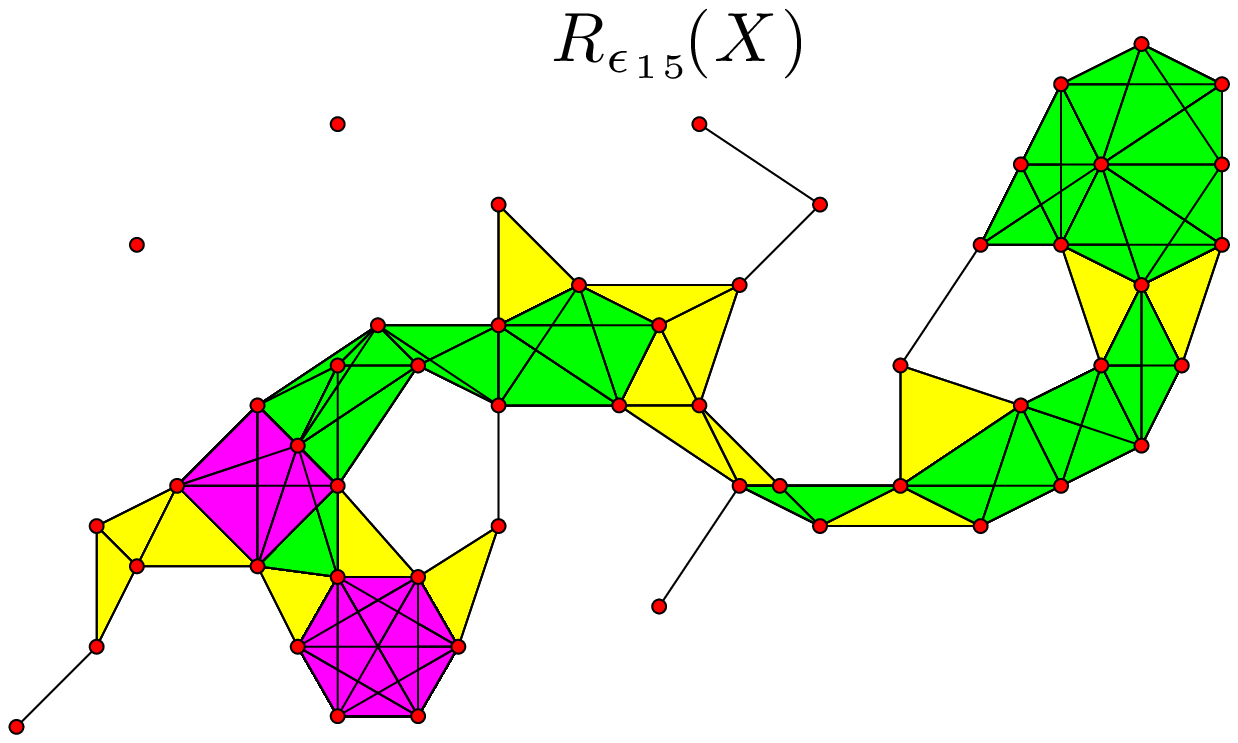}} \\
\setlength\fboxsep{0.5pt}
\setlength\fboxrule{0.5pt}
\fbox{\includegraphics[scale=.43]{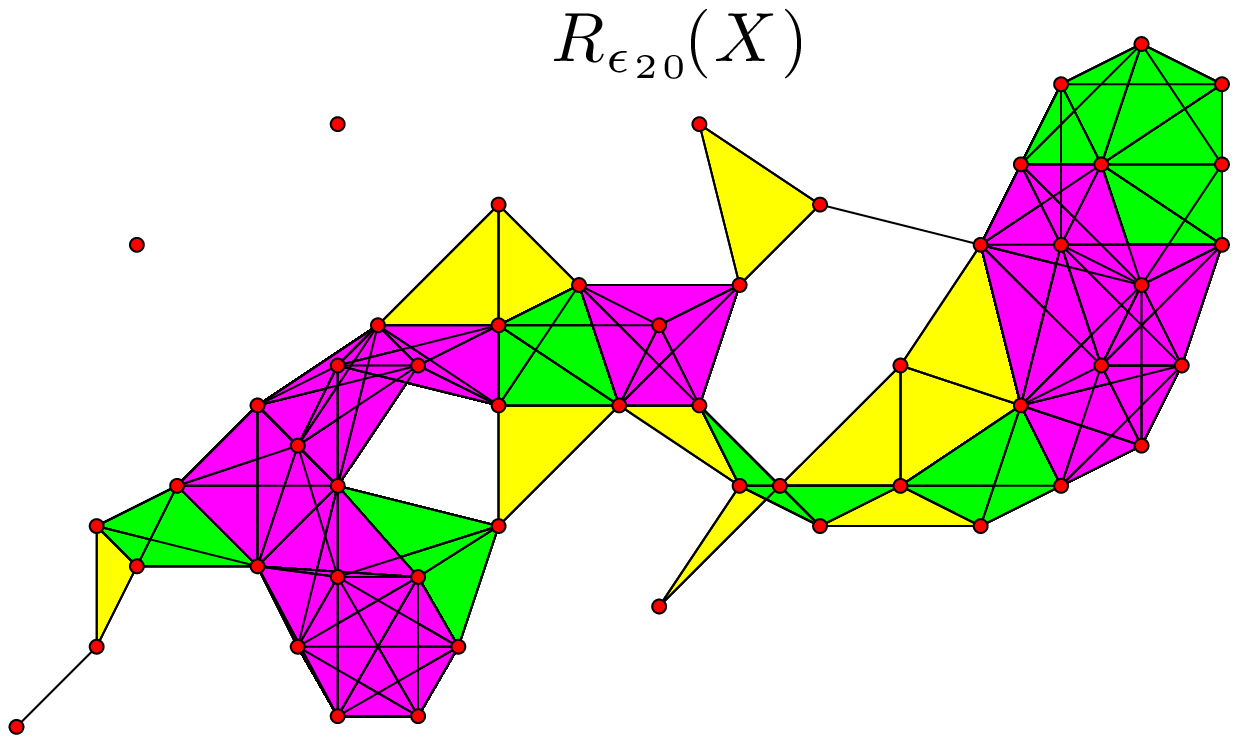}} &
\setlength\fboxsep{0.5pt}
\setlength\fboxrule{0.5pt}
\fbox{\includegraphics[scale=.43]{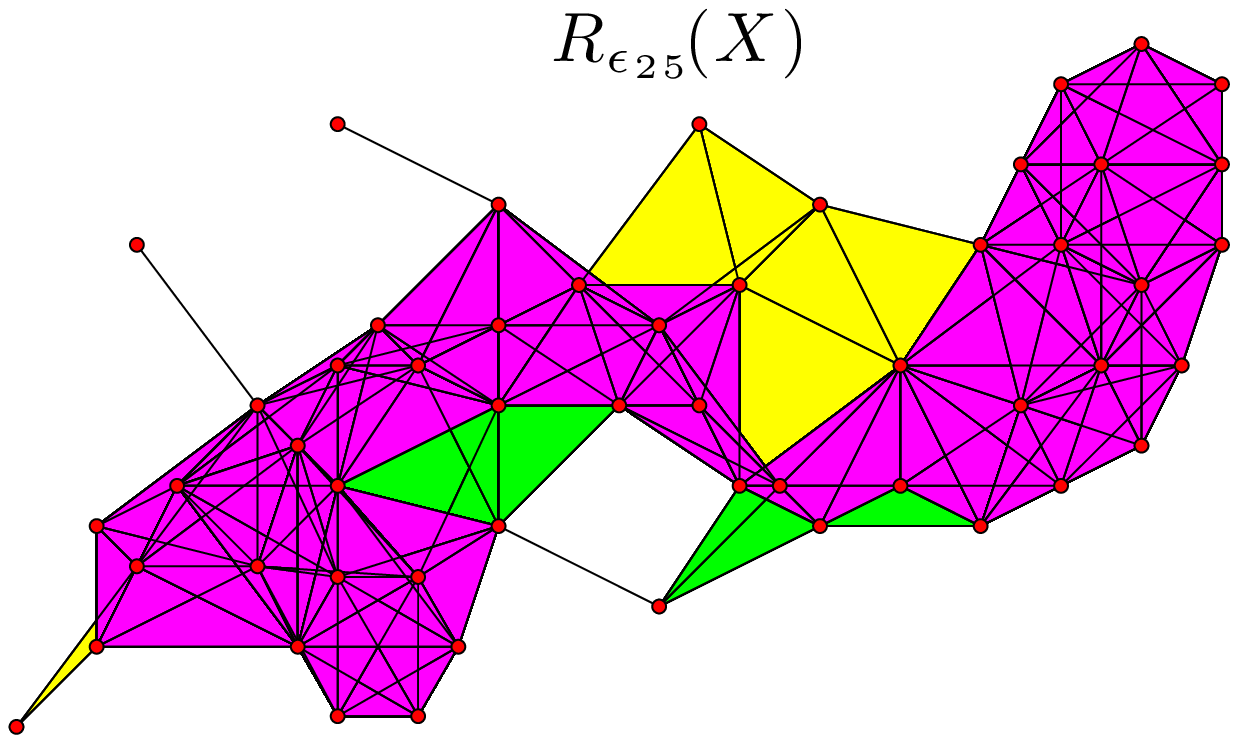}} \\
\includegraphics[scale=.7]{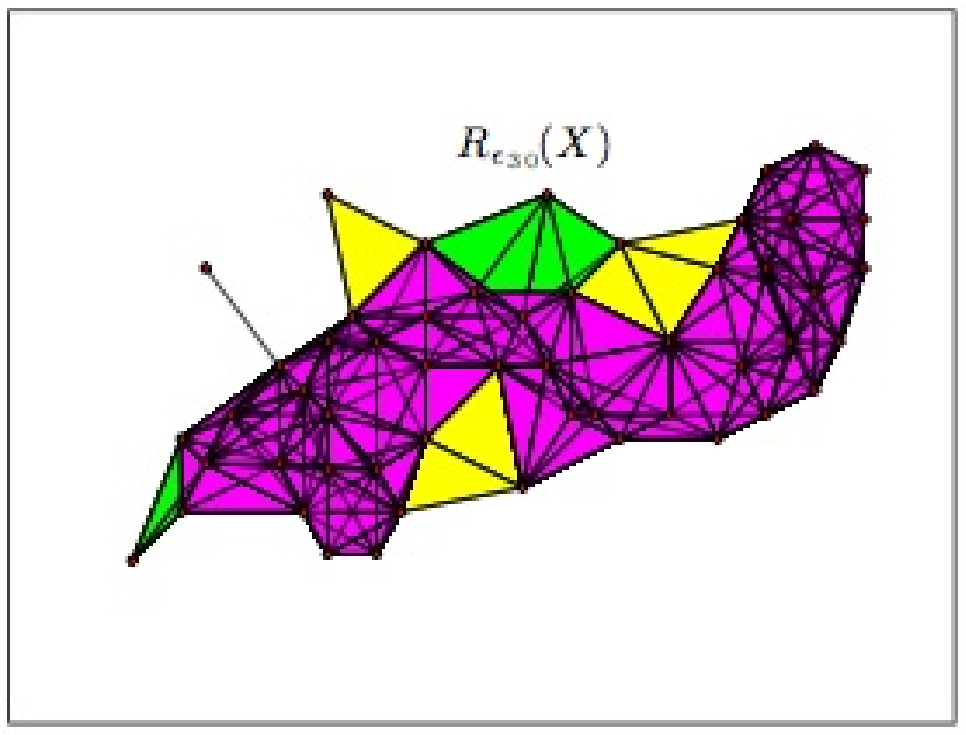} &
\includegraphics[scale=.7]{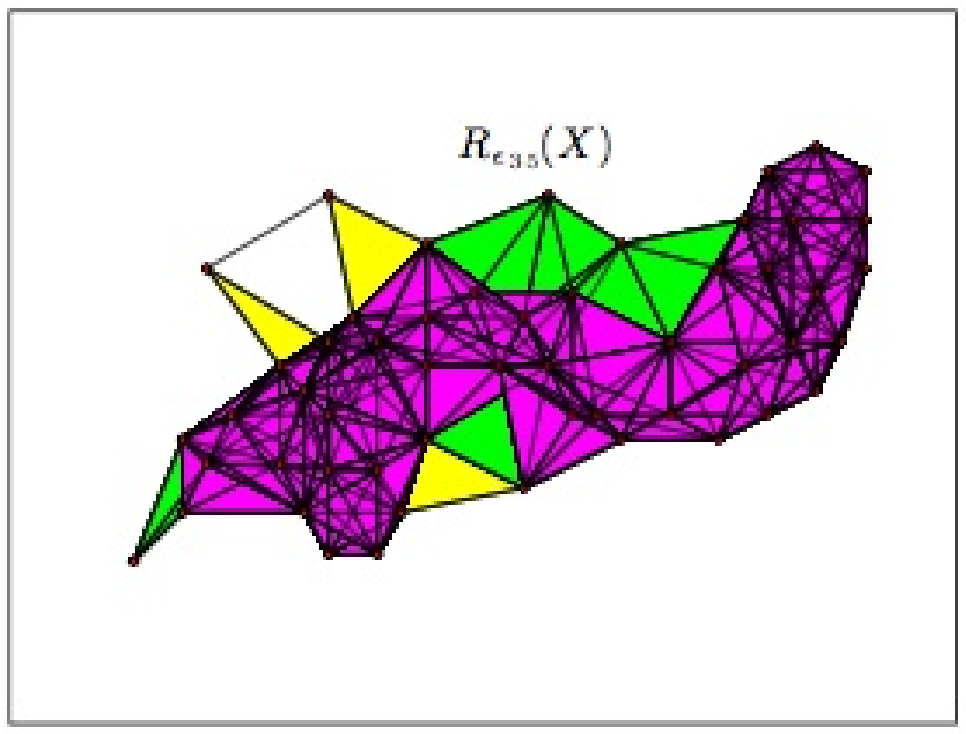} \\
\end{array}
$
\caption{Restricted filtered simplicial complex of this PCD ($m = 4$, $P = 35$).}
\end{figure}

\begin{figure}[h!]
\centering
\includegraphics[scale=.75]{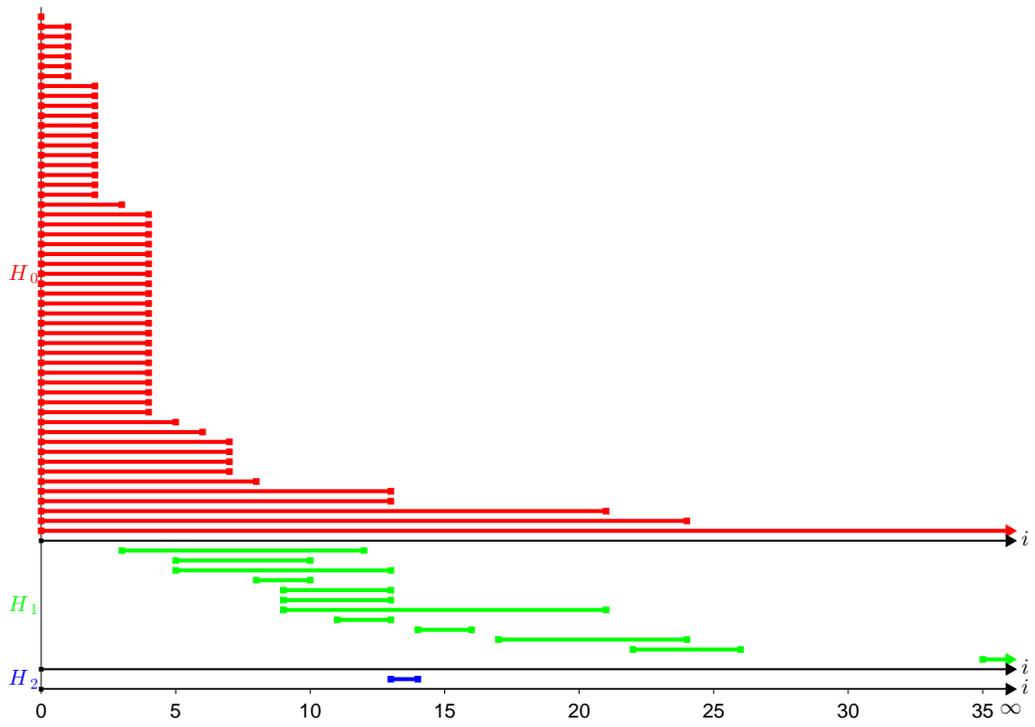}
\caption{Bar codes of the above filtered simplicial complex.}
\end{figure}

\mbox{}

\newpage

\mbox{}

\newpage

\section{Persistence Theory Refined}  \label{sec 4}

In this section we consider refinements of standard persistence for a real valued map  and describe the calculation if its invariants in the case of a linear map $f:X\to \mathbb R$,  $X$ a simplicial complex.
In order to understand the novelty of our considerations we will review in subsection \ref{SLP} the standard persistence named in this section sub-level persistence.

Let $f:X\rightarrow \mathbb{R}$ be a continuous map, denote by
$$
X_{-\infty,t}:=f^{-1}((-\infty,t])
$$
$$
X_{t,\infty}:=f^{-1}([t,\infty))
$$
$$
X_t:=f^{-1}(t) \text{ and}
$$
$$
X_{s,t}:=f^{-1}([s,t]), s\leq t
$$

\subsection{Sub-level Persistence \cite{BD}} \label{SLP}

Given a continuous map $f:X\rightarrow \mathbb{R}$, the sub-level persistent homology introduced in \cite{ELZ} and further developed in \cite{ZC}
is concerned with the following questions:

Q1. Does the class $x\in H_r(X_{-\infty,t})$ originate in $H_r(X_{-\infty,t''})$ for $t''\leq t$? Does the class
$x\in H_r(X_{-\infty,t})$ vanish in $H_r(X_{-\infty,t'})$ for $t<t'$?

Q2. What are the smallest $t'$ and $t''$ such that this happens?

The information that is contained in the linear maps $H_r(X_{-\infty,t})\rightarrow H_r(X_{-\infty,t'})$ for any $t\leq t'$
is known as \textbf{sub-level persistence}.

Recall that (sub-level){\bf persistent homology} is the collection of vector spaces and linear maps $\{ H_r(X_{-\infty,t})\rightarrow H_r(X_{-\infty,t'}), t<t', t,t'\in \mathbb R\}.$
\medskip 
Let $0\neq x\in H_r(X_{-\infty,t}))$. One says that

(i) $x$ is born at $t''$, $t''\leq t$, if $x$ is contained in img$(H_r(X_{-\infty,t''}))\rightarrow H_r(X_{-\infty,t})))$
but is not contained in img$(H_r(X_{-\infty,t''-\epsilon}))\rightarrow H_r(X_{-\infty,t})))$ for any $\epsilon>0$.

(ii) $x$ dies at $t'$, $t'>t$, if its image is zero in img$(H_r(X_{-\infty,t}))\rightarrow H_r(X_{-\infty,t'})))$
but is nonzero in  img$(H_r(X_{-\infty,t}))\rightarrow H_r(X_{-\infty,t'-\epsilon})))$ for any $0<\epsilon<t'-t$.

(iii) $x$ dies at $\infty$, if its image is always nonzero in img$(H_r(X_{-\infty,t}))\rightarrow H_r(X_{-\infty,t'})))$ for any $t'>t$.

The standard construction ``telescope'' in homotopy theory
permits to replace any finite filtered space
$K_0\subseteq K_1\subseteq \cdots\subseteq K_N$
by a weakly tame map  $f:X\rightarrow \mathbb{R}$ (cf. Corollary \ref{weaklytame}), simply by taking
$X = K_0\times [t_0,t_1] \cup_{\phi_1} K_1\times [t_1,t_2] \cup_{\phi_2}\cdots\cup_{\phi_{N-1}}K_{N-1}\times [t_{N-1},t_N]\cup_{\phi_N}K_N$
where $\phi_i: K_i\times \{t_{i+1}\}\rightarrow K_{i+1}\times \{t_{i+1}\}$ is the inclusion
and $f|_{K_i\times [t_i,t_{i+1}]}$  the projection of
$K_i\times [t_i,t_{i+1}]$ on $[t_i,t_{i+1}]$.

The sub-level persistence  for a filtered space is the sub-level persistence of the associated weakly tame map.

When $f$ is weakly tame, the sub-level persistence for each $r = 0, 1, \cdots, \dim X$ is determined by a finite collection of invariants
referred to as \textbf{bar codes for sub-level persistence} \cite{ZC}.
The $r$-bar codes for sub-level persistence of $f$ are  intervals of the form $[t, t')$ or $[t, \infty)$
with $t< t'$.

The number $\mu_r(t,t')$ of  $r$-bar codes which identify to the interval $[t, t')$ is the maximal number of linearly independent
homology classes in $H_r(X_{-\infty,t})$, which are born at $t$, die at $t'$ and remain independent in
img($H_r(X_{-\infty,t})\rightarrow H_r(X_{-\infty,s})$) for any $s$, $t\leq s<t'$.

The number $\mu_r(t,\infty)$ of $r$-bar codes which identify to the interval $[t, \infty)$ is the maximal number of linearly independent
homology classes  in $H_r(X_{-\infty,t})$ which are born at $t$, never die
and remain independent in
img($H_r(X_{-\infty,t})\rightarrow H_r(X_{-\infty,s})$) for any $s>t$.

\begin{lem} \label{lemma311}
The set of $r$-bar codes for sub-level persistence is  finite and any $r$-bar code  is an intervals of the form $[t_i, t_j)$ or $[t_i, \infty)$
with $t_i< t_j$ $t_i, t_j$  critical values of $f$.
\end{lem}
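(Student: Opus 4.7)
The plan is to reduce the continuous persistence module $t \mapsto H_r(X_{-\infty,t})$ to a tame (discrete) persistence vector space indexed by critical values, and then to invoke Proposition \ref{any tame persistence vector space} together with the definition of bar codes from Section \ref{sec 3}. The core observation is that weak tameness forces all interesting behavior to occur at the finitely many critical values $t_0 < t_1 < \cdots < t_N$, so the bar code endpoints must lie in this finite set (augmented by $\infty$).

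First, I would use property (ii) of Definition \ref{weakly tame} to show that for every $t \in [t_i, t_{i+1})$ the inclusion $X_{-\infty, t_i} \hookrightarrow X_{-\infty, t}$ is a homotopy equivalence, hence induces an isomorphism $H_r(X_{-\infty, t_i}) \xrightarrow{\cong} H_r(X_{-\infty, t})$ for every $r$. Together with property (i), which guarantees homological well-behavedness of $X_{-\infty, t}$, this shows the continuous family $\{H_r(X_{-\infty, t})\}_{t \in \mathbb R}$ is piecewise constant on the intervals $[t_i, t_{i+1})$ (and on $[t_N, \infty)$). The discrete sequence
$$H_r(X_{-\infty, t_0}) \to H_r(X_{-\infty, t_1}) \to \cdots \to H_r(X_{-\infty, t_N}),$$
extended by identities thereafter, is therefore a tame persistence vector space in the sense of Definition \ref{pvs}.

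Second, I would apply Proposition \ref{any tame persistence vector space} to decompose this tame persistence vector space as a finite direct sum of basic tame persistence vector spaces $S^{r_i}\kappa[t]$ and $S^{m_j}T_{n_j+1}\kappa[t]$, with all indices lying in $\{0, 1, \ldots, N\}$. By the definition of bar codes given in Section \ref{sec 3}, this produces a finite collection of discrete intervals $[i,j]$ with $0 \le i \le j \le N$ or $[i, \infty]$ with $0 \le i \le N$. Translating back to $\mathbb R$: a class ``born at discrete index $i$'' first appears in $H_r(X_{-\infty,t_i})$ and is not in the image from any $H_r(X_{-\infty,t})$ with $t < t_i$ (by constancy on $[t_{i-1}, t_i)$), so it is born at $t_i$ in the sense of Subsection \ref{SLP}; a class ``dying at discrete index $j+1$'' survives to $H_r(X_{-\infty, t})$ for $t \in [t_j, t_{j+1})$ but vanishes in $H_r(X_{-\infty, t_{j+1}})$, so it dies at $t_{j+1}$. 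Hence each discrete bar code $[i,j]$ corresponds to the real bar code $[t_i, t_{j+1})$ and each $[i, \infty]$ to $[t_i, \infty)$, matching exactly the counts $\mu_r(t,t')$ and $\mu_r(t, \infty)$ defined above.

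The main obstacle is bookkeeping the two notions of ``birth/death'' carefully, namely the continuous one from Subsection \ref{SLP} (involving arbitrary $\epsilon > 0$) versus the discrete one used to define bar codes of a tame persistence vector space. The piecewise-constancy established in the first step bridges this gap: any left-limit $H_r(X_{-\infty, t-\epsilon})$ stabilizes to $H_r(X_{-\infty, t_{i-1}})$ for $t \in (t_{i-1}, t_i]$ and small $\epsilon$, so the only $t$ at which birth or death can occur are the $t_i$'s. Finiteness of the bar code collection then follows immediately from Proposition \ref{any tame persistence vector space}, completing the proof.
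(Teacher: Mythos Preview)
Your proof is correct but takes a genuinely different route from the paper. The paper argues directly at the level of an individual bar code: given a putative bar code $[t,t')$ with $t_i \le t < t_{i+1}$ and $t_j \le t' < t_{j+1}$, it uses the deformation retraction $X_{-\infty,t_i} \hookrightarrow X_{-\infty,t}$ to show that any class born at $t$ already lies in the image from $t_i$, forcing $t=t_i$ by contradiction; an analogous diagram chase forces $t'=t_j$. Finiteness is then deduced in one line from $\dim H_r(X_{-\infty,t}) < \infty$.

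Your approach is more structural: you first package the entire continuous family into a discrete tame persistence vector space indexed by the critical values, invoke Proposition~\ref{any tame persistence vector space} to obtain a finite direct-sum decomposition, and then translate the resulting discrete bar codes back to real intervals. This buys you finiteness and the endpoint constraint simultaneously, and it ties the lemma cleanly to the machinery of Section~\ref{sec 3}. The paper's argument, by contrast, is more self-contained and avoids invoking the structure theorem, at the cost of a slightly ad hoc finiteness step. Both are perfectly valid; yours is arguably the more conceptual of the two.
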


\begin{proof}
Indeed the ends of the ``bar codes'' have to be critical values.
Suppose there is an $r$-bar code of the form $[t, t')$, we have $t_0\leq t<t'\leq t_N$.
Suppose $t_i\leq t<t_{i+1}$ for some $i$ and $t_j\leq t'< t_{j+1}$ for some $j$.

There exists a class $x\in H_r(X_{-\infty,t})$ that is born at $t$ and dies at $t'$.

Since the canonical inclusion $X_{-\infty,t_i}\hookrightarrow X_{-\infty,t}$ is a deformation
retraction and induces an isomorphism on homology groups, we have
$x\in \text{img}(H_r(X_{-\infty,t_i})\rightarrow H_r(X_{-\infty,t}))$.
If $t>t_i$, it will cause a contradiction with the fact that $x$ is born at $t$.
So we have $t=t_i$ and the bar code is of the form $[t_i, t')$.

Consider the commutative diagram below
$$
\xymatrix{
X_{-\infty,t_i} \ar[r] \ar[rd] & X_{-\infty,t_j} \ar[d]^\cong \\
 & X_{-\infty,t'}
}
$$

Since the canonical inclusion $X_{-\infty,t_j}\hookrightarrow X_{-\infty,t'}$ is a deformation
retraction, the vertical map in the above diagram an isomorphism.
The image of $x$ in img$(H_r(X_{-\infty,t_i})\rightarrow H_r(X_{-\infty,t_j}))$ is zero.
If $t'>t_j$, it will cause a contradiction with the fact that $x$ dies at $t'$.
So we have $t'=t_j$ and the bar code is of the form $[t_i, t_j)$.

For the same reason, an $r$-bar code $[t, \infty)$ is of the form $[t_i, \infty)$.
The finiteness of the collection of $r$-bar codes follows from
the finiteness of $\dim H_r(X_{-\infty,t})$.

\end{proof}

From these bar codes one can derive the \textbf{Betti numbers} $\beta_r(t, t')$, the dimension of

$\text{img}(H_r(X_{-\infty,t})\rightarrow H_r(X_{-\infty,t'}))$,for any $t\leq t'$ and get the answers to questions Q1 and Q2.
For example,
\begin{equation}\label {E1}
\beta_r(t, t') = \text{ the number of } r \text{-bar codes which contain the interval } [t, t'].
\end{equation}

From the Betti numbers $\beta_r(t,t')$one can also derive these $r$-bar codes.

Denote $\mu_r(t_i,t_j)$=number of $r$-bar codes
which equal to $[t_i,t_j)$ for  $t_0\leq t_i< t_j\leq \infty$, where
$t_0$ is the smallest critical value. We have
\begin{equation}\label{E2}
\begin{array}{ll}
& \mu_r(t_i,t_j) \\ [3mm]
= & \hskip -1.5mm
 \left\{ \hskip -1.5mm
     \begin{array}{ll}
       \beta_r(t_i,t_{j-1})-\beta_r(t_{i-1},t_{j-1})-\beta_r(t_i,t_j)+\beta_r(t_{i-1},t_j), & t_0<t_i< t_j<\infty\\
       \beta_r(t_0,t_{j-1})-\beta_r(t_0,t_j), & t_i=t_0, t_0< t_j<\infty\\
       \beta_r(t_i,\infty)-\beta_r(t_{i-1},\infty), & t_0<t_i<\infty,t_j=\infty\\
       \beta_r(t_0,\infty), & t_i=t_0,t_j=\infty
     \end{array}
   \right.
\end{array}
\end{equation}

The computation of the bar codes for a filtration of simplicial\,/\,polytopal
complex is discussed in {subsection \ref{secPHBFSCC}}
with coefficient field of homology groups being  $\mathbb{Z}_2$ or $\mathbb{R}$.

\subsection{Level Persistence\cite{BD}}

Level persistence for a map $f:X\to \mathbb R$ was first considered in \cite{DW} and was better understood when the Zigzag persistence
was introduced and formulated in \cite{CSM}.
Given a continuous map $f:X\rightarrow \mathbb{R}$,
level persistence is concerned with the homology of the fibers $H_r(X_t)$ and addresses
questions of the following type.

Q1. Does the image of $x\in H_r(X_t)$ vanish in $H_r(X_{t, t'})$, where $t'>t$ or in $H_r(X_{t'',t})$, where $t''<t$?

Q2. Can $x$ be detected in $H_r(X_{t'})$ where $t'>t$ or in $H_r(X_{t''})$ where $t''<t$? The precise meaning of
detection is explained below.

Q3. What are the smallest $t'$ and $t''$ for the answers to Q1 and Q2 to be affirmative?

To answer such questions one has to record information about the following maps:
$$
H_r(X_t)\rightarrow H_r(X_{t, t'})\leftarrow H_r(X_{t'}).
$$
The \textbf{level persistence} is the information provided by this collection of vector spaces and linear maps  considered
above for all $t$, $t'$.

Let $0\neq c\in H_r(X_t)$. One says that

(i) $c$ dies downward at $t'< t$, if its image is zero in img$(H_r(X_t)\rightarrow H_r(X_{t', t}))$
but is nonzero in img$(H_r(X_t)\rightarrow H_r(X_{t'+\epsilon,t}))$ for any $0<\epsilon<t-t'$.

(ii) $c$ dies upward at $t''>t$, if its image is zero in img$(H_r(X_t)\rightarrow H_r(X_{t, t''}))$
but is nonzero in img$(H_r(X_t)\rightarrow H_r(X_{t, t''-\epsilon}))$ for any $0<\epsilon<t''-t$.

We say that $x\in H_r(X_t)$ can be detected at $t'\geq t$, if its image in $H_r(X_{t,t'})$ is nonzero and is
contained in the image of $H_r(X_{t'})\rightarrow H_r(X_{t,t'})$.
Similarly, the detection of $x$ can be defined for $t''<t$ also.

\begin{defn} \label{tamemap}

A continuous map $f:X\rightarrow  \mathbb{R}$  is called \textbf{tame} (Definition 3.5, \cite{BDD}) if
$X$ is compact and there exists finitely many values $\min(f(X)) = t_0<t_1<\cdots<t_N = \max(f(X))$ (so called critical values) so that:

(i) for any $t\neq t_0,t_1,\cdots,t_N$ there exists $\epsilon>0$ so that $f:f^{-1}(t-\epsilon,t+\epsilon)\rightarrow(t-\epsilon,t+\epsilon)$
and the second factor projection $X_t\times (t-\epsilon,t+\epsilon)\rightarrow(t-\epsilon,t+\epsilon)$ are fiber-wise homotopy equivalent.

(ii) for any $t_i$ there exists $\epsilon>0$ so that canonical inclusions $X_{t_i}\hookrightarrow  X_{t_i,t_i+\epsilon}$ and
$X_{t_i}\hookrightarrow  X_{t_i-\epsilon,t_i}$ are deformation retractions.

\end{defn}

This means that the homology group of the level set changes at finitely many $t$'s. We have

\begin{lem}\label{lemma321}
Given $t_{i-1}<s<t_i$, $X_{t_{i-1},s}$ and $X_{s,t_i}$ retracts
by deformation onto $X_{t_{i-1}}$ and $X_{t_i}$ respectively.
\end{lem}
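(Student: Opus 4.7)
The plan is to prove that $X_{s,t_i}$ deformation retracts onto $X_{t_i}$; the argument for $X_{t_{i-1},s}$ retracting onto $X_{t_{i-1}}$ is completely symmetric, so I will only describe the former. The overall strategy is to patch together local deformation retractions coming from the two clauses of Definition \ref{tamemap} by using compactness of $[s, t_i]$.

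First I would unpack what clauses (i) and (ii) give us locally. For a non-critical $t \in (t_{i-1}, t_i)$, clause (i) provides an $\epsilon_t > 0$ and a fiberwise homotopy equivalence between $f^{-1}(t - \epsilon_t, t + \epsilon_t)$ and $X_t \times (t - \epsilon_t, t + \epsilon_t)$; for any closed subinterval $[a,b] \subseteq (t-\epsilon_t, t+\epsilon_t)$ and any $c \in [a,b]$, this yields a deformation retraction of $X_{a,b}$ onto $X_c$ (because the corresponding fact holds trivially for the product $X_t \times [a,b]$, and fiberwise homotopy equivalences transport strong deformation retractions along the fiber structure). For the critical value $t_i$, clause (ii) directly gives an $\epsilon_i > 0$ such that $X_{t_i - \epsilon_i, t_i}$ deformation retracts onto $X_{t_i}$.

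Next I would use compactness. The collection of open intervals $(t - \epsilon_t, t + \epsilon_t)$ for $t \in [s, t_i)$ together with $(t_i - \epsilon_i, t_i]$ covers $[s, t_i]$, so by compactness I can extract a finite subcover and from it a partition $s = s_0 < s_1 < \cdots < s_{n-1} < s_n = t_i$ with the property that $[s_{j-1}, s_j]$ is contained in one of the local charts for each $j$, and moreover $[s_{n-1}, s_n] \subseteq [t_i - \epsilon_i, t_i]$. On each piece I obtain a strong deformation retraction $H_j : X_{s_{j-1}, s_j} \times [0,1] \to X_{s_{j-1}, s_j}$ onto $X_{s_j}$ (for $j < n$ choose the target level at the right endpoint via clause (i); for $j = n$ use clause (ii) directly).

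Finally I would glue. Extend each $H_j$ to all of $X_{s, t_i}$ by the identity off $X_{s_{j-1}, s_j}$; since $H_j$ is the identity on $X_{s_j}$ (and on $X_{s_{j-1}}$, or at least can be arranged to be, by first retracting a collar) the extension is continuous. The concatenation $H_n \ast H_{n-1} \ast \cdots \ast H_1$, carried out on successive time intervals, is a deformation retraction of $X_{s, t_i}$ onto $X_{t_i}$. The main obstacle is the gluing step: one must check that each partial retraction fixes the overlap level $X_{s_j}$ pointwise so that the concatenation is continuous. This is where I would be careful to choose the local deformation retractions coming from clause (i) so that they are the identity on both endpoint fibers $X_{s_{j-1}}$ and $X_{s_j}$ (which is possible via the product model $X_t \times [a,b]$, where one uses a homotopy supported away from the boundary fibers), ensuring that the pieces fit together into a single continuous deformation of $X_{s, t_i}$ onto $X_{t_i}$.
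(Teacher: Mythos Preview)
Your overall strategy matches the paper's proof exactly: invoke clauses (i) and (ii) of the tameness definition to obtain local deformation retractions, use compactness of the closed interval to extract a finite chain of levels, and concatenate. The paper carries out precisely this argument (for $X_{t_{i-1},s}$ onto $X_{t_{i-1}}$) and in fact leaves the final concatenation step as a one-line assertion.

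Your last paragraph, however, proposes a wrong fix to a non-problem. A deformation retraction of $X_{s_{j-1},s_j}$ onto $X_{s_j}$ \emph{cannot} be the identity on $X_{s_{j-1}}$ at time $1$, since it must carry that fiber into $X_{s_j}$; so the ``homotopy supported away from the boundary fibers'' idea is impossible as stated. Fortunately the difficulty you are trying to repair does not actually arise. In the concatenation $H_n * \cdots * H_1$, by the time $H_j$ is applied the image of $H_{j-1}(\cdot,1)\circ\cdots\circ H_1(\cdot,1)$ already lies in $X_{s_{j-1},t_i}$, so $H_j$ need only be extended by the identity on $X_{s_j,t_i}$; continuity there follows from $H_j$ being a strong deformation retraction onto $X_{s_j}$. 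No condition on the left fiber $X_{s_{j-1}}$ is required.
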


\begin{proof}

We will prove  that $X_{s,t_i}$  retracts by deformation onto $X_{t_{i-1}}$.
The proof that $X_{s,t_i}$ retracts by deformation onto $X_{t_{i}}$
is essentially the same.

For any $a$, $t_{i-1}<a\leq s$, there exists an $\epsilon_a>0$ such that
$f:f^{-1}(a-\epsilon_a,a+\epsilon_a)\rightarrow(a-\epsilon_a,a+\epsilon_a)$
and the second factor projection $X_t\times (a-\epsilon_a,a+\epsilon_a)\rightarrow(a-\epsilon_a,a+\epsilon_a)$
 are fiberwise homotopy equivalent.
Hence for any $a-\epsilon_a<t''<t'<a+\epsilon_a$, we have $X_{t'',t'}$ retract by deformation onto $X_{t''}$.
For $a$, $t_{i-1}<a<s$ we can choose $0<\epsilon_a<min(a-t_{i-1},s-a)$,
so that $(a-\epsilon_a,a+\epsilon_a)$ is contained in $(t_{i-1},s)$.

For $t_{i-1}$, there exists an $\epsilon_{t_{i-1}}>0$ so that canonical inclusion
$X_{t_i}\hookrightarrow  X_{t_i,t_i+\epsilon_{t_{i-1}}}$  is a deformation retraction.

The union of open intervals $(a-\epsilon_a, a+\epsilon_a)$ for $t_{i-1}\leq a\leq s$ covers
the closed interval $[t_{i-1}, s]$.
Then  finitely many of them $(a_k, b_k)$, $1\leq k\leq m$  cover $[t_{i-1}, s]$.
Suppose  these intervals are not contained  one  in another, otherwise we can delete it.
Observe that $(t_{i-1}-\epsilon_{t_{i-1}}, t_{i-1}+\epsilon_{t_{i-1}})$ and
$(s-\epsilon_s, s+\epsilon_s)$ have to be contained in those finite collection
of intervals.

Order $(a_k, b_k)$, $1\leq k\leq m$ according to their right endpoints increasingly,
we have $b_j<b_k$, when $j<k$.
Observe that $b_{k-1}\in (a_k,b_k)$ and $a_k\in (a_{k-1},b_{k-1})$.
Therefore we also have $a_j<a_k$, when $j<k$ and
the intersection of $(a_{k-1},b_{k-1})$ and $(a_k,b_k)$ is always nonempty.
Since both left endpoints and right endpoints are ordered increasingly
 $(a_1, b_1) =  (t_{i-1}-\epsilon_{t_{i-1}}, t_{i-1}+\epsilon_{t_{i-1}})$ and
$(a_m,b_m) = (s-\epsilon_s, s+\epsilon_s)$.

Choose any $c_k\in (a_k,b_k)\cap (a_{k+1},b_{k+1})$, $2\leq k\leq m-1$.
Let $c_0 = t_{i-1}$, $c_1 = t_{i-1} + \epsilon_{t_{i-1}}$ and $c_m = s$.

We have $X_{c_{k-1},c_k}$ retracts by deformation onto $X_{c_{k-1}}$
for any $1\leq k \leq m$.

Hence $X_{t_{i-1},s} = X_{c_0, c_m}$ retracts by deformation onto  $X_{t_{i-1}} = X_{c_0}$.

\end{proof}

\begin{cor} \label{tametoweaklytame}

Tame maps are weakly tame.

\end{cor}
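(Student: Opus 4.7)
The plan is to verify both conditions of Definition \ref{weakly tame} for a tame map $f$, taking Lemma \ref{lemma321} (together with a minor extension of its proof) as the main input.

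Condition (ii) of weakly tame asserts that for $t \in [t_i, t_{i+1})$ the sub-level $X_{-\infty, t}$ deformation retracts onto $X_{-\infty, t_i}$. When $t = t_i$ there is nothing to prove. When $t_i < t < t_{i+1}$, Lemma \ref{lemma321} supplies a deformation retraction of $X_{t_i, t}$ onto $X_{t_i}$ that fixes $X_{t_i}$ pointwise throughout. I would extend this to a deformation on all of $X_{-\infty, t} = X_{-\infty, t_i} \cup X_{t_i, t}$ by setting it equal to the identity on $X_{-\infty, t_i}$; the two prescriptions agree on the common subspace $X_{t_i}$, so the glued homotopy is continuous and gives the desired deformation retraction.

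Condition (i) asks, for every real $t$, for an open neighborhood of $X_{-\infty, t}$ of which $X_{-\infty, t}$ is a deformation retract. When $t < t_0$ or $t \geq t_N$ the sub-level is $\emptyset$ or $X$, so $U = \emptyset$ or $U = X$ suffices. Otherwise $t \in [t_i, t_{i+1})$, and I would pick $t'$ with $t < t' < t_{i+1}$ and take $U = f^{-1}((-\infty, t'))$, which is open by continuity of $f$ and contains $X_{-\infty, t}$. The cover argument from the proof of Lemma \ref{lemma321} carries over to show that $X_{t, t'}$ deformation retracts onto $X_t$, irrespective of whether $t$ is critical: the deformation-retract clause of Definition \ref{tamemap}(ii) handles the boundary case $t = t_i$, while the fiberwise homotopy equivalences of Definition \ref{tamemap}(i) handle the regular interior points. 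Extending this retraction by the identity on $X_{-\infty, t}$ produces a deformation retraction of $X_{-\infty, t'}$ onto $X_{-\infty, t}$, and I would restrict it to $U \subset X_{-\infty, t'}$.

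The main obstacle is ensuring that this restricted deformation actually preserves $U$---i.e.\ that it never pushes a point of $U$ out to an $f$-value $\geq t'$. I plan to address this by choosing each local building block of the cover argument to be $f$-nonincreasing: on a fiberwise-product chart $X_a \times (a - \epsilon, a + \epsilon)$ supplied by Definition \ref{tamemap}(i), the natural retraction onto $X_a \times \{a\}$ slides only in the second factor, which can be arranged to move monotonically toward $a$, and the same $f$-nonincreasing choice is available for the deformation retractions at critical values supplied by Definition \ref{tamemap}(ii). Chaining these yields a deformation on $X_{t, t'}$ whose $f$-value is nonincreasing in the time parameter, so its extension by the identity on $X_{-\infty, t}$ preserves every sub-level $f^{-1}((-\infty, s])$ with $s \geq t$, and in particular preserves $U$. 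This completes the verification of condition (i) and hence the corollary.
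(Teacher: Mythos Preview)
Your approach is correct and matches the paper's: the corollary is stated immediately after Lemma~\ref{lemma321} with no separate proof, so the intended argument is precisely the one you give for condition~(ii)---glue the deformation retraction of $X_{t_i,t}$ onto $X_{t_i}$ with the identity on $X_{-\infty,t_i}$. The paper says nothing further.

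Your treatment of condition~(i) goes beyond what the paper writes down, and your identification of the obstacle (that the restricted homotopy must preserve the open set $U$) is apt. The fix via $f$-nonincreasing local retractions is sound: since the fiberwise homotopy equivalence in Definition~\ref{tamemap}(i) is over the interval, one may take the product retraction $(x,s)\mapsto(x,(1-\tau)s+\tau t)$ and transport it through the fiber-preserving maps, which keeps the $f$-value monotone; the same monotonicity is immediate for the retraction near a critical value in Definition~\ref{tamemap}(ii). One small point worth tightening: after transporting the product retraction through $\psi\circ\phi$, the time-zero map is $\psi\phi$ rather than the identity, so you should prepend the fiberwise homotopy $\psi\phi\simeq\mathrm{id}$ (which is fiber-preserving and hence keeps $f$ constant) before chaining. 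With that caveat, your argument is complete and more detailed than the paper's.
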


Note that the tame maps form a generic set of maps in the set of all  continuous maps on a simplicial complex or on a smooth manifold (and more general on a compact ANR).

\begin{lem}\label{lemma323}
Given $t_{i-1}<s<t<t_i$, $X_{s,t}$ retracts by deformation
onto $X_s$ or $X_t$.
\end{lem}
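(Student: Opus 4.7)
The plan is to mirror the argument of Lemma \ref{lemma321} almost verbatim, exploiting the fact that every point of $[s,t]$ is a \emph{regular} value (since $s,t\in(t_{i-1},t_i)$ and the interval $(t_{i-1},t_i)$ contains no critical values), so we do not even need the boundary case (ii) of Definition \ref{tamemap}. Concretely, I will prove the retraction onto $X_s$; the proof of retraction onto $X_t$ is identical with the roles of the endpoints reversed.

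First, for every $a\in[s,t]$ part (i) of the tameness hypothesis yields an $\epsilon_a>0$ so that $f\colon f^{-1}(a-\epsilon_a,a+\epsilon_a)\to(a-\epsilon_a,a+\epsilon_a)$ is fiber-wise homotopy equivalent to the trivial projection $X_a\times(a-\epsilon_a,a+\epsilon_a)\to(a-\epsilon_a,a+\epsilon_a)$. By shrinking if necessary, I arrange $(a-\epsilon_a,a+\epsilon_a)\subseteq(t_{i-1},t_i)$. From the fiber-wise equivalence it follows, exactly as in Lemma \ref{lemma321}, that for any $a-\epsilon_a<t''<t'<a+\epsilon_a$ the space $X_{t'',t'}$ deformation retracts onto both $X_{t''}$ and $X_{t'}$ (via the obvious retraction on the model $X_a\times[t'',t']$).

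Next, since $[s,t]$ is compact, finitely many of the intervals $(a_k-\epsilon_{a_k},a_k+\epsilon_{a_k})$, $1\le k\le m$, already cover it; discarding those contained in another and ordering by right endpoint, the same bookkeeping used in the proof of Lemma \ref{lemma321} shows the left endpoints are also increasing and consecutive intervals overlap, with $(a_1,b_1)$ containing $s$ and $(a_m,b_m)$ containing $t$. I then pick $c_0=s$, $c_m=t$, and for $1\le k\le m-1$ any $c_k\in(a_k,b_k)\cap(a_{k+1},b_{k+1})$, so that each $X_{c_{k-1},c_k}$ is contained in one of the trivializing neighborhoods and therefore deformation retracts onto $X_{c_{k-1}}$.

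Finally, I concatenate these deformation retractions from right to left: starting with $X_{c_{m-1},c_m}\searrow X_{c_{m-1}}$, then extending by $X_{c_{m-2},c_{m-1}}\searrow X_{c_{m-2}}$, and so on, until the composed deformation retracts all of $X_{s,t}=X_{c_0,c_m}$ onto $X_s=X_{c_0}$. The symmetric chain, ordered from left to right with each $X_{c_{k-1},c_k}$ retracting onto $X_{c_k}$, gives the retraction onto $X_t$. The only slightly delicate point, and thus the main (modest) obstacle, is the bookkeeping needed to verify that these successive deformations can be glued into a single continuous homotopy on $X_{s,t}$; this is handled by the standard device of reparametrizing time on each piece and using that the pieces agree on their common fiber, exactly as in Lemma \ref{lemma321}.
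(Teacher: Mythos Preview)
Your proposal is correct. The paper actually omits the proof of this lemma entirely, leaving it as evident from the technique of Lemma~\ref{lemma321}; your write-up supplies precisely that argument, with the simplification you note (only condition~(i) of Definition~\ref{tamemap} is needed since $[s,t]\subset(t_{i-1},t_i)$ contains no critical values), so it matches the paper's intended approach.
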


In case of a tame map the collection of the vector spaces and linear maps is determined up to coherent isomorphisms
by a collection of invariants called \textbf{bar codes for level persistence} which are intervals of the form
$[t, t']$ with $t\leq t'$ and  $(t, t')$, $(t, t']$, $[t, t')$ with $t< t'$.

These bar codes are called invariants because two tame maps $f: X\rightarrow \mathbb{R}$ and $g: Y\rightarrow \mathbb{R}$
which are fiber-wise homotopy equivalent have the same associated bar codes.
The above result was established for the Zigzag persistence from which can be derived, but can also be proven directly. However the proof is not contained in this paper.

An open end of an interval signifies the death of a homology class at that end (left or right) whereas a closed end signifies
that a homology class cannot be detected beyond this level (left or right).

There exists an $r$-bar code $(t'', t')$ if there exists a class $x\in H_r(X_{t})$ for some $t''<t<t'$
which is detectable for $t''< s< t'$ and dies at $t''$ and $t'$.
The multiplicity of $(t'', t')$ is the maximal number of linearly independent classes
in $H_r(X_{t})$ such that:

(i) all remain linearly independent in img($H_r(X_{t})\rightarrow H_r(X_{t,s})$) for $t\leq s< t'$
and  img($H_r(X_{t})\rightarrow H_r(X_{s,t})$) for $t''< s\leq  t$;

(ii) all die at $t''$ and $t'$.

Notice that the change of $t$ above will not affect the multiplicity of $(t'',t')$.

There exists an $r$-bar code $(t'', t']$ if there exists an element $x\in H_r(X_{t'})$
which is not detectable for $s>t'$ and detectable for $t''< s\leq t'$ and dies at $t''$.
The multiplicity of $(t'', t']$ is the maximal number of linearly independent elements
in $H_r(X_{t'})$ such that

(i) neither one is detectable for $s>t'$;

(ii) all remain linearly independent in img($H_r(X_{t'})\rightarrow H_r(X_{s,t'})$) for $t''< s\leq t'$;

(iii) all dies at $t''$.

There exists an $r$-bar code $[t'', t')$ if there exists an element $x\in H_r(X_{t''})$
which is not detectable for $s<t''$ and detectable for $t''\leq s<t'$ and dies at $t'$.
The multiplicity of $[t'', t')$ is the maximal number of linearly independent elements
in $H_r(X_{t''})$ such that

(i) neither one is detectable for $s<t''$;

(ii) all remain linearly independent in img($H_r(X_{t''})\rightarrow H_r(X_{t'',s})$) for $t''\leq s<t'$;

(iii) all dies at $t'$.

There exists an $r$-bar code $[t'', t']$ if there exists an element $x\in H_r(X_{t''})$
which is not detectable for $s<t''$ or $s>t'$ and detectable for $t''\leq s\leq t'$.
The multiplicity of $[t'', t']$ is the maximal number of linearly independent elements
in $H_r(X_{t''})$ such that

(i) neither one is detectable for $s<t''$ or $s>t'$;

(ii) all remain linearly independent in img($H_r(X_{t''})\rightarrow H_r(X_{t'',s})$) for $t''\leq s\leq t'$.

\begin{lem} \label{lemma324}
For a tame map, the set of $r$-bar codes for level persistence is finite. Any  $r$-bar code  is an  interval of the form
$[t_i, t_j]$ with $t_i\leq t_j$ critical values or   $(t_i, t_j)$, $(t_i, t_j]$, $[t_i, t_j)$ with $t_i< t_j,$ $t_i, t_j$ critical values.
\end{lem}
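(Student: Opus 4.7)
The plan is to mimic the structure of the proof of Lemma~\ref{lemma311}, relying on tameness to make the homology of fibers and of thin strips locally constant away from critical values. The main inputs are Lemmas~\ref{lemma321} and~\ref{lemma323}: for $t_{k-1} \leq s < s' \leq t_k$ strictly inside the closed interval between two consecutive critical values, both inclusions $X_s \hookrightarrow X_{s,s'}$ and $X_{s'} \hookrightarrow X_{s,s'}$ are homotopy equivalences, so every strip $X_{s,s'}$ is homotopy equivalent to the fibers at either endpoint and, by chaining, to $X_{t_{k-1}}$ and $X_{t_k}$.

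First I would rule out that any bar-code endpoint lies strictly between two consecutive critical values. For a left open endpoint $t'$ coming from a class $x \in H_r(X_t)$ dying downward at $t'$, suppose $t_{k-1} < t' < t_k$ and pick $\epsilon > 0$ so that $t' + \epsilon < t_k$. The strip $X_{t',t'+\epsilon}$ deformation retracts onto $X_{t'+\epsilon}$, hence the decomposition $X_{t',t} = X_{t',t'+\epsilon} \cup_{X_{t'+\epsilon}} X_{t'+\epsilon,t}$ deformation retracts onto $X_{t'+\epsilon,t}$. The induced isomorphism $H_r(X_{t',t}) \cong H_r(X_{t'+\epsilon,t})$ is compatible with the maps from $H_r(X_t)$, contradicting the simultaneous vanishing in one group and nonvanishing in the other. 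Right open endpoints (dying upward) are symmetric. For a left closed endpoint $t''$, the class $x \in H_r(X_{t''})$ must fail to be detectable at every $s < t''$; if $t_{k-1} < t'' < t_k$, then choosing $s$ in the same open interval makes both $X_s \hookrightarrow X_{s,t''}$ and $X_{t''} \hookrightarrow X_{s,t''}$ homotopy equivalences, so every class in $H_r(X_{t''})$ is trivially in the image of $H_r(X_s) \to H_r(X_{s,t''})$ and hence detectable at $s$, a contradiction. Right closed endpoints are symmetric. This shows every bar-code endpoint is a critical value $t_i$.

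The shape constraints (strict inequality $t_i < t_j$ for the three non-closed-closed types) are forced by the corresponding interval being nonempty in the definition. Finiteness follows since there are only finitely many intervals with endpoints in $\{t_0,\ldots,t_N\}$ of each of the four shapes, and the multiplicity of each such interval is bounded by $\dim H_r(X_{t_i})$ for any critical value $t_i$ in the interval, which is finite since $X$ is compact and the fibers of a tame map have finite-dimensional homology in each degree. The main obstacle to writing this up cleanly is the bookkeeping between the four bar-code shapes and the death/detectability definitions, which mix conditions on enlarging strips (open ends) with conditions on the image of nearby fibers (closed ends); one has to verify that the tameness hypothesis is invoked on the correct side of each critical value under consideration, and that the chosen $\epsilon$ in each contradiction is compatible with the $\epsilon$ guaranteed by Definition~\ref{tamemap}(ii).
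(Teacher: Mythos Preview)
Your proposal is correct and follows exactly the approach the paper indicates: the paper's proof is the single sentence ``Using Lemma~\ref{lemma321} and Lemma~\ref{lemma323}, the proof is similar to Lemma~\ref{lemma311},'' and you have faithfully unpacked that sentence, handling the open-end (death) and closed-end (non-detectability) cases separately via the deformation retractions those lemmas supply. Your write-up is in fact more detailed than what the paper provides.
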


\begin{proof}
Using {Lemma \ref{lemma321}} and {Lemma \ref{lemma323}},
the proof is similar to {Lemma \ref{lemma311}}.

\end{proof}

\begin{nota}
Given a tame map $f:X\rightarrow \mathbb{R}$ with critical values $t_0<\cdots<t_N$,
denote by
$$
\begin{array}{l}
\,\, BL_r(f) := \text{the number of all $r$-bar codes for level persistence (with respect to \emph{r-}th homology groups)}.\\
N_r(t_i,t_j) := \text{the number of intervals } (t_i, t_j) \text{ in } BL_r(f). \\
N_r(t_i,t_j] := \text{the number of intervals } (t_i, t_j] \text{ in } BL_r(f). \\
N_r[t_i,t_j) := \text{the number of intervals } [t_i, t_j) \text{ in } BL_r(f). \\
N_r[t_i,t_j] := \text{the number of intervals } [t_i, t_j] \text{ in } BL_r(f). \\
\end{array}
$$
Hence $\sharp BL_r(f) = N_r(t_i,t_j) + N_r(t_i,t_j] + N_r[t_i,t_j) + N_r[t_i,t_j]$.

\end{nota}

In {Figure \ref{figure421}}, we indicate the bar codes both for sub-level and level persistence for some simple map in order to illustrate their
differences and connections. The class consisting of the sum of two circles at level $t$ is not detected on the right,
but is detected at all levels on the left up to(but not including) the level $t'$.

Level persistence provides considerably more information than the sub-level persistence \cite{BDD}
and the bar codes for the sub-level persistence can be recovered from the bar codes for the level
persistence.
An $r$-bar code $[s, t)$ for level persistence contributes an $r$-bar code $[s,t)$ for subl evel persistence.
An $r$-bar code $[s, t]$ for level persistence contributes an $r$-bar code $[s,\infty)$ for sub-level persistence.
$r$-bar codes $(s,t]$ and $(s,t)$ for level persistence contribute nothing to $r$-bar codes for sub-level persistence.
An $r$-bar codes $(s,t)$  for level persistence contributes an $r+1$-bar code $[t,\infty)$  for sub-level persistence.
See {Figure \ref{figure421}} and {Lemma \ref{levsublev}} below.

\begin{figure}[h!]

\includegraphics[scale=.5]{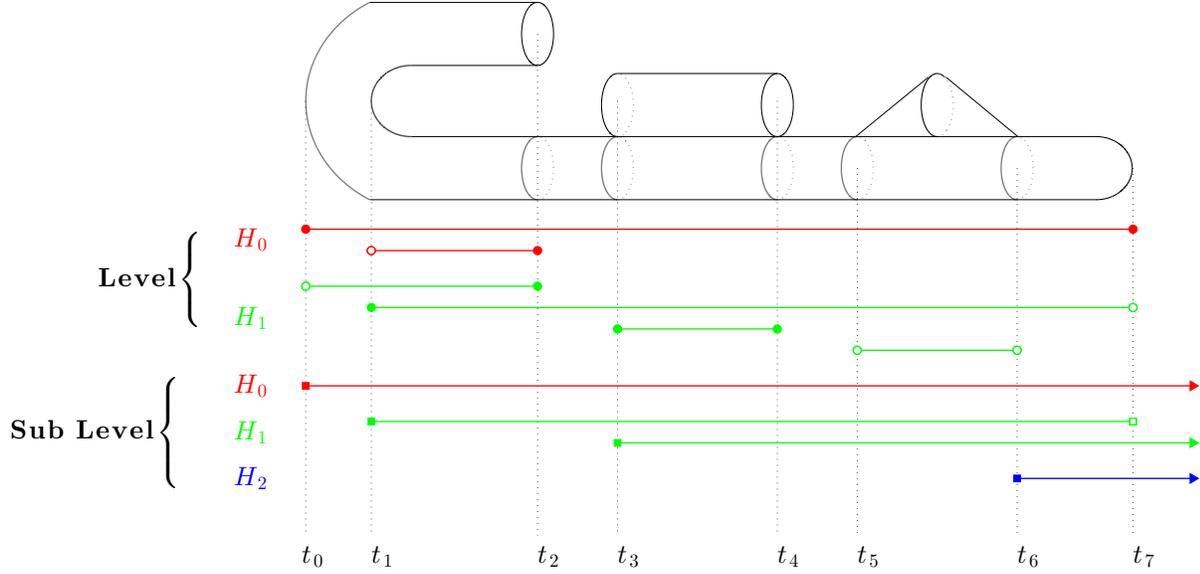}

\caption{Bar codes for level and sub-level persistence.} \label {figure421}

\end{figure}

\begin{lem} \label{levsublev}
Given a tame map $f: X\rightarrow \mathbb{R}$ with critical values $t_0<t_1<\cdots<t_N$.
We have
$$
\mu_r(t_i,t_j) = N_r[t_i, t_j)
$$
$$
\mu_r(t_i,\infty) = \sum_{l=i}^N N_r[t_i, t_l] + \sum_{l=0}^{i-1} N_{r-1}(t_l,t_i)
$$
for any critical values $t_i<t_j$.
\end{lem}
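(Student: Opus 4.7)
The plan is to establish a dictionary between the four types of level persistence bar codes and their contributions to sub-level persistence, and then read off the two identities by counting. The key tool will be the Mayer--Vietoris long exact sequence applied to the decomposition
$$X_{-\infty,t'}=X_{-\infty,s}\cup X_{s,t'}, \qquad X_{-\infty,s}\cap X_{s,t'}=X_s,$$
for critical values $t_{i-1}<s<t_i$, which (using Lemma \ref{lemma321} and Lemma \ref{lemma323}) is homotopically equivalent to cutting at the level $s$. The resulting sequence
$$\cdots\to H_r(X_s)\to H_r(X_{-\infty,s})\oplus H_r(X_{s,t'})\to H_r(X_{-\infty,t'})\to H_{r-1}(X_s)\to\cdots$$
lets one detect exactly when a class in $H_r(X_{-\infty,t'})$ is ``born'' in the sub-level sense or dies.

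For the first identity, note that a sub-level bar code $[t_i,t_j)$ witnesses a class $x$ representable in $H_r(X_{-\infty,t_i})$ but not in $H_r(X_{-\infty,s})$ for $s<t_i$, whose image vanishes in $H_r(X_{-\infty,t_j})$ but not in $H_r(X_{-\infty,s})$ for $s<t_j$. Using Lemma \ref{lemma321} and the Mayer--Vietoris sequence above, this is equivalent to a class $\bar{x}\in H_r(X_{t_i})$ that (i) is not detectable below $t_i$, (ii) is detectable at every critical value $t_i\leq t_l<t_j$, and (iii) dies upward at $t_j$. By Lemma \ref{lemma324}, this is precisely the datum of an $r$-level bar code $[t_i,t_j)$, and the correspondence is a bijection, so $\mu_r(t_i,t_j)=N_r[t_i,t_j)$.

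For the second identity, a sub-level bar code $[t_i,\infty)$ must come from one of two sources. First, a class $x\in H_r(X_{-\infty,t_i})$ may be represented by a level class $\bar{x}\in H_r(X_{t_i})$ which is not detectable below $t_i$ and which persists upward as long as it remains detectable, stopping at some maximal critical level $t_l\geq t_i$. Such a class produces an $r$-level bar code $[t_i,t_l]$, and conversely each such level bar code produces a sub-level bar code $[t_i,\infty)$; this accounts for $\sum_{l=i}^N N_r[t_i,t_l]$. Second, via the connecting map $H_r(X_{-\infty,t_i})\to H_{r-1}(X_s)$ for $s$ slightly less than $t_i$, a class in $H_r(X_{-\infty,t_i})$ may correspond to an $(r-1)$-dimensional class $y$ at a level below $t_i$ whose image bounds on the right (in $X_{s,t_i}$, hence extends to a cycle in $X_{-\infty,t_i}$) but is never bounded on the left. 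In level terms $y$ dies upward at $t_i$ and dies downward at some $t_l<t_i$, i.e.\ it records an $(r-1)$-level bar code $(t_l,t_i)$; this contributes $\sum_{l=0}^{i-1} N_{r-1}(t_l,t_i)$.

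The hard part will be showing that these two sources of $[t_i,\infty)$ sub-level bar codes are genuinely disjoint and exhaustive, which requires a careful inductive analysis of the Mayer--Vietoris exact sequence across all the critical values $t_0<t_1<\cdots<t_N$ together with the classification of level bar code types (Lemma \ref{lemma324}). In particular one must verify that level bar codes of the form $(t_l,t_i]$ contribute nothing in dimension $r$ (because the closed right endpoint indicates the class is absorbed on the upper side without producing a persistent sub-level class), and that level bar codes $[t_l,t_m]$ or $[t_l,t_m)$ with $l<i$ contribute only to $[t_l,\infty)$ or $[t_l,t_m)$, not to $[t_i,\infty)$.
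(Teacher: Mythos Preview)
Your approach differs from the paper's. The paper proves Item~1 by a terse appeal to formulas (\ref{E1}) and (\ref{E2}), and proves Item~2 by writing $\mu_r(t_i,\infty)=\beta_r(t_i,\infty)-\beta_r(t_{i-1},\infty)$ via (\ref{E2}) and then citing Corollary~3.4 of \cite{BD} for a formula expressing $\beta_r(t_i,\infty)$ in terms of level bar codes. Your Mayer--Vietoris route is more geometric and aims to establish directly the dictionary stated just before the lemma; this is a legitimate alternative and, if completed, would be more self-contained than the paper's argument.

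However, your Item~1 argument has a real gap. You pass from a sub-level bar code $[t_i,t_j)$ to ``a class $\bar x\in H_r(X_{t_i})$'' satisfying (i)--(iii) and then assert that this is ``precisely the datum'' of a level bar code $[t_i,t_j)$, with the correspondence a bijection. Two problems arise. First, your own Mayer--Vietoris sequence shows that a class born at $t_i$ in the sub-level sense need not lift to $H_r(X_{t_i})$ at all---it may instead hit $H_{r-1}(X_s)$ nontrivially via the connecting map. You must argue (and it is true) that such classes never die at a finite level, so they cannot contribute to $\mu_r(t_i,t_j)$ for finite $t_j$; but you have not said this. Second, and more seriously, the identity is one of \emph{multiplicities}, not mere existence: a single class $\bar x$ may decompose as a sum of generators lying in several distinct level intervals, and Lemma~\ref{lemma324} only locates endpoints---it gives no bijection. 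What you actually need is the interval decomposition of the level zigzag (the structure theorem underlying the bar codes of \cite{CS}), after which one checks, indecomposable by indecomposable, what each of the four interval types contributes to the sub-level persistence module. Once that decomposition is invoked explicitly, both Item~1 and your acknowledged ``hard part'' in Item~2 reduce to four short verifications, and condition~(ii) in your list becomes unnecessary.
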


\begin{proof} 
Item 1 follows from formulas (\ref{E1}) and (\ref{E2}). 

Item 2 is more elaborate. One uses formula (\ref{E2}) which calculates $\mu_r(t_i, \infty)$ as $\mu_r(t_i, \infty)= \beta_r(t_i, \infty) - \beta_r(t_{i-1}, \infty)$.  A calculation of $\beta_r(t_i, \infty)$ can be recovered from Corollary 3.4 in \cite{BD} which implies that 
this number is exactly the number of $(r-1)$-bar codes of the form $(t_l, t_i), l= 0,1, \cdots, i-1$ plus the number of $r$-bar codes of the form $[a,b]$ with $a\leq t_i$.
Clearly $a, b$ should be critical values. A different derivation can be achieve independently of \cite {BD}.
\end{proof}

The bar codes for the level persistence can be also recovered from the bar codes for the sub-level persistence but from the bar codes of a collections of tame maps canonically associated to $f$.
This will be described in the next subsection.

For this purpose one uses an alternative but equivalent way to describe the level persistence based on a different collection of numbers,  referred below as relevant persistence numbers, $l_r, l_r^+, l^-_r, e_r, i_r.$ 

\begin{defn}
For a continuous map $f: X\rightarrow \mathbb{R}$ and $t''\leq t\leq t'$,
let $L_r(t):=H_r(X_t)$, $L_r^+(t;t'):=\ker(H_r(X_t)\rightarrow H_r(X_{t, t'}))$,
$L_r^-(t;t''):=\ker(H_r(X_t)\rightarrow H_r(X_{t'',t}))$
and $I_r(t, t'):= \text{img}(H_r(X_t)\rightarrow H_r(X_{t,t'})) \cap \text{img}(H_r(X_{t'})\rightarrow H_r(X_{t,t'})) $.

Define the \textbf{relevant level persistent numbers}
$$
l_r(t):= \dim L_r(t)
$$
$$
l_r^+(t;t'):=\dim L_r^+(t;t')
$$
$$
l_r^-(t;t''):=\dim L_r^-(t;t'')
$$
$$
e_r(t;t',t''):=\dim(L_r^+(t;t')\cap L_r^-(t;t''))
$$
$$
i_r(t,t'):=\dim(I_r(t,t'))
$$

\end{defn}

The relation between these collections of numbers is illustrated in the diagram below. 

$$
\xymatrix
{
	*+[F]\txt{$i_r(t, t')$} \ar@{=>}[rr]^{\hskip -1cm Thm\; 4.2} &
	 &
	 *+[F]\txt{$ l_r(t), l_r^+(t;t')$\\$l_r^-(t;t''), e_r(t;t',t'')$} \ar@{=>}[rr]^{\hskip 0.5cm Thm\; 4.3} &
	 &
	*+[F]\txt{$N_r([t, t'])$\\ $N_r((t, t'))$\\ $N_r((t, t'])$\\ $N_r([t, t'))$} \ar@/_-4pc/[ll]^{\hskip -.8cm Observation\; 4.1} \ar@/_-8pc/[llll]^{Observation\; 4.1}
}
$$

The first four have geometric meaning the last ones (the fifth) are more technical. However the first four $l_r, l_r^+, l^-_r, e_r$ can be derived from the last ones $i_r$ 

One can derive all the numbers $l_r, l_r^+, l^-_r, e_r, i_r$  from the number of bar codes $N_r(t_i, t_j)$, $N_r(t_i, t_j]$, $N_r[t_i, t_j)$, $N_r[t_i, t_j]$. 

\begin{obs}
For a tame map
we can derive relevant level persistent numbers
from the numbers $N's$ of bar codes for level persistence.
\end{obs}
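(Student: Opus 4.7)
My plan is to prove the observation by giving explicit formulas expressing each of the five relevant level persistent numbers $l_r(t)$, $l_r^+(t;t')$, $l_r^-(t;t'')$, $e_r(t;t',t'')$, $i_r(t,t')$ as finite sums of the four bar code counts $N_r[t_i,t_j]$, $N_r[t_i,t_j)$, $N_r(t_i,t_j]$, $N_r(t_i,t_j)$. The idea is that, because the level persistence of a tame map decomposes as a direct sum of indecomposables indexed by its bar codes (the classification recalled just before the statement), every relevant number is additive over bar codes, and it suffices to compute the contribution of a single indecomposable.

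The first step will be a small dictionary that records, for each of the four interval types, how the corresponding indecomposable contributes: $[t_i,t_j]$ lives in $H_r(X_s)$ for $s\in[t_i,t_j]$, is undetectable outside, and \emph{never dies} in either direction; $[t_i,t_j)$ lives on $[t_i,t_j)$, is undetectable on the left of $t_i$, and dies upward at $t_j$; $(t_i,t_j]$ is symmetric (dies downward at $t_i$); and $(t_i,t_j)$ dies at both ends. This dictionary follows directly from the definitions of the four bar code types given in Section 4.2, combined with the convention that open ends encode death and closed ends encode non-detectability.

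Next I will read off the five formulas. For $l_r(t)$, a bar code contributes $1$ exactly when $t$ lies in the corresponding interval (with the appropriate openness at the ends):
\begin{equation*}
l_r(t)=\!\!\sum_{t_i\le t\le t_j}\!\!N_r[t_i,t_j]+\!\!\sum_{t_i\le t<t_j}\!\!N_r[t_i,t_j)+\!\!\sum_{t_i<t\le t_j}\!\!N_r(t_i,t_j]+\!\!\sum_{t_i<t<t_j}\!\!N_r(t_i,t_j).
\end{equation*}
For $l_r^+(t;t')$ only indecomposables that die upward within $[t,t']$ contribute, i.e.\ bar codes of type $[t_i,t_j)$ or $(t_i,t_j)$ containing $t$ with $t_j\le t'$; and symmetrically for $l_r^-(t;t'')$. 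For $e_r(t;t',t'')$ only bar codes of type $(t_i,t_j)$ with $t''\le t_i<t<t_j\le t'$ survive, giving $e_r(t;t',t'')=\sum_{t''\le t_i<t<t_j\le t'}N_r(t_i,t_j)$. For $i_r(t,t')$ I claim the answer is the number of bar codes whose interval contains both $t$ and $t'$, since such indecomposables produce matching classes in $H_r(X_t)$ and $H_r(X_{t'})$ whose images coincide in $H_r(X_{t,t'})$, whereas the other types contribute to at most one of the two images.

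The main point requiring care is this last identity for $i_r(t,t')$. To justify it rigorously, I would argue directly on each indecomposable: for types $[t_i,t_j]$, $[t_i,t_j)$, $(t_i,t_j]$, $(t_i,t_j)$ containing $[t,t']$, the level persistence module is one-dimensional on the relevant window, the inclusion-induced maps $H_r(X_t)\to H_r(X_{t,t'})$ and $H_r(X_{t'})\to H_r(X_{t,t'})$ are both isomorphisms onto the same line, so each such bar code contributes $1$ to $i_r(t,t')$; for any other type the intersection of images is $0$. The hardest part of the plan is this image-intersection bookkeeping, together with handling the endpoint cases (when $t$ or $t'$ is itself a critical value), where one must distinguish carefully between the two images; this is where an open vs.\ closed endpoint matters. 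Once these per-indecomposable contributions are verified, summing over bar codes yields the desired formulas and proves that every relevant level persistent number is computable from the $N$'s.
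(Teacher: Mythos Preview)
Your proposal is correct and follows essentially the same approach as the paper: both derive the relevant level persistent numbers by counting bar codes of the appropriate types, and your explicit formulas for $l_r(t)$, $l_r^\pm$, $e_r$, and $i_r$ coincide with the paper's (items 1--5 of its proof). Your additional justification via additivity over indecomposables is exactly the reasoning implicit in the paper's bare statement of the formulas.
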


\begin{proof}

For $t''\leq t\leq t'$
\begin{enumerate}
\item
$l_r(t)$ = number of intervals in $BL_r(f)$ which contain $t$;

\item
$i_r(t,t')$ = number of intervals in $BL_r(f)$ which contain $[t, t']$;

\item 
\begin{equation} \label{l+}
l_r^+(t; t') = \sum_{t_i\leq t< t_j\leq t'} N_r[t_i,t_j) + \sum_{t_i< t<t_j\leq t'} N_r(t_i,t_j);
\end{equation}
\item 
\begin{equation} \label{l-}
l_r^-(t; t'') = \sum_{t''\leq t_i< t\leq t_j} N_r(t_i,t_j] + \sum_{t''\leq t_i< t< t_j}  N_r(t_i,t_j);
\end{equation}
\item
\begin{equation} \label{e_r}
e_r(t;t',t'') = \sum_{ t''\leq t_i<t<t_j\leq t'} N_r(t_i,t_j).
\end{equation}
\end{enumerate}
\end{proof}

 
\begin{thm} \label{theorem326}

For a tame map the numbers $i_r(t,t')$
determine the numbers  $l_r(t)$, $l_r^+(t;t')$, $l_r^-(t;t'')$
 and  $e_r(t;t',t'').$ 
\end{thm}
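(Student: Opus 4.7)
My plan is to reduce the theorem to a bar-code enumeration via three steps. First, the identity $l_r(t) = i_r(t,t)$ is immediate: taking $t=t'$ gives $X_{t,t}=X_t$, both comparison maps in the definition of $I_r$ are identities, and hence $I_r(t,t)=H_r(X_t)$.

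For the remaining three quantities, I exploit the bar-code interpretation: by Lemma~\ref{lemma324} every $r$-bar code is an interval with endpoints at critical values of $f$, and by item~(2) of the observation preceding the theorem, $i_r(t,t')$ counts the number of bar codes of all four types (closed/open at each end) containing the closed interval $[t,t']$. I will invert the $i_r$'s by a M\"obius-type alternating sum to extract the individual bar-code counts $N_r[t_i,t_j]$, $N_r(t_i,t_j)$, $N_r[t_i,t_j)$, $N_r(t_i,t_j]$. Concretely, for $s\in(t_{i-1},t_i)$ and $s'\in(t_j,t_{j+1})$, a term-by-term verification (examining, for each of the four bar-code types, the conditions on $(a,b)$ under which that bar code is counted in each of the four terms) yields
\begin{equation*}
N_r[t_i,t_j] \;=\; i_r(t_i,t_j) - i_r(t_i,s') - i_r(s,t_j) + i_r(s,s'),
\end{equation*}
and analogous four-term alternating differences --- taking $s'\in(t_{j-1},t_j)$ instead, or $s\in(t_i,t_{i+1})$, or both --- recover $N_r[t_i,t_j)$, $N_r(t_i,t_j]$, and $N_r(t_i,t_j)$ respectively. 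In each case, exactly one of the four endpoint types survives the alternating sum, because the conditions $a\le u$ versus $a<u$ and $b\ge v$ versus $b>v$ flip precisely when $u$ or $v$ crosses a critical value; the other three types always contribute the same integer to all four terms and therefore cancel.

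Substituting the resulting expressions for the $N_r$-counts into formulas \eqref{l+}, \eqref{l-}, \eqref{e_r} from the observation preceding the theorem yields $l_r^+(t;t')$, $l_r^-(t;t'')$, and $e_r(t;t',t'')$ as explicit integer linear combinations of the $i_r$ values, completing the proof. The main technical obstacle is the bookkeeping in the inversion step: the four endpoint types must be disentangled by carefully chosen alternating differences, each requiring verification that exactly one type survives, together with minor adjustments when $i=0$ or $j=N$ (where one uses the convention $i_r(s,\cdot)=0$ for $s<t_0$ and $i_r(\cdot,s')=0$ for $s'>t_N$). Once the bar-code interpretation of $i_r$ is in hand, the remainder of the argument is essentially combinatorial.
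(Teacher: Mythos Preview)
Your proposal is correct and follows essentially the same route as the paper: both start with $l_r(t)=i_r(t,t)$, then recover each of the four bar-code counts $N_r[t_i,t_j]$, $N_r(t_i,t_j)$, $N_r[t_i,t_j)$, $N_r(t_i,t_j]$ via a four-term alternating difference of $i_r$-values evaluated at critical values and at auxiliary points strictly between consecutive critical values, and finally substitute into formulas \eqref{l+}, \eqref{l-}, \eqref{e_r}. Your inversion formula for $N_r[t_i,t_j]$ matches the paper's verbatim (with $s=t''$, $s'=t'$), and your explanation of the cancellation mechanism is somewhat more explicit than the paper's bare list of identities.
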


\begin{proof}

$$
l_r(t) = i_r(t,t).
$$

$$
N_r(t_k, t_j) = i_r(t'',t') - i_r(t_k, t') - i_r(t'', t_j) + i_r(t_k, t_j)
$$
for any $t''\leq t'$ such that $t_k<t''<t_{k+1}$, $t_{j-1}<t'<t_j$.

$$
N_r[t_k, t_j] = i_r(t_k, t_j) - i_r(t'', t_j) - i_r(t_k, t') + i_r(t'', t')
$$
for any $t_{k-1}<t''<t_k$ and $t_j<t'<t_{j+1}$.

$$
N_r(t_k,t_j] = i_r(t'',t_j) - i_r(t_k,t_j) - i_r(t'',t') + i_r(t_k,t')
$$
for any $t_k<t''<t_{k+1}$ and $t_j<t'<t_{j+1}$.

$$
N_r[t_k,t_j) = i_r(t_k,t') - i_r(t_k,t_j) - i_r(t'',t') + i_r(t'',t_j)
$$
for any $t_{k-1}<t''<t_k$ and $t_{j-1}<t'<t_j$.

Plug in equation (\ref{l+}), (\ref{l-}) and (\ref{e_r})
we get $l_r^+(t;t')$, $l_r^-(t;t'')$ and  $e_r(t;t',t'')$.

\end{proof}

\begin{thm} \label{theorem327}
For a tame map
we can get the $r$-bar codes for level persistence from
relevant persistent numbers.
\end{thm}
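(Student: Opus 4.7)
The plan is to observe that the full multiset of $r$-bar codes for level persistence is already expressible in terms of the numbers $i_r(t,t')$, via the four inclusion-exclusion identities
\begin{align*}
N_r(t_k,t_j) &= i_r(t'',t') - i_r(t_k,t') - i_r(t'',t_j) + i_r(t_k,t_j),\\
N_r[t_k,t_j] &= i_r(t_k,t_j) - i_r(t'',t_j) - i_r(t_k,t') + i_r(t'',t'),\\
N_r(t_k,t_j] &= i_r(t'',t_j) - i_r(t_k,t_j) - i_r(t'',t') + i_r(t_k,t'),\\
N_r[t_k,t_j) &= i_r(t_k,t') - i_r(t_k,t_j) - i_r(t'',t') + i_r(t'',t_j),
\end{align*}
already recorded in the proof of Theorem \ref{theorem326}, with appropriate non-critical $t'',t'$ chosen just to the left or just to the right of $t_k$ and $t_j$ depending on whether one wants the corresponding closed-endpoint contribution to survive or cancel. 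Since $i_r$ is one of the five families of relevant level persistent numbers, this immediately shows the bar codes are determined by the relevant persistent numbers, which is the conclusion of Theorem \ref{theorem327}.

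The conceptual content underlying these identities is that each bar code $\tau$ contributes $1$ to $i_r(t,t')$ exactly when $[t,t']\subseteq\tau$ in the set-theoretic sense, and $0$ otherwise. This containment is sensitive to whether $\tau$ has open or closed endpoints, so that by choosing $t''$ slightly to the left of $t_k$ one keeps the contributions of bar codes with closed left endpoint at $t_k$, while by choosing $t''$ slightly to the right one suppresses them; and similarly on the right. Verifying each of the four identities thus reduces to tabulating, for each of the four interval types at the target pair $(t_k,t_j)$ and for bar codes whose endpoints are any other pair of critical values, the four signed contributions, and checking that the alternating sum collapses to the indicator of the targeted type.

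The main obstacle is this bookkeeping: for each of the four identities one must run through sixteen elementary cases (four bar code types times four arguments of $i_r$) and several further cases where the bar code endpoints do not match $(t_k,t_j)$. The two boundary situations worth separate mention are singletons $[t_i,t_i]$, extracted by the $N_r[t_k,t_j]$ identity at $t_k=t_j=t_i$ with $t''\in(t_{i-1},t_i)$ and $t'\in(t_i,t_{i+1})$, and bar codes with an endpoint equal to the global extrema $t_0$ or $t_N$, handled by the natural conventions $i_r(t_0-\epsilon,\cdot)=0$ and $i_r(\cdot,t_N+\epsilon)=0$. Once these four formulas are established, they express every $N_r$ count explicitly as a linear combination of relevant persistent numbers, completing the proof.
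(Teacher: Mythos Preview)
Your argument is correct but takes a different route from the paper. You observe that the four inclusion--exclusion identities appearing in the proof of Theorem~\ref{theorem326} already express every $N_r$ count in terms of $i_r$ alone (evaluated at critical and adjacent non-critical values), and since $i_r$ is one of the five relevant persistent numbers, the theorem follows immediately. The paper instead proceeds in stages: it first extracts $N_r(t_k,t_j)$ from $e_r$ via
\[
N_r(t_k,t_j)=e_r(t;t_j,t_k)-e_r(t;t_j,t_{k+1})-e_r(t;t_{j-1},t_k)+e_r(t;t_{j-1},t_{k+1}),
\]
then introduces auxiliary counts $n_r\{t_i,t_j\}$, $n_r\{t_i,t_j)$, $n_r(t_i,t_j\}$, $n_r[t_i,t_j\}$ expressed through $l_r^\pm$ and $i_r$ at critical values only, and recovers the remaining three $N_r$ types from those. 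Your route is shorter and logically self-contained; the paper's route is tailored to the computational scheme of Subsection~\ref{frm} (Method~2), where $l_r^\pm$ and $e_r$ are obtained cheaply from positive/negative bar codes and one wishes to minimize reliance on $i_r$ at non-critical parameters.
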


\begin{proof}
First the numbers $ N_r(t_k, t_j)$ can be calculated by the formula.

\begin{equation}
N_r(t_k, t_j) = e_r(t; t_j,t_k) - e_r(t; t_j,t_{k+1})
- e_r(t; t_{j-1},t_k) + e_r(t; t_{j-1},t_{k+1}),
\end{equation}
for any $t_k<t<t_j$.

To determine the numbers $N_r(t_i,t_j]$, $N_r[t_i,t_j)$ and $N_r[t_i,t_j]$,
we introduce the following auxiliary numbers 
$$
\begin{array}{l}
n_r\{t_i,t_j\} := \text{the number of intervals in } BL_r(f) \text{ which intersect the levels } X_{t_i} \text{ and } X_{t_j};\\
n_r\{t_i,t_j) := \text{the number of intervals in } BL_r(f) \text{ which intersect the level } X_{t_i} \\
\text{ with open end at } t_j;\\
n_r\{t_i,t_j] := \text{ the number of intervals in } BL_r(f) \text{ which intersect the level } X_{t_i} \\
\text{ with closed end at } t_j;\\
n_r(t_i,t_j\} := \text{ the number of intervals in } BL_r(f) \text{ which intersect the level } X_{t_j} \\
\text{ with open end at } t_i;\\
n_r[t_i,t_j\} := \text{ the number of intervals in } BL_r(f) \text{ which intersect the level } X_{t_j} \\
\text{ with closed end at } t_i.
\end{array}
$$

The numbers $ n_r\{t_i,t_j),$  $n_r\{t_i,t_j),$   $n_r\{t_i,t_j\}$ and $n_r[t_i,t_j\}$ can be derived from the
 relevant persistent numbers as indicated below:

$$
\begin{array}{rcl}
n_r\{t_i,t_j) &=& l_r^+(t_i;t_j) - l_r^+(t_i;t_{j-1}) \\
n_r(t_i,t_j\} &=& l_r^-(t_j;t_i) - l_r^-(t_j;t_{i+1}) \\
n_r\{t_i, t_j\} &=& i_r(t_i,t_j) \\
n_r[t_i,t_j\} &=& n_r\{t_i,t_j\} - n_r\{t_{i-1},t_j\} - n_r(t_{i-1},t_j\}
\end{array}
$$

With their help one derive 

\begin{equation}
N_r(t_i,t_j] = n_r(t_i,t_j\} - n_r(t_i,t_{j+1}\} - N_r(t_i,t_{j+1})
\end{equation}
\begin{equation}
N_r[t_i,t_j) = n_r\{t_i,t_j) - n_r\{t_{i-1},t_j) - N_r(t_{i-1},t_j)
\end{equation}
\begin{equation}
N_r[t_i,t_j] = n_r[t_i,t_j\} - n_r[t_i,t_{j+1}\} - N_r[t_i,t_{j+1})
\end{equation}

\end{proof}

\vskip .5cm

In order to calculate the relevant persistence numbers $l_r(t)$, $l_r^+(t;t')$, $l_r^-(t;t'')$
and $e_r(t;t',t'')$ we induce so called
\textbf{positive and negative bar codes} in {subsection \ref{defpnbarcode}}.
Positive and negative bar codes are defined by sub-level
persistence, so they can be calculated with the
methods as described in {subsection \ref{secPHBFSCC}}.

We can also calculate $i_r(t,t')$ with similar method.
Details will be given in {subsection \ref{frm}}.

Once we get all relevant numbers,
we can  get the bar codes for the level persistence providing an alternative to the Carson deSilva algorithm cf. \cite {CS}  to calculate the level persistence bar codes as bar codes for Zigzag persistence.

\subsection{The General Framework for the Calculation of Bar Codes
for Level Persistence of Generic $\mathbb{R}$-valued Linear Maps
 on Simplicial Complexes} \label{frm}

\begin{defn}(Generic $\mathbb{R}$-valued Linear Maps on Simplicial Complexes)

Let $X$ be a simplicial complex and $|X|$ its spatial realization. An \emph{$\mathbb{R}$-valued linear map}
on $X$ is a continuous map $f:|X|\rightarrow \mathbb{R}$ whose restriction to
each simplex is linear. Let $X_0$ be the set of vertices
of $X$. $f$ is called \emph{generic} if $f:X_0\rightarrow \mathbb{R}$ is injective.
(See page 2, \cite{BDD})

\end{defn}

\begin{thm}

Generic $\mathbb{R}$-valued linear maps on simplicial complexes are tame. \footnote{Any simplicial maps are tame, the arguments although similar are more elaborate to write down in the general case.}

\end{thm}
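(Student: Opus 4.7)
The plan is to verify directly the two conditions of Definition~\ref{tamemap}. Assume $X$ is a finite simplicial complex, so that $|X|$ is compact, and take the candidate critical values to be the image of the vertex set: by genericity $f$ is injective on $X_0$, so $\{f(v): v\in X_0\}=\{t_0<t_1<\cdots<t_N\}$ is a finite set of distinct real numbers.

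For condition (i), pick any regular value $t\notin\{t_0,\ldots,t_N\}$ and choose $\epsilon>0$ so that $[t-\epsilon,t+\epsilon]$ contains no $t_i$. I will produce an explicit fiberwise homeomorphism (hence fiberwise homotopy equivalence) between $f^{-1}((t-\epsilon,t+\epsilon))$ and $X_t\times(t-\epsilon,t+\epsilon)$. On each open simplex $\sigma$ of $X$, partition the vertices into a ``low'' set $L$ (with $f$-value $\leq t-\epsilon$) and a ``high'' set $H$ (with $f$-value $\geq t+\epsilon$). For any $y$ in the relative interior of $\sigma$ with $f(y)\in(t-\epsilon,t+\epsilon)$, both the low and high barycentric masses $\alpha_L(y)$ and $\alpha_H(y)=1-\alpha_L(y)$ are strictly positive, and one writes $y=\alpha_L(y)\,\bar y_L+\alpha_H(y)\,\bar y_H$ where $\bar y_L,\bar y_H$ are the renormalized convex combinations of low and high vertices. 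Define the trivialization by sending $(x,s)\in X_t\times(t-\epsilon,t+\epsilon)$ to the unique point on the segment $[\bar x_L,\bar x_H]$ with $f$-value $s$; the inverse reads off $f(y)$ as the second coordinate and slides $y$ back along the same segment to level $t$.

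For condition (ii), fix a critical value $t_i$ and choose $\epsilon>0$ smaller than $\min(t_{i+1}-t_i,\,t_i-t_{i-1})$. To construct the deformation retraction of $X_{t_i,t_i+\epsilon}$ onto $X_{t_i}$, use barycentric decomposition again: every vertex has $f$-value either $\leq t_i$ or $\geq t_{i+1}>t_i+\epsilon$, and for $y$ in an open simplex $\sigma$ with $f(y)\in(t_i,t_i+\epsilon]$ both $\alpha_L(y)$ and $\alpha_H(y)$ must be strictly positive (otherwise $f(y)$ would lie outside $(t_i,t_i+\epsilon]$). Define
\[
r(y)=\alpha'\,\bar y_L+(1-\alpha')\,\bar y_H,\qquad \alpha'=\frac{f(\bar y_H)-t_i}{f(\bar y_H)-f(\bar y_L)},
\]
and use the straight-line homotopy $H(y,s)=(1-s)y+s\,r(y)$. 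A direct computation shows that $f(H(y,s))$ decreases affinely from $f(y)$ to $t_i$, so the homotopy stays inside $X_{t_i,t_i+\epsilon}$; that $r$ is the identity on $X_{t_i}$; and that $H$ is continuous. The argument for $X_{t_i-\epsilon,t_i}$ is symmetric.

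The main obstacle is showing that the simplex-by-simplex formulas glue into globally defined, continuous maps on $|X|$. The key observation is that the decomposition $y=\alpha_L\bar y_L+\alpha_H\bar y_H$ depends only on the vertices of the carrier simplex of $y$ with nonzero barycentric weight at $y$, and this set is intrinsic to $y$ rather than to any particular simplex containing it. Hence both the trivialization of condition (i) and the retraction $r$ and homotopy $H$ of condition (ii) restrict consistently to every face through $y$ and assemble into continuous maps on the polyhedron $|X|$. Together with compactness of $|X|$, this establishes tameness; Corollary~\ref{tametoweaklytame} then gives weak tameness as well, so that the sub-level persistence machinery of Subsection~\ref{SLP} applies to $f$.
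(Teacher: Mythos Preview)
Your proof is correct and follows the same high-level plan as the paper: take the critical values to be the (distinct) vertex values, and verify conditions (i) and (ii) of Definition~\ref{tamemap} by building the required homotopy equivalences simplex-by-simplex and then gluing. The local construction, however, is genuinely different. The paper produces the trivialization in (i) by flowing along the trajectories of $\nabla(f|_\sigma)$ with respect to the flat metric on each simplex $\sigma$, and handles (ii) by observing that $X_{t_{i-1},t}\cap\sigma$ is a mapping cylinder; you instead slide each point along the segment joining its ``low barycenter'' $\bar y_L$ to its ``high barycenter'' $\bar y_H$, using the same mechanism for both (i) and (ii). Your version has the advantage that the gluing step is transparent: the pair $(\bar y_L,\bar y_H)$ depends only on the barycentric coordinates that are nonzero at $y$, hence only on the open carrier of $y$, so the formulas automatically agree on faces. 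By contrast, the gradient direction of $f|_\sigma$ genuinely depends on which simplex $\sigma$ one computes in, so the paper's compatibility claim on intersections requires more care than is written. The price you pay is that the segment $[\bar y_L,\bar y_H]$ varies with $y$ (it is not a foliation by parallel lines on each simplex), so your trivialization is a fiberwise homotopy equivalence rather than an evident product structure; but that is all Definition~\ref{tamemap} asks for.
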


\begin{proof}

Let $f$ be a generic $\mathbb{R}$-valued linear map on a
simplicial complex $X$ with $N+1$ vertices.
Let $t_0<\cdots<t_N$ be the the values of $f$ on the vertices of $X$.
Label vertices of $X$  such that $f(x_i)=t_i$, $0\leq i\leq N$.

(i) For $s<t$ with $t_{i-1}<s, t<t_i$ one produces a canonical homeomorphism $\theta_{s,t}: X_{s,t}\to X_t\times [s,t]$ which intertwines the map $f$ with the second factor projection of $X_t\times [s,t]$ on $[s,t]$.

Given $\sigma $ with $\sigma\cap X_{s,t}\ne \phi$.
Each point $x\in X_{s,t}\cap \sigma $ lies on a unique straight line  segment $l(x)$ which
is the trajectory of $\nabla f|_\sigma$ w.r.t. the flat metric on $\sigma$.
Denote by $L(l)$(resp. $R(l)$)  the intersection of this line with $X_s$(resp. $X_t$) and by $p_{s,t}(x):=L(l(x))$(resp. $q_{s,t}(x):=R(l(x))$) the obviously continuous map $X_{s,t}\cap \sigma \to X_s \cap \sigma$(resp. $X_{s,t}\cap \sigma \to X_t \cap \sigma$).
Consider $\tilde p_{s,t}: X_{s,t}\cap \sigma \to X_s \cap \sigma \times [s,t]$(resp. $\tilde q_{s,t}: X_{s,t}\cap \sigma \to X_t \cap \sigma \times [s,t]$)
the map defined by $\tilde p_{s,t}(x)= (p_{s,t}(x), f(x))$(resp. $\tilde q_{s,t}(x)= (q_{s,t}(x), f(x))$).
Clearly $\tilde p_{s,t}$ and $\tilde q_{s,t}$ are homeomorphisms which intertwines the map $f$ with the second factor projection of $X_t\times [s,t]$ on $[s,t]$.

For any $\sigma_1$ and $\sigma_2$, $\tilde p_{s,t}$ and $\tilde q_{s,t}$ agree on the
intersection $\sigma_1\cap \sigma_2$.
So we can define $\theta_{s,t}|_\sigma $, the restriction of $\theta_{s,t}$ on each $\sigma$, as $\tilde q_{s,t}$. In this way, one produces a canonical homeomorphism $\theta_{s,t}: X_{s,t}\to X_t\times [s,t]$ which intertwines the map $f$ with the second factor projection of $X_t\times [s,t]$ on $[s,t]$.

Therefore, for any $t\neq t_0,t_1,\cdots,t_N$, there exists $\epsilon>0$ so that $f:f^{-1}(t-\epsilon,t+\epsilon)\rightarrow(t-\epsilon,t+\epsilon)$
and the second factor projection $X_t\times (t-\epsilon,t+\epsilon)\rightarrow(t-\epsilon,t+\epsilon)$ are fiber-wise homotopy equivalent.

(ii) For $s<t$ with $s=t_{i-1}, t<t_i$ and $\sigma $ with $\sigma\cap X_{t_{i-1},t}\ne \phi$,
 it is easy to see that $X_{t_{i-1},t}\cap \sigma $ is the mapping cylinder of a canonical map $X_t\cap \sigma \to X_{t_{i-1}}\cap \sigma$  which prove that $X_{t_{i-1},t}\cap \sigma$ retracts by deformation on $X_{t_{i-1}}\cap \sigma$.
For any $\sigma_1$ and $\sigma_2$, the above deformation retractions agree on the
intersection $\sigma_1\cap \sigma_2$.
So $X_{t_{i-1},t}$ retracts by deformation on $X_{t_{i-1}}$.

For $s<t$ with $t_{i-1}<s, t=t_i$, we can prove $X_{s,t_i}$ retracts by deformation on $X_{t_i}$
in the same way.

\end{proof}

Let $f$ be a generic $\mathbb{R}$-valued linear map on a
simplicial complex $X$ with $N+1$ vertices.
Let $t_0<\cdots<t_N$ be the the values of $f$ on the vertices of $X$.
Since $f$ is tame, we can define bar codes for level persistence of $f:|X|\rightarrow \mathbb{R}$.
By {Lemma \ref{lemma324}}, the bar codes for level persistence of $f$ are intervals of the form
$[t_i, t_j]$ with $t_i\leq t_j$ and  $(t_i, t_j)$, $(t_i, t_j]$, $[t_i, t_j)$ with $t_i< t_j$. 

\begin{lem}

For the map $f:|X|\rightarrow \mathbb{R}$ defined above, the bar codes of the form $(t_i, t_{i+1})$
do not exist.

\end{lem}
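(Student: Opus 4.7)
I plan to proceed by a Mayer--Vietoris argument combined with the simplex-wise linearity of $f$. Fix $t$ strictly between $t_i$ and $t_{i+1}$. By Lemma~\ref{lemma321} together with the tameness of $f$ established in the preceding theorem, $X_{t_i, t}$ deformation retracts onto $X_{t_i}$, and $X_{t, t_{i+1}}$ deformation retracts onto $X_{t_{i+1}}$. A bar code of type $(t_i, t_{i+1})$ in degree $r$ corresponds precisely to a nonzero class $x \in H_r(X_t)$ lying in the kernel of the inclusion-retraction-induced map
$$
\phi_r : H_r(X_t) \longrightarrow H_r(X_{t_i}) \oplus H_r(X_{t_{i+1}}).
$$
Applied to the open cover of $X_{t_i, t_{i+1}}$ by suitable thickenings of $X_{t_i, t}$ and $X_{t, t_{i+1}}$, the Mayer--Vietoris sequence places $\phi_r$ in a long exact sequence with $\ker\phi_r = \mathrm{im}\,\partial_r$, where $\partial_r : H_{r+1}(X_{t_i, t_{i+1}}) \to H_r(X_t)$ is the connecting homomorphism. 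By exactness, the nonexistence of $(t_i, t_{i+1})$ bar codes in every degree is equivalent to the surjectivity, for every $s \geq 0$, of the map
$$
\psi_s : H_s(X_{t_i}) \oplus H_s(X_{t_{i+1}}) \longrightarrow H_s(X_{t_i, t_{i+1}}), \qquad \psi_s(a, b) = \iota_{-,*}(a) - \iota_{+,*}(b),
$$
induced by the inclusions $\iota_\pm$ of $X_{t_i}$ and $X_{t_{i+1}}$ into the slab.

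The key claim, specific to generic linear maps on simplicial complexes, is thus that the inclusion $X_{t_i} \sqcup X_{t_{i+1}} \hookrightarrow X_{t_i, t_{i+1}}$ is a homology surjection in every degree. I would prove this using the piecewise-linear structure of $f$. Since $t_i$ and $t_{i+1}$ are consecutive critical values, no vertex of $X$ has $f$-value in the open interval $(t_i, t_{i+1})$. Consequently, for each simplex $\sigma$ of $X$ meeting the slab, the slab-piece $\sigma \cap X_{t_i, t_{i+1}}$ is a convex polytope (hence contractible), and $\sigma \cap (X_{t_i} \cup X_{t_{i+1}})$ is a nonempty portion of its boundary. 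Each local inclusion $\sigma \cap (X_{t_i} \cup X_{t_{i+1}}) \hookrightarrow \sigma \cap X_{t_i, t_{i+1}}$ is therefore surjective on $H_*$: trivially in positive degrees, since the ambient piece is contractible, and in degree zero because the ambient piece is path-connected. Assembling these local surjections globally, either via the simplicial Mayer--Vietoris spectral sequence for the cover of $X_{t_i, t_{i+1}}$ by the pieces $\{\sigma \cap X_{t_i, t_{i+1}}\}_\sigma$, or by constructing an explicit piecewise-linear deformation of $X_{t_i, t_{i+1}}$ onto a subcomplex $Z$ with $X_{t_i} \cup X_{t_{i+1}} \hookrightarrow Z$ a homology equivalence, yields the desired global surjectivity.

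The main obstacle is organizing the gluing step carefully, especially for simplices $\sigma$ containing both $v_i$ and $v_{i+1}$. In such cases, the slab-piece $\sigma \cap X_{t_i, t_{i+1}}$ contains both critical vertices simultaneously, and the natural downward and upward gradient flows of $f|_\sigma$ conflict. The resolution is to cut each such simplex along the midlevel $X_t$ and construct the flow on the two halves separately; continuity across shared faces of simplices then follows from the linearity of $f$ on each simplex, and the resulting global deformation exhibits $X_{t_i} \cup X_{t_{i+1}}$ as a homology-deformation retract of $X_{t_i, t_{i+1}}$, completing the proof.
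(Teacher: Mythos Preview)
The paper states this lemma without proof, so there is nothing to compare against; I evaluate your argument on its own terms.

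Your Mayer--Vietoris reduction is correct: for $t_i<t<t_{i+1}$ the existence of a degree-$r$ bar code $(t_i,t_{i+1})$ is exactly the nonvanishing of $\ker\phi_r$, and by the long exact sequence this is equivalent to the failure of surjectivity of some $\psi_s$.

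The gap is in the passage from local surjectivity to global surjectivity of $\psi_s$. Neither of your two proposed routes works as written. For the spectral-sequence route: covering $X_{t_i,t_{i+1}}$ by the convex pieces $\sigma\cap X_{t_i,t_{i+1}}$ gives, on $E_1$, a surjective chain map from the Čech complex of $X_{t_i}\cup X_{t_{i+1}}$ to that of $X_{t_i,t_{i+1}}$; but a surjective chain map need not be surjective on homology (take $A_\bullet:\kappa\xrightarrow{\mathrm{id}}\kappa$ onto $B_\bullet:\kappa\to 0$, where $H_1(A)=0$ fails to surject onto $H_1(B)=\kappa$). You would need a chain-level splitting, and none is supplied. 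For the deformation route: a genuine deformation retraction of $X_{t_i,t_{i+1}}$ onto $X_{t_i}\cup X_{t_{i+1}}$ would make the inclusion a homotopy equivalence, hence an \emph{isomorphism} on $H_*$. This is already false for a single edge $[x_i,x_{i+1}]$, where $H_0$ drops from $\kappa^2$ to $\kappa$. Your ``cut along $X_t$ and flow each half'' map is discontinuous on $X_t$ (a point there cannot flow both up and down), and the phrase ``homology-deformation retract'' does not name a construction that circumvents this.

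What the argument actually needs is the PL--Morse description of the two inclusions $X_t\hookrightarrow X_{t_i,t}$ and $X_t\hookrightarrow X_{t,t_{i+1}}$: up to homotopy, each is a cone-attachment along the embedded upper link $L^+_i\subset X_t$ of $x_i$ (respectively the lower link $L^-_{i+1}\subset X_t$ of $x_{i+1}$), so that $\ker(g_-)_*=\mathrm{im}\bigl(\tilde H_r(L^+_i)\to H_r(X_t)\bigr)$ and $\ker(g_+)_*=\mathrm{im}\bigl(\tilde H_r(L^-_{i+1})\to H_r(X_t)\bigr)$. One then has to prove these two images intersect trivially. This uses the specific way the two links sit inside $X_t$: each cell $\sigma_t$ factors (PL-homeomorphically) as $\sigma^-\times\sigma^+$ with $\sigma^-$ spanned by the vertices of $\sigma$ at level $\le t_i$ and $\sigma^+$ by those at level $\ge t_{i+1}$, and the embedded $L^+_i$, $L^-_{i+1}$ are precisely the loci $\sigma^-=\{x_i\}$ and $\sigma^+=\{x_{i+1}\}$ respectively. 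That separation, not a generic local-to-global principle, is the substantive step your outline skips.
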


Let $f$ and $X$ be defined as above.
Let $M_X$ be the boundary matrix of $X$ which records the boundary map
$\partial:X\rightarrow X$.
Start with $M_X$ and the values on vertices $t_0<t_1<\cdots<t_N$,
we can calculate the bar codes for level persistence with two methods:

\textit{Method 1}, which involves only $i_r(t,t')$, and

\textit{Method 2}, which  involves $l_r^+(t;t')$,
$l_r^-(t;t'')$, $e_r(t;t'',t')$ and $i_r(t,t')$.

Finally one use the pervious results to determine the bar codes.
It appears that {Method 1} is simpler, however we will see later that {Method 2}
is about 8 times faster than {Method 1}.

\begin{defn}

Given any $t$, we get a polytopal complex
$$
X_t := \{\; |\sigma| \cap |X|_t \;\mid\; \sigma\in X \;\}.
$$

Given any $s<t$, we get aother polytopal complex
$$
X_{s,t} := X_s \cup X_t \cup \{\; |\sigma| \cap |X|_{s,t} \;\mid\; \sigma\in X \;\}.
$$

For $s=t$, observe that $X_{s,t}$ is $X_t$.

\end{defn}

\vskip .2in

{\large \textbf{Method 1}} (with coefficients in $\mathbb Z_2$)
\medskip 
Given $s<t$, order cells in $X_{s,t}$ in the following way:

(i) if $\tau \in X_s$, $\tau' \in X_t$ and $\tau'' \in X_{s,t}$,
then $\tau \prec \tau' \prec  \tau'' $;

(ii) if $\dim(\tau)<\dim(\tau')$, then $\tau \prec \tau'$.

The ordering of the vertices provides orientation on simplices
and then on the cells obtained as ``cuts''.
We can write a matrix $M_{X_{s,t}}$ for the boundary
map $\partial_{s,t}: X_{s,t} \rightarrow X_{s,t}$
with respect to the order of cells above.

From the reduced form $R(M_{X_{s,t}})$ calculated  by {Algorithm 2.1.},
we can read the $\dim(H_r(|X|_{s})\rightarrow H_r(|X|_{s,t}))$
and $\dim(H_r(|X|_s \sqcup |X|_t)\rightarrow H_r(|X|_{s,t}))$.

Let $n_1= \sharp X_s$, $n_2=\sharp X_t$ and $n = \sharp X_{s,t}$.
A zero column $j$, $1\leq j\leq n_1$, contributes
a generator in the image of $H_r(|X|_{s})\rightarrow H_r(|X|_{s,t})$,
if there is no column $k>j$ such that $low(k) = j$.
Therefore $\dim(H_r(|X|_{s})\rightarrow H_r(|X|_{s,t}))$ = number of such columns
$j$.

A zero column $j'$, $1\leq j'\leq n_1+n_2$, contributes
a generator in the image of $H_r(|X|_s \sqcup |X|_t)\rightarrow H_r(|X|_{s,t})$,
if there is no column $k>j'$ such that $low(k) = j'$.
Therefore $\dim(H_r(|X|_s \sqcup |X|_t)\rightarrow H_r(|X|_{s,t}))$ = number of such columns
$j'$.

Let's change the order of cells in $X_{s,t}$.

Given $s<t$, order cells in $X_{s,t}$ in the following way:

(i) if $\tau \in X_t$, $\tau' \in X_s$ and $\tau'' \in X_{s,t}$,
then $\tau \prec \tau' \prec  \tau'' $;

(ii) if $\dim(\tau)<\dim(\tau')$, then $\tau \prec \tau'$.

We can write a matrix $M'_{X_{s,t}}$ for the boundary
map $\partial_{s,t}: X_{s,t} \rightarrow X_{s,t}$
with respect to the order above.

From the reduced form $R(M'_{X_{s,t}})$ calculated by {Algorithm 2.1.},
we can read the $\dim(H_r(|X|_{t})\rightarrow H_r(|X|_{s,t}))$
and $\dim(H_r(|X|_s \sqcup |X|_t)\rightarrow H_r(|X|_{s,t})$ as describe above.

Since
$$
I_r(s, t) = \text{img}(H_r(|X|_s)\rightarrow H_r(|X|_{s,t})) \cap \text{img}(H_r(|X|_{t})\rightarrow H_r(|X|_{s,t}))
$$
we have
$$
\begin{array}{rcl}
i_r(s, t) & = & \dim(I_r(s, t)) \\
 	    & = & \dim(\text{img}(H_r(|X|_s)\rightarrow H_r(|X|_{s,t})) \cap \text{img}(H_r(|X|_{t})\rightarrow H_r(|X|_{s,t}))) \\
	    & = & \dim(\text{img}(H_r(|X|_s)\rightarrow H_r(|X|_{s,t}))) + \dim(\text{img}(H_r(|X|_{t})\rightarrow H_r(|X|_{s,t}))) \\
	    &	 & - \dim(\text{img}(H_r(|X|_s)\rightarrow H_r(|X|_{s,t})) + \text{img}(H_r(|X|_{t})\rightarrow H_r(|X|_{s,t}))).\\
\end{array}
$$

Since
$$
\begin{array}{rl}
   & \text{img}(H_r(|X|_s)\rightarrow H_r(|X|_{s,t})) + \text{img}(H_r(|X|_{t})\rightarrow H_r(|X|_{s,t})) \\
= & \text{img}((H_r(|X|_s) \oplus H_r(|X|_{t})) \rightarrow H_r(|X|_{s,t})) \\
= & \text{img}(H_r(|X|_s \sqcup |X|_{t}) \rightarrow H_r(|X|_{s,t})) \\
\end{array}
$$
we have
$$
\begin{array}{rcl}
i_r(s, t) & = & \dim(\text{img}(H_r(|X|_s)\rightarrow H_r(|X|_{s,t}))) + \dim(\text{img}(H_r(|X|_{t})\rightarrow H_r(|X|_{s,t}))) \\
	    &	 & - \dim(\text{img}(H_r(|X|_s \sqcup |X|_{t}) \rightarrow H_r(|X|_{s,t}))) \\
\end{array}
$$
for $s<t$.

For $s=t$, we have
$$
i_r(s,s) = \dim(H_r(|X|_s))
$$
and this can be read from the reduced form $R(M_{X_s})$ directly.

Once we get $i_r(s, t)$ for any $s\leq t$, we can get the
numbers of all bar codes of level persistence
according to {Theorem \ref{theorem326}}.

Observe that in view of the tameness of $f$ we only need finitely many $i_r(s,t)$.
Let $t_{-1} = t_0-1$, $t_{N+1} = t_N + 1$ and
$t_{i+1/2} = (t_i+t_{i+1})/2$ for $-1\leq i \leq N$.
It is enough to know $i_r(s,t)$ for $s,t \in \{t_{i/2} | -1\leq i \leq 2N+1\}$.

Therefore we need to calculate reduced forms of about $4N^2$ matrices
in order to get $r$-bar codes of level persistence.

\vskip .2in 

{\large \textbf{Method 2}} (with coefficients in $\mathbb Z_2$)

\medskip 
Method 2 is based on the formulas of {Theorem \ref{theorem327}}.

So called positive and negative bar codes are introduced in the next
subsection in order to calculate $l_r(t_i)$, $l^+_r(t_i;t_j)$, $l^-_r(t_i;t_j)$
and $e_r(t_i;t_k,t_j)$ for $t_k\leq t_i \leq t_j$ efficiently.
We need to calculate reduced forms of about $2N$ matrices in order to
get $l_r(t_i)$, $l^+_r(t_i;t_j)$, $l^-_r(t_i;t_j)$
and $e_r(t_i;t_k,t_j)$ for $t_k\leq t_i \leq t_j$.

We still need to calculate $i_r(t_i;t_j)$ for $t_j<t_j$.
We have
$$
\begin{array}{rcl}
i_r(t_i, t_j) & = & \dim(\text{img}(H_r(|X|_{t_i})\rightarrow H_r(|X|_{t_i,t_j}))) + \dim(\text{img}(H_r(|X|_{t_j})\rightarrow H_r(|X|_{t_i,t_j}))) \\
	    &	 & - \dim(\text{img}(H_r(|X|_{t_i} \sqcup |X|_{t_j}) \rightarrow H_r(|X|_{t_i,t_j}))).\\
\end{array}
$$
for $t_i<t_j$.

$\dim(\text{img}(H_r(|X|_{t_i} \sqcup |X|_{t_j}) \rightarrow H_r(|X|_{t_i,t_j})))$ can be calculated in the same way as described in {Method 1}.

$\dim(\text{img}(H_r(|X|_{t_i})\rightarrow H_r(|X|_{t_i,t_j}))) $ and $\dim(\text{img}(H_r(|X|_{t_j})\rightarrow H_r(|X|_{t_i,t_j}))) $
can be calculated at very little cost with the knowledge of $l_r(t_i)$, $l^+_r(t_i;t_j)$ and $l^-_r(t_i;t_j)$.

$$
\begin{array}{rl}
& \dim(\text{img}(H_r(|X|_{t_i})\rightarrow H_r(|X|_{t_i,t_j})))  \\
= & \dim(H_r(|X|_{t_i})) - \dim(\ker(H_r(|X|_{t_i})\rightarrow H_r(|X|_{t_i,t_j}))) \\
= & l_r(t_i) - l^+_r(t_i;t_j)
\end{array}
$$

$$
\begin{array}{rl}
& \dim(\text{img}(H_r(|X|_{t_j})\rightarrow H_r(|X|_{t_i,t_j})))  \\
= & \dim(H_r(|X|_{t_j})) - \dim(\ker(H_r(|X|_{t_j})\rightarrow H_r(|X|_{t_i,t_j}))) \\
= & l_r(t_j) - l^-_r(t_i;t_j)
\end{array}
$$

We need to calculate reduced forms of about $N^2 / 2$ matrices in order to
get $i_r(t_i;t_j)$ for $t_j<t_j$.

Therefore we need to calculate reduced forms of about $N^2/2 + 2N$ matrices
in order to get $r$-bar codes of level persistence.

\subsection{Definition of Positive and Negative Bar Codes of a Tame Map} \label{defpnbarcode}

Suppose $f:X\rightarrow \mathbb{R}$ is a tame map with
$t_1<t_2<\cdots<t_N$ the critical values.
Define $t_{m+1/2}=(t_m+t_{m+1})/2$, $1\leq m\leq N-1$.
Given $t_1\leq u<s<t\leq t_N$, there exist $k,i,j$
such that $t_{k-1}<u\leq t_k$, $t_i\leq s<t_{i+1}$, $t_j\leq t<t_{j+1}$, we have

\begin{prop}\label{prop13}
$$
\begin{array}{l}
H_r(X_s)\cong\left\{
\begin{array}{ll}
H_r(X_{t_i}),& s=t_i\\
H_r(X_{t_{i+1/2}}),& t_i<s<t_{i+1}
\end{array}
\right.
\\
\\
\begin{array}{l}
\;\;\;\;\ker(H_r(X_s)\rightarrow H_r(X_{s,t})) \\
\cong\left\{
\begin{array}{ll}
\ker(H_r(X_{t_i})\rightarrow H_r(X_{t_i,t_j})),& s=t_i\\
\ker(H_r(X_{t_{i+1/2}})\rightarrow H_r(X_{t_{i+1/2}},t_j)),& t_i<s<t_{i+1}\leq t \\
0,& t_i<s<t<t_{i+1}
\end{array}
\right.
\end{array}
\\
\\
\begin{array}{l}
\;\;\;\; \ker(H_r(X_s)\rightarrow H_r(X_{u,s}))\\
\cong\left\{
\begin{array}{ll}
\ker(H_r(X_{t_i})\rightarrow H_r(X_{t_k,t_i})),& s=t_i\\
\ker(H_r(X_{t_{i+1/2}})\rightarrow H_r(X_{t_k,t_{i+1/2}})),& u\leq t_i<s<t_{i+1}\\
0,& t_i<u<s<t_{i+1}
\end{array}
\right.
\end{array}
\\
\\
\begin{array}{l}
\;\;\;\; \ker(H_r(X_s)\rightarrow H_r(X_{s,t}))\cap \ker(H_r(X_s)\rightarrow H_r(X_{u,s}))\\
\cong\left\{
\begin{array}{ll}
\ker(H_r(X_{t_i})\rightarrow H_r(X_{t_i,t_j}))\cap\ker(H_r(X_{t_i})\rightarrow H_r(X_{t_k,t_i})),& s=t_i\\
\ker(H_r(X_{t_{i+1/2}})\rightarrow H_r(X_{t_{i+1/2}},t_j))\cap\ker(H_r(X_{t_{i+1/2}})\rightarrow H_r(X_{t_k,t_{i+1/2}})), & u\leq t_i<s<t_{i+1}\leq t\\
0,& otherwise
\end{array}
\right.
\end{array}
\end{array}
$$
\end{prop}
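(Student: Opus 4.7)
The plan is to reduce each of the four isomorphisms to three standard moves applied to whichever of $s$, $t$, $u$ is not already a critical value: (a) replace $X_s$ by $X_{t_{i+1/2}}$ when $t_i<s<t_{i+1}$, using the fiber-wise product structure from Definition \ref{tamemap}(i); (b) shorten an interval $[a,b]$ by moving one endpoint to an adjacent critical value via the deformation retraction of Lemma \ref{lemma321}; (c) if an entire interval $[s,t]$ or $[u,s]$ is squeezed strictly between two consecutive critical values, collapse it onto one of its fibers by Lemma \ref{lemma323}, which instantly kills the corresponding kernel. All three moves are deformation retracts and are induced by inclusions of spaces, so the induced homology maps fit into natural commutative squares, and the kernel identifications follow by functoriality.

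For the first equation the case $s=t_i$ is tautological. For $t_i<s<t_{i+1}$, Definition \ref{tamemap}(i) yields, for every $s'\in(t_i,t_{i+1})$, an open neighborhood $U_{s'}\subseteq(t_i,t_{i+1})$ on which $f^{-1}(U_{s'})$ is fiber-wise homotopy equivalent to the trivial product $X_{s'}\times U_{s'}$. I would cover the compact interval joining $s$ and $t_{i+1/2}$ by finitely many such neighborhoods $U_{s_1},\dots,U_{s_k}$ with $s_1=s$, $s_k=t_{i+1/2}$, and $U_{s_\ell}\cap U_{s_{\ell+1}}\ne\emptyset$. Picking a common point $c_\ell$ in each overlap, the two product structures produce homotopy equivalences $X_{s_\ell}\simeq X_{c_\ell}\simeq X_{s_{\ell+1}}$; concatenation gives $X_s\simeq X_{t_{i+1/2}}$, hence the required identification on $H_r$.

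For the second equation the subcase $s=t_i$ is built into the statement. In the subcase $t_i<s<t_{i+1}\le t$ I apply move (b) at the left endpoint to replace $X_{s,t}$ by $X_{t_{i+1/2},t}$ (whichever order $s$ and $t_{i+1/2}$ are in, Lemma \ref{lemma321} supplies the deformation retraction) and, if $t_j<t<t_{j+1}$, apply move (b) once more at the right endpoint to pass from $X_{t_{i+1/2},t}$ to $X_{t_{i+1/2},t_j}$. Both moves induce homology isomorphisms that fit, together with the equivalence $X_s\simeq X_{t_{i+1/2}}$ from the first equation, into a commutative square from which the kernel identification is read off. In the subcase $t_i<s<t<t_{i+1}$, move (c) gives $X_{s,t}\simeq X_s$, so $H_r(X_s)\to H_r(X_{s,t})$ is an isomorphism and its kernel vanishes. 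The third equation is handled by the mirror argument at the left endpoint, and the fourth is obtained by intersecting the kernels from the second and third inside the commutative diagram constructed by move (a).

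The main obstacle is move (a) itself: producing a single homotopy equivalence $X_s\simeq X_{t_{i+1/2}}$ out of the merely local product structure requires a compactness/patching argument, and one must verify that the resulting equivalence is realized, up to homotopy, by the inclusions into the intermediate interval-preimages $X_{s,t_{i+1/2}}$ (or $X_{t_{i+1/2},s}$), so that the commutative squares used in the later steps actually commute on $H_r$. Once this compatibility is in place, everything else is bookkeeping with Lemmas \ref{lemma321} and \ref{lemma323} together with the naturality of induced maps in homology.
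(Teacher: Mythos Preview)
Your overall strategy matches the paper's: reduce everything to Lemmas \ref{lemma321} and \ref{lemma323} plus naturality of the induced maps, assembled into commutative diagrams of inclusions. Moves (b) and (c) are exactly what the paper does.

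The one place you make your own life harder is move (a). You propose a finite-cover patching argument from the local product structure of Definition \ref{tamemap}(i), and then flag as the ``main obstacle'' that the resulting equivalence $X_s\simeq X_{t_{i+1/2}}$ must be realized by the inclusions into $X_{s,t_{i+1/2}}$ (or $X_{t_{i+1/2},s}$). But this is precisely what Lemma \ref{lemma323} already gives you in one stroke: for $t_i<s,t_{i+1/2}<t_{i+1}$, the interval preimage $X_{t_{i+1/2},s}$ deformation retracts onto each of its two ends $X_{t_{i+1/2}}$ and $X_s$. So both inclusions $X_s\hookrightarrow X_{t_{i+1/2},s}\hookleftarrow X_{t_{i+1/2}}$ are homotopy equivalences, and the commutative squares you need are immediate. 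The paper does exactly this (wlog $t_{i+1/2}\le s$), then feeds the two resulting isomorphisms into a single commutative diagram connecting $H_r(X_s)\to H_r(X_{s,t_j})$, $H_r(X_{t_k,s})$ with their $t_{i+1/2}$ counterparts. Your patching argument is not wrong, just unnecessary: you are re-proving a special case of Lemma \ref{lemma323} instead of invoking it.
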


\begin{proof}
We will verify the case $u\leq t_i<s<t_{i+1}\leq t$,
other cases can be proved similarly.

The commutative diagram (all maps are inclusions)
$$
\xymatrix{
X_{t_k,s} \ar[d] & X_s \ar[l] \ar[r] \ar[ld] \ar[rd] & X_{s,t_j} \ar[d] \\
X_{u,s} & & X_{s,t}
}
$$
induces
$$
\xymatrix{
H_r(X_{t_k,s}) \ar[d] & H_r(X_s) \ar[l] \ar[r] \ar[ld] \ar[rd] & H_r(X_{s,t_j}) \ar[d] \\
H_r(X_{u,s}) & & H_r(X_{s,t})
}
$$

We have two isomorphisms in the diagram above,
since $X_{u,s}$ and $X_{s,t}$ retracts by deformation onto
$X_{t_k,s}$ and $X_{s,t_j}$ by {Lemma \ref{lemma321}}.
So we have
$$
\begin{array}{rl}
  & \ker(H_r(X_s)\rightarrow H_r(X_{s,t})) = \ker(H_r(X_s)\rightarrow H_r(X_{s,t_j}))\\
  & \ker(H_r(X_s)\rightarrow H_r(X_{u,s})) = \ker(H_r(X_s)\rightarrow H_r(X_{t_k,s}))\\
  and &  \\
   & \ker(H_r(X_s)\rightarrow H_r(X_{s,t}))\cap \ker(H_r(X_s)\rightarrow H_r(X_{u,s}))\\
  = & \ker(H_r(X_s)\rightarrow H_r(X_{s,t_j}))\cap  \ker(H_r(X_s)\rightarrow H_r(X_{t_k,s}))
\end{array}
\eqno(\ast)
$$

When $s=t_i$, this provides the verification of the isomorphism.

When $t_i<s<t_{i+1}$, wlog, suppose $t_{i+1/2}\leq s<t_{i+1}$.
By {Lemma \ref{lemma323}}, $X_{t_{i+1/2},s}$ retracts by deformation onto
$X_{t_{i+1/2}}$ and $X_s$, the commutative diagram below (all maps are inclusions)
$$
\xymatrix{
 & X_s \ar[ld] \ar[d] \ar[r] & X_{s,t_j} \ar[d]\\
X_{t_k,s} & X_{t_{i+1/2,s}} \ar[l] \ar[r] & X_{t_{i+1/2},t_j}\\
X_{t_k,t_{i+1/2}} \ar[u] & X_{t_{i+1/2}} \ar[l] \ar[u] \ar[ru]&
}
$$
induces
$$
\xymatrix{
 & H_r(X_s) \ar[ld] \ar[d]^\cong \ar[r] & H_r(X_{s,t_j}) \ar[d]^\cong\\
H_r(X_{t_k,s}) & H_r(X_{t_{i+1/2,s}}) \ar[l] \ar[r] & H_r(X_{t_{i+1/2},t_j})\\
H_r(X_{t_k,t_{i+1/2}}) \ar[u]^\cong & H_r(X_{t_{i+1/2}}) \ar[l] \ar[u]^\cong \ar[ru]&
}
$$

From the diagram above, we have
$$
\begin{array}{rl}
   & H_r(X_s)\cong H_r(X_{t_{i+1/2}}) \\
   & \ker(H_r(X_s)\rightarrow H_r(X_{s,t_j}))\cong \ker(H_r(X_{t_{i+1/2}})\rightarrow H_r(X_{t_{i+1/2},t_j}))\\
   & \ker(H_r(X_s)\rightarrow H_r(X_{t_k,s}))\cong \ker(H_r(X_{t_{i+1/2}})\rightarrow H_r(X_{t_k,t_{i+1/2}}))\\
   & \ker(H_r(X_s)\rightarrow H_r(X_{s,t_j}))\cap\ker(H_r(X_s)\rightarrow H_r(X_{t_k,s}))\\
\cong & \ker(H_r(X_{t_{i+1/2}})\rightarrow H_r(X_{t_{i+1/2},t_j}))\cap\ker(H_r(X_{t_{i+1/2}})\rightarrow H_r(X_{t_k,t_{i+1/2}}))
\end{array}
$$

Combined with $(\ast)$, the isomorphisms follow as stated.

\end{proof}

From {Proposition \ref{prop13}}, we can see that relevant level
persistent numbers $l_r(s)$, $l_r^+(s;t)$,
$l_r^-(s;u)$ and $e_r(s;t,u)$ for $t_1\leq u \leq s \leq t \leq t_N$ are determined by
$l_r(t_{(i+1)/2}) $, $l^+_r(t_{(i+1)/2}; t_j) $, $l^-_r(t_{(i+1)/2}; t_k) $
and $e_r(t_{(i+1)/2}; t_j, t_k)$
for $1\leq i\leq 2N - 1$, $1\leq k \leq (i+1)/2 \leq j \leq N$.

\begin{prop}\label{prop14}Let
$$
\xymatrix{
V_1 & V_2 \ar[l]_{f_1} & \cdots \ar[l]_{f_2} & V_i \ar[l]_{f_{i-1}} \ar[r]^{f_i} & V_{i+1} \ar[r]^{f_{i+1}} & \cdots \ar[r]^{f_{N-1}} & V_N
}
$$
be a diagram of vector spaces $V_1,\cdots,V_N$ over a field $\kappa$.

Denote
$$
g_k = \left \{
\begin{array}{ll}
f_k\circ\cdots\circ f_{i-2}\circ f_{i-1} & k<i \\
id & k=i \\
f_k\circ\cdots\circ f_{i+1}\circ f_i & k>i
\end{array}
\right.
$$
$$
W_k = img(g_k:V_i\rightarrow V_k)
$$

One can always choose a basis $\{v_1,\cdots,v_n\}$ of $V_i$
such that condition $(P_1)$ below is satisfied for $1\leq k \leq N$.

\textbf{Condition}$(P_1)$: Nonzero elements of $\{g_k(v_1),\cdots,g_k(v_n)\}$
is a basis of $W_k$.

\end{prop}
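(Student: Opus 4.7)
The plan is to translate condition $(P_1)$ into a statement about kernels of the $g_k$'s, and then reduce to a standard lemma on bases simultaneously adapted to two chains of subspaces. Observe first that $(P_1)$ for a given $k$ is equivalent to the requirement that $\{v_j : v_j \in \ker(g_k)\}$ be a basis of $\ker(g_k)$: the complementary $v_j$'s then span a subspace complementary to $\ker(g_k)$ in $V_i$, so $g_k$ carries them bijectively onto a basis of $W_k=\mathrm{img}(g_k)$, and conversely. Hence it suffices to produce a basis of $V_i$ simultaneously \emph{adapted} to all the subspaces $\ker(g_k)$, $1 \le k \le N$.

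The key structural observation is that the identities $g_{k-1} = f_{k-1}\circ g_k$ (for $k\le i$) and $g_{k+1} = f_k\circ g_k$ (for $k\ge i$) organize the kernels into two increasing chains based at $\ker(g_i)=0$:
$$0 \subseteq \ker(g_{i-1}) \subseteq \cdots \subseteq \ker(g_1), \qquad 0 \subseteq \ker(g_{i+1}) \subseteq \cdots \subseteq \ker(g_N).$$
Adjoin $V_i$ on top of each chain to obtain chains $0 = L_0 \subseteq L_1 \subseteq \cdots \subseteq L_a = V_i$ and $0 = R_0 \subseteq R_1 \subseteq \cdots \subseteq R_b = V_i$; every $\ker(g_k)$ then appears as some $L_m$ or $R_n$, so it suffices to construct a basis of $V_i$ adapted to both chains.

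The main step is the following lemma, which I would prove by the standard double-filtration argument: for each pair $(m,n)$ choose a subspace $W_{m,n}$ complementary to $L_{m-1}\cap R_n + L_m\cap R_{n-1}$ inside $L_m\cap R_n$ (with the convention $L_{-1}=R_{-1}=0$), and verify by induction on $m+n$ the decomposition
$$L_m\cap R_n \;=\; \bigoplus_{(m',n')\le(m,n)} W_{m',n'}.$$
The inductive step uses the identity $(L_{m-1}\cap R_n)\cap(L_m\cap R_{n-1}) = L_{m-1}\cap R_{n-1}$, which is immediate from the chain inclusions. Specializing to $(m,n)=(a,b)$ decomposes $V_i$ itself, and the union of bases of the pieces $W_{m,n}$ gives a basis of $V_i$ whose portion lying in any $L_m$ (resp.\ $R_n$)---hence in any $\ker(g_k)$---is a basis of that subspace. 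By the reduction in the first paragraph, this basis satisfies $(P_1)$.

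The only point requiring care is the modular identity above, which is the engine of the decomposition; however, since both filtrations are chains, it is trivial (the intersection of $L_{m-1}\cap R_n$ with $L_m\cap R_{n-1}$ lies in $L_{m-1}\cap R_{n-1}$ via $L_{m-1}\subseteq L_m$ and $R_{n-1}\subseteq R_n$, and the reverse inclusion is immediate), so no distributivity pathologies arise. This is exactly where the hypothesis that $V_i$ sits at a single ``turning point'' of the diagram is used: with two or more turning points, the analogous statement becomes Gabriel's classification for type $A_n$ quiver representations and is considerably more delicate.
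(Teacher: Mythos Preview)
Your proof is correct and takes a genuinely different route from the paper's.

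The paper proceeds algorithmically: it first modifies an arbitrary basis step-by-step to satisfy $(P_1)$ for $k\ge i$ (the right side), separately obtains a basis satisfying $(P_1)$ for $k\le i$ (the left side), then orders the left-adapted basis by ``negative life'' and argues that the column operations used for the right side, when applied in this order, only subtract shorter-lived elements from longer-lived ones and therefore preserve left-adaptation. Your approach instead extracts the structural content directly: you reformulate $(P_1)$ as adaptation to $\ker(g_k)$, observe that these kernels form two nested chains in $V_i$, and produce a bigraded decomposition $V_i=\bigoplus_{m,n} W_{m,n}$ compatible with both chains via the modular identity $(L_{m-1}\cap R_n)\cap(L_m\cap R_{n-1})=L_{m-1}\cap R_{n-1}$. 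This is cleaner and makes transparent why a single turning point is what makes the problem easy (two independent flags always admit a common adapted basis), whereas the paper's argument is closer to an actual algorithm and mirrors what one would implement. Your remark that multiple turning points would require Gabriel's theorem for type $A$ quivers is exactly right and nicely situates the result.
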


\begin{proof}
First we obtain a base $\{ y_1,\cdots, y_n\}$ of $V_i$ such that 
$P_1$ is satisfied for all $k$, $i\leq k \leq N$ and a base
$\{ z_1, \cdots, z_n \}$ of $V_i$ such that $P_1$ is satisfied for all $k$, $1\leq k \leq i$.

(a) Let $\{x_1,\cdots,x_n\}$ be a basis of $V_i$, then $(P_1)$ is true for $k=i$.
Suppose $(P_1)$ is true for $i\leq k\leq j$, we can modify
the basis so that $(P_1)$ is also true for $i\leq k\leq j+1$.

If $g_{j+1}(x_i)=0$ for $1\leq i\leq n$, then $(P_1)$ is satisfied for $i\leq k\leq j+1$;
otherwise, there exist $1\leq n_1<n_2<\cdots<n_r\leq n$ such that
$g_{j+1}(x_{n_1}),\cdots,g_{j+1}(x_{n_r})$ are nonzero and $g_{j+1}(x_l)=0$ for
$l\notin\{n_1,\cdots,n_r\}$.

If $r=\dim(W_{j+1})$, then $\{g_{j+1}(x_{n_1}),\cdots,g_{j+1}(x_{n_r})\}$ forms
a basis of $W_{j+1}$; otherwise, there exist $2\leq r_1\leq r$
such that $g_{j+1}(x_{n_1}),\cdots,g_{j+1}(x_{n_{r_1-1}})$ are linearly independent,
but $\displaystyle g_{j+1}(x_{n_{r_1}}) = \sum_{l=1}^{r_1-1} a_l g_{j+1}(x_{n_l})$ for some $a_l\in \kappa$.

Let $x'_{n_{r_1}} = x_{n_{r_1}} - \displaystyle \sum_{l=1}^{r_1-1} a_l x_{n_l}$.
Replace $x_{n_r}$ by $x'_{n_{r_1}}$, $\{x_1,\cdots,x'_{n_{r_1}},\cdots,x_n\}$
is still a basis of $V_i$.
Since $g_{j+1}(x_{n_1}),\cdots,g_{j+1}(x_{n_{r_1}})$ are nonzero,
$g_k(x_{n_1}),\cdots,g_k(x_{n_{r_1}})$ are nonzero for $i\leq k\leq j$.
Since $(P_1)$ is true for $\{x_1,\cdots,x_n\}$ and $i\leq k\leq j$,
$g_k(x_{n_1}),\cdots,g_k(x_{n_{r_1}})$ are linearly independent for $i\leq k\leq j$.
Hence $g_k(x'_{n_{r_1}})\neq0$ for $i\leq k\leq j$.
Replace $g_k(x_{n_{r_1}})$ by $g_k(x'_{n_{r_1}})$, $\{g_k(x_1),\cdots,g_k(x_{n_{r_1}}),\cdots,g_k(x_n)\}$
and $\{g_k(x_1),\cdots,g_k(x'_{n_{r_1}}),\cdots,g_k(x_n)\}$ have the same number of nonzero elements
for $i\leq k\leq j$, since $g_k(x_{n_{r_1}})$ and $g_k(x'_{n_{r_1}})$ are nonzero for $i\leq k\leq j$.
In addition, those nonzero elements can be linearly represented by each other.
So nonzero elements of $\{g_k(x_1),\cdots,g_k(x'_{n_{r_1}}),\cdots,g_k(x_n)\}$ is still a basis
of $W_k$ for $i\leq k\leq j$.
Hence $(P_1)$ is still true for $\{x_1,\cdots,x'_{n_{r_1}},\cdots,x_n\}$ and $i\leq k\leq j$,
but the number of nonzero elements of $\{g_{j+1}(x_1),\cdots,g_{j+1}(x'_{n_{r_1}}),\cdots,g_{j+1}(x_n)\}$
is one less than the number of nonzero elements of $\{g_{j+1}(x_1),\cdots,g_{j+1}(x'_{n_{r_1}}),\cdots,g_{j+1}(x_n)\}$
since $g_{j+1}(x'_{n_{r_1}}) = 0$ by definition and $g_{j+1}(x_{n_{r_1}}) \neq 0$.

By induction on the number of nonzero elements of $\{g_{j+1}(x_1),\cdots,g_{j+1}(x_{n_{r_1}}),\cdots,$ $g_{j+1}(x_n)\}$,
there exist a basis $\{x'_1,\cdots,x'_n\}$ of $V_i$ such that $(P_1)$ is true for $i\leq k\leq j+1$.

Then by induction on $j$, we get a basis $\{y_1,\cdots,y_n\}$ of $V_i$ such that $(P_1)$ is true for $i\leq k\leq N$.

Similarly, we can get a basis $\{z_1,\cdots,z_n\}$ of $V_i$ such that $(P_1)$ is true for $1\leq k\leq i$.
To finalize the proof we need a definition.

(b)
 \textbf{Definition: Negative Life}
Given $z_l\in V_i$ as in the basis above, if $g_k(z_l)\neq 0$ and
$g_{k-1}(z_l)=0$ for some $2\leq k\leq i$ or $g_k(z_l)\neq 0$ for $k=1$,
define the \emph{negative life} of $z_l$ to be $i-k+1$.

Order $z_1,\cdots,z_n$ according to their negative life increasingly.

\textbf{Claim:} For any given $1\leq l\leq n-1$, if we replace
$z_{l+1}$ by $z'_{l+1} = z_{l+1}-\displaystyle \sum_{m=1}^l b_m z_m$
for any given $b_m\in\kappa$, $\{z_1,\cdots,z'_{l+1},\cdots,z_n\}$
still satisfies $(P_1)$ for $1\leq k\leq i$.

Proof of the claim:

Since $z_1,\cdots,z_n$ are ordered according to their negative life increasingly,
the negative life of $z_m$ is less than the negative life of $z_{l+1}$ for
$1\leq m\leq l$. Suppose $g_{k_1}(z_{l+1})\neq 0$ and
$g_{k_1-1}(z_{l+1})=0$ for some $2\leq k_1\leq i$, then we must have
$g_{k}(z_m)=0$ for all $1\leq k\leq k_1-1$ and $1\leq m\leq l+1$.
Therefore $g_k(z_{l+1})=g_k(z'_{l+1})=0$ for $1\leq k\leq k_1-1$.
Hence $\{g_k(z_1),\cdots,g_k(z'_{l+1}),\cdots,g_k(z_n)\}$ is the same as
$\{g_k(z_1),\cdots,g_k(z_n)\}$ for $1\leq k\leq k_1-1$.
So $\{z_1,\cdots,z'_{l+1},\cdots,z_n\}$
satisfies $(P_1)$ for $1\leq k\leq k_1-1$.

Given $k_1\leq k'\leq i$, we have $g_{k'}(z_{l+1})\neq 0$.
There exist $1\leq l_1\leq l$, such that $g_{k'}(z_m)=0$ for $1\leq m \leq l_1$ and
 $g_{k'}(z_m)\neq 0$ for $l_1+1\leq m\leq l+1$.
 Hence $g_{k'}(z'_{l+1}) = g_{k'}(z_{l+1})-\displaystyle \sum_{m=l_1+1}^l b_m g_{k'}(z_m)$.
Therefore nonzero elements of $\{g_{k'}(z_1),\cdots,g_{k'}(z_n)\}$ and
$\{g_{k'}(z_1),\cdots,g_{k'}(z'_{l+_1}),\cdots,g_{k'}(z_n)\}$ can be linearly represented
by each other.
Since $z_1,\cdots,z_n$ satisfies $(P_1)$ for $1\leq k\leq i$, nonzero elements of $\{g_{k'}(z_1),\cdots,g_{k'}(z_n)\}$
form a basis of $W_{k'}$.
Then nonzero elements of $\{g_{k'}(z_1),\cdots,g_{k'}(z'_{l+_1}),\cdots,g_{k'}(z_n)\}$ also
form a basis of $W_{k'}$.
So $\{z_1,\cdots,z'_{l+1},\cdots,z_n\}$
satisfies $(P_1)$ for $k_1\leq k\leq i$.

The claim is proved.

Notice that in part(a), we replace
 $x_{n_{r_1}}$ by $x'_{n_{r_1}} = x_{n_{r_1}} - \displaystyle \sum_{l=1}^{r_1-1} a_l x_{n_l}$.
We only need this type of replacement to modify a basis $\{x_1,\cdots,x_n\}$ into a basis
$\{y_1,\cdots,y_n\}$ that satisfies $(P_1)$ for $i\leq k\leq N$.
And the replacement
 $x_{n_{r_1}}$ by $x'_{n_{r_1}} = x_{n_{r_1}} - \displaystyle \sum_{l=1}^{r_1-1} a_l x_{n_l}$
  is only a special case of the replacement in the above claim.
  Therefore, start from a basis $\{z_1,\cdots,z_n\}$ of $V_i$ such that $(P_1)$ is true for
 $1\leq k\leq i$ which is ordered according to their negative life increasingly, we can
 use the same process in part(a) to modify this basis until $(P_1)$ is satisfied for $1\leq k\leq N$.

\end{proof}

We use the concept ``positive bar code'' because it is consistent with the sub-level persistence
associated to the space $X_{[t_{(i+1)/2},\infty)}$ and the map $f|_{X_{[t_{(i+1)/2},\infty)}}$.
Similarly, we use the concept ``negative bar code'' because it is consistent with the sub-level persistence
associated to the space $X_{(-\infty,t_{(i+1)/2}]}$ and the map $f|_{X_{(-\infty,t_{(i+1)/2}]}}$.

Let $d_{(i+1)/2}$ denote $\dim(H_r(X_{t_{(i+1)/2}}))$. Choose a basis $\{c_1,\cdots,c_{d_{(i+1)/2}}\}$ of
$H_r(X_{t_{(i+1)/2}})$ such that for any $k<(i+1)/2$ and $j>(i+1)/2$, nonzero elements
of $\{\sigma_k(c_1),\cdots,\sigma_k(c_{d_{(i+1)/2}})\}$ is always a basis of
$\text{img}(\sigma_k:H_r(X_{t_{(i+1)/2}})\rightarrow H_r(X_{t_k,t_{(i+1)/2}}))$ and nonzero elements of
$\{\tau_j(c_1),\cdots,\tau_j(c_{d_{(i+1)/2}})\}$ is always a basis of
$\text{img}(\tau_j:H_r(X_{t_{(i+1)/2}})\rightarrow H_r(X_{t_{(i+1)/2},t_j}))$, where $\sigma_k$ and
$\tau_j$ are induced by inclusion. This is possible by {Proposition \ref{prop14}}.

\begin{defn}Positive and Negative Bar Codes

(i) The set of $r$-positive bar codes $B_r^+(f;t_{(i+1)/2})$ of $X_{t_{(i+1)/2},\infty}$
is a collection of $d_{(i+1)/2}$ intervals described as follows. For any generator
$c_a\in\{c_1,\cdots,c_{d_{(i+1)/2}}\}$, if $c_a$ dies upward at $t_{j'}(t_{j'}>t_{(i+1)/2})$,
then we have an interval $\langle t_{(i+1)/2},t_{j'})$ in $B_r^+(f;t_{(i+1)/2})$;
otherwise, if $c_a$ never dies upward, we have an interval
$\langle t_{(i+1)/2},\infty)$ in $B_r^+(f;t_{(i+1)/2})$.

(ii) The set of $r$-negative bar codes $B_r^-(f;t_{(i+1)/2})$ of $X_{-\infty,t_{(i+1)/2}}$
is a collection of $d_{(i+1)/2}$ intervals described as follows. For any generator
$c_b\in\{c_1,\cdots,c_{d_{(i+1)/2}}\}$, if $c_b$ dies downward at $t_{k'}(t_{k'}<t_{(i+1)/2})$,
then we have an interval $(t_{k'},t_{(i+1)/2}\rangle$ in $B_r^-(f;t_{(i+1)/2})$;
otherwise, if $c_b$ never dies downward, we have an interval
$(-\infty,t_{(i+1)/2}\rangle$ in $B_r^-(f;t_{(i+1)/2})$.

(iii) Each generator $c_k$ correspond to an interval $(a_k,t_{(i+1)/2}\rangle$ in the negative bar code $B^-_r(f,t_{(i+1)/2})$
and an interval $\langle t_{(i+1)/2},b_k)$ in the positive bar code $B^+_r(f,t_{(i+1)/2})$.
 Therefore each generator $c_k$ defines an interval $(a_k,b_k)$.
Define $B_r(f,t_{(i+1)/2})$ as the collection of the above intervals $\{(a_k,b_k)|1\leq k\leq d_{(i+1)/2}\}$.

\end{defn}

We have introduced these additional numbers given the fact they can be efficiently calculated as
indicated in {subsection \ref{cptpnbarcode}} and they are equivalent to the relevant persistent numbers
$l_r(t_{(i+1)/2}) $, $l^+_r(t_{(i+1)/2}; t_j) $, $l^-_r(t_{(i+1)/2}; t_k) $
and $e_r(t_{(i+1)/2}; t_j, t_k)$ as indicated below.

We can get the relevant persistent numbers $l_r(t_{(i+1)/2}) $, $l^+_r(t_{(i+1)/2}; t_j) $, $l^-_r(t_{(i+1)/2}; t_k) $
and $e_r(t_{(i+1)/2};$ $t_j, t_k)$ from the positive and negative bar codes:
\begin{equation} \label{341}
l_r(t_{(i+1)/2}) = \dim H_r(X_{t_{(i+1)/2}}) = |B^+_r(f,t_{(i+1)/2})| = |B^-_r(f,t_{(i+1)/2})|
\end{equation}
\begin{equation} \label{342}
l^+_r(t_{(i+1)/2}; t_j) = l_r(t_{(i+1)/2}) - \sharp\{\langle t_{(i+1)/2}, b)\in B^+_r(f, t_{(i+1)/2}) \mid b>t_j\}
\end{equation}
\begin{equation} \label{343}
l^-_r(t_{(i+1)/2}; t_k) = l_r(t_{(i+1)/2}) - \sharp\{(a,t_{(i+1)/2}\rangle\in B^-_r(f, t_{(i+1)/2}) \mid a<t_k\}
\end{equation}
\begin{equation} \label{344}
\begin{array}{rl}
   & e_r(t_{(i+1)/2}; t_j, t_k) \\
 = & l_r(t_{(i+1)/2}) - (l_r(t_{(i+1)/2}) - l^+_r(t_{(i+1)/2}; t_j)) - (l_r(t_{(i+1)/2}) - l^-_r(t_{(i+1)/2}; t_k)) \\
 & + \sharp\{(a,b)\in B_r(f,t_{(i+1)/2}) \mid a<t_k, b>t_j\} \\
 = & l^+_r(t_{(i+1)/2}; t_j) + l^-_r(t_{(i+1)/2}; t_k) - l_r(t_{(i+1)/2})
 + \sharp\{(a,b)\in B_r(f,t_{(i+1)/2}) \mid a<t_k, b>t_j\}
\end{array}
\end{equation}

We can also get the positive and negative bar codes from the relevant persistent numbers
$l_r(t_{(i+1)/2}) $, $l^+_r(t_{(i+1)/2}; t_j) $, $l^-_r(t_{(i+1)/2}; t_k) $
and $e_r(t_{(i+1)/2}; t_j, t_k)$:
$$
\sharp\{\langle t_{(i+1)/2},t_j)\in B^+_r(f,t_{(i+1)/2})\} = l^+_r(t_{(i+1)/2}; t_j) - l^+_r(t_{(i+1)/2}; t_{j-1})
$$
$$
\sharp\{(t_k, t_{(i+1)/2}\rangle \in B^-_r(f, t_{(i+1)/2})\} = l^-_r(t_{(i+1)/2}; t_k) - l^-_r(t_{(i+1)/2}; t_{k+1})
$$
$$
\begin{array}{rl}
\sharp \{ (t_k,t_j) \in B_r(f, t_{(i+1)/2}) \} = & e(t_{(i+1)/2};t_j,t_k) - e(t_{(i+1)/2};t_j,t_{k+1}) - e(t_{(i+1)/2}; t_{j-1}, t_k) \\
 & + e(t_{(i+1)/2}; t_{j-1}, t_{k+1})
 \end{array}
$$

Therefore, positive and negative bar codes and relevant persistent numbers
$l_r(t_{(i+1)/2}) $, $l^+_r(t_{(i+1)/2}; t_j) $, $l^-_r(t_{(i+1)/2}; t_k) $
and $e_r(t_{(i+1)/2}; t_j, t_k)$ are equivalent.

\subsection{Computing Positive and Negative Bar codes
of Generic $\mathbb{R}$-valued Linear Maps
 on Simplicial Complexes} \label{cptpnbarcode}

Let $X$ be a simplicial complex with $N$ vertices.
Let $f:|X|\rightarrow \mathbb{R}$ be a generic linear map
on $X$, with $t_1<\cdots<t_N$ being values on vertices $x_1,\cdots, x_N$ of $X$.
According to {Method 2} in {subsection \ref{frm}},
we need to calculate $l_r(t_i)$, $l^+_r(t_i;t_j)$, $l^-_r(t_i;t_j)$
and $e_r(t_i;t_k,t_j)$ for $1\leq k\leq i \leq j\leq N$.
So we only need to calculate $B_r^+(|X|;t_i)$ and $B_r^-(|X|;t_i)$ for $1\leq i \leq N$,
then plug in (\ref{341})-(\ref{344}).

We'll describe how to compute
positive bar codes in details and negative bar codes
briefly at the end.

In order to get $B_r^+(|X|;t_i)$, we have to consider the filtration below
$$
|X|_{t_i}\subseteq |X|_{t_i,t_{i+1}}\subseteq \cdots\subseteq |X|_{t_i,t_N}
$$
(see {Definition 4.1 and 4.2}, \cite{BDD}).

It is not economical to compute the positive bar codes from this
filtration directly. Instead, we'll consider an equivalent
filtration.

Suppose that the vertices are ordered from $1$ to $N$.

\begin{defn}\label{def21}
Given a $j$-simplex $\sigma=[x_{n_0},\cdots,x_{n_j}]$ of $X$,   $1\leq n_0<\cdots< n_j\leq N$,
define $t_{min}(|\sigma|)=t_{n_0}$, $t_{max}(|\sigma|)=t_{n_j}$.
For $t_{n_0} < s < t_{n_j}$, let $|\sigma|_s$ denote $|\sigma| \cap |X|_s$,  
$|\sigma|_{s,\infty}$ denote $|\sigma| \cap |X|_{s,\infty}$, $|\sigma|_{-\infty,s}$ denote $|\sigma| \cap |X|_{-\infty,s}$.

Extend $t_{min}$ and $t_{max}$ to $|\sigma|_s$, $|\sigma|_{s,\infty}$ and
$|\sigma|_{-\infty,s}$, we have
$$
\begin{array}{ll}
t_{min}(|\sigma|_s)=t_{max}(|\sigma|_s)=s,&\\
t_{min}(|\sigma|_{s,\infty})=s,&t_{max}(|\sigma|_{s,\infty})=t_{n_j},\\
t_{min}(|\sigma|_{-\infty,s})=t_{n_0},&t_{max}(|\sigma|_{-\infty,s})=s.
\end{array}
$$

\end{defn}

\textbf{Note:}
It is easy to see that all cells in $X_{t_i,\infty}$
are  $\{ |\sigma|  \mid    t_{min}(|\sigma|)\geq t_i \}$,
$\{ |\sigma|_{t_i} \mid      t_{min}(|\sigma|)<t_i<t_{max}(|\sigma|) \}$
or $\{ |\sigma|_{t_i,\infty} \mid      t_{min}(|\sigma|)<t_i<t_{max}(|\sigma|)\}$.

\begin{defn} \label{Yst}
Given $s, t$ such that $t_1\leq s<t\leq t_N$, define $Y_{s,t}$ to be a
cell complex consist of cells in $X_{s,\infty}$ that are
contained in the space $X_{-\infty,t}$.
In another words, the cells in $Y_{s,t}$ are the cells $\mathbf{c}$
in $X_{s,\infty}$ which satisfy $t_{max}(\mathbf{c})\leq t$.

\end{defn}

\begin{lem}\label{lemma21}
Let $\sigma=[x_0,\cdots,x_n]$ be an $n$-simplex and
$f:|\sigma|\rightarrow \mathbb{R}$ be a generic linear map on it.
Suppose
$f(x_0)<f(x_1)<\cdots<f(x_n)$,
let $t_j$ denote $f(x_j)$, $0\leq j\leq n$.
Given $t_0<s<t_n$, suppose $t_i\leq s<t_{i+1}$ for some $0\leq i\leq n-1$.
Let $|\sigma|_{-\infty,s}$ denote $f^{-1}(-\infty,s]$, then
$|\sigma|_{-\infty,s}$ retracts by deformation onto $|[x_0,\cdots,x_i]|$.

\end{lem}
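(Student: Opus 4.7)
The plan is to exhibit an explicit strong deformation retraction using barycentric coordinates and the linearity of $f$. Any point $p\in|\sigma|$ has a unique representation $p=\sum_{j=0}^{n}\lambda_j x_j$ with $\lambda_j\geq 0$ and $\sum_{j=0}^{n}\lambda_j=1$, and linearity of $f$ gives $f(p)=\sum_{j=0}^n \lambda_j t_j$. The face $|[x_0,\ldots,x_i]|$ is precisely the set of points with $\lambda_j=0$ for all $j>i$.

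First I would define the retraction $r:|\sigma|_{-\infty,s}\to |[x_0,\ldots,x_i]|$ by pushing all the mass concentrated at the ``upper'' vertices down onto $x_i$:
\[
r(p)=\sum_{j=0}^{i-1}\lambda_j x_j+\Big(\lambda_i+\sum_{k=i+1}^{n}\lambda_k\Big)x_i.
\]
By construction $r(p)$ has nonnegative barycentric coordinates that still sum to $1$ and vanish for indices greater than $i$, so $r(p)\in |[x_0,\ldots,x_i]|$. Moreover $r$ fixes $|[x_0,\ldots,x_i]|$ pointwise, since if $\lambda_j=0$ for all $j>i$, then the second summand collapses to $\lambda_i x_i$ and $r(p)=p$.

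Next I would verify that $r$ actually lands in the sub-level set and that the straight-line homotopy to it stays there. A direct computation using the hypothesis $t_i\leq s$ gives
\[
f(r(p))=\sum_{j=0}^{i-1}\lambda_j t_j+\Big(\lambda_i+\sum_{k=i+1}^{n}\lambda_k\Big)t_i=t_i+\sum_{j=0}^{i-1}\lambda_j(t_j-t_i)\leq t_i\leq s,
\]
since $t_j\leq t_i$ for $j<i$. Now set $H(p,\tau)=(1-\tau)p+\tau\, r(p)$ for $\tau\in[0,1]$. Convexity of $|\sigma|$ keeps $H(p,\tau)\in|\sigma|$, and linearity of $f$ yields $f(H(p,\tau))=(1-\tau)f(p)+\tau f(r(p))\leq s$, so $H$ actually takes values in $|\sigma|_{-\infty,s}$. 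The map $H$ is continuous, $H(\cdot,0)=\mathrm{id}$, $H(\cdot,1)=r$, and $H(p,\tau)=p$ for all $\tau$ whenever $p\in |[x_0,\ldots,x_i]|$, which is the defining property of a strong deformation retraction.

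There is no serious obstacle here; the statement is essentially a consequence of the convexity of the simplex and the linearity of $f$. The only thing worth noting is the role of the two-sided hypothesis $t_i\leq s<t_{i+1}$: the inequality $t_i\leq s$ is what makes the computation $f(r(p))\leq s$ work, while $s<t_{i+1}$ is what guarantees that $|[x_0,\ldots,x_i]|$ is the correct (maximal) face of $\sigma$ sitting inside $|\sigma|_{-\infty,s}$ --- if $s$ were at least $t_{i+1}$ the identical argument would instead retract onto $|[x_0,\ldots,x_{i+1}]|$.
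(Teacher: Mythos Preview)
Your proof is correct and follows the same overall strategy as the paper: write points in barycentric coordinates, define an explicit retraction onto the lower face, and connect to it by the straight-line homotopy, using linearity of $f$ to keep everything inside the sub-level set. The only difference is the particular retraction chosen. The paper projects radially onto the face by normalizing the first $i+1$ coordinates,
\[
g_1\Big(\sum_{j=0}^n a_j x_j\Big)=\frac{1}{\sum_{j=0}^i a_j}\sum_{j=0}^i a_j x_j,
\]
which requires first checking that $\sum_{j=0}^i a_j>0$ on $|\sigma|_{-\infty,s}$. Your map instead transfers the upper mass onto the single vertex $x_i$, which avoids any division and makes the inequality $f(r(p))\leq t_i\leq s$ immediate. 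Both choices work; yours is marginally more elementary, while the paper's is the ``canonical'' radial retraction compatible across faces.
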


\begin{proof}
Given $x\in|\sigma|_{-\infty,s}$, write $x$ as $\displaystyle \sum_{j=0}^n a_j x_j$,
where $0\leq a_j\leq 1$,$\displaystyle \sum_{j=0}^n a_j=1$.
We have $\displaystyle \sum_{j=0}^i a_j>0$, otherwise $a_j=0$,$0\leq j\leq i$,
$x=\displaystyle\sum_{j=i+1}^n a_j x_j$, $f(x)=\displaystyle\sum_{j=i+1}^n a_j t_j>s$, since
$t_j>s$ for $j\geq i+1$ and $\displaystyle \sum_{j=i+1}^n a_j=1$.
This is a contradiction with $x\in|\sigma|_{-\infty,s}$.

Define $g_0:|\sigma|_{-\infty,s}\rightarrow |\sigma|_{-\infty,s}$ to be the identity map.

Define
$$
\begin{array}{rcl}
g_1:|\sigma|_{-\infty,s} & \rightarrow & |\sigma|_{-\infty,s} \\
x=\displaystyle\sum_{j=0}^n a_j x_j & \mapsto & \frac{1}{\displaystyle \sum_{j=0}^i a_j}\displaystyle \sum_{j=0}^i a_j x_j \\
\end{array}
$$
$g_1$ is a well-defined continuous map and
$g_1|_{|[x_0,\cdots,x_i]|}$ is the identity.

Define
$$
\begin{array}{rcl}
G:|\sigma|_{-\infty,s}\times I & \rightarrow & |\sigma|_{-\infty,s} \\
(x,\tau) & \mapsto & (1-\tau)g_0(x)+\tau g_1(x)
\end{array}
$$
$G$ is a deformation retraction from $|\sigma|_{-\infty,s}$ onto $|[x_0,\cdots,x_i]|$
and it is canonical.

\end{proof}

\begin{figure}[h!]
\centering
  \includegraphics[scale=.5]{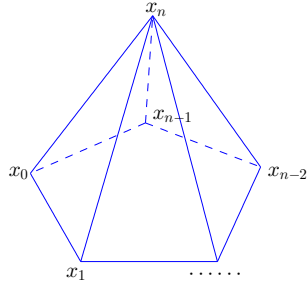}
  \caption{A cone with apex $x_n$ and base a convex cell with vertices $x_0,\cdots,x_{n-1}$.} \label {figure451}
\end{figure}

\begin{lem}\label{lemma22}
Let $\sigma$ be a cone with apex $x_n$
and base a convex cell
with vertices $x_0,\cdots,x_{n-1}$.
See {Figure \ref{figure451}} above.
Let $f$ be a linear map on $|\sigma|$ such that
$f(x_0)\leq f(x_1)\leq \cdots\leq f(x_{n-1})<f(x_n)$.
For $f(x_{n-1})\leq s < f(x_n)$, let $|\sigma|_{-\infty,s}$ denote $f^{-1}(-\infty,s]$,  $|\sigma|_{-\infty,s}$ retracts by
deformation onto the base, again by a canonical retraction.

\end{lem}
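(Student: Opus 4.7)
The strategy is to mimic the proof of Lemma \ref{lemma21}, replacing the barycentric decomposition on a simplex with the canonical cone decomposition of $\sigma$: every point $x \in |\sigma|$ can be written uniquely as $x = (1-t) y + t x_n$ with $y$ in the base and $t \in [0,1]$ (uniqueness holding whenever $x \neq x_n$, since the apex $x_n$ is not in the affine hull of the base).

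First I would observe that by hypothesis $f(x_n) > s$, so $x_n \notin |\sigma|_{-\infty,s}$; consequently every $x \in |\sigma|_{-\infty,s}$ has a unique representation $x = (1-t(x))\, r(x) + t(x)\, x_n$ with $r(x)$ in the base and $t(x) \in [0,1)$. The map $r \colon |\sigma|_{-\infty,s} \to \text{base}$ defined this way is continuous, because the affine isomorphism $(1-t) y + t x_n \mapsto (t,y)$ is continuous on $|\sigma| \setminus \{x_n\}$, and $r$ restricts to the identity on the base.

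Next I would define the straight-line homotopy
$$
G \colon |\sigma|_{-\infty,s} \times I \longrightarrow |\sigma|_{-\infty,s}, \qquad G(x,\tau) = (1-\tau)\, x + \tau\, r(x),
$$
so that $G(\cdot,0) = \mathrm{id}$, $G(\cdot,1) = r$, and $G(y,\tau) = y$ for every $y$ in the base and every $\tau$. The main verification is that $G$ has image inside $|\sigma|_{-\infty,s}$, i.e.\ that $f(G(x,\tau)) \leq s$. Here I use linearity of $f$ together with the fact that the base itself lies in $|\sigma|_{-\infty,s}$: on the base, $f$ is linear and attains its maximum at $x_{n-1}$, where $f(x_{n-1}) \leq s$ by hypothesis. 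Hence $f(r(x)) \leq s$ and $f(x) \leq s$, so $f(G(x,\tau)) = (1-\tau) f(x) + \tau f(r(x)) \leq s$.

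The only genuine technical step is the uniqueness and continuity of the cone decomposition, which I expect to be the main obstacle only to the extent that one must invoke explicitly that the apex is affinely independent of the base (otherwise $\sigma$ would not be a genuine cone). I would also emphasize that $G$ is canonical, depending only on $\sigma$, $x_n$, the base and $s$; this canonicity is what will later allow these retractions to be glued across cells of a polytopal subdivision, as in the preceding lemma.
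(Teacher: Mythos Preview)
Your proposal is correct and follows essentially the same approach as the paper: both use the unique cone decomposition $x=(1-t)y+tx_n$ to define the radial retraction onto the base and then the straight-line homotopy $G(x,\tau)=(1-\tau)x+\tau r(x)$. If anything, you supply more detail than the paper does---you explicitly verify that $G$ lands in $|\sigma|_{-\infty,s}$ via linearity of $f$ and the bound $f|_{\text{base}}\leq f(x_{n-1})\leq s$, whereas the paper leaves this implicit.
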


\begin{proof}
Since $\sigma$ is a cone, for any point $x\in |\sigma|(x\neq x_n)$,
there exists a unique $x'$ in the base such that $x=(1-a)x'+a x_n$,
for some $0\leq a \leq 1$.

Define $g_0:|\sigma|_{-\infty,s}\rightarrow |\sigma|_{-\infty,s}$ to be the identity map.

Define
$$
\begin{array}{rcl}
g_1: |\sigma|_{-\infty,s} & \rightarrow & |\sigma|_{-\infty,s}\\
x & \mapsto & x'
\end{array}
$$

$g_1$ is a well-defined continuous map and $g_1|_{the\; base}$
is the identity.

Define
$$
\begin{array}{rcl}
G: |\sigma|_{-\infty,s}\times I & \rightarrow & |\sigma|_{-\infty,s}\\
(x,\tau) & \mapsto & (1-\tau)g_0(x)+\tau g_1(x)
\end{array}
$$

G is a deformation retraction from $|\sigma|_{-\infty,s}$ onto the base.

\end{proof}

\begin{figure}[h!]
\centering
$
\begin{array}{c}
\includegraphics[scale=.5]{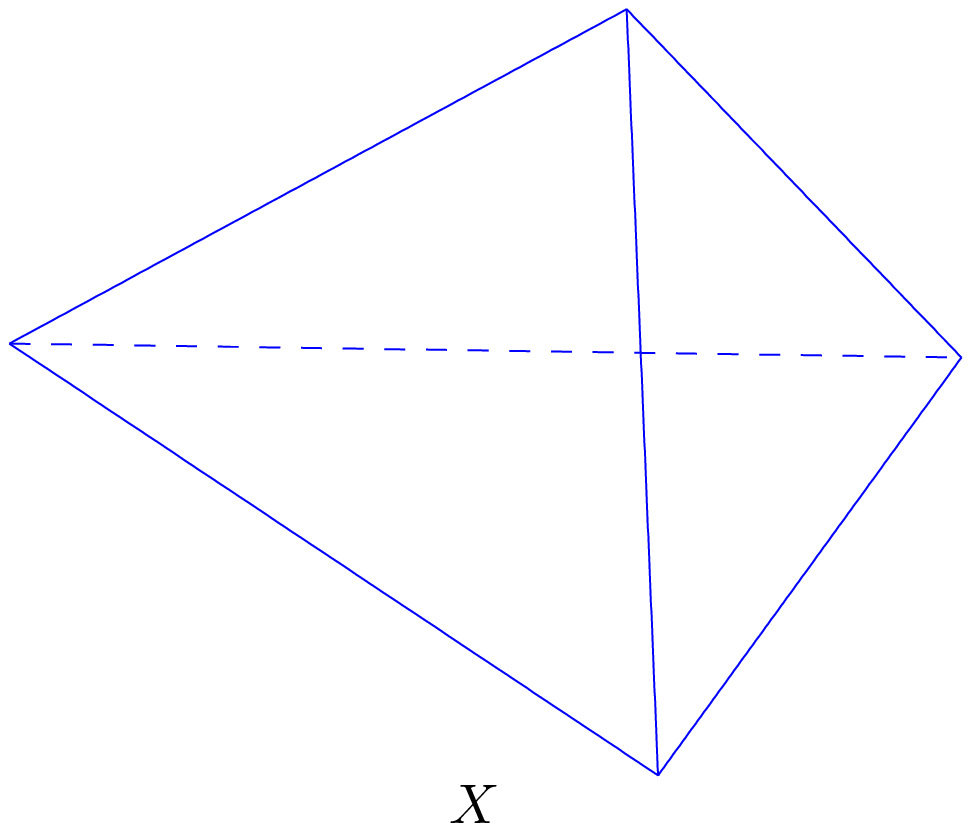}\\
\begin{array}{cc}
\includegraphics[scale=.5]{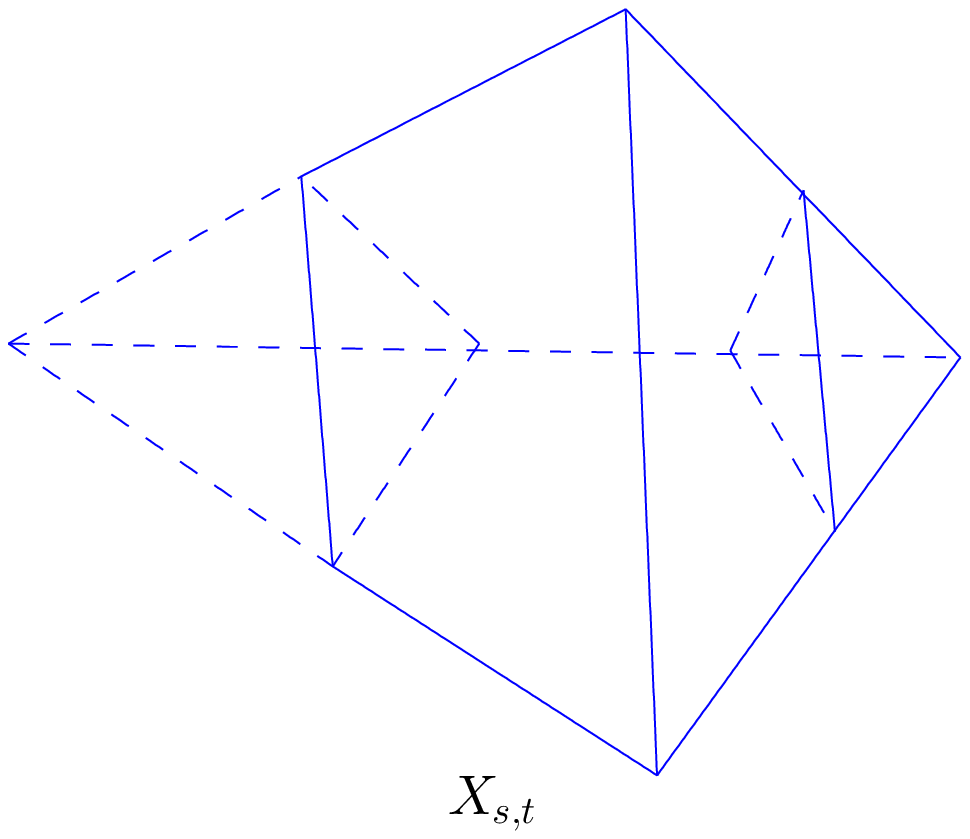} & \includegraphics[scale=.5]{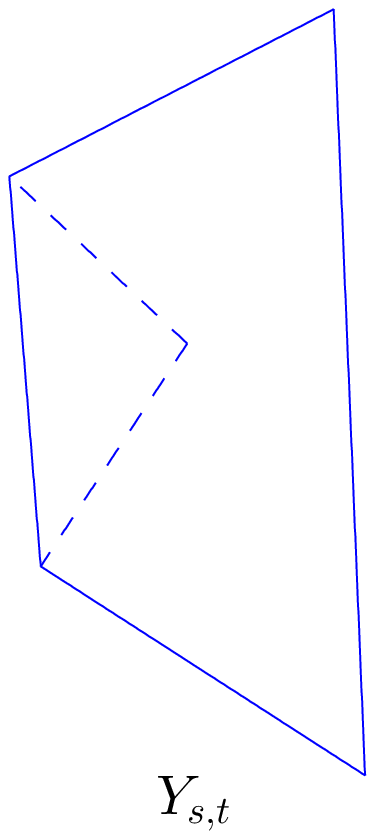}
\end{array}
\end{array}
$
  \caption{$X$, $X_{s,t}$ and $Y_{s,t}$ .} \label{figure452}
\end{figure}

\begin{lem}\label{lemma23}
Let
$X=[x_0,x_1,\cdots,x_n]$ be an $n$-simplex and
$f$ be a generic linear map on $|X|$.
Suppose $f(x_0)<f(x_1)<\cdots<f(x_n)$,
let $t_j$ denote $f(x_j)$, $0\leq j\leq n$.
Given $t_0<s<t<t_n$,
let $Y_{s,t}$ be as in {Definition \ref{Yst}},
then $X_{s,t}$ retracts by deformation onto $Y_{s,t}$.
See {Figure \ref{figure452}} above.
\end{lem}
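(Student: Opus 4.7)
The plan is to construct an explicit canonical deformation retraction using the join structure of the simplex $X$. Partition the vertex set by function values: $A = \{x_k : t_k < s\}$, $B = \{x_k : s < t_k < t\}$, and $C = \{x_k : t_k > t\}$. Since $t_0 < s < t < t_n$, both $A$ and $C$ are nonempty while $B$ may be empty. Then $|X| = |[A \cup B]| * |[C]|$ as a join, and each point $x \in |X|$ with $\gamma(x) := \sum_{k \in C} c_k(x) < 1$ decomposes uniquely as $x = (1-\gamma)x' + \gamma c$ with $x' \in |[A \cup B]|$ and $c \in |[C]|$. Along this decomposition $f(x') \leq t_{\max}(A \cup B) \leq t < f(c)$, so on $X_{s,t} \cap |X|$ one automatically has $\gamma < 1$ and the decomposition is available.

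First I would identify $Y_{s,t} \cap |X|$ as a subspace of $|X|$. Inspecting the three types of cells of $X_{s,\infty}$ listed in the Note after Definition 4.5.1 and imposing $t_{\max} \leq t$ gives: sub-simplices of $[B]$; slices $|\tau|_s$ for every face $\tau$ of $X$ crossing $s$; and upper parts $|\tau|_{s,\infty}$ for $\tau \subseteq A \cup B$ crossing $s$. The union of the sub-simplices and upper parts is exactly $|[A \cup B]| \cap \{f \geq s\}$ (every point in this set lies in the interior of its minimal face, which either lies in $[B]$ or crosses $s$ inside $A \cup B$), while the union of the slices is $|X|_s$. Hence $Y_{s,t} \cap |X| = |X|_s \cup \bigl(|[A \cup B]| \cap \{f \geq s\}\bigr)$.

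The deformation is defined by sliding each point toward its $(A \cup B)$-component along the join segment, stopping either when it reaches $x'$ or when it meets the level $f = s$, whichever happens first. Formally, for $x$ with $\gamma > 0$ set $\gamma_{\text{end}} = 0$ if $f(x') \geq s$, and $\gamma_{\text{end}} = (s - f(x'))/(f(c) - f(x')) \in (0, \gamma]$ otherwise; then let $\gamma_\lambda = (1-\lambda)\gamma + \lambda \gamma_{\text{end}}$ and $H_\lambda(x) = (1-\gamma_\lambda) x' + \gamma_\lambda c$, extended by $H_\lambda(x) = x$ when $\gamma = 0$. Monotonicity of $f$ as a function of $\gamma_\lambda$ along the join segment keeps $H_\lambda(x)$ in $X_{s,t}$, and $H_1(x)$ lands in $|[A \cup B]| \cap \{f \geq s\}$ or in $|X|_s$, both contained in $Y_{s,t}$. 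For $x \in Y_{s,t}$ either $\gamma = 0$ or $f(x) = s$ forces $\gamma = \gamma_{\text{end}}$, so in both cases $H_\lambda(x) = x$; and $H_0 = \mathrm{id}$ by construction.

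The main point that deserves care is the continuity of $H$ across the locus $\{x : f(x') = s\}$ where the two branches of the definition meet. There the formula $(s - f(x'))/(f(c) - f(x'))$ evaluates to $0$, matching the value $\gamma_{\text{end}} = 0$ in the other branch, so $\gamma_{\text{end}}$ is a continuous function of $x$ and $H$ is globally continuous; a parallel check handles the limit $\gamma \to 0$, where $x' \to x$ forces $\gamma_\lambda \to 0$ and $H_\lambda(x) \to x$. The only degeneracy of the decomposition is $\gamma = 1$, i.e.\ $x \in |[C]|$, which forces $f(x) > t$ and hence $x \notin X_{s,t}$, so that degeneracy is never encountered. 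These verifications together yield the desired strong deformation retraction of $X_{s,t} \cap |X|$ onto $Y_{s,t} \cap |X|$.
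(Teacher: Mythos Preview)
Your argument is correct and takes a genuinely different route from the paper's. The paper proceeds inductively: it first retracts $X_{s,t}$ onto $X_{s,t_j}$ (for $t_j \leq t < t_{j+1}$), then views $X_{s,\infty}$ as a union of cones with apex $x_n$ and uses Lemma~\ref{lemma22} to collapse $X_{s,t_{n-1}}$ onto $Y_{s,t_{n-1}}$, and continues peeling off the top vertex one step at a time until reaching $Y_{s,t_j}$. Your approach instead performs the whole collapse in a single move via the join decomposition $|X| = |[A\cup B]| * |[C]|$, sliding each point along its join segment toward the $(A\cup B)$-factor and halting at the level $f=s$ when necessary. This gives an explicit closed-form homotopy and avoids the bookkeeping of the inductive step; the paper's version, on the other hand, is organised so that each stage invokes the already-proved cone lemma, which makes the later generalisation to an arbitrary simplicial complex (Proposition~\ref{prop24}) transparent.

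Two small points worth tightening. First, your partition $A,B,C$ uses strict inequalities, so a vertex with $t_k = s$ or $t_k = t$ would fall through; the paper explicitly allows $s = t_{i+1}$ and $t = t_j$, so you should decide where such a vertex goes (placing $t_k = s$ in $B$ and $t_k = t$ in $B$ keeps the argument intact). Second, the continuity check as $\gamma \to 0$ deserves one more line: $c$ itself need not converge, but $\gamma_\lambda c = (\gamma_\lambda/\gamma)\sum_{k\in C} c_k(x)\,x_k$ does tend to $0$ because the ratio is bounded by $1$ and the sum tends to $0$; you essentially say this, but making the bounded-times-vanishing structure explicit would remove any doubt.
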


\begin{proof}
Suppose $t_i<s\leq t_{i+1}$, $t_j\leq t<t_{j+1}$, $0\leq i\leq j\leq n-1$.
When $i=j$, $Y_{s,t}=X_s$ and $X_{s,t}$ retracts by deformation
onto $X_s$ by {Lemma\ref{lemma323}}.
When $i<j$, $X_{s,t}$ retracts by deformation onto $X_{s,t_j}$.
Since $Y_{s,t}=Y_{s,t_j}$, we only need to prove $X_{s,t_j}\searrow  Y_{s,t_j}$.

\textbf{Note:}``$\searrow$'' means ``retracts by deformation onto'' here and after.

$X_{s,\infty}$ is an $n$-dim convex cell, which can be viewed as
union of $n$-dim cones with apex $x_n$ and base $(n-1)$-dim
faces of $X_{s,\infty}$ which do not contain $x_n$.

By {Lemma \ref{lemma22}}, the intersection of each of the cones
and $X_{s,t_{n-1}}$ retracts by deformation to the base of
that cone.

Since the deformation retractions are compatible  on
the intersection of the cones, $X_{s,t_{n-1}}$ = union of
intersections of the cones and $X_s,t_{n-1}$ $\searrow$
union of bases of the cones = $(n-1)$-dim faces
of $X_{s,\infty}$ which do not contain $x_n=Y_{s,t_{n-1}}$.

Since $Y_{s,t_{n-1}}$ are union of $(n-1)$-dim faces of
$X_{s,\infty}$ which do not contain $x_n$,
$Y_{s,t_{n-1}}=Y_{s,t_{n-2}}\cup$ cells of $Y_{s,t_{n-1}}$ which contain $x_{n-1}$.

Let $\sigma_1,\sigma_2$ be $(n-1)$-dim cells of $Y_{s,t_{n-1}}$ which contain $x_{n-1}$,
$\sigma_1\cap\sigma_2\neq\phi$. $\sigma_1$ can be viewed as union of $(n-1)$-dim
cones with apex $x_{n-1}$.

By {Lemma \ref{lemma22}}, $\sigma_1\cap X_{s,t_{n-2}}$ $\searrow$ a cell sub complex of
$Y_{s,t_{n-2}}$. Similarly, $\sigma_2\cap X_{s,t_{n-2}}$ $\searrow$ a cell sub complex of $Y_{s,t_{n-2}}$.

The deformation retractions on $\sigma_1\cap X_{s,t_{n-2}}$ and
$\sigma_2\cap X_{s,t_{n-2}}$ induce the same deformation
retractions on $\sigma_1\cap\sigma_2\cap X_{s,t_{n-2}}$.
Therefore $Y_{s,t_{n-1}}\cap X_{s,t_{n-2}}\searrow Y_{s,t_{n-2}}$.

$X_{s,t_{n-1}}\searrow Y_{s,t_{n-1}}$ implies $X_{s,t_{n-2}}\searrow Y_{s,t_{n-1}}\cap X_{s,t_{n-2}}$.
So $X_{s,t_{n-2}}\searrow Y_{s,t_{n-2}}$.
Suppose $X_{s,t_k}\searrow Y_{s,t_k}$, we have $X_{s,t_{k-1}}\searrow Y_{s,t_k}\cap X_{s,t_{k-1}}$.

Let $\sigma$ be a cell of $Y_{s,t_k}$ which contains $x_k$.
$\sigma$ can be viewed as union of cones with
apex $x_k$ and base convex cells in $Y_{s,t_{k-1}}$.

By {Lemma \ref{lemma22}}, $\sigma\cap X_{s,t_{k-1}}$ retracts by
deformation onto a cell sub complex of $Y_{s,t_{k-1}}$.

The above deformation retractions are coherent on the
boundary of the cones, so they induce a
deformation retraction from $Y_{s,t_k}\cap X_{s,t_{k-1}}$ onto $Y_{s,t_{k-1}}$.

By induction $X_{s,t_j}\searrow Y_{s,t_j}$.

\end{proof}

\begin{prop}\label{prop24}
$X_{s,t}$ retracts by deformation onto $Y_{s,t}$.
\end{prop}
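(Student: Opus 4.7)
The plan is to reduce the proposition to Lemma \ref{lemma23} by working simplex-by-simplex and then gluing the resulting deformation retractions along common faces. Write $X = \bigcup_{\sigma \in X} \sigma$ as the union of its simplices. For each simplex $\sigma$ of $X$, Lemma \ref{lemma23} provides a canonical deformation retraction
$$
H_\sigma : (|\sigma| \cap X_{s,t}) \times I \to |\sigma| \cap X_{s,t}
$$
whose terminal map sends $|\sigma| \cap X_{s,t}$ onto $|\sigma| \cap Y_{s,t}$. (If $|\sigma|$ does not meet $X_{s,t}$, there is nothing to do; if $|\sigma|$ lies entirely in $X_{s,\infty}$, then $|\sigma| \cap X_{s,t} = |\sigma| \cap Y_{s,t}$ and $H_\sigma$ is the constant homotopy.) Observe that by construction $|\sigma| \cap Y_{s,t}$ is precisely the subcomplex of $Y_{s,t}$ supported on $\sigma$, so the target is the correct one.

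The heart of the argument is to check that the collection $\{H_\sigma\}_{\sigma \in X}$ is compatible along faces, i.e., that whenever $\tau$ is a face of $\sigma$ we have $H_\sigma|_{(|\tau| \cap X_{s,t}) \times I} = H_\tau$. For this I would inspect the construction of $H_\sigma$ in the proof of Lemma \ref{lemma23}: it is built by induction on the skeleton of $|\sigma| \cap X_{s,\infty}$, using the canonical deformation retractions of Lemmas \ref{lemma21} and \ref{lemma22}. Those two elementary lemmas produce retractions of the form $x \mapsto (1-\tau) x + \tau g_1(x)$ where $g_1$ is a rescaling\,/\,projection defined vertex-wise by an affine formula that depends only on which vertices of $\sigma$ have $f$-value $\leq s$, $= s$, or $> s$. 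Since this data restricts correctly to any face, the induction step on $\sigma$ agrees with the induction step on $\tau$ when restricted to $|\tau| \cap X_{s,t}$; thus $H_\sigma$ and $H_\tau$ agree on $(|\tau|\cap X_{s,t})\times I$. This pointwise compatibility on the (finitely many) intersections $|\sigma_1| \cap |\sigma_2|$ — which are themselves faces of both $\sigma_1$ and $\sigma_2$ — permits the $H_\sigma$ to be glued, by the gluing lemma for continuous maps, into a single continuous homotopy $H : X_{s,t} \times I \to X_{s,t}$.

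It remains to verify that $H$ is indeed a deformation retraction onto $Y_{s,t}$: $H_0 = \mathrm{id}$ because each $H_\sigma$ starts at the identity; $H_1(X_{s,t}) \subseteq Y_{s,t}$ because each $H_\sigma$ maps $|\sigma| \cap X_{s,t}$ into $|\sigma| \cap Y_{s,t} \subseteq Y_{s,t}$; and $H_t|_{Y_{s,t}} = \mathrm{id}$ for all $t \in I$ because this holds simplex-wise in Lemmas \ref{lemma21}, \ref{lemma22} and hence in Lemma \ref{lemma23}.

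The main obstacle will be making the face-compatibility claim watertight. The delicate point is that Lemma \ref{lemma23} is proved by a nested induction (first over the $f$-values $t_0 < t_1 < \cdots < t_n$ of the vertices of a simplex, then, within a fixed $t_k$, over the $(n{-}1)$-dimensional cones with apex $x_k$), and one has to confirm that restricting the inductive construction from $\sigma$ to a face $\tau$ reproduces the corresponding inductive construction for $\tau$. This amounts to noting that each elementary retraction used — the straight-line homotopy of Lemma \ref{lemma21} and the cone-to-base collapse of Lemma \ref{lemma22} — is defined by the same affine formula on $\sigma$ and on $\tau$. Once this canonicality is recorded explicitly (possibly as a short auxiliary lemma, ``the retractions of Lemmas \ref{lemma21}–\ref{lemma23} are natural with respect to inclusion of faces''), the gluing is automatic and the proposition follows.
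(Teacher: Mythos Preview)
Your overall strategy—retract simplex-by-simplex and glue along faces—is the same as the paper's. The paper organizes it by classifying the simplices $\sigma$ of $X$ that meet $X_{s,t}$ into four types according to whether $t_{min}(\sigma) < s$ and whether $t_{max}(\sigma) > t$; it then invokes Lemma~\ref{lemma23} for simplices cut by both levels, Lemma~\ref{lemma21} for simplices cut only at $t$, and the identity for the remaining two types.

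Your case analysis, however, has a genuine gap. The parenthetical claim ``if $|\sigma|$ lies entirely in $X_{s,\infty}$, then $|\sigma|\cap X_{s,t} = |\sigma|\cap Y_{s,t}$'' is false. Take $\sigma$ with $s \le t_{min}(\sigma) \le t < t_{max}(\sigma)$ (the paper's type~(4)): then $|\sigma|\cap X_{s,t} = |\sigma|_{-\infty,t}$ is a truncated simplex, whereas $|\sigma|\cap Y_{s,t}$ is only the face spanned by the vertices with $f$-value $\le t$. These do not coincide, so the constant homotopy does not work, and Lemma~\ref{lemma23} does not apply to such $\sigma$ either since its hypothesis demands $t_0 < s$. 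You need Lemma~\ref{lemma21} here. Likewise, simplices with $t_{min}(\sigma) < s$ but $t_{max}(\sigma) \le t$ (type~(1)) fall outside the hypotheses of Lemma~\ref{lemma23}, although for those the identity suffices since $|\sigma|_{s,\infty}$ is already a cell of $Y_{s,t}$. Once you refine your parenthetical into this four-way split and call the correct lemma for each type, your face-compatibility and gluing argument goes through; note that you will then also need to check that the Lemma~\ref{lemma21} retraction on a type-(4) face $\tau$ of a type-(2) simplex $\sigma$ agrees with the restriction to $\tau$ of the Lemma~\ref{lemma23} retraction on $\sigma$.
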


\begin{proof}
Four types of simplices of $X$ have
nonempty intersection with space $|X|_{s,t}$:

$(1)$ $\{\sigma\in X \mid t_{min}(\sigma)<s, s<t_{max}(\sigma)\leq t\}$

$(2)$ $\{\sigma\in X \mid t_{min}(\sigma)<s, t_{max}(\sigma)>t\}$

$(3)$ $\{\sigma\in X \mid t_{min}(\sigma)\geq s, t_{max}(\sigma)\leq t\}$

$(4)$ $\{\sigma\in X \mid s\leq t_{min}(\sigma)\leq t, t_{max}(\sigma)>t\}$

Therefore, there are four types of cells of $X_{s,t}$:

$(1)$ $\{\sigma_{s,\infty} \mid \sigma\in X, t_{min}(\sigma)<s, s<t_{max}(\sigma)\leq t\}$

$(2)$ $\{\sigma_{s,t} \mid \sigma\in X, t_{min}(\sigma)<s, t_{max}(\sigma)>t\}$

$(3)$ $\{\sigma\in X \mid t_{min}(\sigma)\geq s, t_{max}(\sigma)\leq t\}$

$(4)$ $\{\sigma_{-\infty,t} \mid \sigma\in X, s\leq t_{min}(\sigma)\leq t, t_{max}(\sigma)>t\}$

There are five types of cells of $X_{s,\infty}$ with respect to $t$ $(s<t)$:

$(1)$ $\{\sigma_{s,\infty} \mid \sigma\in X, t_{min}(\sigma)<s, s<t_{max}(\sigma)\leq t\}$

$(2)$ $\{\sigma_{s,\infty} \mid \sigma\in X, t_{min}(\sigma)<s, t_{max}(\sigma)>t\}$

$(3)$ $\{\sigma\in X \mid t_{min}(\sigma)\geq s, t_{max}(\sigma)\leq t\}$

$(4)$ $\{\sigma\in X \mid s\leq t_{min}(\sigma)\leq t, t_{max}(\sigma)>t\}$

$(5)$ $\{\sigma\in X \mid t_{min}(\sigma)>t\}$

There are four types of cell complexes of $Y_{s,t}$:

$(1)$ $\{\sigma_{s,\infty} \mid \sigma\in X, t_{min}(\sigma)<s, s<t_{max}(\sigma)\leq t\}$

$(2)$ $\{\text{subcomplexes of } \sigma_{s,\infty}\text{ which do not contain }x_{m+1},\cdots,x_n \mid
\sigma=[x_0,\cdots,x_n]\in X, t_{min}(\sigma)<s, t_{max}(\sigma)>t,
f([x_0,\cdots,x_i])\leq t, 0\leq i\leq m, f([x_0,\cdots,x_i])>t, m+1\leq i\leq n\}$

$(3)$ $\{\sigma\in X \mid t_{min}(\sigma)\geq s, t_{max}(\sigma)\leq t\}$

$(4)$ $\{[x_0,\cdots,x_m] \mid \sigma=[x_0,\cdots,x_n]\in X, s\leq t_{min}(\sigma)\leq t, t_{max}(\sigma)>t,
 f([x_0,\cdots,x_i])\leq t, 0\leq i\leq m,
f([x_0,\cdots,x_i])>t, m+1\leq i\leq n\}$

Notice that type $(1)$ and $(3)$ cells of $X_{s,t}$ and $Y_{s,t}$
are the same. By {Lemma \ref{lemma22}} type $(2)$ cells of $X_{s,t}$
retract by deformation onto type $(2)$ cell complexes
of $Y_{s,t}$. By {Lemma \ref{lemma21}} type $(4)$ cells of $X_{s,t}$
retract by deformation onto type $(4)$ simplices of $Y_{s,t}$.

\end{proof}

By {Proposition \ref{prop24}}, the filtration
$$
Y_{t_i}\subseteq Y_{t_i,t_{i+1}}\subseteq\cdots\subseteq Y_{t_i,t_N}
$$
provides the same bar codes as the filtration
$$
X_{t_i}\subseteq X_{t_i,t_{i+1}}\subseteq\cdots\subseteq X_{t_i,t_N}.
$$

The advantage to consider $Y_{t_i,t_j}$ instead of
$X_{t_i,t_j}$ is that $Y_{t_i,t_j}$ keeps more simplices from
$X$ and generates less ``new'' cells, which
makes representation of cells and construction
of boundary matrices more economical.

\textbf{Note:} From now on, we'll only consider homology
groups with $\mathbb{Z}_2$ coefficients. $H_r(X;\mathbb{Z}_2)$ will be
simply written as $H_r(X)$.

Let $X$ be a simplicial complex containing $N$ vertices
$x_1,x_2,\cdots,x_N$. Let $n$ denote $\dim X$, $m_j$ denote number of $j$-simplices of $X$,
$0\leq j\leq n$. Let $f$ be a generic linear map on $X$. For
simplicity, let $f(x_i)=i$, $1\leq i\leq N$. Represent $j$-simplex
of $X$ by $(a_0,a_1,\cdots,a_j)$, $1\leq a_0<a_1<\cdots<a_j\leq N$.

All information on the above simplicial complex $X$ can
be stored in $n+1$ matrices named $\text{Simplex}_0$, $\text{Simplex}_1$, 
$\cdots$, $\text{Simplex}_n$. The matrix $\text{Simplex}_j$ is an
$m_j\times (j+1)$ matrix, which stores $j$-simplices of $X$
as rows in lexicographic order.

\begin{figure}[h!]
\centering
 \includegraphics[height=50mm]{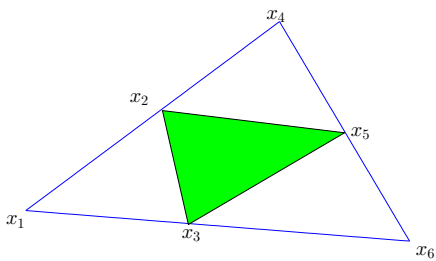}
\caption{Figure for Example \ref{example21}.}
\end{figure}

\begin{exmp}\label{example21}
\end{exmp}
\begin{tabular}{c c c}
  $\text{Simplex}_0=\left(
               \begin{array}{c}
                 1 \\
                 2 \\
                 3 \\
                 4 \\
                 5 \\
                 6 \\
               \end{array}
             \right)
  $  &
  $\text{Simplex}_1=\left(
                \begin{array}{c}
                     12 \\
                     13 \\
                     23 \\
                     24 \\
                     25 \\
                     35 \\
                     36 \\
                     45 \\
                     56 \\
                   \end{array}
                \right)
  $ &
  $\text{Simplex}_2=\left(
  \begin{array}{ccc}
  2 & 3 & 5
  \end{array}
  \right)
  $\\
\end{tabular}

\begin{defn} \textbf{Initial Order}

Order all simplices of $X$ first according to dimension
increasingly, then according to lexicographic order,
we call this order as \emph{initial order} of simplices
of $X$.

\end{defn}

\begin{exmp}
The initial order for {Example \ref{example21}} is
$1<2<3<4<5<6<12<13<23<24<25<35<36<45<56<235$.
\end{exmp}

Let $\sigma_1,\sigma_2,\cdots,\sigma_m$ be simplices of $X$ in initial order,
where $m=\displaystyle\sum_{j=0}^n m_j$. We have a boundary matrix of $\mathbb{Z}_2$
coefficients
$$
\partial:= \begin{tabular}{cc}
              & $\begin{array}{cccccc}
                   \sigma_1 & \sigma_2 & \cdots & \sigma_j & \cdots & \sigma_m
                 \end{array}$
              \\
             $\begin{array}{c}
               \sigma_1 \\
               \sigma_2 \\
               \vdots \\
               \sigma_i \\
               \vdots \\
               \sigma_m
             \end{array}$
              & $\left(
                  \begin{array}{cccccccccccc}
                     & & & & & & & & & & &\\
                     & & & & & & & & & & &\\
                     & & & & & & & & & & &\\
                     & & & & & & & & & & &\\[-.4cm]
                     & & & & & & \hskip 2mm \partial_{ij} & & & & &\\
                     & & & & & & & & & & &\\
                     & & & & & & & & & & &\\
                  \end{array}
                \right)$
               \\
           \end{tabular}
$$

$$
\partial_{ij}=\left\{
\begin{array}{ll}
1 & \text{if }\sigma_i\text{ is a codimension-1 face of }\sigma_j \\
0 & \text{otherwise}
\end{array}
\right.
$$

We'll compute positive and negative bar codes
from this boundary matrix $\partial$.

\subsection{Computing of Positive Bar codes $B_r^+(f;i)$}

We'll describe how to compute positive bar codes $B_r^+(f;i)$
of $X_{i,\infty}$ first.

There are five classes of cells in $X_{i,\infty}$:

$P_1=\{i\}$

$P_2=\{\sigma_i \mid \sigma\in X, t_{min}(\sigma)<i\text{ and }t_{max}(\sigma)>i\}$

$P_3=\{\sigma_{i,\infty} \mid \sigma\in X, t_{min}(\sigma)<i\text{ and }t_{max}(\sigma)>i\}$

$P_4=\{\sigma\in X \mid \dim(\sigma)>0\text{ and }t_{min}(\sigma)=i\}$

$P_5=\{\sigma\in X \mid t_{min}(\sigma)>i\}$

Order cells in $X_{i,\infty}$ first according to these
five classes, then according to initial order within
each class. We can easily construct a boundary
matrix $\partial_i^+$ from $\partial$ according to this order:
$$
\partial_i^+:=
\begin{tabular}{cc}
   & $\begin{array}{ccccccccccc}
        & P_1 & P_2 & P_3 && P_4 & P_5 &&
     \end{array}$
    \\
  $\begin{array}{c}
     P_1 \\
     P_2 \\
     P_3 \\
     P_4 \\
     P_5
   \end{array}
  $ & $\left(
\begin{tabular}{ccccc}
  \cline{2-2} \cline{4-4}
  & \multicolumn{1}{|c|}{$Q_1$} & & \multicolumn{1}{|c|}{$Q_3$} & \\
  \cline{2-2} \cline{4-4}
  \cline{2-2} \cline{3-3}
  & \multicolumn{1}{|c|}{$M_1$} & \multicolumn{1}{|c|}{$Q_2$} & & \\
  \cline{2-2} \cline{3-3}
  \cline{3-3}
  & & \multicolumn{1}{|c|}{} & &\\
  \cline{4-4}
  & & \multicolumn{1}{|c|}{$M_2$} & \multicolumn{1}{|c|}{} & \\
  \cline{5-5}
  & & \multicolumn{1}{|c|}{} & \multicolumn{1}{|c|}{$M_3$} & \multicolumn{1}{|c|}{$M_4$}\\
  \cline{3-5}
\end{tabular}
      \right)$
   \\
\end{tabular}
$$
where $M_i(1\leq i\leq 4)$ are correspondent submatrices of $\partial$,
elements in $Q_1=1$ iff $\dim\sigma=2$ and correspondent
column in $M_1$ contains only one ``1'', $Q_2$ is identity
matrix, elements in $Q_3=1$ iff $\dim\sigma=1$.

\begin{figure}[h!]
\centering
\includegraphics[height=50mm]{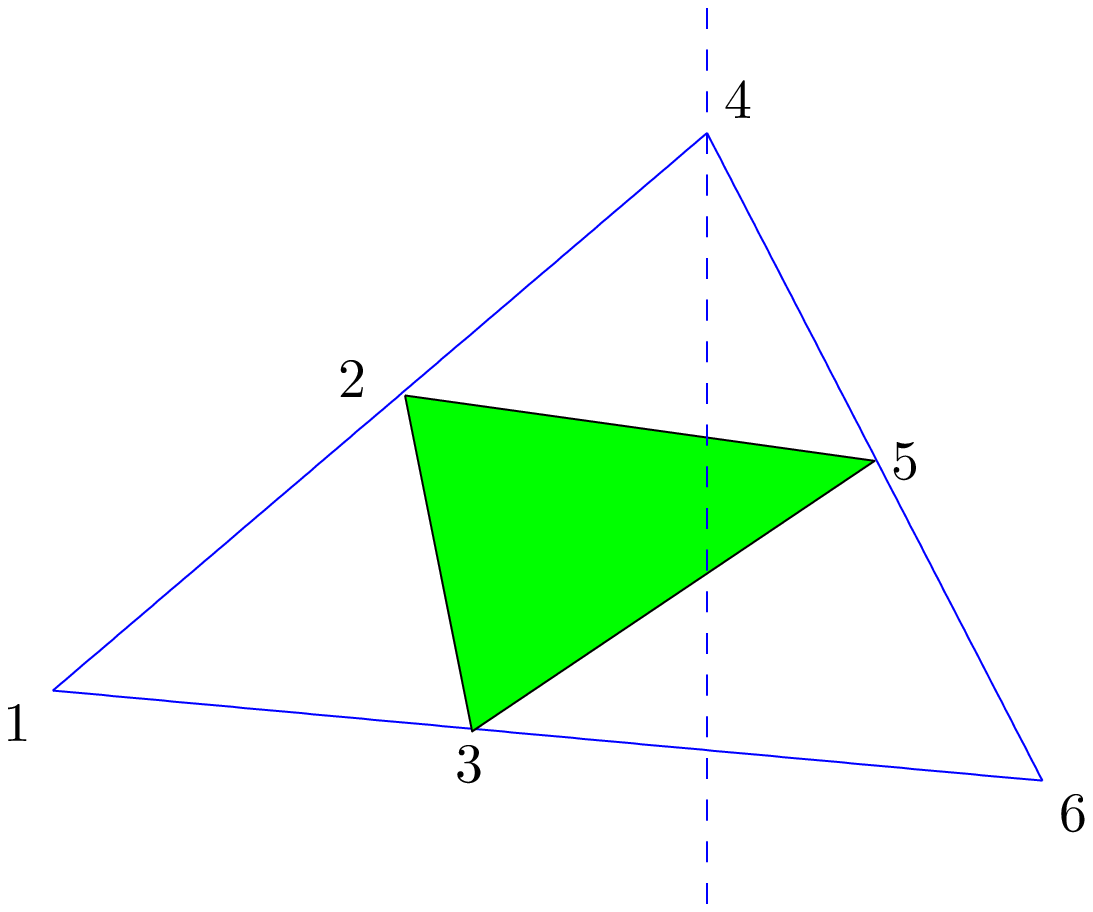}
\caption{Figure for Example \ref{example23}.}
\end{figure}

\begin{exmp}\label{example23}
\end{exmp}
Let $X$ be the same as in {Example \ref{example21}}
and $i=4$, we have

$P_1=\{4\}$

$P_2=\{25|_4,35|_4,36|_4,235|_4\}$

$P_3=\{25|_{4,\infty},35|_{4,\infty},36|_{4,\infty},235|_{4,\infty}\}$

$P_4=\{45\}$

$P_5=\{5,6,56\}$

%
%

\begin{center}
$$
\partial_4^+:=
\begin{tabular}{rl}
 & \hskip -.7cm
  $\begin{array}{c}
       P_1 \\[.18cm]
       4
     \end{array}

      \begin{array}{c}
        P_2 \\
        \overbrace{25|_4\;\;35|_4\;\;36|_4\;\;235|_4}
      \end{array}
     \begin{array}{c}
        P_3 \\
        \overbrace{25|_{4,\infty}\;\;35|_{4,\infty}\;\;36|_{4,\infty}\;\;235|_{4,\infty}}
      \end{array}
      \;
      \begin{array}{c}
        P_4 \\[.18cm]
        45
      \end{array}
      \;\;\;\;
      \begin{array}{c}
        P_5 \\
        \overbrace{5\;\;6\;\;56}
      \end{array}
$
 \\
$\begin{array}{lc}
    P_1 & \hskip -.9cm 4
    \\
    \\
    P_2 &  \hskip -.75cm \left\{\begin{array}{l}
\hskip -.2cm            25|_4 \\
\hskip -.2cm            35|_4 \\
\hskip -.2cm            36|_4 \\
\hskip -.2cm            235|_4
          \end{array}\right.
   \\
   \\
    P_3 & \hskip -.38cm \left\{\begin{array}{l}
\hskip -.2cm            25|_{4,\infty} \\
\hskip -.2cm            35|_{4,\infty} \\
\hskip -.2cm            36|_{4,\infty} \\
\hskip -.2cm            235|_{4,\infty}
          \end{array}\right.
  \\
  \\
  P_4 & \hskip -.9cm 45
  \\
  \\
  P_5 & \hskip -1.2cm \left\{\begin{array}{l}
\hskip -.2cm  5\\
\hskip -.2cm  6\\
\hskip -.2cm  56\\
  \end{array}\right.
  \\
\end{array}$
 & \hskip -.85cm
 $
 \left(
\begin{tabular}{ccccccccccccccccccccccc}
&&&&&&&&&&&&&&&&&&&&&\\[-.2cm]
\cline{3-9} \cline{20-20}
&&\multicolumn{1}{|c}{}&&&$Q_1$&&&\multicolumn{1}{c|}{}&&&&&&&&&&&\multicolumn{1}{|c|}{1}&&&\\
\cline{3-9} \cline{20-20}
&&&&&&&&&&&&&&&&&&&&&\\
\cline{3-9} \cline{11-19}
&&\multicolumn{1}{|c}{}&&&&&1&\multicolumn{1}{c|}{}&&\multicolumn{1}{|c}{1}&&&&&&&&\multicolumn{1}{c|}{}&&&&\\
&&\multicolumn{1}{|c}{}&&&&&1&\multicolumn{1}{c|}{}&&\multicolumn{1}{|c}{}&&1&&&&&&\multicolumn{1}{c|}{}&&&&\\
&&\multicolumn{1}{|c}{}&&&$M_1$&&&\multicolumn{1}{c|}{}&&\multicolumn{1}{|c}{$Q_2$}&&&&1&&&&\multicolumn{1}{c|}{}&&&&\\
&&\multicolumn{1}{|c}{}&&&&&&\multicolumn{1}{c|}{}&&\multicolumn{1}{|c}{}&&&&&&1&&\multicolumn{1}{c|}{}&&&&\\
\cline{3-9} \cline{11-19}
&&&&&&&&&&&&&&&&&&&&&&\\
\cline{11-19}
&&&&&&&&&&\multicolumn{1}{|c}{}&&&&&&1&&\multicolumn{1}{c|}{}&&&&\\
&&&&&&&&&&\multicolumn{1}{|c}{}&&&&&&1&&\multicolumn{1}{c|}{}&&&&\\
&&&&&&&&&&\multicolumn{1}{|c}{}&&&&&&&&\multicolumn{1}{c|}{}&&&&\\
&&&&&&&&&&\multicolumn{1}{|c}{}&&&&&&&&\multicolumn{1}{c|}{}&&&&\\
&&&&&&&&&&\multicolumn{1}{|c}{}&&&&$M_2$&&&&\multicolumn{1}{c|}{}&&&&\\
\cline{20-20}
&&&&&&&&&&\multicolumn{1}{|c}{}&&&&&&&&\multicolumn{1}{c|}{}&\multicolumn{1}{|c|}{}&&&\\
&&&&&&&&&&\multicolumn{1}{|c}{}&&&&&&&&\multicolumn{1}{c|}{}&\multicolumn{1}{|c|}{$M_3$}&&&\\
\cline{22-23}
&&&&&&&&&&\multicolumn{1}{|c}{1}&&1&&&&&&\multicolumn{1}{c|}{}&\multicolumn{1}{|c|}{1}&&\multicolumn{1}{|c}{}$M_4$&\multicolumn{1}{c|}{1}\\
&&&&&&&&&&\multicolumn{1}{|c}{}&&&&1&&&&\multicolumn{1}{c|}{}&\multicolumn{1}{|c|}{}&&\multicolumn{1}{|c}{}&\multicolumn{1}{c|}{1}\\
&&&&&&&&&&\multicolumn{1}{|c}{}&&&&&&&&\multicolumn{1}{c|}{}&\multicolumn{1}{|c|}{}&&\multicolumn{1}{|c}{}&\multicolumn{1}{c|}{}\\
\cline{11-19} \cline{20-20} \cline{22-23}
&&&&&&&&&&&&&&&&&&&&&&\\
\end{tabular}
  \right)
  $
  \\
\end{tabular}
$$
\end{center}

It's convenient to construct $\partial_4^+$ using the above order.
However, since this order is neither topologically consistent
nor filtration compatible with $Y_i\subseteq Y_{i,i+1}\subseteq \cdots\subseteq Y_{i,N}$
(see {Definition 4.2},\cite{BDD}), we have to reorder those cells
in $X_{i,\infty}$.

\begin{defn} \textbf{Positive Order}

Order cells in $X_{i,\infty}$ first according to $t_{max}$ increasingly,
then according to the initial order, we call this order
\emph{positive order}.

For example,$\tau_1$ and $\tau_2$ are two cells in $X_{i,\infty}$.
If $t_{max}(\tau_1)<t_{max}(\tau_2)$,
then $\tau_1<\tau_2$ in positive order.
If $t_{max}(\tau_1)=t_{max}(\tau_2)$,
according to the note after {Definition \ref{def21}},
$\tau_j(j=1,2)$ can be uniquely represented by $\sigma_j$,$\sigma_j|_i$ or $\sigma_j|_{i,\infty}$
for some $\sigma_j\in X$.
If $\sigma_1<\sigma_2$ in initial order,
we require $\tau_1<\tau_2$ in positive order.

\end{defn}

Notice that $P_1$ and $P_2$ do not change in positive
order. The cells in $P_3$, $P_4$ and $P_5$ permute. Denote
the cells in $P_3$, $P_4$ and $P_5$ in positive order by $\widetilde{P}_{3,4,5}$.
Denote the boundary matrix of $X_{i,\infty}$ by $\widetilde{\partial}_i^+$ when the
cells are in positive order.

\begin{exmp}\label{example24}
Consider {Example \ref{example23}}, reorder all cells in $X_{4,\infty}$
in positive order, we have
\end{exmp}

\begin{center}

$$
\widetilde{\partial_4}^+:=
\begin{tabular}{rl}
 & \hskip -.9cm \;\;$\begin{array}{c}
       P_1 \\[.19cm]
       4
     \end{array}
     \begin{array}{c}
        P_2 \\
        \overbrace{25|_4\;\;35|_4\;\;36|_4\;\;235|_4}
      \end{array}
     \begin{array}{c}
        \widetilde{P}_{3,4,5} \\
        \overbrace{5\;\;25|_{4,\infty}\;\;35|_{4,\infty}\;\;45\;\;235|_{4,\infty}\;\;6\;\;36|_{4,\infty}\;\;56}
      \end{array}
 $
 \\
$\begin{array}{lc}
   \\
    P_1 & \hskip -1.2cm 4   \\
    \\
    P_2 & \hskip -.9cm
    \left\{ \hskip -.2cm \begin{array}{l}
            25|_4 \\
            35|_4 \\
            36|_4 \\
            235|_4
          \end{array}\right.   \\
   \\
    \widetilde{P}_{3,4,5} & \hskip -.4cm
    \left\{ \hskip -.2cm \begin{array}{l}
            5\\
            25|_{4,\infty}\\
            35|_{4,\infty}\\
            45\\
            235|_{4,\infty}\\
            6\\
            36|_{4,\infty}\\
            56
          \end{array}\right.  \\
  \\
 \end{array}$
 & \hskip -.85cm
$\left(
\begin{tabular}{cccccccccccccccccccccccccc}
&&&&&&&&&&&&&&&&&&&&&&&&&\\[-.18cm]
&&&&&&&&&&&&&&&&&&1&&&&&&\\
&&&&&&&&&&&&&&&&&&&&&&&&&\\
&&&&&&&&&1&&&&\hskip -.3cm 1&&&&&&&&&&&&\\
&&&&&&&&&1&&&&&&&\hskip -.15cm 1&&&&&&&&&\\
&&&&&&&&&&&&&&&&&&&&&&&\hskip .2cm 1&&\\
&&&&&&&&&&&&&&&&&&&&1&&&&&\\
&&&&&&&&&&&&&&&&&&&&&&&&&\\[-.1cm]
&&&&&&&&&&&&&\hskip -.3cm 1&&&\hskip -.15cm 1&&1&&&&&&&1\\
&&&&&&&&&&&&&&&&&&&&1&&&&&\\
&&&&&&&&&&&&&&&&&&&&1&&&&&\\
&&&&&&&&&&&&&&&&&&&&&&&&&\\
&&&&&&&&&&&&&&&&&&&&&&&&&\\
&&&&&&&&&&&&&&&&&&&&&&&\hskip .2cm 1&&1\\
&&&&&&&&&&&&&&&&&&&&&&&&&\\
&&&&&&&&&&&&&&&&&&&&&&&&&\\
&&&&&&&&&&&&&&&&&&&&&&&&&\\[-.1cm]
\end{tabular}
\hskip .15cm \right)$
  \\
\end{tabular}
$$

\end{center}

From $\widetilde{\partial}_i^+$, we will get its reduced form $R(\partial_i^+)$ and read
the bar code directly from it.

Given an $m\times m$ matrix $\partial$ with $\mathbb{Z}_2$ coefficients. Let $low(j)$
be the row index of lowest $1$ in column $j$. If the
entire column is zero, then $low(j)$ is undefined.
We call $\partial$ \emph{reduced} if for any two non-zero columns
$j$ and $j_0$, we have $low(j)\neq low(j_0)$. See page 153,\cite{EH}.
The following algorithm reduces $\partial$ by adding
columns from left to right.

$$
\textbf{Algorithm 2.1.}
$$
\begin{center}
\begin{tabular}{|l|}
  \hline
  \textrm{for $j=1$ to $m-1$ do}\\
  \textrm{\;\;\;\;\;\;if column $j$ is nonzero do}\\
  \textrm{\;\;\;\;\;\;\;\;\;\;\;\;while there exists $j_0>j$ with $low(j_0)=low(j)$ do}\\
  \textrm{\;\;\;\;\;\;\;\;\;\;\;\;\;\;\;\;\;\;add column $j$ to column $j_0$}\\
  \textrm{\;\;\;\;\;\;\;\;\;\;\;\;endwhile}\\
  \textrm{\;\;\;\;\;\;\;\;\;\;\;\;let $\{i_1,\cdots,i_k\}=\{i|1\leq i\leq j-1, low(i)<low(j)\}$}\\
   \textrm{\;\;\;\;\;\;\;\;\;\;\;\;\;\;\;\;\;\;\;\;\;\;\;\;such that $low(i_1)>low(i_2)>\cdots>low(i_k)$}\\
  \textrm{\;\;\;\;\;\;\;\;\;\;\;\;for l=1 to k do}\\
  \textrm{\;\;\;\;\;\;\;\;\;\;\;\;\;\;\;\;\;\;while there exists $j_0>j$ with $low(j_0)=low(l)$ do}\\
  \textrm{\;\;\;\;\;\;\;\;\;\;\;\;\;\;\;\;\;\;\;\;\;\;\;\;add column $l$ to column $j_0$}\\
  \textrm{\;\;\;\;\;\;\;\;\;\;\;\;\;\;\;\;\;\;endwhile}\\
  \textrm{\;\;\;\;\;\;\;\;\;\;\;\;endfor}\\
  \textrm{\;\;\;\;\;\;endif}\\
  \textrm{endfor}\\
  \hline
\end{tabular}
\end{center}

\begin{exmp}\label{example25}
The reduced form $R(\widetilde{\partial}_4^+)$ of ${\partial}_4^+$ in {Example \ref{example24}} is:
\end{exmp}

\begin{center}

$$
R(\widetilde{\partial_4}^+):=
\begin{tabular}{rl}
 & \hskip -.9cm \;\;$\begin{array}{c}
       P_1 \\[.19cm]
       4
     \end{array}
     \begin{array}{c}
        P_2 \\
        \overbrace{25|_4\;\;35|_4\;\;36|_4\;\;235|_4}
      \end{array}
     \begin{array}{c}
        \widetilde{P}_{3,4,5} \\
        \overbrace{5\;\;25|_{4,\infty}\;\;35|_{4,\infty}\;\;45\;\;235|_{4,\infty}\;\;6\;\;36|_{4,\infty}\;\;56}
      \end{array}
 $
 \\
$\begin{array}{lc}
   \\
    P_1 & \hskip -1.2cm 4   \\
    \\
    P_2 & \hskip -.9cm
    \left\{ \hskip -.2cm \begin{array}{l}
            25|_4 \\
            35|_4 \\
            36|_4 \\
            235|_4
          \end{array}\right.   \\
   \\
    \widetilde{P}_{3,4,5} & \hskip -.4cm
    \left\{ \hskip -.2cm \begin{array}{l}
            5\\
            25|_{4,\infty}\\
            35|_{4,\infty}\\
            45\\
            235|_{4,\infty}\\
            6\\
            36|_{4,\infty}\\
            56
          \end{array}\right.  \\
  \\
 \end{array}$
 & \hskip -.85cm
$\left(
\begin{tabular}{cccccccccccccccccccccccccc}
&&&&&&&&&&&&&&&&&&&&&&&&&\\[-.18cm]
&&&&&&&&&&&&&&&&&&1&&&&&&\\
&&&&&&&&&&&&&&&&&&&&&&&&&\\
&&&&&&&&&1&&&&\hskip -.3cm 1&&&&&1&&&&&&&1\\
&&&&&&&&&1&&&&&&&\hskip -.15cm &&&&&&&&&\\
&&&&&&&&&&&&&&&&&&&&&&&\hskip .2cm 1&&1\\
&&&&&&&&&&&&&&&&&&&&1&&&&&\\
&&&&&&&&&&&&&&&&&&&&&&&&&\\[-.1cm]
&&&&&&&&&&&&&\hskip -.3cm 1&&&\hskip -.15cm &&&&&&&&&\\
&&&&&&&&&&&&&&&&&&&&1&&&&&\\
&&&&&&&&&&&&&&&&&&&&1&&&&&\\
&&&&&&&&&&&&&&&&&&&&&&&&&\\
&&&&&&&&&&&&&&&&&&&&&&&&&\\
&&&&&&&&&&&&&&&&&&&&&&&\hskip .2cm 1&&\\
&&&&&&&&&&&&&&&&&&&&&&&&&\\
&&&&&&&&&&&&&&&&&&&&&&&&&\\
&&&&&&&&&&&&&&&&&&&&&&&&&\\[-.1cm]
\end{tabular}
\hskip .15cm \right)$
  \\
\end{tabular}
$$

\end{center}

We can read bar codes directly from the reduced form $R(\widetilde{\partial}_i^+)$.
For the first $|P_1|+|P_2|$ columns of $R(\widetilde{\partial}_i^+)$, if a column $j$
is zero, check if any column $j_0$ on the right satisfies
$low(j_0)=j$. If there doesn't exist such a column $j_0$,
then we have an interval $[i,\infty)$ in the bar code $B_r^+(f;i)$,
where $r=\dim(\sigma_j)$, $\sigma_j$ is the cell correspond to column $j$.
If there exists such a column $j_0$ and $j_0>|P_1|+|P_2|$,
then we have an interval $[i,j')$ in the bar code $B_r^+(f;i)$
where $r=\dim(\sigma_j)$, $j'=t_{max}(\sigma_{j_0})$, $\sigma_j$ and $\sigma_{j_0}$ are
the cells correspond to column $j$ and $j_0$ respectively.

\begin{exmp}
We can read bar code from $R(\widetilde{\partial}_4^+)$ in {Example \ref{example25}}.
$$
B_0^+(f;4)=\{[4,\infty),[4,5),[4,6)\}
$$
\end{exmp}

\subsection{Computing of Negative Bar codes $B_r^-(f;i)$}

About computing of negative bar codes $B_r^-(f;i)$,
one proceed in a similar manner. Precisely, first
replace the filtration $X_i\subseteq X_{i-1,i}\subseteq\cdots\subseteq X_{1,i}$
by $X_i\subseteq Z_{i-1,i}\subseteq\cdots\subseteq Z_{1,i}$ where $Z_{k,i}$ is the
set of all cells in $X_{-\infty,i}$ that are
contained in $X_{k,\infty}$. Second, construct the boundary
matrix $\partial_i^-$. There are five classes of cells in $X_{-\infty,i}$:

$N_1=P_1=\{i\}$, $N_2=P_2$, $N_3=P_3$

$N_4=\{\sigma\in X \mid \dim(\sigma)>0\text{ and }t_{max}(\sigma)=i\}$

$N_5=\{\sigma\in X \mid t_{max}(\sigma)<i\}$

If we order all cells in $X_{-\infty,i}$ first according to these
five classes then according to the initial order, we
can easily construct $\partial_i^-$ from the submatrices of $\partial$.
Third, order cells in $X_{-\infty, i}$ first according to $t_{min}$
decreasingly then according to the initial order(so called
\emph{negative order}), the boundary matrix of $X_{-\infty,i}$ becomes
$\widetilde{\partial}_i^-$. Fourth, apply {Algorithm 2.1} to $\widetilde{\partial}_i^-$ and get reduced
form $R(\widetilde{\partial}_i^-)$. Finally, read the negative bar code $B_r^-(f;i)$
from the reduced form $R(\widetilde{\partial}_i^-)$.

Notice: The first $|P_1|+|P_2|$ columns of $\partial_i^+$ and $\partial_i^-$ are
essentially the same, since $P_1\cup P_2$ and $N_1\cup N_2$ contain
the same cells in the same order. It's the same
situation for $\widetilde{\partial}_i^+$ and $\widetilde{\partial}_i^-$, since reordering cells
in positive and negative order will not affect the
first $|P_1|+|P_2|$ cells. The same thing happened for the
reduced form $R(\widetilde{\partial}_i^+)$ and $R(\widetilde{\partial}_i^-)$, since we apply the
same algorithm to them and we only add columns
from left to right in that algorithm. We have
the same amount of zero columns in $R(\widetilde{\partial}_i^+)$ and
$R(\widetilde{\partial}_i^-)$, which are in the same position and correspond
to the same cycles of $X_i$. Therefore, the zero
columns in $R(\widetilde{\partial}_i^+)$ and $R(\widetilde{\partial}_i^-)$ which represent generators
of $H_{\ast}(X_i)$ are also in one to one correspondence. We
can pair up intervals in positive and negative bar codes
according to generators.

\subsection{Numerical Experiments}

All the bar code in this section are generated by the Matlab code given in the {Appendix}.
All intervals in positive and negative bar codes paired up in the given order.

\begin{figure}[h!]
\centering
 \includegraphics[height=45mm]{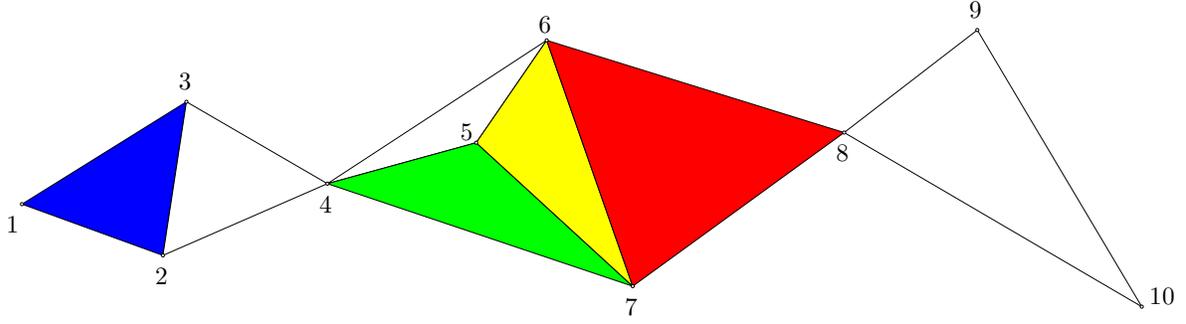}
\caption{$f^1$: a linear map on a $2$-simplicial complex.} \label{figure461}
\end{figure}

\begin{exmp}
Bar codes of $f^1$ in Figure \ref{figure461}.
\vskip -3mm
$$
\begin{array}{ll}
  B_0^+(f^1;1) = \{\;[1,\infty)\;\} & B_0^-(f^1;1) = \{\;(-\infty,1]\;\}\\
  B_0^+(f^1;2) = \{\;[2,\infty)\;\} & B_0^-(f^1;2) = \{\;(-\infty,2]\;\}\\
  B_0^+(f^1;3) = \{\;[3,\infty),\;[3,4)\;\} & B_0^-(f^1;3) = \{\;(-\infty,3],\;(2,3]\;\}\\
  B_0^+(f^1;4) = \{\;[4,\infty)\;\} & B_0^-(f^1;4) = \{\;(-\infty,4]\;\}\\
  B_0^+(f^1;5) = \{\;[5,\infty),\;[5,6)\;\} & B_0^-(f^1;5) = \{\;(-\infty,5],\;(4,5]\;\}\\
  B_0^+(f^1;6) = \{\;[6,\infty)\;\} & B_0^-(f^1;6) = \{\;(-\infty,6]\;\}\\
  B_0^+(f^1;7) = \{\;[7,\infty)\;\} & B_0^-(f^1;7) = \{\;(-\infty,7]\;\}\\
  B_0^+(f^1;8) = \{\;[8,\infty)\;\} & B_0^-(f^1;8) = \{\;(-\infty,8]\;\}\\
  B_0^+(f^1;9) = \{\;[9,\infty),\;[9,10)\;\} & B_0^-(f^1;9) = \{\;(-\infty,9],\;8,9]\;\}\\
  B_0^+(f^1;10) = \{\;[10,\infty)\;\} & B_0^-(f^1;10) = \{\;(-\infty,10]\;\}
\end{array}
$$
\end{exmp}

\begin{figure}[h!]
\centering
 \includegraphics[height=45mm]{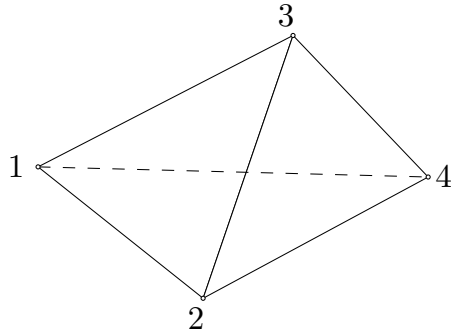}
\caption{$f^2$: a linear map on the surface of a tetrahedron.} \label{figure342}
\end{figure}

\begin{exmp}
Bar codes of $f^2$ in Figure \ref{figure342}.
\vskip -3mm
$$
\begin{array}{ll}
  B_0^+(f^2;1) = \{\;[1,\infty)\;\} & B_0^-(f^2;1) = \{\;(-\infty,1]\;\}\\
  B_0^+(f^2;2) = \{\;[2,\infty)\;\} & B_0^-(f^2;2) = \{\;(-\infty,2]\;\}\\
  B_0^+(f^2;3) = \{\;[3,\infty)\;\} & B_0^-(f^2;3) = \{\;(-\infty,3]\;\}\\
  B_0^+(f^2;4) = \{\;[4,\infty)\;\} & B_0^-(f^2;4) = \{\;(-\infty,4]\;\}\\
  B_1^+(f^2;2) = \{\;[2,4)\;\} & B_1^-(f^2;2) = \{\;(1,2]\;\}\\
  B_1^+(f^2;3) = \{\;[3,4)\;\} & B_1^-(f^2;3) = \{\;(1,3]\;\}\\
\end{array}
$$
\end{exmp}

\begin{figure}[h!]
\centering
 \includegraphics[height=45mm]{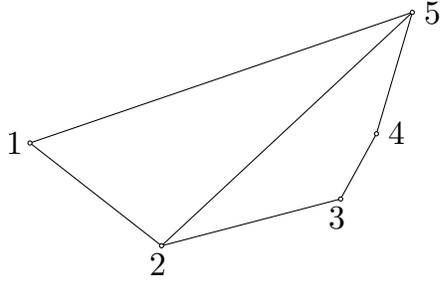}
\caption{$f^3$: a linear map on a planar graph.} \label{figure463}
\end{figure}

\begin{exmp}
Bar codes of $f^3$ in Figure \ref{figure463}.
\vskip -3mm
$$
\begin{array}{ll}
  B_0^+(f^3;1) = \{\;[1,\infty)\;\} & B_0^-(f^3;1) = \{\;(-\infty,1]\;\}\\
  B_0^+(f^3;2) = \{\;[2,\infty),\;[2,5)\;\} & B_0^-(f^3;2) = \{\;(-\infty,2],\;(1,2]\;\}\\
  B_0^+(f^3;3) = \{\;[3,\infty),\;[3,5),\;[3,5)\;\} & B_0^-(f^3;3) = \{\;(-\infty,3],\;(1,3],\;(2,3]\;\}\\
  B_0^+(f^3;4) = \{\;[4,\infty),\;[4,5),\;[4,5)\;\} & B_0^-(f^3;4) = \{\;(-\infty,4],\;(1,4],\;(2,4]\;\}\\
  B_0^+(f^3;5) = \{\;[5,\infty)\;\} & B_0^-(f^3;5) = \{\;(-\infty,5]\;\}\\
\end{array}
$$
\end{exmp}

\begin{figure}[h!]
\centering
 \includegraphics[height=45mm]{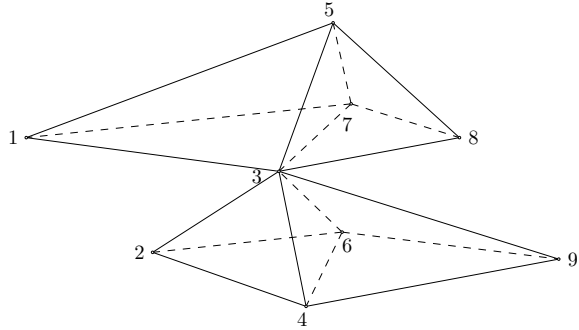}
\caption{$f^4$: a linear map on the union of the surfaces of two tetrahedra.} \label{figure464}
\end{figure}

\begin{exmp}
Bar codes of $f^4$ in Figure \ref{figure464}.
\vskip -3mm
$$
\begin{array}{ll}
  B_0^+(f^4;1) = \{\;[1,\infty)\;\} & B_0^-(f^4;1) = \{\;(-\infty,1]\;\}\\
  B_0^+(f^4;2) = \{\;[2,\infty),\;[2,3)\;\} & B_0^-(f^4;2) = \{\;(-\infty,2],\;(-\infty,2]\;\}\\
  B_0^+(f^4;3) = \{\;[3,\infty)\;\} & B_0^-(f^4;3) = \{\;(-\infty,3]\;\}\\
  B_0^+(f^4;4) = \{\;[4,\infty),\;[4,\infty)\;\} & B_0^-(f^4;4) = \{\;(-\infty,4],\;(3,4]\;\}\\
  B_0^+(f^4;5) = \{\;[5,\infty),\;[5,\infty)\;\} & B_0^-(f^4;5) = \{\;(-\infty,5],\;(3,5]\;\}\\
  B_0^+(f^4;6) = \{\;[6,\infty),\;[6,\infty)\;\} & B_0^-(f^4;6) = \{\;(-\infty,6],\;(3,6]\;\}\\
  B_0^+(f^4;7) = \{\;[7,\infty),\;[7,\infty)\;\} & B_0^-(f^4;7) = \{\;(-\infty,7],\;(3,7]\;\}\\
  B_0^+(f^4;8) = \{\;[8,\infty),\;[8,\infty)\;\} & B_0^-(f^4;8) = \{\;(-\infty,8],\;(3,8]\;\}\\
  B_0^+(f^4;9) = \{\;[9,\infty)\;\} & B_0^-(f^4;9) = \{\;(-\infty,9]\;\}
  \\
  B_1^+(f^4;2) = \{\;[2,8)\;\} & B_1^-(f^4;2) = \{\;(1,2]\;\}\\
  B_1^+(f^4;3) = \{\;[3,8),\;[3,9)\;\} & B_1^-(f^4;3) = \{\;(1,3],\;(2,3]\;\}\\
  B_1^+(f^4;4) = \{\;[4,9),\;[4,8)\;\} & B_1^-(f^4;4) = \{\;(2,4],\;(1,4]\;\}\\
  B_1^+(f^4;5) = \{\;[5,8),\;[5,9)\;\} & B_1^-(f^4;5) = \{\;(1,5],\;(2,5]\;\}\\
  B_1^+(f^4;6) = \{\;[6,9),\;[6,8)\;\} & B_1^-(f^4;6) = \{\;(2,6],\;(1,6]\;\}\\
  B_1^+(f^4;7) = \{\;[7,9),\;[7,8)\;\} & B_1^-(f^4;7) = \{\;(2,7],\;(1,7]\;\}\\
  B_1^+(f^4;8) = \{\;[8,9)\;\} & B_1^-(f^4;8) = \{\;(2,8]\;\}\\
\end{array}
$$
\end{exmp}

\newpage

\end{document}